\documentclass[showpacs,onecolumn,aps,pra,longbibliography,superscriptaddress,notitlepage]{revtex4-2}
\usepackage{qcircuit}
\usepackage[dvips]{graphicx}
\usepackage{amsmath,amssymb,amsthm,mathrsfs,amsfonts,dsfont}
\usepackage{subfigure, epsfig}
\usepackage{braket}
\usepackage{svg}
\usepackage{bm}
\usepackage{enumerate}
\usepackage{algorithm}
\usepackage{algpseudocode}
\usepackage{diagbox}
\usepackage{physics}
\usepackage{color}
\usepackage{multirow}
\usepackage[marginal]{footmisc}
\usepackage{comment}
\usepackage{tikz}
\usepackage{booktabs}
\usepackage[]{qcircuit}
\usepackage{makecell}
\usetikzlibrary{arrows}
\usetikzlibrary{shapes,fadings,snakes}
\usetikzlibrary{decorations.pathmorphing,patterns}
\usetikzlibrary{calc}
\usetikzlibrary{positioning}
\usepackage[colorlinks = true]{hyperref}
\usepackage[nameinlink,capitalize]{cleveref}
\newtheorem{remark}{Remark}

\graphicspath{{./figure/}}

\renewcommand{\sec}[1]{\hyperref[sec:#1]{Section~\ref*{sec:#1}}}
\newcommand{\app}[1]{\hyperref[app:#1]{Appendix~\ref*{app:#1}}}
\newcommand{\thm}[1]{\hyperref[thm:#1]{Theorem~\ref*{thm:#1}}}
\newcommand{\lem}[1]{\hyperref[lem:#1]{Lemma~\ref*{lem:#1}}}
\newcommand{\cor}[1]{\hyperref[cor:#1]{Corollary~\ref*{cor:#1}}}
\newcommand{\fgr}[1]{\hyperref[fgr:#1]{Figure~\ref*{fgr:#1}}}
\newcommand{\tab}[1]{\hyperref[tab:#1]{Table~\ref*{tab:#1}}}

\newtheorem{assumption}{Assumption}
\newtheorem{theorem}{Theorem}

\newtheorem{lemma}{Lemma}
\newtheorem{corollary}{Corollary}
\newtheorem{proposition}{Proposition}
\newtheorem*{proposition*}{Proposition}

\newtheorem{definition}{Definition}

\newcommand{\comments}[1]{}

\usepackage{enumitem}

\begin{document}
\title{Quantum Derivative Pricing for SPDEs via BDSDE Representation}

\begin{abstract}
We study quantum speedups of derivative pricing for
stochastic partial differential equation (SPDE) models through their
backward doubly stochastic differential equation (BDSDE)
representations.
We develop conditional and nested quantum-accelerated multilevel Monte
Carlo (QA-MLMC) methods for estimating the resulting conditional and
nested expectations, improving the sampling complexity of classical Monte Carlo methods from $\widetilde{\mathcal O}(\epsilon^{-2})$ to
$\widetilde{\mathcal O}(\epsilon^{-1})$ within additive error
$\epsilon$.
We apply the framework to derivative pricing and sensitivity
analysis, providing quantum-accelerated estimators for prices as well as
first-order and second-order Greeks, likelihood-ratio and Malliavin-weight representations for Greeks, and Heston-type stochastic-volatility models.
To enable efficient multilevel coupling, we construct a family of
Forward--Backward Taylor discretization schemes for the stochastic
integrals arising in the BDSDE representations and establish global
strong-error order one convergence for pricing and Greek estimators. Numerical experiments showcase our schemes for first-order and second-order Greeks can reach the required orders for the full quadratic quantum speedups.
\end{abstract}

\author{Xinmiao Li}
\affiliation{Yau Mathematical Sciences Center, Tsinghua University, Beijing 100084, China}
\affiliation{Qiuzhen College, Tsinghua University, Beijing 100084, China}

\author{Yanqiao Wang}
\affiliation{Yau Mathematical Sciences Center, Tsinghua University, Beijing 100084, China}
\affiliation{Qiuzhen College, Tsinghua University, Beijing 100084, China}
\affiliation{Institute for AI Industry Research, Tsinghua University, Beijing 100084, China}

\author{Rundi Lu}
\affiliation{Yau Mathematical Sciences Center, Tsinghua University, Beijing 100084, China}
\affiliation{{Department of Mathematics, Tsinghua University,  Beijing 100084, China}}

\author{Zhengwei Liu}
\email{liuzhengwei@tsinghua.edu.cn}
\affiliation{Yau Mathematical Sciences Center, Tsinghua University, Beijing 100084, China}
\affiliation{{Department of Mathematics, Tsinghua University,  Beijing 100084, China}}
\affiliation{Yanqi Lake Beijing Institute of Mathematical Sciences and Applications, Beijing 101408, China}

\author{Jin-Peng Liu}
\email{liujinpeng@tsinghua.edu.cn}
\affiliation{Yau Mathematical Sciences Center, Tsinghua University, Beijing 100084, China}
\affiliation{Yanqi Lake Beijing Institute of Mathematical Sciences and Applications, Beijing 101408, China}
\affiliation{Institute for Applied Mathematics, Tsinghua University, Beijing 100084, China}

\maketitle

\tableofcontents

\newpage

\section{Introduction}
\label{sec:introduction}
Volatility in modern financial markets is neither constant nor exogenous.
It evolves randomly across multiple time scales and interacts with other
sources of uncertainty, including liquidity conditions, interest rates,
and latent risk factors.
Under the classical Black--Scholes framework and its stochastic-volatility
extensions,
derivative prices can be characterized as solutions of partial differential
equations (PDEs)
~\cite{BlackScholes1973,Heston1993,FouquePapanicolaouSircar2000}.

When the market environment itself evolves randomly,
the pricing operator becomes stochastic,
leading naturally to stochastic partial differential equations (SPDEs).
Such SPDE formulations arise in a variety of financial settings,
including stochastic volatility surface models~\cite{Bergomi2015},
stochastic term-structure models~\cite{HeathJarrowMorton1992,santa2001dynamics,cont2005modeling},
and stochastic forward-curve models in energy markets~\cite{benth2008stochastic}.
In these applications,
the solution of the SPDE typically represents the value of a contingent claim
under a random market environment.
Beyond computing option prices,
practical applications require the estimation of risk sensitivities,
commonly known as Greeks,
including Delta,
Vega,
Gamma,
and higher-order parameter sensitivities.
These quantities play a central role in dynamic hedging,
model calibration,
risk management,
and uncertainty quantification~\cite{hull2016options}.

The numerical approximation of SPDEs has been studied extensively over the
past several decades~\cite{printems2001discretization,larsson2003partial}.
A wide range of deterministic and stochastic discretization techniques have
been developed,
including finite-difference methods,
finite-element methods,
spectral methods,
and stochastic Galerkin approaches
~\cite{gyongy1999lattice,lord2004numerical,brenner2008mathematical}.
These methods have achieved considerable success in the simulation and
analysis of stochastic systems arising in physics,
engineering,
and finance.

However,
the direct numerical treatment of financial SPDEs remains challenging.
In multi-factor stochastic-volatility and term-structure models,
the effective state dimension can grow rapidly.
As a consequence,
grid-based discretizations often suffer from the curse of dimensionality,
resulting in substantial computational costs for both spatial discretization
and the solution of large-scale linear systems
~\cite{BungartzGriebel2004,hout2010adi}.

Moreover,
many quantities of practical interest,
including option prices,
risk measures,
and Greek sensitivities,
are naturally expressed as expectations or nested expectations of the
underlying stochastic system.
Accurate estimation of such quantities often requires substantial sampling
effort in addition to the numerical solution of the SPDE itself.

These challenges motivate the search for alternative formulations that avoid
direct discretization of the underlying SPDE.
An attractive approach is provided by probabilistic representations,
which reformulate the problem in terms of stochastic differential
equations and expectation estimation.

A major development in this direction was the introduction of backward
stochastic differential equations (BSDEs) by Pardoux and
Peng~\cite{Pardoux1990AdaptedSO}.
BSDEs provide a nonlinear extension of the classical Feynman--Kac formula
and establish a probabilistic representation for broad classes of
semilinear parabolic PDEs
~\cite{Pardoux1992BackwardSD,kobylanski2000backward}.
Since their introduction,
BSDEs have become an important tool in
stochastic control~\cite{peng1992generalized,yong1999stochastic},
mathematical finance~\cite{Karoui1997BackwardSD},
and nonlinear expectation theory
~\cite{peng1997backward,peng2004nonlinear}.

To extend the probabilistic correspondence between BSDEs and PDEs
to stochastic partial differential equations,
Pardoux and Peng introduced backward doubly stochastic differential
equations (BDSDEs)~\cite{Pardoux1994BackwardDS}.
They established a stochastic Feynman--Kac formula showing that
solutions of a broad class of quasilinear SPDEs can be represented by
solutions of BDSDEs driven simultaneously by a forward Brownian motion
and a backward Brownian motion.
For the class of stochastic option-pricing SPDEs considered in this work,
let $X^{t,x}$ denote the forward state process describing the underlying
risk factors, initialized from $x$ at time $t$.
Then the SPDE solution admits a BDSDE representation of the form
\begin{align*}
  u(s,X_s^{t,x})=Y_s^{t,x},\quad X_t^{t,x}=x,
  \quad s\in[t,T],
\end{align*}
where $Y_s^{t,x}$ solves the associated BDSDE.
This representation transforms the original SPDE problem into the
estimation of stochastic expectations and avoids direct spatial
discretization of the underlying equation.
Moreover, the coexistence of forward and backward sources of randomness
naturally leads to conditional and nested expectation structures,
providing a probabilistic foundation for Monte Carlo, multilevel Monte
Carlo, and quantum mean estimation methodologies.

The numerical solution of BSDEs and BDSDEs has attracted considerable
attention over the past several decades.
For BSDEs,
a variety of time-discretization,
regression-based,
and Monte Carlo methods have been 
developed~\cite{zhang2004numerical,bouchard2004discrete,gobet2005regression}.
These approaches provide practical algorithms for approximating BSDE
solutions and have led to a rich literature on numerical methods for
high-dimensional PDEs through probabilistic representations.
More recently,
BSDE formulations have inspired a variety of machine-learning approaches
for high-dimensional PDEs,
including Deep BSDE methods,
deep backward dynamic programming schemes,
and related neural-network-based algorithms
~\cite{han2017deep,han2018solving,beck2019machine}.
These methods substantially improve scalability in high-dimensional
settings and have become an active research direction at the interface of
scientific computing and machine learning.

For BDSDEs,
several numerical approximation schemes have been proposed,
including Euler-type and regression-based methods
~\cite{aman2013numerical,bachouch2016empirical}.
In particular,
\cite{bao2016first}
developed a first-order scheme based on a two-sided It\^o--Taylor
expansion.
Since many of these schemes rely on conditional expectations with respect to
the forward Brownian motion,
the forward randomness is integrated out at each time step.
As a consequence,
they do not directly provide the pathwise strong approximations required
for multilevel Monte Carlo coupling and Greek estimation.

Many quantities arising from BDSDE representations,
including option prices and Greek sensitivities,
are naturally expressed as stochastic expectations.
To estimate such quantities efficiently,
multilevel Monte Carlo (MLMC),
introduced by Giles~\cite{giles2008multilevel},
exploits strong couplings between successive discretization levels to
achieve substantial computational savings over standard Monte Carlo methods.

MLMC has been successfully applied to a wide range of problems in
computational finance,
including Greek estimation,
efficient risk measurement,
and basket option pricing
~\cite{giles2009multilevel,burgos2012computing,giles2018multilevel,giles2019multilevel}.
It has also been extended to stochastic partial differential 
equations~\cite{barth2013multilevel,iliev2017renormalization,chada2022improved}.
In particular,
\cite{giles2012stochastic}
developed a multilevel Monte Carlo framework based on a Milstein finite
difference discretization for SPDEs and demonstrated its effectiveness in
the pricing of basket credit derivatives.
To the best of our knowledge, however,
the combination of MLMC and BDSDE representations has received little
attention in the existing literature.

Recent advances in quantum computing have opened new possibilities for
further accelerating stochastic simulation.
Quantum amplitude estimation and related quantum mean estimation
algorithms reduce the sampling complexity of expectation estimation from
$\mathcal O(\epsilon^{-2})$
to
$\widetilde{\mathcal O}(\epsilon^{-1})$,
thereby providing a quadratic speedup over classical Monte Carlo 
methods~\cite{brassard2000quantum,heinrich2002quantum,Montanaro2015,Kothari2022MeanEW}.
Building on these developments,
quantum-accelerated multilevel Monte Carlo (QA-MLMC) methods have recently
emerged as a powerful framework for expectation estimation.
\cite{An2021quantumaccelerated}
combined quantum mean estimation with MLMC and established quantum speedups
for stochastic differential equations.
Subsequent developments further extended and refined this framework,
including quadratic speedups for nonlinear and nested expectation problems
and related stochastic simulation tasks
~\cite{blanchet2024quadratic,Blanchet2025NonlinearQM,ozgul2025quantum,li2026quantum}.

At the same time,
quantum algorithms have also been investigated for financial PDEs and
derivative pricing.
Examples include quantum algorithms for option valuation and financial
simulation~\cite{rebentrost2018quantum,stamatopoulos2020option,herman2026quantum},
as well as recent end-to-end quantum PDE frameworks for option pricing
under Black--Scholes and Heston-type models~\cite{guseynov2026end}.
These works demonstrate the potential of quantum computation for
high-dimensional problems arising in quantitative finance.
On the other hand,
quantum algorithms have also been explored for the numerical solution of
BSDEs.
For example,
\cite{fujita2024application}
introduced a quantum least-squares Monte Carlo approach for solving BSDEs.

However,
most existing quantum approaches mainly focus on PDEs, SDEs, or BSDEs.
To the best of our knowledge,
although quantum simulation algorithms for stochastic differential equations are recently studied~\cite{jin2025quantum,bravyi2025quantum,yang2025circuitefficientrandomizedquantumsimulation,li2026efficientquantumsimulationnonlinear,bravyi2026quantumalgorithmsstochasticnonlinear}, quantum algorithms for SPDEs based on BDSDE representations have not
yet been systematically studied.
The extension of these ideas to SPDEs faces several challenges.

A key difficulty is that achieving the optimal complexity
$\widetilde{\mathcal O}(\epsilon^{-1})$
requires not only quantum acceleration, but also effective multilevel coupling and sufficiently accurate pathwise discretizations.
In the standard multilevel complexity analysis,
if the bias, variance, and cost exponents are denoted by
$(\alpha,\beta,\gamma)$,
then achieving the optimal quantum complexity typically requires
$$\beta\ge 2\gamma,$$
which is stronger than the classical MLMC requirement
$\beta\ge\gamma$.
Equivalently,
if a discretization has strong convergence order $r$ so that the
level-difference variance behaves like $\mathcal O(h^{2r})$,
then the optimal QA-MLMC regime requires
$$r=\frac{\beta}{2}\ge \gamma.$$
For direct discretizations of a $d$-dimensional SPDE,
the cost exponent $\gamma$ is often large because each sample involves the spatial degrees
of freedom of the discretized random field. For instance, if a tensor-product
grid with mesh size \(h_\ell\) is used in each spatial coordinate and the
cost per grid point is uniformly bounded, then the number of spatial grid
points is proportional to \(h_\ell^{-d}\). Hence $\gamma_{\mathrm{SPDE}} = d$.
Thus,
achieving the optimal QA-MLMC complexity would require a strong
approximation order comparable to the spatial dimension ($r\ge \gamma_{\mathrm{SPDE}}= d$),
which is generally unrealistic in high-dimensional settings.

Even after passing to a BDSDE formulation,
a standard strong-error order-$1/2$ discretization (e.g., Euler discretization) is still insufficient for
the full quantum speedup.
When $\gamma=1$ and $r=1/2$,
the variance exponent is only $\beta=1$,
and the resulting QA-MLMC complexity is typically
$\widetilde{\mathcal O}(\epsilon^{-3/2})$
rather than
$\widetilde{\mathcal O}(\epsilon^{-1})$.
This is why the strong-error order one Forward--Backward Taylor discretization
developed in this work is essential:
it yields $r=1$,
reaching the critical regime $\beta=2\gamma$
and enabling the \emph{full quadratic quantum speedup}.
The comparison is summarized in Table~\ref{tab:spde-vs-bdsde}.

\begin{table}[ht]
\centering
\label{tab:spde-vs-bdsde}
\begin{tabular}{c | c | c | c | c}
\toprule
Approach
&
Cost exponent
&
Strong-error order
&
MLMC complexity
&
QA-MLMC complexity
\\
\midrule
Direct SPDE discretization
&
$\gamma_{\mathrm{SPDE}}= d$
&
$r\ge d$ (unrealistic)
&
$\widetilde{\mathcal O}(\epsilon^{-2})$
&
$\widetilde{\mathcal O}(\epsilon^{-1})$ 
\\
Standard BDSDE discretization
&
$\gamma_{\mathrm{BDSDE}}=1$
&
$r=\frac12$
&
$\widetilde{\mathcal O}(\epsilon^{-2})$
&
$\widetilde{\mathcal O}(\epsilon^{-3/2})$
\\
Our method
(Theorem~\ref{thm:QAMLMC for u when given B}
and Section~\ref{sec:Strong-Order-One Numerical Schemes for Pricing and Greek Estimators})
&
$\gamma_{\mathrm{BDSDE}}=1$
&
$r=1$
&
$\widetilde{\mathcal O}(\epsilon^{-2})$
&
$\widetilde{\mathcal O}(\epsilon^{-1})$
\\
\bottomrule
\end{tabular}
\caption{Classical and quantum MLMC algorithms for the direct discretizations of SPDEs and
the BDSDE-based discretizations. Here $r$ denotes the strong convergence
order, $\gamma$ denotes the cost exponent per sample, and $\epsilon$ denotes the additive error.}
\end{table}

These considerations indicate that extending quantum multilevel methods
to SPDEs requires more than a direct application of existing QA-MLMC
theory.

Existing QA-MLMC frameworks are primarily designed for expectation
estimation problems arising from SDE-type path simulations.
In contrast,
the BDSDE representation of SPDE solutions introduces an additional
backward source of randomness,
which naturally gives rise to \emph{conditional} and \emph{nested} expectation
structures.
These structures require quantum multilevel estimators that are adapted
to the two sources of randomness.

Moreover,
most existing multilevel and quantum approaches in this area focus
primarily on pricing problems.
By contrast,
efficient estimators for first- and second-order Greeks under
stochastic-environment models are much less developed.
This leaves open the construction of multilevel and quantum estimators
for sensitivity analysis in the SPDE and BDSDE setting.

\subsection{Contributions}

In this work,
we develop a unified computational framework for stochastic derivative-pricing
SPDEs based on their BDSDE representations.
The framework combines multilevel Monte Carlo,
conditional and nested quantum estimators,
and new Forward--Backward Taylor discretization schemes with strong-error order one convergence,
allowing derivative pricing and Greek estimation to be treated within a common
probabilistic setting.\\

\paragraph{Modeling and algorithms.}

\begin{enumerate}

\item
\textbf{BDSDE-based framework for SPDE pricing and sensitivity analysis.}

Section~\ref{sec:SPDE Formulation and Probabilistic Representation}
introduces the SPDE models arising in stochastic-environment financial
markets and develops a BDSDE-based probabilistic representation.
This representation reformulates pricing and Greek estimation problems
as conditional and nested expectation estimation tasks, providing a
unified framework for direct pricing, first-order sensitivities, and
second-order sensitivities.
The resulting estimation framework serves as the foundation for all
subsequent quantum algorithms developed in this work.

\item
\textbf{Conditional quantum-accelerated multilevel Monte Carlo for SPDEs.}

Building on the BDSDE-based estimation framework,
we develop a conditional QA-MLMC methodology for approximating
\begin{align*}
 u(t,x;B)
=\mathbb E_W[P(t,x;B)\mid B]. 
\end{align*}
The key ingredients are the conditional quantum level-$\ell$
difference evaluator
(Algorithm~\ref{alg:Q-level-l-difference}),
which constructs quantum estimators for the level differences
appearing in the multilevel decomposition,
and the conditional quantum-accelerated MLMC estimator
$\widehat{u}(t,x;B)$
(Algorithm~\ref{alg:conditional-QA-MLMC}),
which combines these level differences across discretization levels
to produce a complete conditional estimator of
$u(t,x;B)$.

\item
\textbf{Nested quantum-accelerated multilevel Monte Carlo for SPDEs.}

Building on recent quantum algorithms for nonlinear and nested
expectation estimation
~\cite{Blanchet2025NonlinearQM,sun2026optimalquantumspeedupsrepeatedly},
we specialize the nested QA-MLMC framework to the BDSDE
representations arising from SPDEs.
This yields a nested quantum-accelerated estimator
(Algorithm~\ref{alg:nested-qmlmc})
for quantities of the form
\begin{align*}
\mathbb E_B\!\left[\varphi(u(t,x;B))\right].
\end{align*}
The resulting methodology combines the coupled level difference
evaluator
(Algorithm~\ref{alg: Q-Coupled Level Difference Evaluator})
with a nested multilevel quantum estimation procedure,
thereby extending quantum multilevel techniques to nested
expectation structures arising naturally in stochastic-environment
SPDE models.

\item
\textbf{Applications to derivative pricing and sensitivity analysis.}

In Section~\ref{sec:applications},
we specialize the conditional and nested QA-MLMC frameworks
to derivative pricing and sensitivity analysis.
Using the probabilistic representations developed in this work,
we construct quantum estimators for a broad class of first-order Greeks,
including Delta, spot Vega, Rho, and general parameter Greeks;
smooth second-order Greeks like Gamma, Vanna and Volga; extensions to nonsmooth payoffs via likelihood-ratio and Malliavin-weight representations; and Heston-type stochastic-volatility models.

\item
\textbf{Global strong-error order one analysis and numerical experiments.}

In Section~\ref{sec:Strong-Order-One Numerical Schemes for Pricing and Greek Estimators}, we establish a global strong-error framework for the BDSDE payoff
functionals arising in pricing,
first-order Greek estimation,
and second-order Greek estimation.
We then construct concrete Forward--Backward Taylor discretization
operators and prove that they satisfy the first-order consistency and
accumulated stability properties required by this framework.
Consequently,
the resulting estimators achieve global strong convergence of order one.

We further validate this theory through numerical experiments in Section~\ref{sec:experiment}.
For pricing,
first-order Greek estimation,
and second-order Greek estimation,
the observed multilevel exponents satisfy
\[
    (\alpha,\beta,\gamma)\approx(1,2,1),
\]
confirming the strong-error order one behavior predicted by the theory and the
conditions required for the quadratic quantum speedup.

\end{enumerate}

\paragraph{Theorems.}

\begin{enumerate}
\item
\textbf{BDSDE representation and QA-MLMC foundations.}

Theorem~\ref{thm:spde-representation},
following
\cite{Pardoux1994BackwardDS,BallyMatoussi2001},
provides the BDSDE representation for a class of linear backward SPDEs
and serves as the probabilistic foundation of our framework.
Theorem~\ref{thm:QA-MLMC},
following
\cite{An2021quantumaccelerated},
recalls the general quantum-accelerated MLMC methodology that underlies
our quantum complexity analysis.

\item
\textbf{Conditional quantum speedups for SPDE solution estimation.}

Theorem~\ref{thm:QAMLMC for u when given B}
shows that, conditioned on a realization of the environmental Brownian
motion $B$,
the SPDE solution
\begin{align*}
u(t,x;B)=\mathbb E_W[P(t,x;B)\mid B]
\end{align*}
can be estimated with additive error $\epsilon$
using computational cost
\begin{align*}
\widetilde{\mathcal O}(\epsilon^{-1}),
\end{align*}
yielding a quadratic quantum speedup over classical Monte Carlo methods.

\item
\textbf{Quantum speedups for nested SPDE expectations.}

Theorem~\ref{thm:nested-qmlmc}
extends the conditional framework to nested quantities of the form
\begin{align*}
\mathbb E_B[\varphi(u(t,x;B))].
\end{align*}

Under suitable regularity and moment assumptions,
the resulting nested QA-MLMC estimator achieves additive error
$\epsilon$
with overall complexity
\begin{align*}
\widetilde{\mathcal O}(\epsilon^{-1}),
\end{align*}
thereby extending quantum speedups to SPDE quantities involving
both forward and backward randomness.

\item
\textbf{Quantum estimation of first and second-order Greeks.}

Propositions~\ref{prop:app-first-order-greeks}
and~\ref{prop:app-second-order-greeks}
derive pathwise conditional representations for first-order and
second-order Greeks.
These representations express the Greeks as conditional expectations
of explicitly constructed payoff functionals involving first-order and
second-order variational processes.

Building on these representations,
Corollaries~\ref{cor:app-qmlmc-greeks}
and~\ref{cor:app-nested-qmlmc-greeks}
show that both conditional and unconditional Greeks can be estimated
with additive error $\epsilon$
using quantum-accelerated MLMC methods with complexity
\begin{align*}
\widetilde{\mathcal O}(\epsilon^{-1}).
\end{align*}

\item
\textbf{Global strong-error order one schemes via Forward--Backward Taylor discretization.}

Proposition~\ref{prop:error-u},
Proposition~\ref{prop:error-first-order-greek-payoff},
and
Proposition~\ref{prop:error-second-order-greek-payoff}
establish a general strong-error framework showing that
global strong-error order one follows whenever the constituent
discretization operators satisfy suitable first-order consistency and
stability properties.

To realize this framework, we introduce a new family of
Forward--Backward Taylor discretization schemes
for the stochastic integrals arising in the BDSDE
representations of pricing and Greek estimators.
The construction explicitly captures mixed forward--backward
iterated integrals and provides concrete discretization operators
for pricing, first-order Greeks, and second-order Greeks.

Propositions
~\ref{prop:int-strong-error},
~\ref{prop:int stable},
~\ref{prop:first-order-greek-strong-error},
~\ref{prop:first-order-greek accumulated stability},
~\ref{prop:second-order-greek-strong-error},
and
~\ref{prop:second-order-greek accumulated stability}
verify that the Forward--Backward Taylor discretization satisfies the required
strong-error order one consistency and stability conditions.
Consequently, all resulting pricing and Greek estimators achieve
global strong convergence of order one.

\end{enumerate}

\subsection{Organization}
Section~\ref{sec:introduction}
introduces the background and motivation of the work,
reviews the related literature,
and summarizes the main contributions.

Section~\ref{sec:SPDE Formulation and Probabilistic Representation}
presents the SPDE models considered in this paper and develops the
BDSDE-based probabilistic representation and estimation framework that
serves as the foundation of our methodology.

Section~\ref{sec:Quantum-accelerated MLMC for the SPDE}
develops the quantum-accelerated multilevel Monte Carlo framework.
Subsection~\ref{subsec:Conditional Quantum-Accelerated MLMC Estimator}
introduces the conditional QA-MLMC estimator for approximating
$u(t,x;B)=\mathbb E_W[P(t,x;B)\mid B]$,
while
Subsection~\ref{subsec:Nested Quantum-accelerated MLMC Estimator}
develops the nested QA-MLMC estimator for quantities of the form
$\mathbb E_B[\varphi(u(t,x;B))]$.

Section~\ref{sec:applications}
specializes the general framework to derivative pricing and sensitivity
analysis.
We derive conditional pricing representations,
first- and second-order Greek representations,
and the corresponding conditional and nested quantum estimators.
Several extensions, including nonsmooth payoffs and Heston-type
stochastic-volatility models, are also discussed.

Section~\ref{sec:Strong-Order-One Numerical Schemes for Pricing and Greek Estimators}
contains the main numerical analysis component of the paper.
We develop strong-error order one numerical schemes for pricing and Greek
estimators and establish the strong-convergence and stability results
required for the quantum-accelerated multilevel complexity analysis.

Section~\ref{sec:experiment}
presents numerical experiments validating the proposed Forward--Backward Taylor discretization and the multilevel convergence properties across multiple realizations of the backward Brownian motion.

Finally, Section~\ref{sec:discussion}
concludes the paper by summarizing the main contributions and outlining
possible directions for future work.

The appendices contain additional background material and technical
proofs.
In particular,
Appendix~\ref{sec:SPDE and BDSDE}
reviews the classical BDSDE representation of SPDEs,
while
Appendix~\ref{app:strong-error-framework}
contains the proofs of the strong-error framework and the numerical
analysis results developed in
Section~\ref{sec:Strong-Order-One Numerical Schemes for Pricing and Greek Estimators}.

\section{SPDE Formulation and Probabilistic Representation}
\label{sec:SPDE Formulation and Probabilistic Representation}

\subsection{From Stochastic Volatility Models to SPDEs}

We begin with the classical Heston stochastic model~\cite{Heston1993} under the
risk-neutral measure:
\begin{align}
  \label{eq:Heston S_t}
\mathrm dS_t&=rS_t\,\mathrm dt+
\sqrt{V_t}\,S_t\,\mathrm dW_t^{(S)},\\
\label{eq:Heston V_t}
\mathrm dV_t&=\kappa(\theta-V_t)\,\mathrm dt
+\xi\sqrt{V_t}\,\mathrm dW_t^{(V)},
\end{align}
where $S_t$ denotes the asset price,
$V_t$ is the instantaneous variance process,
$r$ is the risk-free interest rate,
$\kappa$ is the mean-reversion rate,
$\theta$ is the long-run variance level,
$\xi$ is the volatility-of-volatility parameter,
and
\begin{align}
  \label{eq:Heston cor}
\mathrm d\langle W^{(S)},W^{(V)}\rangle_t
=\rho\,\mathrm dt.
\end{align}

For a European payoff $g(S_T)$, the option price is
\begin{align*}
u(t,s,v)
=\mathbb E^{\mathbb Q}
\left[e^{-r(T-t)}g(S_T)
\mid S_t=s,\,V_t=v
\right].
\end{align*}

Under standard regularity assumptions,
the standard arbitrage-free pricing arguments~\cite{Merton1973RationalOptionPricing, BlackScholes1973} 
 imply that the option price $u(t,s,v)$ satisfies
 the partial differential equation (PDE)
\begin{align*}
\partial_t u+\mathcal L u-r u=0,
\end{align*}
with terminal condition
$$
u(T,s,v)=g(s),
$$
where the generator $\mathcal L$ associated with
\eqref{eq:Heston S_t}--\eqref{eq:Heston cor} is given by
\begin{align*}
\mathcal L
=rs\partial_s
+\kappa(\theta-v)\partial_v
+\frac12 vs^2\partial_{ss}
+\rho\xi sv\,\partial_{sv}
+\frac12\xi^2v\,\partial_{vv}.
\end{align*}

The above pricing equation is deterministic because all market factors are
assumed to be either deterministic or fully represented by the finite-dimensional
state variables $(S_t,V_t)$.

In modern financial markets, however, quantities such as volatility
surfaces, liquidity conditions, interest-rate environments,
and latent risk factors may themselves evolve randomly over time.
As a consequence, the coefficients of the pricing equation become random,
and the option value must be viewed as a random field.

Motivated by this observation, we consider stochastic option-pricing SPDEs
of the form
\begin{align}
\label{eq:spde0}
\mathrm du=\bigl(\mathcal L u
-r u+F\bigr)\,\mathrm dt+G\,\mathrm dB_t,
\end{align}
with terminal condition
\begin{align*}
u(T,s,v)=g(s).
\end{align*}
Here,
$\mathcal L$ denotes the stochastic-volatility generator,
$r$ is the discount rate,
$F$ is a source term,
and $G$ characterizes the stochastic forcing induced by the random market
environment.

Direct numerical discretization of
\eqref{eq:spde0}
is computationally demanding,
particularly in high-dimensional settings.
To facilitate efficient numerical approximation,
we employ a probabilistic representation based on backward doubly
stochastic differential equations (BDSDEs).
This representation transforms the SPDE problem into the estimation of
conditional expectations over stochastic paths and serves as the
foundation of the algorithms developed in this paper.

\subsection{BDSDE Representation and Estimation Framework}
\label{subsec:BDSDE Representation and Estimation Framework}

A key advantage of stochastic partial differential equations is that,
under suitable regularity conditions,
their solutions admit probabilistic representations through backward
doubly stochastic differential equations (BDSDEs).
Introduced by Pardoux and Peng~\cite{Pardoux1994BackwardDS},
BDSDEs extend the classical Feynman--Kac formula and provide a
probabilistic representation for a broad class of quasilinear SPDEs.

In this work,
the SPDE is treated through its associated BDSDE representation.
Since the SPDE--BDSDE correspondence is classical,
we only summarize the formulation required for the subsequent numerical
analysis.
Further details can be found in
Appendix~\ref{sec:SPDE and BDSDE}.

In our applications,
we are primarily interested in the linear setting.
Following~\cite{BallyMatoussi2001},
we consider the following class of linear backward SPDEs.

\begin{theorem}\label{thm:spde-representation}
  For the linear backward SPDE
  \begin{equation}\label{eq:linear SPDE(with BDSDE)0}
  \begin{cases}
\mathrm{d} u(t,x)
=
\Bigl[\mathcal{L}u(t,x)
+ F(t,x)
+ c(t)u(t,x)
+ \widetilde{c}(t)(\sigma^\top\nabla u)(t,x)\Bigr]\,\mathrm{d}t
+ \Bigl[H(t,x)+d(t)u(t,x)\Bigr]\,\mathrm{d}B_t,\quad t\in[0,T],\\
u(T,x)=G(x),
\end{cases}
\end{equation}
where $u:\mathbb{R_+}\times \mathbb{R}^d\rightarrow \mathbb{R}^k$ and
  $\mathcal{L}u=\left( Lu_1,\cdots ,Lu_k \right)^\top$
with $L=\frac{1}{2}\sum\limits_{i,j=1}^d(\sigma\sigma^\top)_{ij}\frac{\partial^2}{\partial x_i\partial x_j}+\sum\limits_{i=1}^db_i\frac{\partial}{\partial x_i}.$
Assume that $G\in C_p^3(\mathbb{R}^d)$,
$F, H\in C_b^3([0,T]\times\mathbb{R}^d)$ and $c,\widetilde{c},d$ are bounded deterministic functions.
Let $\{X_s^{t,x};t\leq s\leq T\}$ be the solution of the SDE
\begin{equation}\label{eq:SDE(with BDSDE)0}
\begin{cases}
\mathrm{d}X_s^{t,x}=b(X_s^{t,x})\,\mathrm{d}s
+\sigma(X_s^{t,x})\,\mathrm{d}W_s,\quad s\in[t,T],\\
X_t^{t,x}=x.
\end{cases}
\end{equation}
Define the stochastic exponential $\Phi(s,r)$ as
  \begin{align*}
    \Phi(s,\tau)=\exp\left(\int_s^\tau c(r)\,\mathrm{d}r 
    +\int_s^\tau d(r)\mathrm d \overleftarrow{B}_r
    +\int_s^\tau \widetilde{c}(r)\,\mathrm{d}W_r
    -\frac{1}{2}\int_s^\tau\left(|\widetilde{c}(r)|^2-|d(r)|^2\right)\,\mathrm{d}r\right).
  \end{align*}
Then the SPDE~\eqref{eq:linear SPDE(with BDSDE)0} has a unique solution  and it can 
  be written as
  \begin{align}\label{eq:u(t,x)0}
    u(t,x)=\mathbb{E}\left[\Phi(t,T)G(X_T^{t,x})
    +\int_t^T\Phi(t,r)\left(F(r,X_r^{t,x})+d(r) H(r,X_r^{t,x})\right)\,\mathrm{d}r 
    +\int_t^T\Phi(t,r)H(r,X_r^{t,x})\mathrm d \overleftarrow{B}_r\,|\,\mathcal{F}_{t,T}^B  \right].
  \end{align}
  For convenience in the subsequent discussion, when the Brownian motion $B$ is fixed, we denote
\begin{align}\label{eq:eq of P(t,x;B)}
P(t,x;B)=\Phi(t,T)G(X_T^{t,x})
    +\int_t^T\Phi(t,r)\left(F(r,X_r^{t,x})+d(r) H(r,X_r^{t,x})\right)\,\mathrm{d}r 
    +\int_t^T\Phi(t,r)H(r,X_r^{t,x})\mathrm d \overleftarrow{B}_r
\end{align}
and
\begin{align}\label{eq:u(t,x;B)=E[P]}
  u(t,x;B)=\mathbb{E}_W[P(t,x;B)].
\end{align}

\end{theorem}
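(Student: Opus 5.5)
We prove the theorem in two stages. First we reduce the linear SPDE to the classical Pardoux--Peng setting. Then we solve the resulting linear BDSDE explicitly by a stochastic variation-of-constants (integrating factor) argument.

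\emph{Stage 1: nonlinear BDSDE and uniqueness.} Set $f(r,x,y,z)=F(r,x)+c(r)y+\widetilde c(r)z$ and $g(r,x,y)=H(r,x)+d(r)y$. Consider the BDSDE
\begin{align*}
Y_s = G(X_T^{t,x})+\int_s^T f(r,X_r^{t,x},Y_r,Z_r)\,\mathrm dr+\int_s^T g(r,X_r^{t,x},Y_r)\,\mathrm d\overleftarrow B_r-\int_s^T Z_r\,\mathrm dW_r .
\end{align*}
The coefficients $f$ and $g$ are globally Lipschitz, and $g$ is Lipschitz in $y$ with constant $\sup|d|$. The Pardoux--Peng contraction condition on $\partial_z g$ is trivially satisfied because $g$ does not depend on $z$. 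Hence the BDSDE has a unique square-integrable solution $(Y,Z)$ (Appendix~\ref{sec:SPDE and BDSDE}).

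The regularity hypotheses $G\in C_p^3$ and $F,H\in C_b^3$, together with bounded deterministic $c,\widetilde c,d$, are exactly those of \cite{Pardoux1994BackwardDS,BallyMatoussi2001}. Under them, $u(t,x):=Y_t^{t,x}$ is $\mathcal F^B_{t,T}$-measurable and is the unique classical (respectively weak) solution of \eqref{eq:linear SPDE(with BDSDE)0}. Moreover $Y_s^{t,x}=u(s,X_s^{t,x})$ and $Z_s=(\sigma^\top\nabla u)(s,X_s^{t,x})$. For this identification we invoke the generalized It\^o formula for $u(s,X_s)$, which uses the $C^2$ regularity of $u$ obtained from the $C^3$ data. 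We also use the time-reversal convention relating $\mathrm du$ in \eqref{eq:linear SPDE(with BDSDE)0} to the backward It\^o integral. We take this step from the literature rather than reprove it.

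\emph{Stage 2: explicit formula.} For fixed $t$, apply the It\^o formula, in the doubly stochastic form of Pardoux--Peng (Lemma 1.3 there), to the product $\Phi(t,s)Y_s$ on $s\in[t,T]$. The forward integral $\int\widetilde c\,\mathrm dW$ in $\Phi$ is chosen so that its cross-variation with $-Z\,\mathrm dW$ cancels the drift term $\widetilde c Z$. The $c\,\mathrm dr$ term cancels $cY$. The backward term $d\,\mathrm d\overleftarrow B$ in $\Phi$ interacts with $(H+dY)\,\mathrm d\overleftarrow B$ in two ways:
\begin{itemize}
\item its quadratic covariation produces $d(r)H$ and $d^2Y$ contributions;
\item the $+\tfrac12|d|^2$ correction in the exponent is exactly what absorbs the $d^2Y$ part.
\end{itemize}
The remaining residue is $\Phi(t,r)\bigl(F+dH\bigr)\,\mathrm dr+\Phi(t,r)H\,\mathrm d\overleftarrow B_r$ plus a forward martingale part $\Phi(t,r)(\dots)\,\mathrm dW_r$. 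Integrating from $t$ to $T$ and using $\Phi(t,t)=1$ gives
\begin{align*}
Y_t = P(t,x;B) - \int_t^T \Phi(t,r)\bigl(Z_r-\widetilde c(r)Y_r\bigr)\,\mathrm dW_r .
\end{align*}
Finally, take $\mathbb E[\,\cdot\,|\mathcal F^B_{t,T}]$. $Y_t$ is $\mathcal F^B_{t,T}$-measurable, and the $\mathrm dW$ stochastic integral has zero conditional expectation because $W$ is independent of $B$. For this we need the integrand in $L^2$, which follows from the standard estimates for $(Y,Z)$ together with Gaussian moment bounds on $\Phi$ (the coefficients are bounded and deterministic). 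This yields \eqref{eq:u(t,x)0}. Since $W$ and $B$ are independent, the conditional expectation given $\mathcal F^B_{t,T}$ equals integration over $W$ with $B$ frozen, which is the notation \eqref{eq:u(t,x;B)=E[P]}.

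\emph{Main obstacle.} The delicate step is the sign and correction bookkeeping in the doubly stochastic It\^o product rule. Backward It\^o integrals carry a quadratic-variation term of the opposite sign to forward ones, so one must check that the stated exponent $-\tfrac12(|\widetilde c|^2-|d|^2)$ is precisely the one that makes all $Y$- and $Z$-dependent drift terms vanish. I would first do the computation with $H=F=0$, where $\Phi(t,T)$ must solve the homogeneous linear BDSDE, to pin down the signs. I would then add the inhomogeneous terms, which produce the extra $d(r)H$ cross term. The secondary technical point is justifying that the $\mathrm dW$ integral is a true martingale conditionally on $B$, and that $\Phi$ is adapted appropriately, given that it mixes forward and backward filtrations. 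I would handle this via a uniform $L^p$ bound on $\Phi$ and on $(Y,Z)$.
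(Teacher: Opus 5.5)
The paper does not prove this theorem; it imports it from Proposition~2.1 of \cite{BallyMatoussi2001}, restated in the appendix. Your route (Pardoux--Peng well-posedness plus an integrating-factor computation) is the natural one. However, the step you yourself flag as the main obstacle fails.

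In the doubly stochastic It\^o formula, a $\mathrm d\overleftarrow B$-integral with integrand $\gamma$ contributes $-\tfrac12|\gamma|^2\,\mathrm ds$, and backward cross-variations enter as $-\gamma^{(1)}\gamma^{(2)}\,\mathrm ds$. Both signs are opposite to the $\mathrm dW$-terms. Suppose the $B$-part of the weight is $\exp\bigl(\int_t^s d\,\mathrm d\overleftarrow B_r+\kappa\int_t^s|d|^2\,\mathrm dr\bigr)$. Then it has drift $(\kappa-\tfrac12)|d|^2\Phi$. Its cross-variation with the $-(H+dY)\,\mathrm d\overleftarrow B_s$ part of $\mathrm dY_s$ contributes $+\Phi\,d\,(H+dY)\,\mathrm ds$. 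The $Y$-dependent drift of $\mathrm d(\Phi Y)$ is therefore $(\kappa+\tfrac12)|d|^2\Phi Y$, which vanishes only for $\kappa=-\tfrac12$.

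With the stated exponent ($\kappa=+\tfrac12$), a term $|d|^2\Phi Y$ survives. With $\kappa=-\tfrac12$, the $\mathrm dr$-integrand of the resulting representation is $\Phi(t,r)(F-dH)$, not $\Phi(t,r)(F+dH)$. Your $c$- and $\widetilde c$-bookkeeping is fine. The $\mathrm dW$-integrand comes out as $\Phi(Z+\widetilde cY)$ rather than $\Phi(Z-\widetilde cY)$, which is harmless.

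No bookkeeping can rescue the formula as written, because it is false whenever $d\not\equiv0$. Take $F=H=0$, $c=\widetilde c=0$, $G\equiv1$. The solution is spatially constant and satisfies $u(t)=1+\int_t^T d(s)u(s)\,\mathrm d\overleftarrow B_s$. Backward It\^o integrals are centred, so $\mathbb E\,u(t)=1$; explicitly, $u(t)=\exp(\int_t^T d\,\mathrm d\overleftarrow B-\tfrac12\int_t^T|d|^2)$. The stated formula instead gives $u(t)=\Phi(t,T)$, with $\mathbb E\,\Phi(t,T)=\exp(\int_t^T|d|^2)$.

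What your argument proves, once carried out correctly, is the representation with exponent $-\tfrac12\int(|\widetilde c|^2+|d|^2)$ and $\mathrm dr$-integrand $F-dH$. The stated signs are exactly what one obtains by giving the backward integral the forward It\^o correction.

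There is a second, independent gap. You cannot apply the Pardoux--Peng It\^o formula to $\Phi(t,s)Y_s$ directly. $\Phi(t,s)$ depends on $B$ on $[t,s]$, so its would-be $\mathrm d\overleftarrow B$-integrand is not $\mathcal F^B_{s,T}$-measurable, and uniform $L^p$ bounds do not fix adaptedness.

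To repair this, write the $B$-factor as $\mathcal E(t,T)/\mathcal E(s,T)$, with $\mathcal E(s,T)=\exp(\int_s^T d\,\mathrm d\overleftarrow B-\tfrac12\int_s^T|d|^2)$. Let $\Phi^W$ collect the $c$- and $\widetilde c$-parts. Apply the formula to the properly measurable process $\mathcal E(s,T)^{-1}\Phi^{W}(t,s)Y_s$, then multiply by the $\mathcal F^B_{t,T}$-measurable factor $\mathcal E(t,T)$ at the end. This reproduces the bookkeeping above.

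It also tells you how to read the anticipating integral $\int_t^T\Phi(t,r)H\,\mathrm d\overleftarrow B_r$. It should be understood as $\mathcal E(t,T)\int_t^T\mathcal E(r,T)^{-1}\Phi^W(t,r)H(r,X_r^{t,x})\,\mathrm d\overleftarrow B_r$, which is a genuine backward It\^o integral. The conditional expectation then kills the $\mathrm dW$-term exactly as you argue.
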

\begin{remark}
  Here, the quantity $P$ depends on the solution of the SDE $X_r^{t,x}$, 
  and the expectation is taken with respect to the randomness arising from the SDE.
\end{remark}

Theorem~\ref{thm:spde-representation}
reduces the solution of the SPDE to the evaluation of a path functional
$P(t,x;B)$.
For a fixed realization of the backward Brownian motion $B$,
the only remaining randomness in
\eqref{eq:eq of P(t,x;B)}
comes from the forward Brownian motion $W$.
Consequently,
the SPDE solution admits the conditional expectation representation
\begin{align*}
u(t,x;B)
=\mathbb E_W
\bigl[P(t,x;B)\,\big|\,B\bigr].
\end{align*}
This observation transforms the original SPDE problem into the
estimation of a conditional expectation with respect to the forward
diffusion.

In many applications,
the quantity of ultimate interest is obtained after averaging over the
random environment generated by $B$.
This leads to the nested expectation
\begin{align}
U(t,x)
=\mathbb E_B
\bigl[u(t,x;B)\bigr]
=\mathbb E_B
\Big[\mathbb E_W
\bigl[P(t,x;B)\,\big|\,B\bigr]\Big].
\end{align}

Consequently,
the probabilistic representation naturally gives rise to two estimation
problems.
The first is the conditional estimation problem
\begin{align}
u(t,x;B)
=\mathbb E_W
\bigl[P(t,x;B)\,\big|\,B\bigr],
\end{align}
where the realization of $B$ is fixed.
The second is the nested estimation problem
\begin{align}
U(t,x)
=\mathbb E_B
\bigl[u(t,x;B)\bigr],
\end{align}
which requires averaging over both sources of randomness.

These two estimation problems form the basis of the multilevel Monte
Carlo and quantum algorithms developed in the subsequent sections.

\section{Quantum-accelerated MLMC for SPDEs}
\label{sec:Quantum-accelerated MLMC for the SPDE}

The probabilistic representation developed in the previous section
reduces the SPDE problem to the estimation of conditional and nested
expectations associated with the path functional
$P(t,x;B)$. 
To estimate efficiently,
we employ multilevel Monte Carlo (MLMC) 
together with quantum mean
estimation.

Suppose that a sequence of approximations
$\{\Phi_\ell\}_{\ell\ge0}$
to a random variable $\Phi$
is available and satisfies
\begin{align*}
\left|\mathbb E[\Phi]-\mathbb E[\Phi_\ell]\right|=\mathcal{O}(2^{-\alpha\ell}),
\quad
\operatorname{Var}(\Phi_\ell-\Phi_{\ell-1})
=\mathcal{O}(2^{-\beta\ell}),
\quad
\operatorname{Cost}(\Phi_\ell)
=\mathcal{O}(2^{\gamma\ell}),
\end{align*}
for some $\alpha,\beta,\gamma>0$.
The MLMC estimator is based on the telescoping identity
\begin{align*}
\mathbb E[\Phi_L]=\mathbb E[\Phi_0]
+\sum_{\ell=1}^{L}\mathbb E[\Phi_\ell-\Phi_{\ell-1}].
\end{align*}
Under the standard MLMC framework~\cite{giles2008multilevel,giles2015multilevel},
if $\beta>\gamma$,
an estimator with mean-square error
$\mathcal{O}(\epsilon^2)$
can be constructed with overall computational complexity
$\mathcal{O}(\epsilon^{-2})$.

Recent advances in quantum computing provide the possibility of further
accelerating expectation estimation.
In particular, quantum mean estimation algorithms reduce the sampling
complexity from
$\mathcal{O}(\epsilon^{-2})$
to
$\widetilde{\mathcal O}(\epsilon^{-1})$,
yielding a quadratic speedup over classical Monte Carlo methods.
The following result will be used as the basic quantum subroutine.

\begin{lemma}[Quantum mean estimation {\cite{Kothari2022MeanEW}}]
\label{lem:quantum mean estimation}
Suppose a random variable $\Phi$ admits an efficient quantum encoding.
Then there exists a quantum algorithm which estimates
$\mathbb E[\Phi]$
with additive error
$\sigma(\Phi)/n$
using $\mathcal{O}(n)$ oracle calls, with constant success probability.
\end{lemma}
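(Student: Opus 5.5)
The plan is to prove Lemma~\ref{lem:quantum mean estimation} by reduction to the Kothari--O'Donnell estimator. The statement presupposes an \emph{efficient quantum encoding} of $\Phi$. We take this to mean a unitary $U_\Phi$ preparing $\sum_\omega \sqrt{p(\omega)}\ket{\omega}\ket{\Phi(\omega)}$, together with its inverse. Each application of $U_\Phi$ or $U_\Phi^\dagger$ counts as one oracle call.

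\textbf{Step 1 (normalization).} First obtain coarse estimates $\tilde\mu$ of $\mathbb E[\Phi]$ and $\tilde\sigma$ of $\sigma(\Phi)$, each within a constant factor, using $\mathcal O(1)$ classical samples and Chebyshev's inequality. We then work with the recentered, rescaled variable $\Psi=(\Phi-\tilde\mu)/\tilde\sigma$, which has $\mathcal O(1)$ mean and unit-order variance.

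\textbf{Step 2 (truncation and dyadic decomposition).} Split $\Psi$ into dyadic shells $\Psi\,\mathbf 1\{2^{k-1}<|\Psi|\le 2^k\}$ for $k\le \log_2 n$. By Chebyshev, the tail $|\Psi|>n$ contributes bias at most $\mathbb E[\Psi^2]/n=\mathcal O(1/n)$. Each bounded shell is handled by amplitude estimation, in the Brassard--H{\o}yer--Mosca--Tapp form, with an appropriately scaled number of queries. Allocating queries so that shell $k$ gets error roughly $2^{k}\sqrt{\Pr(|\Psi|\sim 2^k)}/n$ would make the total error $\mathcal O(1/n)$ after Cauchy--Schwarz.

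\textbf{Step 3 (removing the logarithm).} The naive allocation in Step~2 loses logarithmic factors, and removing them is the main obstacle. To get exactly $\mathcal O(n)$ queries, I would follow Kothari--O'Donnell: estimate the shell probabilities adaptively, then apply phase estimation to a Grover-type operator encoding a suitably rescaled version of $\Psi$. The key analytic fact is that, for bounded variance, the eigenphase deviation is controlled by $\sigma/n$.

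\textbf{Step 4 (conclusion).} Success probability at least $2/3$ follows by a union bound over the constant-probability subroutines. Multiplying back by $\tilde\sigma$ gives additive error $\mathcal O(\sigma(\Phi)/n)$. In the context of this paper, however, the cleanest route is to cite \cite{Kothari2022MeanEW}, Theorem~1.1, verbatim: the lemma is precisely that result restated, and Steps~1--3 only sketch its mechanism.
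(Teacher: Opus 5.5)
Citing the main theorem of \cite{Kothari2022MeanEW}, as you do at the end, is exactly what the paper does. The lemma is imported with that citation and no argument of its own, so the citation alone is the proof. The trouble is Steps~1--4, which you present as a sketch of that theorem's mechanism. Read as an argument, they break at Step~1.

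\textbf{The gap in Step 1.} You cannot get $\tilde\sigma$ within a constant factor of $\sigma(\Phi)$ from $\mathcal O(1)$ classical samples. Chebyshev controls only the deviation of a sample mean. So it does give a usable center with $|\tilde\mu-\mu|=\mathcal O(\sigma)$, and Markov on the sample variance rules out gross \emph{over}estimation of $\sigma$ with constant probability. Nothing, however, rules out gross \emph{under}estimation.

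Take $\Phi=M$ with probability $p$ and $\Phi=0$ otherwise, with $n^{-2}<p\ll 1$. Then $\sigma\approx M\sqrt p$, yet with probability $1-\mathcal O(p)$ every sample is $0$. So $\tilde\sigma$ is $0$, or, with any floor you impose, negligible next to $\sigma$ once $M$ is large. The atom at $M$ then lies beyond your cutoff $|\Psi|\le n$, and the procedure outputs about $0$. The error is $pM\approx\sigma\sqrt p$, which exceeds $\sigma/n$.

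In general, your tail bound $\mathbb E[\Psi^2]/n$, once rescaled in Step~4, carries the uncontrolled factor $\sigma/\tilde\sigma$. Because this failure occurs with probability close to $1$, no union bound over constant-probability subroutines absorbs it. Needing no a priori knowledge of $\sigma$ is precisely one of the improvements Kothari--O'Donnell claim over earlier estimators. So your sketch does not describe their mechanism.

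\textbf{How to repair it.} Set the scale by tail probabilities instead of by $\sigma$. Keep $\tilde\mu$, drop $\tilde\sigma$, and spend $\mathcal O(n)$ quantum queries (quantile estimation via amplitude amplification) to locate a level $b$ with $\Pr[|\Phi-\tilde\mu|>b]$ of order $n^{-2}$. Then:
\begin{enumerate}
\item Cauchy--Schwarz bounds the truncation bias by $\|\Phi-\tilde\mu\|_{L^2}\cdot\mathcal O(1/n)=\mathcal O(\sigma/n)$.
\item The lower side of the tail estimate plus Chebyshev gives $b=\mathcal O(n\sigma)$, so the additive $b/n^2$ errors of amplitude estimation are $\mathcal O(\sigma/n)$.
\item Dyadic shells at scales $2^{-k}b$, $k\lesssim 2\log_2 n$, then give the bound up to logarithmic factors, without $\sigma$ ever being estimated.
\end{enumerate}

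\textbf{Step 2 also understates the logarithmic loss.} With $K\approx\log_2 n$ shells, Cauchy--Schwarz only gives $\sum_{k\le K}2^k\sqrt{p_k}\le\sqrt K\bigl(\sum_k 4^kp_k\bigr)^{1/2}$. This is attained up to constants: take $p_k\propto 4^{-k}/K$. So your allocation yields error $\mathcal O(\sqrt{\log n}/n)$, not $\mathcal O(1/n)$. The logarithm is lost in accuracy as well as in query count, and removing it is exactly what the cited theorem contributes.
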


Using the standard powering argument
{\cite{Jerrum1986RandomGO}},
the success probability can be amplified from a constant to
$1-\delta$
with an additional $\mathcal{O}(\log(1/\delta))$ overhead.

For notational simplicity, we write
\[
\operatorname{QME}(\Phi;\epsilon,\delta)
\]
for a quantum mean estimation subroutine which returns an estimate
$\widehat \mu$ of $\mathbb E[\Phi]$ such that
\[
\Pr\left(
|\widehat\mu-\mathbb E[\Phi]|\le \epsilon
\right)
\ge 1-\delta .
\]

Replacing the classical sample average on each MLMC level by QME
leads to quantum-accelerated MLMC (QA-MLMC).
Our algorithm is mainly based on the following quantum-accelerated Monte Carlo methods proposed in~\cite{An2021quantumaccelerated}.

\begin{theorem}
\label{thm:QA-MLMC}
Let $\Phi$ be a random variable, and let $\Phi_\ell~(\ell=0,1,\ldots,L)$ 
be a sequence of random variables approximating $\Phi$ at level $\ell$.
 Further define $\Phi_{-1}=0$.
Let $C_\ell$ be the cost of sampling from $\Phi_\ell$, 
and let $V_\ell$ be the variance of $\Phi_\ell-\Phi_{\ell-1}$.
If there exist positive constants $\alpha,
\beta,\gamma$
such that 
\begin{flalign*}
&
\begin{cases}
|\mathbb{E}[\Phi_\ell-\Phi]| = \mathcal{O}(2^{-\alpha \ell}),\\
V_\ell = \mathcal{O}(2^{-\beta \ell}),\\
C_\ell = \mathcal{O}(2^{\gamma \ell})
\end{cases}
&&
\end{flalign*}
Then for any $\epsilon< 1/e$ 
there is a quantum algorithm that 
estimates $\mathbb{E}[\Phi]$ up to additive error
$\epsilon$ with probability at least 0.99, and with cost
    \begin{align*}
        \begin{cases} 
        \widetilde{\mathcal{O}}\left(\epsilon^{-1}\right), & \beta \geq 2\gamma, \\
        \widetilde{\mathcal{O}}\left(\epsilon^{-1 - (\gamma - 0.5\beta)/\alpha}\right), & \beta < 2\gamma.
        \end{cases}
    \end{align*}
\end{theorem}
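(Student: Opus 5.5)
The argument follows the quantum MLMC construction of \cite{An2021quantumaccelerated}: replace the per-level sample averages by calls to $\operatorname{QME}$ from Lemma~\ref{lem:quantum mean estimation} and optimize how the error budget is split across levels.

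\textbf{Truncation level.} Choose the finest level
\[
L=\left\lceil \alpha^{-1}\log_2(2/\epsilon)\right\rceil .
\]
Then the bias condition gives $|\mathbb E[\Phi_L]-\mathbb E[\Phi]|\le \epsilon/2$, up to the hidden constant absorbed into $L$. The telescoping identity
\[
\mathbb E[\Phi_L]=\sum_{\ell=0}^{L}\mathbb E[\Phi_\ell-\Phi_{\ell-1}],\qquad \Phi_{-1}=0,
\]
reduces the task to estimating each level difference $Y_\ell=\Phi_\ell-\Phi_{\ell-1}$ to additive accuracy $\epsilon_\ell$, with $\sum_\ell\epsilon_\ell\le\epsilon/2$.

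\textbf{Per-level estimation.} Each $Y_\ell$ is sampled coherently by running the level-$\ell$ and level-$(\ell-1)$ simulations on shared randomness. This costs $\mathcal O(C_\ell)=\mathcal O(2^{\gamma\ell})$ per oracle call and has standard deviation $\sqrt{V_\ell}=\mathcal O(2^{-\beta\ell/2})$. By Lemma~\ref{lem:quantum mean estimation} with $n_\ell\asymp\sqrt{V_\ell}/\epsilon_\ell$, followed by powering to failure probability $\delta_\ell=0.01/(L+1)$, level $\ell$ costs
\[
\mathcal O\!\left(2^{\gamma\ell}\,2^{-\beta\ell/2}\,\epsilon_\ell^{-1}\log L\right).
\]
A union bound over the $L+1$ levels gives overall success probability at least $0.99$. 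The only loss is a $\log\log(1/\epsilon)$ factor, which the $\widetilde{\mathcal O}$ notation absorbs.

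\textbf{Error allocation and case split.} Write $a_\ell=2^{(\gamma-\beta/2)\ell}$. Minimizing $\sum_\ell a_\ell/\epsilon_\ell$ subject to $\sum_\ell\epsilon_\ell=\epsilon/2$ gives, by Lagrange multipliers (or Cauchy--Schwarz), $\epsilon_\ell\propto\sqrt{a_\ell}$. The resulting total cost is
\[
\mathcal O\!\left(\epsilon^{-1}\Bigl(\sum_{\ell=0}^L\sqrt{a_\ell}\Bigr)^2\log L\right).
\]
\begin{itemize}
\item If $\beta\ge 2\gamma$: each $a_\ell\le 1$, so the sum is at most $L+1=\mathcal O(\log(1/\epsilon))$ and the cost is $\widetilde{\mathcal O}(\epsilon^{-1})$. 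If $\beta>2\gamma$, the sum is even $\mathcal O(1)$.
\item If $\beta<2\gamma$: the sum is geometric and dominated by the last term, $\sqrt{a_L}$. Hence the cost is $\mathcal O(\epsilon^{-1}a_L)$, and since $2^{L}\asymp\epsilon^{-1/\alpha}$, this equals $\mathcal O\bigl(\epsilon^{-1-(\gamma-\beta/2)/\alpha}\bigr)$.
\end{itemize}
The condition $\epsilon<1/e$ only ensures that $\log(1/\epsilon)\ge 1$, so the logarithmic factors are well defined.

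\textbf{Main obstacle.} The delicate point is the oracle model, not the optimization. Lemma~\ref{lem:quantum mean estimation} requires a unitary encoding of $Y_\ell$ whose cost is $\mathcal O(C_\ell)$ and which faithfully reproduces the coupling between consecutive levels. This needs a reversible implementation of the classical sampler with coherent random bits; that implementation is assumed as the ``efficient quantum encoding'' hypothesis. A further point is that the variance-based bound of Lemma~\ref{lem:quantum mean estimation} must be used, rather than a range-based amplitude-estimation bound, because $Y_\ell$ is unbounded. I would take the encoding as given by that hypothesis and check only that both level-$\ell$ and level-$(\ell-1)$ paths are generated from the same fine-level Brownian increments, which is exactly what makes $\operatorname{Var}(Y_\ell)=V_\ell$.
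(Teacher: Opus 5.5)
The paper gives no proof of this theorem; it quotes it from \cite{An2021quantumaccelerated}. Your reconstruction (truncation at $2^{L}\asymp\epsilon^{-1/\alpha}$, telescoping, per-level QME with powering, allocation $\epsilon_\ell\propto\sqrt{a_\ell}$) is the argument of that reference. There is, however, a genuine gap in your per-level cost.

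\textbf{The gap: the one-call floor per level.} You charge level $\ell$ for $n_\ell\asymp\sqrt{V_\ell}/\epsilon_\ell$ oracle calls. But every level you include must be sampled at least once, so the correct charge is $\mathcal O\bigl(2^{\gamma\ell}(1+2^{-\beta\ell/2}\epsilon_\ell^{-1})\log L\bigr)$. Nothing in your argument guarantees $n_\ell\ge 1$ at the fine levels. For example, when $\beta<2\gamma$ your allocation gives $\epsilon_L\asymp\epsilon$, hence $n_L\asymp 2^{-\beta L/2}/\epsilon\asymp\epsilon^{\beta/(2\alpha)-1}$, which tends to $0$ when $\alpha<\beta/2$.

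The floor terms add $\sum_{\ell\le L}2^{\gamma\ell}\asymp\epsilon^{-\gamma/\alpha}$ to the total cost. This fits inside the claimed bounds only if $\alpha\ge\gamma$ (case $\beta\ge 2\gamma$) or $\alpha\ge\beta/2$ (case $\beta<2\gamma$). That is, you need $\alpha\ge\frac12\min(\beta,2\gamma)$.

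\textbf{The missing hypothesis is necessary.} A better allocation cannot fix this. The condition $\alpha\ge\frac12\min(\beta,2\gamma)$ is part of the theorem in \cite{An2021quantumaccelerated}; it is the quantum analogue of Giles's condition $\alpha\ge\frac12\min(\beta,\gamma)$. The version quoted in the paper drops it, and without it the statement is false. Take $C_\ell=2^{\gamma\ell}$ and compare two families:
\begin{itemize}
\item $\Phi_\ell=2^{-\alpha\ell}$ with $\Phi=0$;
\item $\Phi_\ell=\max(2^{-\alpha\ell},2^{-\alpha L'})$ with $\Phi=2^{-\alpha L'}$, where $L'$ is maximal with $2^{-\alpha L'}>2\epsilon$.
\end{itemize}
Both families satisfy the hypotheses with bias constant $1$ and $V_\ell=0$, so with any $\beta$. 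They coincide on every level $\ell\le L'$. An $\epsilon$-accurate algorithm must distinguish them, so it must query some level above $L'$. That costs $\gtrsim 2^{\gamma L'}\asymp\epsilon^{-\gamma/\alpha}$, which is polynomially larger than $\epsilon^{-1}$ when $\alpha<\gamma$.

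So you should add $\alpha\ge\frac12\min(\beta,2\gamma)$ to the hypotheses and carry the floor term $\sum_\ell C_\ell\log L$ through the cost sum. For the only case the paper uses, $(\alpha,\beta,\gamma)=(1,2,1)$, the condition holds with equality, so the downstream results are unaffected.
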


However, the complexity requirements for QA-MLMC are substantially more
restrictive.
In particular, achieving the optimal quantum complexity
$\widetilde{\mathcal O}(\epsilon^{-1})$
 requires
$\beta\ge 2\gamma$,
which is stronger than the classical MLMC requirement
$\beta\ge\gamma$.
For discretization-based SPDE solvers,
the cost exponent $\gamma$ is often large due to the curse of
dimensionality.
For example, under standard tensor-product spatial discretizations of a
$d$-dimensional SPDE, one typically has
$\gamma\approx d.$
Since the MLMC variance exponent typically satisfies
$\beta=2r$, where $r$ is the strong convergence order,
the condition $\beta\ge2\gamma$ would require
$r\ge d,$
namely, a strong convergence order at least equal to the spatial
dimension, which is generally unrealistic in practice.

The BDSDE representation developed in the previous section avoids
direct spatial discretization of the SPDE and converts the problem into
the estimation of stochastic path functionals.
This structure enables the construction of efficient multilevel
estimators with favorable variance decay properties and provides a
natural foundation for quantum acceleration.

In the remainder of this section,
we first consider the conditional estimation problem and then extend
the framework to the nested setting.

\subsection{Conditional Quantum-accelerated MLMC Estimator }
\label{subsec:Conditional Quantum-Accelerated MLMC Estimator}

The probabilistic representation introduced in
Subsection~\ref{subsec:BDSDE Representation and Estimation Framework}
reduces the pricing problem to the estimation of the conditional
expectation
\begin{align*}
u(t,x;B)
=\mathbb E_W[P(t,x;B)\mid B].
\end{align*}
To construct the conditional quantum-accelerated estimators,
we first develop suitable discretizations for the stochastic quantities
appearing in this representation.
Recalling that
\begin{align*}
P(t,x;B)
=\Phi(t,T)G(X_T^{t,x})
    +\int_t^T\Phi(t,r)\left(F(r,X_r^{t,x})+d(r) H(r,X_r^{t,x})\right)\,\mathrm{d}r 
    +\int_t^T\Phi(t,r)H(r,X_r^{t,x})\mathrm d \overleftarrow{B}_r,
\end{align*}
we observe that 
there are three components
that require discretization: the evolution of the SDE, the exponential weight
$\Phi$, and the integrals over the time interval $[t,T]$.

In classical numerical schemes, these components are often discretized using
Euler-type methods, which already provide sufficient accuracy.
However, in quantum algorithms, higher-order convergence is often required.
Therefore, we introduce a unified notation $\mathcal{S}$ to denote the
discretization operators associated with each component, allowing us to
systematically track and analyze their respective convergence properties.

We introduce three discretization operators. Let
\begin{align*}
\Delta W_k:=W_{t_{k+1}}-W_{t_k},
\quad
\Delta\overleftarrow B_k:=B_{t_k}-B_{t_{k+1}},
\quad t_k=t+kh .
\end{align*}

 (1) State discretization $\mathcal{S}_X$.
  Given the current state $X_k$, the time $t_k$, the stepsize $h$, and
  the forward Brownian increment $\Delta W_k$, we set
 \begin{align}\label{eq:def of S_X}
    X_{k+1}=\mathcal{S}_X(X_k,t_k,h;\Delta W_k).
\end{align}

 (2) Exponential weight discretization $\mathcal{S}_\Phi$.
  Given the current weight $\Phi_k$, the time $t_k$, the stepsize $h$,
  the forward Brownian increment $\Delta W_k$, and the backward Brownian
  increment $\Delta\overleftarrow B_k$, we set
  \begin{align}\label{eq:def of S_phi}
    \Phi_{k+1}=\mathcal{S}_\Phi(\Phi_k,t_k,h;
    \Delta W_k,\Delta\overleftarrow B_k).
\end{align}

  (3) Integral discretization $\mathcal{S}_{\mathrm{int}}$.
  Given the current term $Y_k$, weight $\Phi_k$, state $x_k$, the time
  $t_k$, the stepsize $h$, and the Brownian increments
  $(\Delta W_k,\Delta\overleftarrow B_k)$, we set
 \begin{align}\label{eq:def of S_int}
    Y_{k+1}=Y_k+
    \mathcal{S}_{\mathrm{int}}\left(
    \Phi_k,x_k,t_k,h;
    \Delta W_k,\Delta\overleftarrow B_k
    \right).
  \end{align}

Assume that reversible unitary implementations of the discretization operators
  $\mathcal{S}_X,\mathcal{S}_\Phi$ and $\mathcal{S}_{\mathrm{int}}$ are available in the following standard form:
  \begin{align*}
    O_X &: |x,s,h,\Delta W\rangle|0\rangle
     \mapsto |x,s,h,\Delta W \rangle\ket{\mathcal{S}_X(x,s,h;\Delta W)},\\
    O_\Phi &: |\Phi,s,h,\Delta W,\Delta\overleftarrow B\rangle|0\rangle
     \mapsto |\Phi,s,h,\Delta W,\Delta\overleftarrow B\rangle
    \ket{\mathcal{S}_\Phi(\Phi,s,h;\Delta W,\Delta\overleftarrow B)},\\
    O_\mathrm{int}&: \ket{Y,\Phi,x,s,h;\Delta W,\Delta\overleftarrow B}\ket{0}
    \mapsto \ket{Y,\Phi,x,s,h;\Delta W,\Delta\overleftarrow B}
    \ket{Y+\mathcal{S}_\mathrm{int}\left(\Phi,x,s,h;\Delta W,\Delta\overleftarrow B\right)}.
  \end{align*}

For notational simplicity, we combine these elementary reversible updates into a
single one-step oracle. Suppressing unchanged input registers, work registers,
and the standard uncomputation of ancillas, we write, for
$\eta\in\{h,2h\}$,
\[
\mathcal U_{\mathcal S,\eta}(s;\Delta W,\Delta\overleftarrow B):
\ket{X,\Phi,Y}
\mapsto
\ket{
\mathcal S_X(X,s,\eta;\Delta W),
\mathcal S_\Phi(\Phi,s,\eta;\Delta W,\Delta\overleftarrow B),
Y+\mathcal S_{\rm int}(\Phi,X,s,\eta;\Delta W,\Delta\overleftarrow B)
}.
\]

\begin{algorithm}[H]
\caption{Conditional Quantum Level-$\ell$ Difference Sample $\ket{\Delta_{\mathcal{S}}(h;B)}$}
\label{alg:Q-level-l-difference}
\begin{algorithmic}[1]
\Require
$t,T,x$, fine step size $h$, fixed backward Brownian path $B$,
and one-step quantum oracle $\mathcal U_{\mathcal S,\eta}$ for $\eta\in\{h,2h\}$.
\Ensure
A quantum state encoding $|\Delta_{\mathcal S}(h;B)\rangle$.

\State Set $N=(T-t)/(2h)$ and $s_j=t+jh$, $j=0,\ldots,2N$.
 For $j=0,\ldots,2N-1$, set
$\Delta\overleftarrow B_j^f
=
B_{s_j}-B_{s_{j+1}} .$

\State Prepare the fine-level randomness superposition
$\sum_{\boldsymbol{\Delta W}^f}
\sqrt{p(\boldsymbol{\Delta W}^f)}
\ket{\boldsymbol{\Delta W}^f}$,
$\boldsymbol{\Delta W}^f
=
(\Delta W_0^f,\ldots,\Delta W_{2N-1}^f).
$

\State Initialize
$\ket{X_0^f,\Phi_0^f,Y_0^f}
=
\ket{X_0^c,\Phi_0^c,Y_0^c}
=
\ket{x,1,0}.
$

\For{$k=0,\ldots,N-1$}
    \State Fine path: apply two fine steps
    $$    \ket{X_{2k+2}^f,\Phi_{2k+2}^f,Y_{2k+2}^f}
    \leftarrow
    \mathcal U_{\mathcal S,h}
    (s_{2k+1};\Delta W_{2k+1}^f,\Delta\overleftarrow B_{2k+1}^f)
    \mathcal U_{\mathcal S,h}
    (s_{2k};\Delta W_{2k}^f,\Delta\overleftarrow B_{2k}^f)
    \ket{X_{2k}^f,\Phi_{2k}^f,Y_{2k}^f}.$$

    \State Coarse path: apply one coarse step
    $$    \ket{X_{2k+2}^c,\Phi_{2k+2}^c,Y_{2k+2}^c}
    \leftarrow
    \mathcal U_{\mathcal S,2h}
    (s_{2k};\Delta W_{2k}^f+\Delta W_{2k+1}^f,
    \Delta\overleftarrow B_{2k}^f+\Delta\overleftarrow B_{2k+1}^f)
    \ket{X_{2k}^c,\Phi_{2k}^c,Y_{2k}^c}.$$
\EndFor

\State Compute
$\ket{P^f} \gets \ket{\Phi_{2N}^{f} G(x_{2N}^{f})+ Y_{2N}^f}$
and 
$\ket{P^c}\gets \ket{\Phi_{2N}^{c} G(x_{2N}^{c})+ Y_{2N}^c}. $

\State \Return
$\ket{\Delta_{\mathcal S}(h;B)}
=
\ket{P^f-P^c}.$
\end{algorithmic}
\end{algorithm}

The returned quantum state
$\ket{\Delta_{\mathcal S}(h;B)}$
encodes a coupled multilevel difference sample conditioned on $B$.
For notational uniformity, define
\[
\Delta_0(t,x;B):=P_0(t,x;B),
\quad
\Delta_\ell(t,x;B)
:=
P_\ell^{f}(t,x;B)-P_{\ell}^{c}(t,x;B),
\quad \ell\ge1.
\]
The fine and coarse trajectories are generated using the shared Brownian
coupling
\[
\Delta W_k^c
=
\Delta W_{2k}^f+\Delta W_{2k+1}^f,
\quad
\Delta\overleftarrow B_k^c
=
\Delta\overleftarrow B_{2k}^f
+
\Delta\overleftarrow B_{2k+1}^f.
\]

For a fixed realization of $B$, the conditional multilevel identity gives
\[
u_L(t,x;B)
=
\mathbb E_W[P_L(t,x;B)\mid B]
=
\sum_{\ell=0}^{L}
\mathbb E_W[\Delta_\ell(t,x;B)\mid B].
\]
By Lemma~\ref{lem:quantum mean estimation}, each conditional level mean
can be estimated by applying QME to the corresponding quantum encoding.

\begin{algorithm}[H]
\caption{Conditional Quantum-Accelerated MLMC Estimator $\widehat{u}(t,x;B)$}
\label{alg:conditional-QA-MLMC}
\begin{algorithmic}[1]
\Require
$t,T,x,L,h_0$, tolerances $\{(\epsilon_\ell,\delta_\ell)\}_{\ell=0}^L$,
fixed backward Brownian path $B$, and quantum oracles for $\ket{\Delta_{\mathcal{S}}(h;B)}$.

\Ensure
Estimator $\widehat{u}(t,x;B)$ of $u_L(t,x;B)$.

\State Initialize $\widehat{u}(t,x;B)\gets 0$.

\For{$\ell=0$ to $L$}

\State Compute
    $$
    \widehat{\mu}_\ell(B)
    \leftarrow
    \operatorname{QME}\!\left(
    \Delta_\ell(t,x;B),\epsilon_\ell,\delta_\ell
    \right)
    $$
    to estimate
$\mu_\ell(B) =
    \mathbb{E}_W\left[
    \Delta_\ell(t,x;B)|B
    \right]$.
    \State Update
    $\widehat{u}(t,x;B)\gets
    \widehat{u}(t,x;B)+\widehat{\mu}_\ell(B)$.
\EndFor

\State \Return $\widehat{u}(t,x;B)$.
\end{algorithmic}
\end{algorithm}

Algorithm~\ref{alg:conditional-QA-MLMC}
combines the conditional multilevel telescoping decomposition
with quantum mean estimation applied independently on each level.
The resulting estimator approximates the conditional quantity
$$
u_L(t,x;B)
=\mathbb E_W[P_L(t,x;B)\mid B].
$$

\begin{theorem}
\label{thm:QAMLMC for u when given B}
Assume that:

(1) \(G\in C_p^3(\mathbb R^d)\) is globally Lipschitz,
\(F,H\in C_b^3([0,T]\times\mathbb R^d)\), and
\(c,\widetilde c,d\) are bounded deterministic functions.

(2)
The discretization operators
\(\mathcal{S}_X\), \(\mathcal{S}_\Phi\), and
\(\mathcal{S}_{\mathrm{int}}\)
admit strong-error orders at least
\(1\), \(1\), and \(1\), respectively, in the sense of
Definitions~\ref{def:Strong error of S X}--\ref{def:Strong error of S int}.
Moreover,
\(\mathcal{S}_{\mathrm{int}}\)
satisfies the accumulated stability property in
Definition~\ref{def:accumulated stability of Sint}.

(3) There exists a constant $M>0$, independent of $h$, such that
\small
\[
\sup_{h\in(0,h_0]}
\left\|
\sup_{0\le k\le (T-t)/h}
\left|
\Phi_k^{(h)}
\right|
\right\|_{L^4_{W,B}}
+
\sup_{h\in(0,h_0]}
\left\|
\sup_{0\le k\le (T-t)/h}
\left|
G(X_{t_k}^{t,x})
\right|
\right\|_{L^4_{W,B}}
\le M,
\]
\normalsize
where $\Phi_k^{(h)}$ is defined in
Definition~\ref{def:Strong error of S Phi}.

Then, for the linear backward SPDE
\begin{equation*}
\begin{cases}
\mathrm{d} u(t,x)
=
\Bigl[
\mathcal{L}u(t,x)
+F(t,x)
+c(t)u(t,x)
+\widetilde c(t)(\sigma^\top\nabla u)(t,x)
\Bigr]\,\mathrm{d}t
+
\Bigl[
H(t,x)+d(t)u(t,x)
\Bigr]\,\mathrm{d}B_t,
\quad t\in[0,T],\\
u(T,x)=G(x),
\end{cases}
\end{equation*}
there exists a set $\Omega_B^\ast$ with
$\mathbb P_B(\Omega_B^\ast)=1$ such that, for every fixed realization
$B\in\Omega_B^\ast$, there exists a quantum algorithm
$\mathcal A(\epsilon;B)$ that estimates $u(t,x;B)=\mathbb{E}_W[P(t,x;B)\,|\,B]$ with additive error at most
$\epsilon$ and success probability at least $0.9$, at computational cost
\[
\widetilde{\mathcal O}(\epsilon^{-1}).
\]
\end{theorem}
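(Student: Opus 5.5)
The plan is to reduce the statement to Theorem~\ref{thm:QA-MLMC} applied conditionally on $B$. Concretely, for a fixed realization $B$ we regard $\Phi_\ell:=P_\ell(t,x;B)$, with $h_\ell=h_02^{-\ell}$, as the level-$\ell$ approximation of $P(t,x;B)$. The randomness is that of $W$ only. Three ingredients are needed:
\begin{enumerate}[label=(\roman*)]
\item a conditional bias bound $|\mathbb E_W[P_\ell-P\mid B]|\le C(B)2^{-\ell}$, giving $\alpha=1$;
\item a conditional variance bound $\operatorname{Var}_W(\Delta_\ell\mid B)\le C(B)2^{-2\ell}$, giving $\beta=2$;
\item the cost bound $C_\ell=\mathcal O(2^{\ell})$, giving $\gamma=1$.
\end{enumerate}
Item (iii) is immediate from Algorithm~\ref{alg:Q-level-l-difference}: it makes $2N+N=\mathcal O(h_\ell^{-1})$ calls to the one-step oracle $\mathcal U_{\mathcal S,\eta}$, plus one evaluation of $G$. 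Once (i)--(iii) hold, $\beta=2\gamma$ and Theorem~\ref{thm:QA-MLMC} yields cost $\widetilde{\mathcal O}(\epsilon^{-1})$ with success probability $0.99\ge0.9$. The constants $C(B)$ only enter the hidden prefactor, since $B$ is fixed.

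The core estimate is an unconditional strong-error bound in the joint space:
\[
\bigl\|P_\ell(t,x;B)-P(t,x;B)\bigr\|_{L^2_{W,B}}\le C h_\ell .
\]
To obtain it, decompose $P_\ell-P$ into three parts:
\[
\Phi_N G(X_N)-\Phi(t,T)G(X_T),\qquad Y_N-\text{(the $\mathrm dr$ integral)},\qquad \text{and the $\mathrm d\overleftarrow B$ integral}.
\]
Bound the first part by
\[
|\Phi_N-\Phi(t,T)|\,|G(X_N)|+|\Phi(t,T)|\,\mathrm{Lip}(G)\,|X_N-X_T|,
\]
using Hölder with the $L^4$ bounds of assumption (3) together with the order-one strong errors of $\mathcal S_\Phi$ and $\mathcal S_X$ (Definitions~\ref{def:Strong error of S X}--\ref{def:Strong error of S Phi}), measured in $L^4$. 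For the integral part, use the order-one local consistency of $\mathcal S_{\mathrm{int}}$ together with its accumulated stability (Definition~\ref{def:accumulated stability of Sint}). These control the sum of $N$ local errors, after the inputs $(\Phi_k,X_k)$ are replaced by their exact counterparts, at size $\mathcal O(h)$ rather than $\mathcal O(h^{1/2})$. I expect this to be essentially the content of Proposition~\ref{prop:error-u}, which I would invoke. The regularity in assumption (1) ($F,H\in C_b^3$, $G$ Lipschitz, bounded coefficients) supplies the required Lipschitz and moment bounds. The triangle inequality then gives
\[
\|\Delta_\ell\|_{L^2_{W,B}}\le C h_\ell ,
\]
noting that the coarse path uses the summed increments, so it has the law of the level-$(\ell-1)$ scheme.

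The main obstacle is passing from these joint $L^2_{W,B}$ bounds to bounds for a fixed $B$, uniformly over all levels on a single full-measure set $\Omega_B^\ast$. Write $e_\ell(B):=\mathbb E_W[|\Delta_\ell|^2\mid B]$. Then $\mathbb E_B[e_\ell]\le C2^{-2\ell}$, but this alone does not bound $e_\ell(B)$ pointwise. I would first try a Borel--Cantelli argument at a slightly weaker rate: by Markov's inequality,
\[
\mathbb P_B\bigl(e_\ell(B)>2^{-2\ell+\ell\eta}\bigr)\le C2^{-\ell\eta},
\]
which is summable. Hence on a full-measure set there is $\ell_0(B)$ with $e_\ell(B)\le 2^{-(2-\eta)\ell}$ for all $\ell\ge\ell_0(B)$. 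The same argument applies to the bias. This gives $\beta=2-\eta$, $\alpha=1-\eta$ with a $B$-dependent constant, and Theorem~\ref{thm:QA-MLMC} then gives $\epsilon^{-1-\eta/(2\alpha)}$. To recover exactly $\widetilde{\mathcal O}(\epsilon^{-1})$, I would instead apply the Markov step with $\eta$ replaced by a logarithmic slack $\ell^{2}$ factor, i.e.\ use the threshold $\ell^2 2^{-2\ell}$, for which the probabilities $C/\ell^2$ are summable. This gives $V_\ell\le C(B)\ell^2 2^{-2\ell}$, and the polylogarithmic factor is absorbed into $\widetilde{\mathcal O}$ in the QA-MLMC analysis, since $\sum_\ell\sqrt{V_\ell C_\ell}$ only gains polylog factors at $L\sim\log(1/\epsilon)$. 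The bias is handled likewise, or more simply by Jensen, $|\mathbb E_W[P_\ell-P\mid B]|^2\le\mathbb E_W[|P_\ell-P|^2\mid B]$, with the same summable-threshold trick. Intersecting the countably many full-measure events defines $\Omega_B^\ast$, and this completes the plan.
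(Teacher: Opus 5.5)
Your plan is correct and follows the paper's proof essentially step for step. You obtain the joint order-one bound $\|P_\ell-P\|_{L^2_{W,B}}\le C2^{-\ell}$ from Proposition~\ref{prop:error-u}, pass to a fixed $B$ with a polylogarithmic loss, and bound the bias by Jensen and the level variance by the triangle inequality. With $C_\ell=\mathcal O(2^\ell)$, Theorem~\ref{thm:QA-MLMC} then applies with $(\alpha,\beta,\gamma)=(1,2,1)$ up to logarithms.

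The only difference is the joint-to-conditional step. You use Markov plus Borel--Cantelli with threshold $\ell^2 2^{-2\ell}$ and absorb the finitely many early levels into a $B$-dependent constant. The paper's Lemma~\ref{lem:EB to EW} instead notes that $C_a(B)=\sum_{\ell}2^{2\ell}(1+\ell)^{-a}\mathbb E_W[|P_\ell-P|^2\mid B]$ has finite $\mathbb E_B$-mean for $a>1$, which gives the same $(1+\ell)^a$ loss in one step.
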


\begin{proof}
We adopt the notation of Theorem~\ref{thm:QA-MLMC}. For level
$\ell$, set
$h_\ell=(T-t)2^{-\ell}$ and
$N_\ell=2^\ell$.
By Proposition~\ref{prop:error-u},
the level-$\ell$ approximation satisfies 
the joint strong-error estimate
\[
\left\|
P_\ell(t,x)-P(t,x)
\right\|_{L^2_{W,B}}
\le
C2^{-\ell}.
\]
Equivalently,
\[
\mathbb E_{W,B}
\left[
\left|
P_\ell(t,x)-P(t,x)
\right|^2
\right]
\le
C2^{-2\ell}.
\]

We now pass from the joint estimate to a conditional estimate with respect to
a fixed realization of the backward Brownian motion. Applying
Lemma~\ref{lem:EB to EW} with $p=2$ to
\[
\Pi_{\ell,0}:=P_\ell(t,x)-P(t,x),
\]
we obtain that, for every $a>1$, there exists a finite random constant
$C_a(B)<\infty$ for $\mathbb P_B$-almost every realization of $B$ such that
\[
\mathbb E_W
\left[
\left|
P_\ell(t,x;B)-P(t,x;B)
\right|^2
\,\middle|\,B
\right]
\le
C_a(B)(1+\ell)^a2^{-2\ell}.
\]
Hence, for every such fixed $B$,
\[
\left\|
P_\ell(t,x;B)-P(t,x;B)
\right\|_{L^2_W}
\le
C_a(B)^{1/2}(1+\ell)^{a/2}2^{-\ell},
\]
i.e.
\[
\left\|
P_\ell(t,x;B)-P(t,x;B)
\right\|_{L^2_W}
=
\widetilde{\mathcal O}(2^{-\ell}).
\]

Therefore the bias satisfies
\[
\left|
\mathbb E_W[P_\ell(t,x;B)-P(t,x;B)]
\right|
\le
\left\|
P_\ell(t,x;B)-P(t,x;B)
\right\|_{L^2_W}
=
\widetilde{\mathcal O}(2^{-\ell}).
\]
Thus the weak convergence parameter in Theorem~\ref{thm:QA-MLMC} is
$\alpha=1$
up to logarithmic factors.

Next, for $\ell\ge1$, the level difference satisfies
\begin{align*}
\left\|
P_\ell(t,x;B)-P_{\ell-1}(t,x;B)
\right\|_{L^2_W}
&\le
\left\|
P_\ell(t,x;B)-P(t,x;B)
\right\|_{L^2_W}
+
\left\|
P_{\ell-1}(t,x;B)-P(t,x;B)
\right\|_{L^2_W} =
\widetilde{\mathcal O}(2^{-\ell}).
\end{align*}
Consequently,
\[
\operatorname{Var}_W
\left(
P_\ell(t,x;B)-P_{\ell-1}(t,x;B)
\right)
\le
\left\|
P_\ell(t,x;B)-P_{\ell-1}(t,x;B)
\right\|_{L^2_W}^2
=
\widetilde{\mathcal O}(2^{-2\ell}).
\]
Thus the variance decay parameter is
$\beta=2$
up to logarithmic factors.

Finally,  the computational cost per sample at level $\ell$
satisfies
\[
C_\ell=\mathcal{O}(2^\ell),
\]
so $\gamma=1.$

Therefore the conditions of Theorem~\ref{thm:QA-MLMC} are satisfied, with
\[
(\alpha,\beta,\gamma)=(1,2,1)
\]
up to logarithmic factors. Applying Theorem~\ref{thm:QA-MLMC}, we
conclude that, for $\mathbb P_B$-almost every fixed realization of $B$,
$u(t,x;B)$ can be estimated with additive error at most $\epsilon$ and success
probability at least $0.9$, at total computational cost
$\widetilde{\mathcal O}(\epsilon^{-1}).$
\begin{remark}
  Although Theorem~\ref{thm:QA-MLMC} in \cite{An2021quantumaccelerated} does not explicitly cover the case
$(\alpha,\beta,\gamma)=(1,2,1)$
up to logarithmic factors, this does not affect the resulting computational complexity. Moreover, Theorem~5 of \cite{li2026quantum} explicitly treats this borderline case.
\end{remark}
\end{proof}

\subsection{Nested Quantum-accelerated MLMC Estimator }
\label{subsec:Nested Quantum-accelerated MLMC Estimator}

Furthermore, let $\varphi:\mathbb R\to\mathbb R$ be a globally
Lipschitz function.
In many applications, the quantity of interest is obtained by averaging
over the random environment generated by the backward Brownian motion
$B$.
This leads to the nested expectation
\begin{align*}
\mathbb E_B\bigl[\varphi(u(t,x;B))\bigr].
\end{align*}
Following the framework developed in
\cite{Blanchet2025NonlinearQM},
we construct quantum-accelerated estimators for this quantity.

\begin{algorithm}[H]
\caption{Coupled Level Difference Evaluator $\mathcal{Q}_\ell(B_0)$ }
\label{alg: Q-Coupled Level Difference Evaluator}
\begin{algorithmic}[1]
\Require
Given backward Brownian motion $B_0$
\Ensure
A coupled level-difference sample $\mathcal{Q}_\ell(B_0)$.

\If{$\ell=0$}
    \State Apply $\mathcal{A}(1/(2K);B_0)$ to estimate $u(t,x;B_0)$,
and amplify the success probability to $1-(8K^2V)^{-1}$
using the powering lemma
\State Clip the output into the region $\left[-\sqrt{V},\sqrt{V}\right]$,
denote the clipped output by $\widehat u_0(B_0)$
    \State \Return $\mathcal{Q}_0(B_0)\gets \varphi\left(\widehat u_0(B_0)\right)$
\Else
    \State Apply $\mathcal{A}(2^{-(\ell+1)}/K;B_0)$ to estimate $u(t,x;B_0)$,
and amplify the success probability to $1-2^{-(2\ell+1)}(4K^2V)^{-1}$
using the powering lemma, denote the outputs by $\widetilde u_\ell(B_0)$
\State Apply $\mathcal{A}(2^{-\ell}/K;B_0)$ to estimate $u(t,x;B_0)$,
and amplify the success probability to $1-2^{-(2\ell-1)}(4K^2V)^{-1}$
using the powering lemma, denote the outputs by $\widetilde u_{\ell-1}(B_0)$
\State Clip  $\widetilde u_{\ell}(B_0)$ and $\widetilde u_{\ell-1}(B_0)$ into the region $\left[-\sqrt{V},\sqrt{V}\right]$
and denote as $\widehat u_\ell(B_0)$ and $\widehat u_{\ell-1}(B_0)$ 
respectively
    \State \Return $\mathcal{Q}_\ell(B_0)\gets \varphi\left(\widehat u_\ell(B_0)\right)-\varphi\left(\widehat u_{\ell-1}(B_0)\right)$
\EndIf
\end{algorithmic}
\end{algorithm}

\begin{algorithm}[H]
\caption{Nested Quantum-accelerated MLMC Estimator for $\theta=\mathbb{E}_{B}\left[\varphi(u(t,x;B))\right]$}
\label{alg:nested-qmlmc}
\begin{algorithmic}[1]
\Require
 Target accuracy $\epsilon$
 \Ensure
An estimator $\widehat\theta$ such that $|\widehat\theta-\theta|\leq \epsilon$ with probability $\geq 0.8$.

\State Set $L= \left\lceil \log_2\!\left(\frac{2}{\epsilon}\right)\right\rceil$
\State $\widehat\theta \gets 0$
\For{$\ell=0$ to $L$}
    \State Apply quantum mean estimation to estimate $\mathbb{E}_B\left[\mathcal{Q}_\ell(B)\right]$
    with accuracy $\epsilon/(2L+2)$ and success probability at least $\max\{0.9,1-0.01^{\ell}\}$.
    Denote the output by $\widehat m_\ell$
     \State $\widehat\theta \gets \widehat\theta + \widehat m_\ell$
\EndFor

\State \Return $\widehat\theta$
\end{algorithmic}
\end{algorithm}

\begin{theorem}
\label{thm:nested-qmlmc}
Under the assumptions of
Theorem~\ref{thm:QAMLMC for u when given B},
assume further that
$\varphi:\mathbb{R}\to\mathbb{R}$ is globally Lipschitz, namely,
there exists $K>0$ such that
$|\varphi(x)-\varphi(y)|
\le K|x-y|$,
$x,y\in\mathbb{R},$
the outer variance satisfies
$\operatorname{Var}_{B}\!\left(\varphi(u(t,x;B))\right)\le S,$
and the conditional second moment of the payoff satisfies
$\mathbb{E}\!\left[|P(t,x;B_0)|^2\right]\le V$
for every realization $B_0$ of $B$.

Let $\mathcal{Q}_\ell(B)$ be defined in Algorithm~\ref{alg: Q-Coupled Level Difference Evaluator} and
let $\widehat\theta$ be the output of Algorithm~\ref{alg:nested-qmlmc}.
Then, for any $\epsilon\in(0,1)$,  $\widehat\theta$  is a quantum estimator of 
$\mathbb{E}\left[\varphi(u(t,x))\right]$($=\mathbb{E}_B\left[\varphi(u(t,x;B))\right]$) with additive error $\epsilon$ and
success probability at least 0.8, the total cost is $\widetilde{\mathcal{O}}(\epsilon^{-1})$.
\end{theorem}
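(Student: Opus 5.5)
The proof follows the nested MLMC analysis of \cite{Blanchet2025NonlinearQM}: Algorithm~\ref{alg:nested-qmlmc} is an outer QA-MLMC over the random environment $B$, and each inner sample is produced by the conditional algorithm $\mathcal A(\epsilon;B)$ of Theorem~\ref{thm:QAMLMC for u when given B}. Write $\theta_\ell:=\mathbb E_B[\varphi(\widehat u_\ell(B))]$, so that $\sum_{\ell=0}^{L}\mathbb E_B[\mathcal Q_\ell(B)]=\theta_L$. We split the error as
\[
|\widehat\theta-\theta|\le |\theta_L-\theta|+\sum_{\ell=0}^{L}\bigl|\widehat m_\ell-\mathbb E_B[\mathcal Q_\ell(B)]\bigr|
\]
and bound the two terms separately, each by $\epsilon/2$.

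\emph{Bias.} Fix $B_0\in\Omega_B^\ast$. Since $|u(t,x;B_0)|\le \mathbb E[|P(t,x;B_0)|^2]^{1/2}\le\sqrt V$, clipping to $[-\sqrt V,\sqrt V]$ never increases the distance to $u(t,x;B_0)$. On the success event, $|\widehat u_L(B_0)-u(t,x;B_0)|\le 2^{-(L+1)}/K$. On the failure event the error is at most $2\sqrt V$, and it occurs with probability at most $2^{-(2L+1)}(4K^2V)^{-1}$. By the Lipschitz property of $\varphi$,
\[
|\theta_L-\theta|\le K\,\mathbb E_B\bigl|\widehat u_L(B)-u(t,x;B)\bigr|\le 2^{-(L+1)}+2K\sqrt V\,2^{-(2L+1)}(4K^2V)^{-1}=\mathcal O(2^{-L}).
\]
With $L=\lceil\log_2(2/\epsilon)\rceil$ this is at most $\epsilon/2$, after adjusting constants (possibly by $L\mapsto L+1$).

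\emph{Variance of the level differences.} Consider
\[
|\mathcal Q_\ell(B)|\le K\bigl(|\widehat u_\ell-u|+|\widehat u_{\ell-1}-u|\bigr).
\]
On the joint success event this is $\mathcal O(2^{-\ell})$. On the failure event it is at most $4K\sqrt V$, with probability $\mathcal O(2^{-2\ell}(K^2V)^{-1})$. Hence $\mathbb E_B[\mathcal Q_\ell^2]=\mathcal O(2^{-2\ell})$; the failure probabilities in Algorithm~\ref{alg: Q-Coupled Level Difference Evaluator} were chosen precisely to produce this bound. For $\ell=0$, $\operatorname{Var}(\mathcal Q_0)\le 2S+\mathcal O(1)$. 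Since $\mathcal Q_\ell$ is a bounded-variance random variable with a quantum encoding (the algorithm $\mathcal A$ runs coherently on a superposition over $B$), Lemma~\ref{lem:quantum mean estimation} estimates $\mathbb E_B[\mathcal Q_\ell]$ to accuracy $\epsilon/(2L+2)$ using
\[
n_\ell=\mathcal O\bigl(\sigma(\mathcal Q_\ell)L/\epsilon\bigr)=\mathcal O\bigl(2^{-\ell}L/\epsilon\bigr)
\]
calls, up to the $\log$ factors from powering. The sum of the $L+1$ accuracies is then at most $\epsilon/2$.

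\emph{Cost and success probability.} One call to $\mathcal Q_\ell$ costs $\widetilde{\mathcal O}(2^{\ell}K)$ by Theorem~\ref{thm:QAMLMC for u when given B}, with only logarithmic overhead from powering. The total cost is therefore
\[
\sum_{\ell=0}^{L}\widetilde{\mathcal O}\bigl(2^{-\ell}L\epsilon^{-1}\cdot 2^{\ell}\bigr)=\widetilde{\mathcal O}(L^2\epsilon^{-1})=\widetilde{\mathcal O}(\epsilon^{-1}).
\]
The outer failure probabilities sum to at most $0.1+\sum_{\ell\ge1}0.01^\ell<0.2$, which gives success probability at least $0.8$.

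\emph{Main obstacle.} The delicate point is that Theorem~\ref{thm:QAMLMC for u when given B} has cost $\widetilde{\mathcal O}(\epsilon^{-1})$ with hidden constants depending on $B$, through the random constant $C_a(B)$ in the proof of that theorem. Two things are needed: the cost must be uniform enough that $\mathbb E_B[\text{cost}]$ remains $\widetilde{\mathcal O}(2^\ell)$, and $\mathcal A$ must be implementable coherently over $B$. The plan is to use the uniform conditional moment bound $\mathbb E[|P(t,x;B_0)|^2]\le V$ together with assumption (3). The idea is to choose the inner QA-MLMC level parameters deterministically from the joint $L^2_{W,B}$ strong-error bound of Proposition~\ref{prop:error-u}, rather than from $C_a(B)$. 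Any remaining $B$-dependence in the inner error is then handled through the failure-probability and clipping mechanism, since clipping bounds the worst-case error by $2\sqrt V$ whatever $B$ is.
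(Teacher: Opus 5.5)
Your proposal is correct and follows the paper's proof essentially step for step. It uses the same split of $|\widehat\theta-\theta|$ into the summed outer QME errors ($\le\epsilon/2$) plus the telescoped bias $K\,\mathbb E|\widehat u_L(B)-u(t,x;B)|$. It uses the same clipping and failure-probability control of the inner error, leading to $\operatorname{Var}(\mathcal Q_\ell)=\mathcal O(2^{-2\ell})$, and the same count of outer queries times cost per query.

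The only differences are minor. The paper bounds the bias through Jensen and the second-moment estimate $\mathbb E|\widehat u_L(B)-u(t,x;B)|^2\le K^{-2}2^{-2L}$, which gives exactly $2^{-L}\le\epsilon/2$ for the algorithm's own $L$. Your direct $L^1$ bound leaves a term $2^{-(2L+2)}/(K\sqrt V)$; this is absorbed by the extra remark $|\widehat u_L-u|\le 2\sqrt V$, not by changing $L$. Separately, the paper never addresses the $B$-uniformity and coherence issue in your last paragraph: it inserts the cost of Theorem~\ref{thm:QAMLMC for u when given B} as if it were uniform in $B$, so that paragraph is a fair addition rather than a missing piece.
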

\begin{remark}
Recall that $P(t,x;B)$ is defined in \eqref{eq:eq of P(t,x;B)}.
For a fixed realization of the backward Brownian motion $B$,
we have $u(t,x;B)=\mathbb{E}[P(t,x;B)]$,
where the expectation is taken over the stochasticity of the forward SDE.
\end{remark}

\begin{proof}
Following the notation of
Algorithm~\ref{alg: Q-Coupled Level Difference Evaluator}, we have
\begin{align*}
  \mathcal Q_\ell(B)=
\begin{cases}
\varphi(\widehat u_0(B)), & \ell=0,\\
\varphi(\widehat u_\ell(B))-\varphi(\widehat u_{\ell-1}(B)), & \ell\geq 1,
\end{cases}
\qquad
\widehat u_\ell(B)\in[-\sqrt{V},\sqrt{V}].
\end{align*}

Let
\begin{align*}
  m_\ell := \mathbb E[\mathcal Q_\ell(B)].
\end{align*}
By the guarantee of quantum mean estimation, with the success probability
specified in Algorithm~\ref{alg:nested-qmlmc}, we have
\begin{align*}
  |\widehat m_\ell-m_\ell|
  \leq \frac{\epsilon}{2L+2},
  \qquad 0\leq \ell\leq L ,
\end{align*}
and
\begin{align*}
  \sum_{\ell=0}^L |\widehat m_\ell-m_\ell|
  \leq
  (L+1)\frac{\epsilon}{2L+2}
  =
  \frac{\epsilon}{2}.
\end{align*}

By the construction of
Algorithm~\ref{alg: Q-Coupled Level Difference Evaluator},
\begin{align*}
  \sum_{\ell=0}^L m_\ell
  =
  \mathbb E\!\left[\varphi(\widehat u_L(B))\right].
\end{align*}
Therefore, for
$  \theta=\mathbb E_B\!\left[\varphi(u(t,x;B))\right],$
we obtain
\begin{align*}
|\widehat\theta-\theta|
&\leq
\left|
\sum_{\ell=0}^L \widehat m_\ell
-
\sum_{\ell=0}^L m_\ell
\right|
+
\left|
\sum_{\ell=0}^L m_\ell-\theta
\right| \\
&\leq
\sum_{\ell=0}^L |\widehat m_\ell-m_\ell|
+
\left|
\mathbb E\!\left[\varphi(\widehat u_L(B))\right]
-
\mathbb E_B\!\left[\varphi(u(t,x;B))\right]
\right| \\
&\leq
\frac{\epsilon}{2}
+
K\mathbb E\!\left[
|\widehat u_L(B)-u(t,x;B)|
\right].
\end{align*}

Next, we bound the second term. By the construction in
Algorithm~\ref{alg: Q-Coupled Level Difference Evaluator}, for $\ell\geq 0$,
the estimator $\widetilde u_\ell(B_0)$ satisfies
\begin{align*}
  \left|\widetilde u_\ell(B_0)-u(t,x;B_0)\right|
  \leq \frac{2^{-(\ell+1)}}{K}
\end{align*}
with failure probability at most
\begin{align*}
  p_\ell
  =
  2^{-(2\ell+1)}(4K^2V)^{-1}.
\end{align*}
Since
\begin{align*}
  |u(t,x;B_0)|
  =
  \left|\mathbb E_W[P(t,x;B_0)]\right|
  \leq
  \left(\mathbb E_W[|P(t,x;B_0)|^2]\right)^{1/2}
  \leq \sqrt V,
\end{align*}
and since the clipping map onto $[-\sqrt V,\sqrt V]$ is nonexpansive,
we have on the success event
\begin{align*}
  |\widehat u_\ell(B_0)-u(t,x;B_0)|
  \leq
  |\widetilde u_\ell(B_0)-u(t,x;B_0)|
  \leq
  \frac{2^{-(\ell+1)}}{K}.
\end{align*}
On the failure event, both $\widehat u_\ell(B_0)$ and $u(t,x;B_0)$ belong to
$[-\sqrt V,\sqrt V]$, and hence
\begin{align*}
  |\widehat u_\ell(B_0)-u(t,x;B_0)|
  \leq 2\sqrt V.
\end{align*}
Consequently,
\begin{align*}
  \mathbb E\!\left[
  |\widehat u_\ell(B)-u(t,x;B)|^2
  \right]
  &\leq
  \left(\frac{2^{-(\ell+1)}}{K}\right)^2
  +(2\sqrt V)^2
  \cdot
  2^{-(2\ell+1)}(4K^2V)^{-1} \\
  &\leq
  K^{-2}2^{-2\ell}.
\end{align*}
Therefore,
\begin{align*}
K\mathbb E\!\left[
|\widehat u_L(B)-u(t,x;B)|
\right]
&\leq
K
\left(
\mathbb E\!\left[
|\widehat u_L(B)-u(t,x;B)|^2
\right]
\right)^{1/2} \\
&\leq
2^{-L}
\leq
\frac{\epsilon}{2},
\end{align*}
where the last inequality follows from
$L=\lceil \log_2(2/\epsilon)\rceil$.
Thus, on the joint success event,
\begin{align*}
  |\widehat\theta-\theta|\leq \epsilon .
\end{align*}

For the success probability, the failure probability of the level-$0$
outer mean estimation is at most $0.1$, and the failure probabilities of
the remaining outer mean estimations are bounded by
$\sum_{\ell=1}^{\infty}0.01^\ell<0.02$.
Hence the overall success probability is at least
\begin{align*}
  1-0.1-\sum_{\ell=1}^{\infty}0.01^\ell
  >0.8 .
\end{align*}

Next, we consider the computational complexity of Algorithm~\ref{alg:nested-qmlmc}.
By the construction in Algorithm~\ref{alg: Q-Coupled Level Difference Evaluator},
we have
\begin{align*}
  \left|\widetilde u_\ell(B_0)-u(t,x;B_0)\right|\leq 2^{-(\ell+1)}/K
\end{align*}
with probability at least $1-2^{-(2\ell+1)}(4K^2V)^{-1}$.
Since $\mathbb{E}\left[|P(t,x;B_0)|^2 \right]\leq V$, we known
$$|u(t,x;B_0)|=\left|\mathbb{E}\left[P(t,x;B_0) \right]\right|\leq \sqrt{V}.$$

So
\begin{align*}
  \mathbb{E}\left[\left|\widehat u_{\ell}(B)-u(t,x;B)\right|^2\right]
  \leq \left(2^{-(\ell+1)}/K\right)^2+(2\sqrt{V})^2\cdot 2^{-(2\ell+1)}(4K^2V)^{-1}\leq K^{-2}2^{-2\ell}.
\end{align*}

For $\ell\geq 1$, since $\varphi$ is globally $K$-Lipschitz, we have
\begin{align*}
\operatorname{Var}\left(\mathcal Q_\ell(B)\right)
&\leq \mathbb{E}\left[\mathcal Q_\ell(B)^2\right]\\
&= \mathbb{E}\left[\left|\varphi(\widehat u_\ell(B))-\varphi(\widehat u_{\ell-1}(B))\right|^2\right] \\
&\leq 2\mathbb{E}\left[\left|\varphi(\widehat u_\ell(B))-\varphi( u(t,x;B))\right|^2\right]
+2\mathbb{E}\left[\left|\varphi(\widehat u_{\ell-1}(B))-\varphi( u(t,x;B))\right|^2\right]\\
&\leq 2K^2\mathbb{E}\left[\left|\widehat u_\ell(B)- u(t,x;B)\right|^2\right]
+2K^2\mathbb{E}\left[\left|\widehat u_{\ell-1}(B)- u(t,x;B)\right|^2\right]\\
&\leq 16\cdot2^{-2\ell} .
\end{align*}

By Theorem~\ref{thm:QAMLMC for u when given B}, Lemma~\ref{lem:quantum mean estimation} and powering lemma,
for $\ell\geq 1$, the cost of estimating $\mathbb{E}_B[\mathcal{Q}_\ell(B)]$ is
\begin{align*}
  \mathcal{O}((2L+2)/\epsilon\times 2^{-\ell}\times \log(100^\ell))\times
  \widetilde{\mathcal{O}} \left(2^{\ell+1}K\times \log(2^{-(2\ell+1)}(4K^2V)^{-1})\right)
  =\widetilde{\mathcal{O}} \left(KLl^2/\epsilon\right).
\end{align*}
Here, the first term corresponds to the number of queries required to estimate
$\mathbb{E}_B[\mathcal{Q}_\ell(B)]$, while the second term represents the
computational cost of a single query to $\mathcal{Q}_\ell(B)$, as given by
Theorem~\ref{thm:QAMLMC for u when given B}.

For $\ell=0$, the variance is bounded by $2S+1$ under our assumptions,
and hence the computational complexity is 
$\mathcal{O}(SL/\epsilon)$.

Summing up all the $L$ costs together yields a total cost of $\widetilde{\mathcal{O}}(\epsilon^{-1})$.
\end{proof}

\begin{remark}
Note that the conditional expectation framework of
Section~\ref{subsec:Conditional Quantum-Accelerated MLMC Estimator}
and the nested expectation framework of
Section~\ref{subsec:Nested Quantum-accelerated MLMC Estimator}
depend only on the existence of a representation of the form
\begin{align}
Q(t,x;B)
=
\mathbb E_W[\mathcal{P}(t,x;B)\mid B],
\end{align}
where \(\Pi\) is a stochastic path functional.

The specific structure of the underlying SPDE enters the framework
only through the choice of \(\Pi\).
Consequently, the conditional and nested QA-MLMC methodologies
developed above apply without modification to any quantity
admitting a representation of this form.

In the subsequent sections, we will construct such path
functionals \(\mathcal P\) for the SPDE solution itself and for the
associated first- and second-order Greek estimators.
\end{remark}

\section{Applications to Derivative Pricing and Sensitivity Analysis}\label{sec:applications}

This section specializes the conditional and nested QA-MLMC framework of
Sections~\ref{sec:SPDE Formulation and Probabilistic Representation}--\ref{sec:Quantum-accelerated MLMC for the SPDE} to derivative pricing and sensitivity
analysis.  The common structure is the following: after fixing the backward
Brownian path $B$, each target quantity is represented as
\[
\mathbb E_W[\mathcal P(t,x;B)\mid B],
\]
where $\mathcal P(t,x;B)$ is a scalar path functional of the forward diffusion.
Once such a representation is available, the same dyadic fine--coarse coupling
and the same quantum mean estimation routine used for the price can be applied
to the corresponding payoff register.

Throughout this section, unless otherwise stated, we work in the scalar case
$k=1$.  Vector-valued claims are obtained componentwise.  We write
$\mathbb E_W[\cdot\mid B]$ for expectation over the forward Brownian motion
after conditioning on the realization of the backward Brownian motion $B$.
The terminal function in the BDSDE representation is denoted by $G$; when the
terminal condition is written as $g$ elsewhere in the paper, one simply sets
$G=g$.

\subsection{Conditional Pricing under Common Noise}\label{sec:app-pricing}

By \eqref{eq:eq of P(t,x;B)}--\eqref{eq:u(t,x;B)=E[P]}, the conditional value
of the linear backward SPDE is
\begin{equation}\label{eq:app-price-representation}
u(t,x;B)
=
\mathbb E_W\!\left[P(t,x;B)\mid B\right],
\end{equation}
where
\begin{align}\label{eq:app-payoff-functional}
P(t,x;B)
&=
\Phi(t,T)\,G\!\left(X_T^{t,x}\right)
+\int_t^T
\Phi(t,s)
\Big(F(s,X_s^{t,x})+d(s)H(s,X_s^{t,x})\Big)\,\mathrm ds
+\int_t^T
\Phi(t,s)H(s,X_s^{t,x})\,\mathrm d \overleftarrow{B}_s.
\end{align}
Here $X^{t,x}$ is the forward diffusion driven by $W$, while the backward
stochastic integral is evaluated along the fixed path $B$.

\paragraph{Pure discounted pricing.}
For a European claim with discounted terminal payoff only, take
\begin{equation}\label{eq:app-pure-pricing}
G(x)=g(x),\quad
F\equiv0,\quad
H\equiv0,\quad
c(s)=-r(s),\quad
d(s)=\widetilde c(s)=0 .
\end{equation}
Then
\begin{equation}\label{eq:app-european-price}
u(t,x;B)
=
\mathbb E_W\!\left[
\exp\!\left(-\int_t^T r(s)\,\mathrm ds\right)
g\!\left(X_T^{t,x}\right)
\;\middle|\; B
\right].
\end{equation}
In this pure terminal-payoff case, the dependence on $B$ disappears unless the
forward model itself contains common-noise coefficients.  Nonzero $F$ covers
running cashflows, and nonzero $H$ encodes exposure to the backward common-noise
factor.

\paragraph{Compatibility with the QA-MLMC algorithms.}
For each level $\ell$, let $P_\ell(t,x;B)$ be the time-discretized version of
\eqref{eq:app-payoff-functional} generated by the operators
\[
(\mathcal S_X,\mathcal S_\Phi,\mathcal S_{\mathrm{int}})
\]
and by the same dyadic Brownian coupling as in
Algorithm~\ref{alg:Q-level-l-difference}.  The conditional telescoping identity
is
\begin{equation}\label{eq:app-price-telescoping}
\mathbb E_W[P_L(t,x;B)\mid B]
=
\mathbb E_W[P_0(t,x;B)\mid B]
+
\sum_{\ell=1}^L
\mathbb E_W[P_\ell(t,x;B)-P_{\ell-1}(t,x;B)\mid B].
\end{equation}
Therefore Algorithm~\ref{alg:conditional-QA-MLMC} estimates $u(t,x;B)$.
The unconditional price
\begin{equation}\label{eq:app-unconditional-price}
U(t,x):=\mathbb E_B[u(t,x;B)]
\end{equation}
is obtained by Algorithm~\ref{alg:nested-qmlmc} with $\varphi(z)=z$.

\subsection{First-order Greeks for Smooth Payoffs}
\label{sec:app-first-order-greeks}

We first treat pathwise first-order Greeks.  In this subsection the
coefficients $c,\widetilde c,d$ are deterministic functions of time, as in the
linear BDSDE representation.  Hence the exponential weight $\Phi(t,s)$ depends
on $W$ and $B$, but not on the initial state $x$.  If one allows
state-dependent discounting or state-dependent coefficients in $\Phi$, then the
corresponding spatial derivatives of $\Phi$ must be added to the formulas below.

\begin{proposition}[Pathwise representation of conditional first-order Greeks]
\label{prop:app-first-order-greeks}
Let the forward Brownian motion have dimension $d$, and let
$\sigma_{\cdot a}$ denote the $a$th column of $\sigma$.  Assume that
$b,\sigma\in C_b^2$ in the spatial variable, that
$G\in C_p^2(\mathbb R^d)$, and that
$F,H\in C_b^2([0,T]\times\mathbb R^d)$.  Assume also the usual moment bounds
that justify differentiation under $\mathbb E_W[\cdot\mid B]$.

Define the Jacobian flow
\begin{equation}\label{eq:app-jacobian}
J_s^{t,x}:=\nabla_x X_s^{t,x}\in\mathbb R^{d\times d},
\quad
J_t^{t,x}=I_d .
\end{equation}
Then $J^{t,x}$ solves
\begin{equation}\label{eq:app-jacobian-sde}
dJ_s^{t,x}
=
\nabla_x b(X_s^{t,x})J_s^{t,x}\,ds
+
\sum_{a=1}^{d}
\nabla_x\sigma_{\cdot a}(X_s^{t,x})J_s^{t,x}\,dW_s^a .
\end{equation}
For each coordinate direction $e_i$, define
\begin{equation}\label{eq:app-first-order-greek}
\mathcal G_i(t,x;B):=\partial_{x_i}u(t,x;B).
\end{equation}
Then
\begin{equation}\label{eq:app-first-order-greek-expectation}
\mathcal G_i(t,x;B)
=
\mathbb E_W\!\left[P^{(i)}(t,x;B)\mid B\right],
\end{equation}
where
\begin{align}\label{eq:app-first-order-greek-payoff}
P^{(i)}(t,x;B)
&=
\Phi(t,T)\,
\nabla G(X_T^{t,x})^\top J_T^{t,x}e_i
\nonumber\\
&\quad
+\int_t^T
\Phi(t,s)\,
\nabla_x\!\Big(F(s,X_s^{t,x})+d(s)H(s,X_s^{t,x})\Big)^\top
J_s^{t,x}e_i\,\mathrm ds
\nonumber\\
&\quad
+\int_t^T
\Phi(t,s)\,
\nabla_x H(s,X_s^{t,x})^\top
J_s^{t,x}e_i\,
\mathrm d \overleftarrow{B}_s.
\end{align}
\end{proposition}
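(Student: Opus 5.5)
We plan to differentiate the representation $u(t,x;B)=\mathbb E_W[P(t,x;B)\mid B]$ of Theorem~\ref{thm:spde-representation} with respect to $x_i$, interchanging $\partial_{x_i}$ and $\mathbb E_W[\cdot\mid B]$. This reduces the claim to two facts: (i) the map $x\mapsto P(t,x;B)$ is differentiable in $L^2_W$ (conditionally on $B$) with derivative $P^{(i)}(t,x;B)$; and (ii) the difference quotients are uniformly integrable, which justifies the interchange. Throughout, $B$ is a fixed realization in a full-measure set, and the backward integral is understood along that path. Because the forward and backward filtrations are independent, the $\overleftarrow{B}$-integral of a $W$-measurable integrand can be handled pathwise in $W$ after conditioning.

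\textbf{Step 1 (Jacobian flow).} Under $b,\sigma\in C_b^2$, the standard Kunita theory of stochastic flows gives that $x\mapsto X_s^{t,x}$ is $C^1$ in $L^p$ for every $p$, uniformly in $s\in[t,T]$. Its derivative $J_s^{t,x}$ solves the linearized SDE \eqref{eq:app-jacobian-sde}, obtained by formally differentiating \eqref{eq:SDE(with BDSDE)0} and justified by Gronwall estimates on the difference quotients $(X_s^{t,x+\eta e_i}-X_s^{t,x})/\eta$. We also record the moment bound $\sup_s\mathbb E|J_s|^p<\infty$. Since $c,\widetilde c,d$ depend only on time, $\Phi(t,s)$ does not depend on $x$ and has all moments (it is an exponential of Gaussian-type integrals with bounded integrands).

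\textbf{Step 2 (term-by-term differentiation).} Write $P=T_1+T_2+T_3$ as in \eqref{eq:eq of P(t,x;B)}.
\begin{itemize}
\item For $T_1=\Phi(t,T)G(X_T^{t,x})$, apply the mean value theorem: $G(X_T^{t,x+\eta e_i})-G(X_T^{t,x})=\int_0^1\nabla G(\cdot)^\top(X^{x+\eta e_i}_T-X^x_T)\,d\lambda$. Polynomial growth of $\nabla G$, Hölder's inequality, and the $L^p$ convergence from Step~1 then give $L^1_W$ convergence to $\Phi\nabla G^\top J_Te_i$.
\item For the $\mathrm ds$-term $T_2$, use the same argument with $F,H\in C_b^2$, combined with Fubini in $s$.
\item For the backward integral $T_3$, the difference quotient equals $\int_t^T\Phi(t,s)\,\frac{H(s,X_s^{x+\eta e_i})-H(s,X_s^{x})}{\eta}\,\mathrm d\overleftarrow B_s$. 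Using the backward Itô isometry in the joint space $L^2_{W,B}$ and boundedness of $\nabla H$, $\nabla^2H$, this converges in $L^2_{W,B}$ to the stated integral.
\end{itemize}

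\textbf{Step 3 (passing to fixed $B$).} This is the main obstacle: the convergence in Step~2 for $T_3$ is naturally joint in $(W,B)$, whereas the statement concerns $\mathbb E_W[\cdot\mid B]$ for fixed $B$. We would proceed as follows. First take conditional expectations given $\mathcal F^B_{t,T}$ in $L^2_{W,B}$; since the conditional expectation is a contraction, $x\mapsto u(t,x;B)$ is differentiable in $L^2_B$ with derivative $\mathbb E_W[P^{(i)}\mid B]$. Then upgrade to almost-sure pointwise differentiability. One route is Kolmogorov continuity: establish the moment bounds $\mathbb E_{W,B}|P^{(i)}(x)-P^{(i)}(x')|^p\le C|x-x'|^p$, which follow from $C^2$ regularity of the coefficients and the second-order flow bounds. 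These give a version of $x\mapsto\mathbb E_W[P^{(i)}\mid B]$ that is continuous for a.e.\ $B$. The identity $u(t,x+\eta e_i;B)-u(t,x;B)=\int_0^\eta \mathbb E_W[P^{(i)}(x+\lambda e_i)\mid B]\,d\lambda$ holds a.s.\ for each rational $\eta$; by continuity it extends to all $\eta$, which yields \eqref{eq:app-first-order-greek-expectation} on a full-measure set of $B$. The "usual moment bounds" in the hypothesis supply exactly the uniform integrability needed here.
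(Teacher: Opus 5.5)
Your proposal is correct and takes the same route as the paper. The paper's proof invokes differentiability of the flow with Jacobian solving \eqref{eq:app-jacobian-sde}, uses that $\Phi$ does not depend on $x$, differentiates \eqref{eq:app-payoff-functional} pathwise, and lets the assumed moment bounds justify exchanging $\partial_{x_i}$ with $\mathbb E_W[\cdot\mid B]$. Your Step~3 (differentiability in $L^2_B$ via contraction of the conditional expectation, then a Kolmogorov-continuous version to reach a fixed realization of $B$) adds rigor for what the paper folds into that hypothesis.

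One technical point in Step~2: when $d\neq0$, $\Phi(t,s)$ depends on the increments of $B$ on $[t,s]$, so it is not backward-adapted. Factor out the scalar $\exp\bigl(\int_t^T d(r)\,\mathrm d\overleftarrow B_r\bigr)$, which has all moments, before applying the backward It\^o isometry, and then conclude with H\"older's inequality.
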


\begin{proof}
Under the stated assumptions, the stochastic flow
$x\mapsto X_s^{t,x}$ is differentiable and its Jacobian solves
\eqref{eq:app-jacobian-sde}.  Since $\Phi$ is independent of the initial state
$x$ in the present linear setting, differentiating
\eqref{eq:app-payoff-functional} pathwise gives
\eqref{eq:app-first-order-greek-payoff}.  The assumed moment bounds justify
interchanging differentiation and conditional expectation, which proves
\eqref{eq:app-first-order-greek-expectation}.
\end{proof}

\subsubsection{Delta}\label{sec:app-delta}

Assume that the first state component is the spot, so that $x=(S,\chi)$, where
$\chi$ denotes all remaining factors.  The conditional Delta is
\begin{equation}\label{eq:app-delta}
\Delta(t,x;B)
:=
\partial_S u(t,x;B)
=
\mathbb E_W\!\left[P^{(S)}(t,x;B)\mid B\right].
\end{equation}
In the pure pricing case \eqref{eq:app-pure-pricing}, if
$G(x)=g(S)$ depends only on the terminal spot, then
\begin{equation}\label{eq:app-delta-pure}
P^{(S)}(t,x;B)
=
\Phi(t,T)\,
g'(S_T^{t,x})\,\partial_S S_T^{t,x}.
\end{equation}
Thus Delta is obtained by propagating only the tangent direction
$J_s^{t,x}e_S$.

\subsubsection{Spot Vega}\label{sec:app-spot-vega}

If the state contains an instantaneous variance or volatility factor $V$, the
spot Vega is the sensitivity with respect to that state variable:
\begin{equation}\label{eq:app-spot-vega}
\mathcal V_{\mathrm{spot}}(t,x;B)
:=
\partial_V u(t,x;B)
=
\mathbb E_W\!\left[P^{(V)}(t,x;B)\mid B\right].
\end{equation}
If $G(x)=g(S)$ depends only on the terminal spot, then
\begin{equation}\label{eq:app-spot-vega-pure}
P^{(V)}(t,x;B)
=
\Phi(t,T)\,
g'(S_T^{t,x})\,\partial_V S_T^{t,x}.
\end{equation}
The dependence on the initial variance or volatility is transmitted through the
tangent flow.

\subsubsection{Rho and general parameter Greeks}
\label{sec:app-rho-parameter-greeks}

State sensitivities are only one class of Greeks.  Let $\vartheta$ be a scalar
parameter entering the forward coefficients, the terminal payoff, the running
terms, or the exponential weight.  We write
\[
X^{\vartheta,t,x},\quad
\Phi^\vartheta,\quad
P^\vartheta(t,x;B),\quad
u^\vartheta(t,x;B)
=
\mathbb E_W[P^\vartheta(t,x;B)\mid B].
\]
Assume differentiability in $\vartheta$ and sufficient moment bounds for
differentiation under $\mathbb E_W[\cdot\mid B]$.

Define the parameter tangent process
\begin{equation}\label{eq:app-parameter-tangent}
U_s^\vartheta
:=
\partial_\vartheta X_s^{\vartheta,t,x},
\quad
U_t^\vartheta=0 .
\end{equation}
It solves
\begin{align}\label{eq:app-parameter-tangent-sde}
dU_s^\vartheta
&=
\Big[
\partial_\vartheta b^\vartheta(X_s^{\vartheta,t,x})
+
\nabla_x b^\vartheta(X_s^{\vartheta,t,x})U_s^\vartheta
\Big]\,ds
+
\sum_{a=1}^{d}
\Big[
\partial_\vartheta\sigma_{\cdot a}^\vartheta(X_s^{\vartheta,t,x})
+
\nabla_x\sigma_{\cdot a}^\vartheta(X_s^{\vartheta,t,x})U_s^\vartheta
\Big]\,dW_s^a .
\end{align}

When the coefficients $c^\vartheta,d^\vartheta,\widetilde c^\vartheta$ in
$\Phi^\vartheta$ depend on $\vartheta$, define
\begin{align}\label{eq:app-logPhi-derivative}
\Lambda_\vartheta(t,s)
&:=
\int_t^s \partial_\vartheta c^\vartheta(r)\,\mathrm dr
+
\int_t^s \partial_\vartheta d^\vartheta(r)\,
\mathrm d \overleftarrow{B}_r
+
\int_t^s \partial_\vartheta\widetilde c^\vartheta(r)\,\mathrm dW_r
-
\int_t^s
\Big(
\widetilde c^\vartheta(r)\partial_\vartheta\widetilde c^\vartheta(r)
-
d^\vartheta(r)\partial_\vartheta d^\vartheta(r)
\Big)\,\mathrm dr .
\end{align}
Then
\begin{equation}\label{eq:app-Phi-derivative}
\partial_\vartheta\Phi^\vartheta(t,s)
=
\Phi^\vartheta(t,s)\Lambda_\vartheta(t,s).
\end{equation}
For notational compactness, write
\[
A^\vartheta(s,x)
:=
F^\vartheta(s,x)+d^\vartheta(s)H^\vartheta(s,x).
\]
Then
\begin{equation}\label{eq:app-general-parameter-greek}
\partial_\vartheta u^\vartheta(t,x;B)
=
\mathbb E_W\!\left[Q^{(\vartheta)}(t,x;B)\mid B\right],
\end{equation}
where
\begin{align}\label{eq:app-general-parameter-greek-payoff}
Q^{(\vartheta)}(t,x;B)
&=
\partial_\vartheta\Phi^\vartheta(t,T)\,
G^\vartheta(X_T^{\vartheta,t,x})
+
\Phi^\vartheta(t,T)
\Big(
\partial_\vartheta G^\vartheta(X_T^{\vartheta,t,x})
+
\nabla_x G^\vartheta(X_T^{\vartheta,t,x})^\top U_T^\vartheta
\Big)
\nonumber\\
&\quad
+
\int_t^T
\Big[
\partial_\vartheta\Phi^\vartheta(t,s)\,
A^\vartheta(s,X_s^{\vartheta,t,x})
+
\Phi^\vartheta(t,s)
\Big(
\partial_\vartheta A^\vartheta(s,X_s^{\vartheta,t,x})
+
\nabla_x A^\vartheta(s,X_s^{\vartheta,t,x})^\top U_s^\vartheta
\Big)
\Big]\,\mathrm ds\notag\\
&\quad+
\int_t^T
\Big[
\partial_\vartheta\Phi^\vartheta(t,s)\,
H^\vartheta(s,X_s^{\vartheta,t,x})
+
\Phi^\vartheta(t,s)
\Big(
\partial_\vartheta H^\vartheta(s,X_s^{\vartheta,t,x})
+
\nabla_x H^\vartheta(s,X_s^{\vartheta,t,x})^\top U_s^\vartheta
\Big)
\Big]\,
\mathrm d \overleftarrow{B}_s.
\end{align}

\paragraph{Discount-only Rho.}
If the short rate enters only through the discount coefficient
$c^\eta(s)=-r^\eta(s)$, with
\[
r^\eta(s)=r(s)+\eta\psi(s),
\]
and all other coefficients, including the forward dynamics, are kept fixed,
then in the pure pricing setting \eqref{eq:app-pure-pricing},
\begin{equation}\label{eq:app-rho-discount-only}
\mathrm{Rho}^{\mathrm{disc}}_{\psi}(t,x;B)
:=
\left.\partial_\eta u^\eta(t,x;B)\right|_{\eta=0}
=
-
\mathbb E_W\!\left[
\left(\int_t^T\psi(s)\,\mathrm ds\right)
\Phi(t,T)G(X_T^{t,x})
\;\middle|\;B
\right].
\end{equation}
For a constant short-rate shift, $\psi\equiv1$,
\begin{equation}\label{eq:app-rho-discount-only-constant}
\mathrm{Rho}^{\mathrm{disc}}(t,x;B)
=
-(T-t)u(t,x;B).
\end{equation}
This identity is a discount-only identity.  In risk-neutral equity models where
$r$ also enters the drift of $X$, the full Rho must instead be computed from
\eqref{eq:app-general-parameter-greek-payoff}, including the parameter tangent
$U^\vartheta$.

\paragraph{Discrete parameter Greeks.}
For the augmented Milstein dynamics of Section~\ref{sec:Strong-Order-One Numerical Schemes for Pricing and Greek Estimators}, let
\[
\widetilde X_n=(S_n,V_n,D_n,Y_n,Z_n)
\]
and write one step as
\[
\widetilde X_{n+1}
=
\mathcal S(\widetilde X_n;\xi_n),
\]
where $\xi_n$ collects the Brownian increments and any iterated stochastic
integrals required by the scheme.  A discrete parameter Greek is obtained by
propagating
\begin{equation}\label{eq:app-discrete-parameter-tangent}
\dot{\widetilde X}_{n+1}
=
\nabla_{\widetilde X}\mathcal S(\widetilde X_n;\xi_n)
\dot{\widetilde X}_n
+
\partial_\vartheta\mathcal S(\widetilde X_n;\xi_n).
\end{equation}
The terminal Greek payoff is obtained by differentiating the terminal payoff
register, for example
\[
D_NG(S_N)+Y_N+Z_N .
\]
This covers full Rho as well as model-parameter sensitivities such as
$\partial_\kappa u$, $\partial_\xi u$, $\partial_\rho u$, and
$\partial_\theta u$.

\subsection{Second-order Greeks for Smooth Payoffs}
\label{sec:app-second-order-greeks}

Second-order Greeks are obtained by differentiating the pathwise first-order
representation once more.

\begin{proposition}[Pathwise representation of conditional second-order Greeks]
\label{prop:app-second-order-greeks}
Assume that $b,\sigma\in C_b^3$ in the spatial variable, that
$G\in C_p^3(\mathbb R^d)$, and that
$F,H\in C_b^3([0,T]\times\mathbb R^d)$.  Assume the corresponding moment bounds
needed for differentiating under $\mathbb E_W[\cdot\mid B]$.

For $i,j\in\{1,\ldots,d\}$, define
\begin{equation}\label{eq:app-second-tangent}
K_s^{(ij),t,x}
:=
\partial_{x_i x_j}^2X_s^{t,x},
\quad
K_t^{(ij),t,x}=0 .
\end{equation}
Let $J_s^i:=J_s^{t,x}e_i$.  Then
\begin{align}\label{eq:app-second-tangent-sde}
dK_s^{(ij),t,x}
&=
\Big[
\nabla_x b(X_s^{t,x})K_s^{(ij),t,x}
+
\nabla_x^2 b(X_s^{t,x})[J_s^i,J_s^j]
\Big]\,ds
+
\sum_{a=1}^{d}
\Big[
\nabla_x\sigma_{\cdot a}(X_s^{t,x})K_s^{(ij),t,x}
+\nabla_x^2\sigma_{\cdot a}(X_s^{t,x})[J_s^i,J_s^j]
\Big]\,dW_s^a .
\end{align}
Moreover,
\begin{equation}\label{eq:app-second-order-greek}
\mathcal G_{ij}^{(2)}(t,x;B)
:=
\partial_{x_i x_j}^2u(t,x;B)
=
\mathbb E_W\!\left[P^{(ij)}(t,x;B)\mid B\right],
\end{equation}
where
\begin{align}\label{eq:app-second-order-greek-payoff}
P^{(ij)}(t,x;B)
&=
\Phi(t,T)
\Big(
(J_T^j)^\top\nabla_x^2G(X_T^{t,x})J_T^i
+
\nabla_xG(X_T^{t,x})^\top K_T^{(ij),t,x}
\Big)
\nonumber\\
&\quad
+
\int_t^T
\Phi(t,s)
\Big(
(J_s^j)^\top
\nabla_x^2\!\big(F+dH\big)(s,X_s^{t,x})J_s^i
+
\nabla_x\!\big(F+dH\big)(s,X_s^{t,x})^\top
K_s^{(ij),t,x}
\Big)\,\mathrm ds
\nonumber\\
&\quad
+
\int_t^T
\Phi(t,s)
\Big(
(J_s^j)^\top\nabla_x^2H(s,X_s^{t,x})J_s^i
+
\nabla_xH(s,X_s^{t,x})^\top K_s^{(ij),t,x}
\Big)\,
\mathrm d \overleftarrow{B}_s.
\end{align}
Here $\nabla_x^2 a(x)[u,v]$ denotes the bilinear action of the Hessian of $a$
on the pair $(u,v)$.
\end{proposition}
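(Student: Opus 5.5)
The plan mirrors the proof of Proposition~\ref{prop:app-first-order-greeks}, carried one derivative further. We differentiate the first-order representation \eqref{eq:app-first-order-greek-expectation}--\eqref{eq:app-first-order-greek-payoff} in the direction $e_j$, pathwise and inside $\mathbb E_W[\cdot\mid B]$. Throughout, $B$ is fixed, so the only dependence on $x$ enters through the flow $X^{t,x}$ and its derivatives. The weight $\Phi(t,s)$ is built from the deterministic coefficients $c,\widetilde c,d$ and is therefore independent of $x$.

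\emph{Step 1 (second variation).} Since $b,\sigma\in C_b^3$, standard Kunita flow theory makes $x\mapsto X_s^{t,x}$ twice continuously differentiable in $L^p$ for every $p$. Moreover, $J$ and $K^{(ij)}$ have moments of all orders, uniformly in $s\in[t,T]$. We differentiate the Jacobian equation \eqref{eq:app-jacobian-sde} for $J_s^i=J_s^{t,x}e_i$ with respect to $x_j$. By the product and chain rules,
\[
\partial_{x_j}\bigl(\nabla_x b(X_s)J_s^i\bigr)=\nabla_x^2 b(X_s)[J_s^i,J_s^j]+\nabla_x b(X_s)K_s^{(ij)},
\]
and the same identity holds for each $\sigma_{\cdot a}$. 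This gives \eqref{eq:app-second-tangent-sde} with $K_t^{(ij)}=0$, because $J_t=I_d$ is constant in $x$. To justify interchanging $\partial_{x_j}$ with the stochastic integral, we use the usual difference-quotient argument. The difference quotients of $J^i$ satisfy a linear SDE whose coefficients converge in $L^p$, and Burkholder--Davis--Gundy together with Gronwall gives $L^p$ convergence to $K^{(ij)}$.

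\emph{Step 2 (pathwise differentiation of $P^{(i)}$).} We differentiate each term of \eqref{eq:app-first-order-greek-payoff} with respect to $x_j$, holding $\Phi$ fixed.

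For the terminal term, $\nabla G(X_T)^\top J_T^i$ has derivative $(J_T^j)^\top\nabla^2G(X_T)J_T^i+\nabla G(X_T)^\top K_T^{(ij)}$.

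The $\mathrm ds$-integral is treated the same way with $F+dH$ in place of $G$. Differentiation passes under the Lebesgue integral by dominated convergence, using $F,H\in C_b^3$ and the moment bounds on $J$ and $K$.

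The backward integral $\int\Phi H'J\,\mathrm d\overleftarrow B$ requires more care. We differentiate the integrand and pass the limit through the backward Itô integral using the backward Itô isometry in $L^2_{W,B}$. The integrands are measurable with respect to the appropriate two-sided filtration $\mathcal F^W_{t,s}\vee\mathcal F^B_{s,T}$, and their difference quotients converge in $L^2$. This yields \eqref{eq:app-second-order-greek-payoff}. Since $\nabla^2 G$ has polynomial growth and all flow processes have all moments, $P^{(ij)}\in L^2_{W,B}$.

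\emph{Step 3 (interchange with the conditional expectation).} By Proposition~\ref{prop:app-first-order-greeks}, $\partial_{x_i}u(t,x;B)=\mathbb E_W[P^{(i)}(t,x;B)\mid B]$. The joint-measure difference quotients $\bigl(P^{(i)}(t,x+\delta e_j)-P^{(i)}(t,x)\bigr)/\delta$ are uniformly bounded in $L^2_{W,B}$ and converge to $P^{(ij)}$ in $L^1_{W,B}$, so their conditional expectations converge in $L^1_B$. This identifies $\partial_{x_j}\partial_{x_i}u=\mathbb E_W[P^{(ij)}\mid B]$ for $\mathbb P_B$-a.e.\ $B$, initially along a subsequence. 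Continuity in $x$, again from the $L^p$ flow estimates, removes the subsequence and the null-set dependence on $x$.

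The main obstacle is the backward stochastic integral in Step 2. Differentiating under $\mathrm d\overleftarrow B$ with $B$ "fixed" is not a pathwise operation. It has to be carried out in $L^2_{W,B}$ with the correct two-sided measurability, and only afterwards transferred to almost every $B$ through the conditioning in Step 3. The assumed moment bounds in the statement are exactly what makes this transfer work.
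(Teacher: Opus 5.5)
Your argument follows the paper's proof. You differentiate the first-order payoff \eqref{eq:app-first-order-greek-payoff} in $x_j$, obtain the second-variation SDE from $b,\sigma\in C_b^3$, apply the chain rule, and use the moment bounds to pass the derivative through $\mathbb E_W[\cdot\mid B]$. You write out in much more detail what the paper only sketches in three lines. However, the step you identify as the crux is justified incorrectly.

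The backward integrand $\Phi(t,s)\nabla_xH(s,X_s^{t,x})^\top J_s^i$ is \emph{not} $\mathcal F^W_{t,s}\vee\mathcal F^B_{s,T}$-measurable. The weight $\Phi(t,s)$ contains $\int_t^s d(r)\,\mathrm d\overleftarrow B_r$, i.e.\ the $B$-increments on $[t,s]$. These lie outside $\mathcal F^B_{s,T}$ and are correlated with the increment $\Delta\overleftarrow B$ that multiplies the integrand in each Riemann sum. So the backward It\^o isometry (or BDG) cannot be applied to this integrand as it stands. When $d\not\equiv0$ the integral is in general not even centred.

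The repair is short, because $\Phi$ does not depend on $x$ and $d$ is deterministic. Split $\Phi(t,s)=\Phi^W(t,s)\Phi^B(t,s)$, where
$\Phi^W(t,s)=\exp\bigl(\int_t^s c(r)\,\mathrm dr+\int_t^s\widetilde c(r)\,\mathrm dW_r-\frac12\int_t^s|\widetilde c(r)|^2\,\mathrm dr\bigr)$
and
$\Phi^B(t,s)=\exp\bigl(\int_t^s d(r)\,\mathrm d\overleftarrow B_r+\frac12\int_t^s|d(r)|^2\,\mathrm dr\bigr)$.
Note that $\Phi^B(t,s)=\Phi^B(t,T)/\Phi^B(s,T)$.

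The factor $\Phi^B(t,T)$ does not depend on $s$ or $x$, so it can be pulled out of the Riemann sums defining the backward integral. The remaining integrand $\Phi^W(t,s)\Phi^B(s,T)^{-1}\nabla_xH(s,X_s^{t,x})^\top J_s^i$ is genuinely $\mathcal F^W_{t,s}\vee\mathcal F^B_{s,T}$-measurable. Apply BDG to the difference quotients of this adapted part, then H\"older with the moments of $\Phi^B(t,T)$. This gives the $L^2_{W,B}$ convergence you use in Steps 2 and 3, and the rest of your argument goes through unchanged.
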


\begin{proof}
Differentiate \eqref{eq:app-first-order-greek-payoff} with respect to the
initial coordinate $x_j$.  The $C_b^3/C_p^3$ regularity assumptions give the
second derivative of the stochastic flow and the SDE
\eqref{eq:app-second-tangent-sde}.  The chain rule yields
\eqref{eq:app-second-order-greek-payoff}, and the moment assumptions justify
passing the derivative through the conditional expectation.
\end{proof}

\subsubsection{Gamma}\label{sec:app-gamma}

When $x=(S,\chi)$ and the first component is the spot, the conditional Gamma is
\begin{equation}\label{eq:app-gamma}
\Gamma(t,x;B)
:=
\partial_{SS}u(t,x;B)
=
\mathbb E_W\!\left[P^{(SS)}(t,x;B)\mid B\right].
\end{equation}
In the pure pricing case \eqref{eq:app-pure-pricing}, if $G(x)=g(S)$, then
\begin{equation}\label{eq:app-gamma-pure}
P^{(SS)}(t,x;B)
=
\Phi(t,T)
\left[
g''(S_T^{t,x})(\partial_SS_T^{t,x})^2
+
g'(S_T^{t,x})\partial_{SS}^2S_T^{t,x}
\right].
\end{equation}

\subsubsection{Vanna and Volga}\label{sec:app-vanna-volga}

If the state contains a variance or volatility factor $V$, then
\begin{equation}\label{eq:app-vanna-volga}
\mathrm{Vanna}(t,x;B)
:=
\partial_{SV}u(t,x;B)
=
\mathbb E_W\!\left[P^{(SV)}(t,x;B)\mid B\right],
\quad
\mathrm{Volga}(t,x;B)
:=
\partial_{VV}u(t,x;B)
=
\mathbb E_W\!\left[P^{(VV)}(t,x;B)\mid B\right].
\end{equation}
If $G(x)=g(S)$ depends only on the terminal spot, then
\begin{align}
P^{(SV)}(t,x;B)
&=
\Phi(t,T)
\left[
g''(S_T^{t,x})
\partial_SS_T^{t,x}\,
\partial_VS_T^{t,x}
+
g'(S_T^{t,x})\partial_{SV}^2S_T^{t,x}
\right],
\label{eq:app-vanna-pure}
\\
P^{(VV)}(t,x;B)
&=
\Phi(t,T)
\left[
g''(S_T^{t,x})
(\partial_VS_T^{t,x})^2
+
g'(S_T^{t,x})\partial_{VV}^2S_T^{t,x}
\right].
\label{eq:app-volga-pure}
\end{align}

\subsection{Greek Estimators in the MLMC and QA-MLMC Pipeline}
\label{sec:app-greek-algorithms}

The formulas above identify the exact scalar random variables whose conditional
expectations are the Greeks.  To use them in the algorithms, one augments the
one-step discretization by tangent variables.

\paragraph{First-order Greeks.}
Let
\[
\xi_k
\]
collect all random inputs required by the one-step scheme, including
$\Delta W_k$, $\Delta\overleftarrow B_k$, and, when used, mixed iterated
integrals such as $J_{t_k,h}^{WB}$.  A first-order Greek discretization has the
form
\begin{equation}\label{eq:app-discrete-jacobian}
J_{k+1}
=
\mathcal S_J(J_k,X_k,t_k,h;\Delta W_{t_k},\Delta\overleftarrow B_{t_k}),
\end{equation}
together with a Greek-payoff update
\begin{equation}\label{eq:app-discrete-first-order-payoff}
Y_{k+1}^{(i)}
=
Y_k^{(i)}
+
\mathcal S_{\mathrm{int}}^{(i)}
\big(
\Phi_k,X_k,J_k,t_k,h;\Delta W_{t_k},\Delta\overleftarrow B_{t_k}
\big),
\end{equation}
whose terminal value approximates \eqref{eq:app-first-order-greek-payoff}.

\paragraph{Second-order Greeks.}
For second-order sensitivities, add
\begin{equation}\label{eq:app-discrete-second-tangent}
K_{k+1}^{(ij)}
=
\mathcal S_K\left(K_k^{(ij)},J_k,X_k,t_k,h;\Delta W_{t_k},\Delta\overleftarrow B_{t_k}\right),
\end{equation}
and
\begin{equation}\label{eq:app-discrete-second-order-payoff}
Y_{k+1}^{(ij)}
=
Y_k^{(ij)}
+
\mathcal S_{\mathrm{int}}^{(ij)}
\big(
\Phi_k,X_k,J_k,K_k^{(ij)},t_k,h;\Delta W_{t_k},\Delta\overleftarrow B_{t_k}
\big),
\end{equation}
whose terminal value approximates \eqref{eq:app-second-order-greek-payoff}.

Using the same dyadic Brownian coupling as in
Algorithm~\ref{alg:Q-level-l-difference}, define the level differences
\begin{equation}\label{eq:app-greek-level-differences}
\Delta_\ell^{(i)}
:=
P_\ell^{(i,f)}-P_{\ell-1}^{(i,c)},
\quad
\Delta_\ell^{(ij)}
:=
P_\ell^{(ij,f)}-P_{\ell-1}^{(ij,c)} .
\end{equation}
These level differences are fed into Algorithm~\ref{alg:conditional-QA-MLMC} in
exactly the same way as the price level differences.

\begin{corollary}[Conditional QA-MLMC for smooth Greeks]
\label{cor:app-qmlmc-greeks}
Assume the hypotheses of Proposition~\ref{prop:app-first-order-greeks}.  Assume
also that the augmented first-order discretization
\[
(\mathcal S_X,\mathcal S_\Phi,\mathcal S_J,
\mathcal S_{\mathrm{int}}^{(i)})
\]
has strong-error order at least $1$, satisfies the required accumulated stability
estimate, and has uniformly bounded moments of sufficiently high order.  Then
\[
\left|
\mathbb E_W[P_\ell^{(i)}-P^{(i)}\mid B]
\right|
=
\mathcal O(2^{-\ell}),
\quad
\operatorname{Var}_W(P_\ell^{(i)}-P_{\ell-1}^{(i)}\mid B)
=
\mathcal O(2^{-2\ell}),
\quad
C_\ell=\mathcal O(2^\ell).
\]
Consequently, Algorithm~\ref{alg:conditional-QA-MLMC}, with the price payoff
register replaced by the first-order Greek payoff register, estimates
$\mathcal G_i(t,x;B)$ with additive error $\epsilon$ and cost
$\widetilde{\mathcal O}(\epsilon^{-1})$.

If, in addition, the augmented second-order discretization
\[
(\mathcal S_X,\mathcal S_\Phi,\mathcal S_J,\mathcal S_K,
\mathcal S_{\mathrm{int}}^{(ij)})
\]
has strong-error order at least $1$, satisfies the analogous stability estimate, and
has the required moment bounds, then the same conclusion holds for
$\mathcal G_{ij}^{(2)}(t,x;B)=\partial_{x_i x_j}^2u(t,x;B)$.
\end{corollary}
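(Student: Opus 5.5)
The plan is to repeat the proof of Theorem~\ref{thm:QAMLMC for u when given B}, with the price payoff $P$ replaced by the Greek payoff $P^{(i)}$ (and later $P^{(ij)}$). The only new ingredient is a joint strong-error estimate for the augmented scheme. I will invoke the strong-error framework of Section~\ref{sec:Strong-Order-One Numerical Schemes for Pricing and Greek Estimators}, specifically Proposition~\ref{prop:error-first-order-greek-payoff}. Given that the components $(\mathcal S_X,\mathcal S_\Phi,\mathcal S_J,\mathcal S_{\mathrm{int}}^{(i)})$ are first-order consistent and accumulatedly stable, it yields
\[
\bigl\|P^{(i)}_\ell(t,x)-P^{(i)}(t,x)\bigr\|_{L^2_{W,B}}\le C2^{-\ell}.
\]
The uniform high-order moment bounds assumed in the corollary play the role of hypothesis (3) of Theorem~\ref{thm:QAMLMC for u when given B}. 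They control the products $\Phi_k\,\nabla G(X_k)^\top J_k$ and $\Phi_k\,\nabla_x H(X_k)^\top J_k$ that now appear in the payoff. By H\"older's inequality, these products replace the $L^4$ bound on $\Phi_k$ and $G(X_k)$.

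Next I will pass from the joint estimate to a conditional one. Applying Lemma~\ref{lem:EB to EW} with $p=2$ to $\Pi_{\ell,0}=P^{(i)}_\ell-P^{(i)}$ gives, for each $a>1$ and $\mathbb P_B$-a.e.\ $B$,
\[
\mathbb E_W\bigl[|P^{(i)}_\ell-P^{(i)}|^2\mid B\bigr]\le C_a(B)(1+\ell)^a2^{-2\ell}.
\]
From this bound, Jensen's inequality gives the conditional bias $|\mathbb E_W[P^{(i)}_\ell-P^{(i)}\mid B]|$. The triangle inequality in $L^2_W$ between levels $\ell$ and $\ell-1$ gives the conditional variance of the level difference. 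This produces the stated rates $2^{-\ell}$ and $2^{-2\ell}$, up to the polylogarithmic factor $(1+\ell)^a$ and the random constant $C_a(B)$; it is the same ``$\mathcal O$ up to logarithmic factors'' convention used in Theorem~\ref{thm:QAMLMC for u when given B}. The cost per sample is $C_\ell=\mathcal O(2^\ell)$. The augmented state $(X,\Phi,J,Y^{(i)})$ has fixed dimension $\mathcal O(d^2)$, each one-step oracle has constant cost independent of $h$, and $2^\ell$ steps are taken. Hence $(\alpha,\beta,\gamma)=(1,2,1)$. Theorem~\ref{thm:QA-MLMC}, in the borderline form of Theorem~5 of \cite{li2026quantum} noted in the remark after Theorem~\ref{thm:QAMLMC for u when given B}, then gives cost $\widetilde{\mathcal O}(\epsilon^{-1})$. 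Finally, Proposition~\ref{prop:app-first-order-greeks} identifies the limit $\mathbb E_W[P^{(i)}\mid B]$ as $\mathcal G_i(t,x;B)$.

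The second-order claim follows the same pattern. I will use Proposition~\ref{prop:error-second-order-greek-payoff} for the augmented scheme including $\mathcal S_K$, and Proposition~\ref{prop:app-second-order-greeks} for the representation $\mathcal G^{(2)}_{ij}=\mathbb E_W[P^{(ij)}\mid B]$.

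The main obstacle is the moment control needed to feed the strong-error framework. The Greek payoffs involve products of $\Phi$, $J$, $K$ and the derivatives $\nabla G$, $\nabla^2G$, and $G\in C_p^2$ or $C_p^3$ has only polynomial growth. Bounding the one-step consistency errors of these products in $L^2$ therefore requires $L^p$ bounds on $\sup_k|J_k|$, $\sup_k|K_k|$, $\sup_k|\Phi_k|$ and $\sup_k|X_k|$ for several $p>2$ simultaneously. I would first obtain these by a discrete Gronwall argument on the linear recursions for $J_k$ and $K_k$, which have bounded coefficients since $b,\sigma\in C_b^2$ or $C_b^3$, combined with the Burkholder--Davis--Gundy inequality. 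The remaining estimates are then a direct reuse of the price case.
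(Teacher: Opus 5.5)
Your proposal is correct and is essentially the paper's argument. The paper's proof just says to rerun the proof of Theorem~\ref{thm:QAMLMC for u when given B} with the Greek payoff register: the tangent variables enlarge the state only by a constant factor, so $\gamma=1$, and the assumed strong order gives $\alpha=1$, $\beta=2$, which is exactly the chain Proposition~\ref{prop:error-first-order-greek-payoff} $\to$ Lemma~\ref{lem:EB to EW} $\to$ Theorem~\ref{thm:QA-MLMC} that you spell out. Your caveat that the rates hold only up to a $(1+\ell)^{a}$ factor and a $B$-dependent constant matches that theorem's convention, and you do not need the discrete Gronwall moment estimates, since uniform moment bounds are already among the corollary's hypotheses.
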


\begin{proof}
The proof is the same as the proof of
Theorem~\ref{thm:QAMLMC for u when given B}.  The Greek payoff is still a scalar
path functional.  For any fixed Greek, the tangent variables enlarge the state
dimension only by a constant factor, so the cost exponent remains
$\gamma=1$.  The assumed strong-error order gives $\alpha=1$ and $\beta=2$ in
Theorem~\ref{thm:QA-MLMC}.  Therefore the conditional quantum MLMC cost is
$\widetilde{\mathcal O}(\epsilon^{-1})$.
\end{proof}

\begin{corollary}[Nested QA-MLMC for unconditional Greeks]
\label{cor:app-nested-qmlmc-greeks}
Let
\[
U(t,x):=\mathbb E_B[u(t,x;B)].
\]
In addition to the assumptions of Corollary~\ref{cor:app-qmlmc-greeks}, assume
that differentiation may be interchanged with the outer expectation over $B$,
for example by dominated convergence or by a uniform integrability condition on
the conditional Greek payoffs.  Assume also
\[
\sup_{B_0}
\mathbb E_W\!\left[|P^{(i)}(t,x;B_0)|^2\mid B_0\right]
<\infty,
\]
and, for second-order Greeks,
\[
\sup_{B_0}
\mathbb E_W\!\left[|P^{(ij)}(t,x;B_0)|^2\mid B_0\right]
<\infty .
\]
Then nested QA-MLMC yields estimators of
\[
\partial_{x_i}U(t,x)
=
\mathbb E_B[\mathcal G_i(t,x;B)]
\]
and
\[
\partial_{x_i x_j}^2U(t,x)
=
\mathbb E_B[\mathcal G_{ij}^{(2)}(t,x;B)]
\]
with additive error $\epsilon$ and total cost
$\widetilde{\mathcal O}(\epsilon^{-1})$.
\end{corollary}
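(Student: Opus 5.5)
The plan is to reduce Corollary~\ref{cor:app-nested-qmlmc-greeks} to Theorem~\ref{thm:nested-qmlmc}, applied with $\varphi(z)=z$ (so $K=1$) and with the price payoff $P$ replaced by the Greek payoffs $P^{(i)}$ or $P^{(ij)}$. The first step is to identify the targets. By Proposition~\ref{prop:app-first-order-greeks}, $\mathcal G_i(t,x;B)=\partial_{x_i}u(t,x;B)=\mathbb E_W[P^{(i)}(t,x;B)\mid B]$. The assumed interchange of $\partial_{x_i}$ with $\mathbb E_B$, justified for instance by dominated convergence using the uniform conditional second-moment bound, then gives
\[
\partial_{x_i}U(t,x)=\mathbb E_B\bigl[\mathcal G_i(t,x;B)\bigr]=\mathbb E_B\Bigl[\mathbb E_W\bigl[P^{(i)}(t,x;B)\mid B\bigr]\Bigr].
\]
For the second-order case, Proposition~\ref{prop:app-second-order-greeks} gives the analogous identity for $\partial^2_{x_ix_j}U$. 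Each target therefore has exactly the nested form $\mathbb E_B[\varphi(Q(t,x;B))]$ with $Q=\mathbb E_W[\mathcal P\mid B]$. By the remark following Theorem~\ref{thm:nested-qmlmc}, this form is the only structural input the nested framework uses.

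Next I check the three hypotheses that the proof of Theorem~\ref{thm:nested-qmlmc} actually uses.

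\textbf{(a) Inner conditional oracle.} We need an inner oracle $\mathcal A(\epsilon;B_0)$ with cost $\widetilde{\mathcal O}(\epsilon^{-1})$ for almost every $B_0$. Corollary~\ref{cor:app-qmlmc-greeks} supplies this, since it is proved exactly as Theorem~\ref{thm:QAMLMC for u when given B}: the joint strong-order-one estimate is transferred to a conditional one via Lemma~\ref{lem:EB to EW}.

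\textbf{(b) Uniform conditional second moment.} We need $\mathbb E_W[|\mathcal P(t,x;B_0)|^2\mid B_0]\le V$ uniformly in $B_0$. This is exactly the assumed $\sup_{B_0}$ bound, which we take as $V$. It yields $|\mathcal G_i(t,x;B_0)|\le\sqrt V$, so clipping to $[-\sqrt V,\sqrt V]$ in Algorithm~\ref{alg: Q-Coupled Level Difference Evaluator} is nonexpansive toward the target.

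\textbf{(c) Finite outer variance.} We need $\operatorname{Var}_B(\mathcal G_i)\le S$, which follows from (b), since the variance is at most $V$.

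Once (a)--(c) hold, the error decomposition in the proof of Theorem~\ref{thm:nested-qmlmc} applies verbatim. The $L+1$ outer QME errors contribute at most $\epsilon/2$, and the bias $\mathbb E|\widehat u_L-\mathcal G_i|\le 2^{-L}$ contributes at most $\epsilon/2$. The level variances satisfy $\operatorname{Var}(\mathcal Q_\ell)\le 16\cdot 2^{-2\ell}$, and the per-level cost is $\widetilde{\mathcal O}(L\ell^2/\epsilon)$. Summing over $L=\mathcal O(\log(1/\epsilon))$ levels gives $\widetilde{\mathcal O}(\epsilon^{-1})$ with success probability at least $0.8$.

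I expect the main obstacle to be step (a). The inner oracle must hold for $\mathbb P_B$-almost every $B_0$, but Lemma~\ref{lem:EB to EW} produces a random constant $C_a(B_0)$ that enters the cost of $\mathcal A$. For the outer cost bound to be deterministic, I need $\mathbb E_B$ of the inner cost, or a high-probability bound on it, to be finite and polylogarithmic. My first attempt is to avoid $C_a(B)$ altogether by using the uniform bound (b) together with the joint $L^2_{W,B}$ strong estimate directly. The strong estimate controls $\mathbb E_B$ of the conditional level variances, so the inner QME accuracies can be set using averaged rather than pathwise variances. If that fails, I would fall back on truncating to the event $\{C_a(B)\le M\}$, whose complement has small probability, and absorbing that failure probability into the $0.1$ budget of the level-$0$ outer estimation. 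This mirrors how the theorem treats the price.
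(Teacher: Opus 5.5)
Your proposal is correct and is the paper's argument: the paper's proof just applies Theorem~\ref{thm:nested-qmlmc} with $\varphi(z)=z$, swaps in the conditional Greek estimator of Corollary~\ref{cor:app-qmlmc-greeks}, and uses the interchange hypothesis to identify the targets, and your checks (a)--(c) make explicit what it leaves implicit. The $C_a(B_0)$-dependence of the inner cost is a real subtlety that the paper's proof of Theorem~\ref{thm:nested-qmlmc} passes over silently; your first remedy is the sound one, namely a fixed, $B$-independent inner budget that controls only $\mathbb E_B|\widehat u_\ell(B)-u(t,x;B)|^2\lesssim 2^{-2\ell}$ via the joint strong estimate, which is all that proof uses. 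The truncation fallback does not work as stated, because inner errors on $\{C_a(B)>M\}$ feed into the bias and variance of every $\mathcal Q_\ell$ rather than into the level-$0$ failure budget, and suppressing them at rate $2^{-2\ell}$ using only $\mathbb E_B[C_a(B)]<\infty$ would cost enough to lose the $\widetilde{\mathcal O}(\epsilon^{-1})$ rate.
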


\begin{proof}
Apply Theorem~\ref{thm:nested-qmlmc} with $\varphi(z)=z$, replacing the inner
conditional price estimator by the corresponding conditional Greek estimator.
The interchange condition identifies the outer expectation of the conditional
Greek with the Greek of the unconditional value function.
\end{proof}

\subsection{Nonsmooth Payoffs: Likelihood-ratio and Malliavin-weight Greeks}
\label{sec:app-nonsmooth-greeks}

The pathwise Greek formulas above were stated under differentiability
assumptions on the terminal function \(G\).  Hence they do not directly cover
nonsmooth payoffs.  For Lipschitz payoffs with isolated kinks, such as vanilla
calls, first-order pathwise estimators may still be justified under suitable
non-atomicity or density assumptions on \(X_T^{t,x}\).  However, discontinuous
payoffs, barrier-type path functionals, and higher-order Greeks generally
require additional smoothing, likelihood-ratio, or Malliavin
integration-by-parts arguments~\cite{BroadieGlasserman1996,Fournie1999,Giles2009Vibrato}.

For illustration, consider the pure terminal-value case \eqref{eq:app-pure-pricing}, so that \(F=H=0\), \(c=-r\), and \(d=\widetilde c=0\).  In this case \(\Phi(t,T)\) is independent of the forward Brownian motion after conditioning on \(B\).  Assume that the forward diffusion is sufficiently smooth and uniformly elliptic, with
\[
a(s,x):=\sigma(s,x)\sigma(s,x)^\top \succeq \lambda I_d
\]
for some \(\lambda>0\).

If the transition density \(p(t,x;T,y)\) exists and is differentiable with respect to the initial condition \(x\), then the likelihood-ratio representation has the form~\cite{BroadieGlasserman1996}
\begin{equation}\label{eq:app-lr-greek}
\partial_{x_i}u(t,x;B)
=
\mathbb E_W\!\left[
\Phi(t,T)G(X_T^{t,x})
\,\partial_{x_i}\log p(t,x;T,X_T^{t,x})
\;\middle|\;B
\right].
\end{equation}

Alternatively, one may use a Malliavin-weight representation.  Define
\begin{equation}\label{eq:app-bel-weight-general}
\mathcal W_i(t,T)
:=
\frac{1}{T-t}
\int_t^T
\left(
\sigma(s,X_s^{t,x})^\top
a(s,X_s^{t,x})^{-1}
J_s^i
\right)^\top \mathrm dW_s .
\end{equation}
When \(\sigma\) is square and invertible, this reduces to
\begin{equation}\label{eq:app-bel-weight-square}
\mathcal W_i(t,T)
=
\frac{1}{T-t}
\int_t^T
\left(
\sigma(s,X_s^{t,x})^{-1}J_s^i
\right)^\top \mathrm dW_s .
\end{equation}
The Bismut--Elworthy--Li formula then gives~\cite{Bismut1984,ElworthyLi1994,Fournie1999}
\begin{equation}\label{eq:app-bel-directional}
\partial_{x_i}u(t,x;B)
=
\mathbb E_W\!\left[
\Phi(t,T)G(X_T^{t,x})\,\mathcal W_i(t,T)
\;\middle|\;B
\right].
\end{equation}
Thus the derivative is again represented as the conditional expectation of a scalar path functional, with the payoff derivative replaced by a stochastic
weight.

The simple form \eqref{eq:app-bel-directional} uses the fact that, in the pure terminal-value setting, \(\Phi(t,T)\) is independent of \(W\).  If \(\widetilde c\neq0\), or more generally if the exponential factor or running payoff contains \(W\)-dependent terms, the Malliavin integration-by-parts formula must also account for the Malliavin derivative of those terms.  Such general weighted representations can still be incorporated into the same conditional-expectation framework, but the weight and payoff register must be modified accordingly.

The conditional and nested QA-MLMC architecture can be applied to these weighted-payoff estimators once the corresponding discretized weighted payoff registers satisfy the same bias, variance-decay, cost, and moment assumptions
used in Corollaries~\ref{cor:app-qmlmc-greeks} and
\ref{cor:app-nested-qmlmc-greeks}.  In particular, one needs
\[
\left|
\mathbb E_W[P_{\ell}^{\mathrm{w}}-P^{\mathrm{w}}\mid B]
\right|
=
O(2^{-\ell}),
\quad
\operatorname{Var}_W(P_{\ell}^{\mathrm{w}}-P_{\ell-1}^{\mathrm{w}}\mid B)
=
O(2^{-2\ell}),
\quad
C_\ell=O(2^\ell),
\]
together with sufficient moment or tail bounds for the likelihood-ratio or Malliavin weight.  If the weight is unbounded, clipping or truncation may be used only after controlling the induced bias at the target accuracy. Barrier and other path-dependent discontinuous payoffs require path-dependent Malliavin weights or conditional-smoothing arguments and are not covered by the terminal-value formula above~\cite{Fournie1999,Fournie2001,GobetKohatsuHiga2003}.

\subsection{Heston-type Stochastic-volatility Models}
\label{sec:app-heston}

As a canonical two-factor example, consider the Heston stochastic-volatility model~\cite{Heston1993}.  To match the
independent Brownian-input convention used in the algorithms, write
\begin{equation}\label{eq:app-heston}
\begin{aligned}
dS_s
&=
rS_s\,ds+\sqrt{V_s}S_s\,dW_s^{(1)},\\
dV_s
&=
\kappa(\theta-V_s)\,ds
+
\xi\sqrt{V_s}
\left(
\rho\,dW_s^{(1)}
+
\sqrt{1-\rho^2}\,dW_s^{(2)}
\right),
\end{aligned}
\end{equation}
where $W^{(1)}$ and $W^{(2)}$ are independent Brownian motions.  Equivalently,
one may use correlated Brownian motions with
$d\langle W^{(1)},W^{(2)}\rangle_s=\rho\,ds$.

With state vector $x=(S,V)$, the conditional sensitivities above become
\[
\Delta=\partial_Su,\quad
\mathcal V_{\mathrm{spot}}=\partial_Vu,\quad
\Gamma=\partial_{SS}u,\quad
\mathrm{Vanna}=\partial_{SV}u,\quad
\mathrm{Volga}=\partial_{VV}u.
\]
Rho and the model-parameter Greeks
\[
\partial_\kappa u,\quad
\partial_\theta u,\quad
\partial_\xi u,\quad
\partial_\rho u
\]
are computed using the parameter-tangent formulas
\eqref{eq:app-parameter-tangent-sde} or, at the discrete level,
\eqref{eq:app-discrete-parameter-tangent}.

\begin{remark} The map \(v\mapsto\sqrt v\) is not \(C^1\) at \(v=0\), and the
full-truncation map \(v\mapsto v^+\) used in many numerical schemes is also
nonsmooth~\cite{LordKoekkoekVanDijk2010}.  Therefore, the global pathwise assumptions of
Propositions~\ref{prop:app-first-order-greeks} and
\ref{prop:app-second-order-greeks} do not hold for the unregularized Heston
diffusion without additional localization or moment arguments.  A rigorous
application of the smooth-Greek corollaries should therefore use one of the
following routes:
\begin{itemize}
\item replace $\sqrt v$ by a smooth positive regularization, such as
$\sqrt{v+\delta}$ or another smooth approximation, and analyze the
regularization bias;
\item localize the process to the region $V_s\geq \epsilon$ and control the
exit error;
\item use likelihood-ratio or Malliavin-weight estimators for nonsmooth
payoffs or boundary-sensitive regimes;
\item use a discretization-specific tangent recursion, with a smooth
truncation if differentiability of the numerical map is required.
\end{itemize}
\end{remark}
Once the chosen model and discretization satisfy the bias, variance, stability,
and moment assumptions stated in Corollaries~\ref{cor:app-qmlmc-greeks} and
\ref{cor:app-nested-qmlmc-greeks}, the same conditional and nested QA-MLMC
complexity bounds apply to Delta, spot Vega, Gamma, Vanna, Volga, full Rho, and
the Heston parameter Greeks.

\section{Strong-Error Order One Schemes for Pricing and Greek Estimators}
\label{sec:Strong-Order-One Numerical Schemes for Pricing and Greek Estimators}

\cite{bao2016first} developed a first-order scheme for BDSDEs
based on a two-sided It\^o--Taylor expansion.
However, since the scheme relies on conditional expectations with respect to
the forward Brownian motion at each time step, it integrates out the forward
randomness and therefore does not directly provide the pathwise strong
approximations required by our conditional-on-$B$ MLMC framework.
Inspired by the underlying It\^o--Taylor expansion, we develop new
discretization schemes and prove conditional strong-error order one convergence. Consequently, the resulting multilevel differences
exhibit second-order variance decay, which is essential for retaining the
quadratic quantum speedup in the QA-MLMC framework.

The BDSDE representations derived in the previous section
provide probabilistic formulas for option prices and Greeks.
To enable efficient multilevel Monte Carlo simulation,
we construct strong-error order one discretization schemes
for direct pricing estimators,
first-order Greek estimators,
and second-order Greek estimators.

The presentation of this section follows a common pattern for direct
pricing estimators, first-order Greek estimators, and second-order Greek
estimators.

For each estimator, we first derive a sufficient condition under which a
generic discretization operator yields global strong convergence of order
one. We then introduce a concrete Forward--Backward Taylor discretization operator and prove that it satisfies the required conditions. As a result, all
three estimators admit globally first-order accurate approximations.

\subsection{Forward--Backward Taylor Discretization Framework}

A central contribution of our numerical scheme is a family of
Forward--Backward Taylor discretizations for the path functionals
appearing in the BDSDE representations of prices and Greeks.

The key challenge is that the backward stochastic integrals
\begin{align*}
  \int Q(r,X_r)\,\mathrm d \overleftarrow{B}_r,
\end{align*}
where $Q$ denotes a generic integrand and may represent different functions in the pricing and Greek representations.
Since the integrand depends on the forward diffusion $X$, which is itself driven by an independent Brownian motion $W$, pathwise approximations compatible with the conditional-on-$B$ MLMC framework must retain the dependence on both sources of randomness throughout the discretization.

Our approach is based on Taylor expansions of the stochastic integrands.
The resulting discretizations naturally involve the mixed
forward--backward iterated integrals
\begin{align}
J^{WB}_{s,h}
&=
\Big(J_{ij}^{WB}(s,h)\Big)_{1\le i\le \ell,\ 1\le j\le d},
\quad
J_{ij}^{WB}(s,h)
:=
\int_s^{s+h}(W_r^j-W_s^j)\,
\mathrm d \overleftarrow{B}_r^i,
\label{eq:def-JWB}
\\
J^{BB}_{s,h}
&=
\Big(J_{ij}^{BB}(s,h)\Big)_{1\le i,j\le \ell},
\quad
J_{ij}^{BB}(s,h)
:=
\int_s^{s+h}
\bigl(B_{s}^j-B_r^j\bigr)\,
\mathrm d \overleftarrow{B}_r^i,
\label{eq:def-JBB}
\end{align}
together with the usual Brownian increments.
These terms capture the interaction between the \emph{forward} Brownian motion
$W$ and the \emph{backwar}d Brownian motion $B$, and play a crucial role
in achieving strong-error order one.

Throughout this section,
we consider a uniform partition
\[
t_k=t+kh,\quad k=0,\ldots,N,
\]
with stepsize $h=(T-t)/N$.
We denote the forward and backward Brownian increments by
\[
\Delta W_{t_k}
=
W_{t_{k+1}}-W_{t_k},
\quad
\Delta\overleftarrow B_{t_k}
=
B_{t_k}-B_{t_{k+1}},
\]
respectively.

Unless stated otherwise,
all $L^q$ norms are taken with respect to the joint law of
$(W,B)$:
\[
\|\cdot\|_{L^q}
:=
\|\cdot\|_{L^q_{W,B}}.
\]
When conditioning on a fixed realization of the backward Brownian motion
$B$,
we write the conditional norm as
$\|\cdot\|_{L_W^q}$.

\subsection{Direct Pricing Estimator}
Recall that the discretization operators
$\mathcal S_X$,
$\mathcal S_\Phi$,
and
$\mathcal S_{\mathrm{int}}$
were introduced in
\eqref{eq:def of S_X},
\eqref{eq:def of S_phi},
and
\eqref{eq:def of S_int},
respectively.
The goal of this subsection is to establish the strong approximation
properties of these operators and to derive the accumulated stability
estimates required for the global strong-error order one analysis.

\begin{definition}[Strong-error order of $S_X$]
  \label{def:Strong error of S X}
The discretization operator $S_X$ is said to have strong-error order $p>0$
if there exists a constant $C_X>0$ such that 
\begin{align*}
\left\|
\sup_{0\leq k\leq N}
\left|X_{t_k}^{t,x}-X_k^{(h)}\right|
\right\|_{L^q}
\leq C_Xh^{p},
\end{align*}
for every $q\ge2$.
There $X_s^{t,x}$ is the unique strong solution to the SDE~\eqref{eq:SDE(with BDSDE)}
and $\{X_k^{(h)}\}_{k=0}^N$ is the discrete-time approximation generated by
\begin{align*}
  X_{k+1}^{(h)}=\mathcal{S}_X\left(X_k^{(h)},t_k,h;\Delta W_{t_k}\right),
\quad X_0^{(h)}=x.
\end{align*}
\end{definition}

\begin{definition}[Strong-error order of $S_\Phi$]
  \label{def:Strong error of S Phi}
The discretization operator $S_\Phi$ is said to have  strong-error order $p>0$
\begin{align*}
\left\|
\sup_{0\leq k\leq N}
\left|
\Phi(t,t_k)-\Phi^{(h)}_k
\right|
\right\|_{L^q}
\leq C_\Phi h^{p},
\end{align*}
for every $q\ge2$.
Here  $\{\Phi_k^{(h)}\}_{k=0}^N$ is the discrete-time approximation generated by
\begin{align*}
  \Phi_{k+1}^{(h)}=
  \mathcal{S}_\Phi\left(
  \Phi_k^{(h)},t_k,h;\Delta W_{t_k},\Delta\overleftarrow B_{t_k}
  \right),
\quad
\Phi_0^{(h)}=1.
\end{align*}
\end{definition}

\begin{definition}[Strong-error order of $S_{\mathrm{int}}$]
  \label{def:Strong error of S int}
The discretization operator $S_{\mathrm{int}}$ is said to have strong-error order $p>0$
if there exists a constant $C_{\mathrm{int}}>0$ such that
\begin{align*}
\left\|
\sup_{0\leq k\leq N}
\left|
Y_{t_k}-\widetilde{Y}_k^{(h)}
\right|
\right\|_{L^q}
\leq C_{\mathrm{int}} h^{p},
\end{align*}
for every $q\ge2$.
There  $\{\widetilde{Y}_k^{(h)}\}_{k=0}^N$ is the discrete-time payoff approximation generated by
\begin{align*}
  \widetilde{Y}_{k+1}^{(h)}
  = \widetilde{Y}_k^{(h)}+
  \mathcal{S}_{\mathrm{int}}\left(
 \Phi(t,t_k),
  X_{t_{k}}^{t,x},t_{k},h;\Delta W_{t_k},\Delta\overleftarrow B_{t_k}
  \right),
\quad
\widetilde{Y}_0^{(h)}=0,
\end{align*}
with $X,\Phi$  evaluated exactly,
and $Y_{t_k}$ denotes the exact accumulated payoff process at time $t_k$, i.e.,
\begin{align*}
  Y_{t_k}=\int_t^{t_k}\Phi(t,r)\left(F(r,X_r^{t,x})+d(r) H(r,X_r^{t,x})\right)\,\mathrm{d}r 
    +\int_t^{t_k}\Phi(t,r)H(r,X_r^{t,x})\mathrm d \overleftarrow{B}_r  .
\end{align*}

\end{definition}
\begin{remark}
  For $0<q<2$, the corresponding estimates follow from Jensen's inequality
whenever the $q=2$ estimate is available.
\end{remark}

After defining the strong-error order associated with each individual discretization step,
we now turn to the study of the overall strong-error of the full discretized scheme.
Before carrying out the detailed analysis, we first introduce the following
accumulated stability property.

\begin{definition}[Accumulated stability of $\mathcal S_{\mathrm{int}}$]
\label{def:accumulated stability of Sint}

The discretization operator
$\mathcal S_{\mathrm{int}}$
is said to satisfy the accumulated stability estimate
if there exists a constant
$L_{\mathrm{int}}>0$,
independent of $h$, such that
\begin{align*}
\left\|
\sup_{0\leq k\leq N}
\left|
\widetilde Y_k^{(h)}-Y_k^{(h)}
\right|
\right\|_{L^2}
\leq L_{\mathrm{int}}
\bigg(
&
\left\|
\sup_{0\leq j\leq N}
\left|
\Phi(t,t_j)-\Phi_j^{(h)}
\right|
\right\|_{L^4}+
\left\|
\sup_{0\leq j\leq N}
\left|
X_{t_j}^{t,x}-X_j^{(h)}
\right|
\right\|_{L^4}
\bigg),
\end{align*}
where $\widetilde{Y}_k^{(h)}$ is defined in
Definition~\ref{def:Strong error of S int}.
\end{definition}

\begin{proposition}\label{prop:error-u}
Fix $(t,x)\in[0,T]\times\mathbb{R}^d$ and a uniform grid $\{t_k\}_{k=0}^N$ with $h=(T-t)/N$.
Let
\begin{align*}
 P_{t_k}=\Phi(t,t_k)G\left(X_{t_k}^{t,x}\right)+Y_{t_k},
\end{align*}
where $Y_{t_k}$ is the exact accumulated payoff
\begin{align*}
  Y_{t_k}=\int_t^{t_k}\Phi(t,r)\left(F(r,X_r^{t,x})+d(r)H(r,X_r^{t,x})\right)\,\mathrm{d} r
+\int_t^{t_k}\Phi(t,r)H(r,X_r^{t,x})\,\mathrm d \overleftarrow{B}_r.
\end{align*}

Let $\left\{\left(X_k^{(h)},\Phi_k^{(h)}\right)\right\}_{k=0}^N$ be the approximations generated by the schemes
$S_X,S_\Phi$ as in Definition~\ref{def:Strong error of S X},~\ref{def:Strong error of S Phi}
 and define
 \begin{align*}
   Y_{k+1}^{(h)}
=&
Y_k^{(h)}
+
\mathcal{S}_{\mathrm{int}}
\left(
\Phi_k^{(h)},
X_k^{(h)},t_k,h;
\Delta W_{t_k},
\Delta\overleftarrow B_{t_k}
\right),
\quad
Y_0^{(h)}=0,\\
  P_k^{(h)}=&\Phi^{(h)}_k G\left(X_k^{(h)}\right)+Y_k^{(h)}.
\end{align*}

Assume:
\begin{enumerate}
    \item The strong-error orders of $\mathcal{S}_X$, $\mathcal{S}_\Phi$ and $\mathcal{S}_{\mathrm{int}}$
are $p_X,p_\Phi,p_{\mathrm{int}}$ respectively.
Moreover, $\mathcal{S}_{\mathrm{int}}$
satisfies the accumulated stability estimate
defined in Definition~\ref{def:accumulated stability of Sint}.

\item 
There exists $M>0$, independent of $h$, such that 
$\Phi_k^{(h)}$
 and 
$G(X_{t_k}^{t,x})$
are uniformly bounded in
$L^{4}_{W,B}$
by $M$.

\item 
$G$ is globally Lipschitz, i.e., there exists $L_G>0$ such that $|G(x)-G(y)|\leq L_G|x-y|$.
\end{enumerate}

Then the payoff approximation satisfies the joint strong-error bound
\begin{align*}
\left\|
\sup_{0\leq k\leq N}
\left|
P_{t_k}-P^{(h)}_k
\right|
\right\|_{L^2}
=
\mathcal{O}(h^p),
\quad
p=\min\{p_X,p_\Phi,p_{\mathrm{int}}\}.
\end{align*}
\end{proposition}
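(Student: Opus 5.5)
The plan is to split the payoff error into a terminal-weight part and an accumulated-integral part, and to bound each using the definitions of strong-error order together with the accumulated stability estimate. For every $k$ we write
\begin{align*}
P_{t_k}-P_k^{(h)}
=\Bigl(\Phi(t,t_k)G(X_{t_k}^{t,x})-\Phi_k^{(h)}G(X_k^{(h)})\Bigr)
+\Bigl(Y_{t_k}-\widetilde Y_k^{(h)}\Bigr)
+\Bigl(\widetilde Y_k^{(h)}-Y_k^{(h)}\Bigr),
\end{align*}
where $\widetilde Y_k^{(h)}$ is the auxiliary payoff of Definition~\ref{def:Strong error of S int}, driven by the exact $X$ and $\Phi$. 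We then take the supremum over $k$ and the $L^2_{W,B}$ norm, and apply the triangle inequality to bound the three resulting terms separately.

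\textbf{Second and third terms.} The second term is at most $C_{\mathrm{int}}h^{p_{\mathrm{int}}}$ directly from Definition~\ref{def:Strong error of S int} with $q=2$. For the third term we use the accumulated stability estimate of Definition~\ref{def:accumulated stability of Sint}. It bounds this term by $L_{\mathrm{int}}$ times the sum of the $L^4$ sup-errors of $\Phi$ and $X$. Applying Definitions~\ref{def:Strong error of S Phi} and \ref{def:Strong error of S X} with $q=4$, this sum is at most $C_\Phi h^{p_\Phi}+C_Xh^{p_X}$.

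\textbf{First term.} We add and subtract $\Phi_k^{(h)}G(X_{t_k}^{t,x})$, which gives
\begin{align*}
\bigl|\Phi(t,t_k)-\Phi_k^{(h)}\bigr|\,\bigl|G(X_{t_k}^{t,x})\bigr|
+\bigl|\Phi_k^{(h)}\bigr|\,L_G\bigl|X_{t_k}^{t,x}-X_k^{(h)}\bigr|.
\end{align*}
We take the supremum over $k$ inside each product and bound the supremum of a product by the product of the suprema. Cauchy--Schwarz in the form $\|AB\|_{L^2}\le\|A\|_{L^4}\|B\|_{L^4}$ then bounds this term by $M\,C_\Phi h^{p_\Phi}+M L_G C_X h^{p_X}$. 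This step uses assumption 2 to bound $\|\sup_k|G(X_{t_k}^{t,x})|\|_{L^4}$ and $\|\sup_k|\Phi_k^{(h)}|\|_{L^4}$.

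Collecting the three bounds and using $h^{p_\bullet}\le C h^p$ for $h\le T$ (with $p=\min\{p_X,p_\Phi,p_{\mathrm{int}}\}$) gives the claimed $\mathcal O(h^p)$ bound.

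\textbf{Main obstacle.} The only delicate point is the product term. Assumption 2 must be read as a bound on the supremum over $k$ inside the $L^4$ norm, as in condition (3) of Theorem~\ref{thm:QAMLMC for u when given B}. A merely pointwise-in-$k$ bound would not allow Hölder's inequality to be applied after taking the supremum. We will state explicitly that we use this uniform-in-$k$ form. Everything else is a direct application of the definitions.
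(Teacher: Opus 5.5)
Your proposal is correct and follows essentially the same route as the paper. Both add and subtract $\Phi_k^{(h)}G(X_{t_k}^{t,x})$, split the integral error through the exact-input auxiliary $\widetilde Y_k^{(h)}$ into a consistency piece and an accumulated-stability piece, and apply $L^4\times L^4$ H\"older with the strong-error definitions at $q=4$. Your reading of assumption 2 as a bound on $\bigl\|\sup_k|\cdot|\bigr\|_{L^4}$ is exactly how the paper's proof uses it, consistent with condition (3) of Theorem~\ref{thm:QAMLMC for u when given B}.
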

A detailed proof is provided in Appendix~\ref{app:strong-error-framework}, Paragraph~\ref{para:proof of prop:error-u}.
Consequently, on a dyadic grid with $h_\ell=2^{-\ell}$ and $N_\ell=(T-t)2^\ell$,
Lemma~\ref{lem:EB to EW} implies the corresponding fixed-$B$ conditional
$L_W^2$ strong-error estimate, up to the logarithmic factor
$(1+\ell)^{a/2}$ for any $a>1$.

For the forward diffusion process $X$, we employ the Milstein
discretization operator
\begin{align}
\mathcal S_X^{\mathrm{mil}}
(x,s,h;\Delta W_s)
=
x
+b(x)h
+\sigma(x)\Delta W_s
+
\sum_{i=1}^{d}\sum_{j=1}^{d}
L_i\sigma_j(x)
\int_s^{s+h}
(W_r^i-W_s^i)\,d\mathrm W_r^j,
\label{eq:SX-Mil}
\end{align}
where
$L_i
=
\sum_{n=1}^{d}
\sigma_{ni}(x)\partial_{x_n}.$

For the scalar weight process $\Phi$, we use the exponential-type
discretization
\begin{align}
\mathcal S_\Phi
(\Phi,s,h;
\Delta W_s,
\Delta\overleftarrow B_s)
=
\Phi
\exp\!\Big(
C_{s,h}
+\widetilde c(s+\tfrac h2)\cdot\Delta W_s
+d(s+\tfrac h2)\cdot\Delta\overleftarrow B_s
-\tfrac12 Q_{s,h}
\Big),
\label{eq:Sphi}
\end{align}
where
\[
C_{s,h}
=
\frac h6
\bigl(
c(s)+4c(s+\tfrac h2)+c(s+h)
\bigr),
\]
and
\[
Q_{s,h}
=
\frac h6
\bigl(
q(s)+4q(s+\tfrac h2)+q(s+h)
\bigr),
\quad
q(r)=\|\widetilde c(r)\|^2-\|d(r)\|^2.
\]

These discretizations are standard and achieve strong-error order one for the
corresponding stochastic differential equations.
The main additional difficulty in the present setting arises from the
accumulated payoff term
\[
Y_{t_k}
=
\int_t^{t_k}
\Phi(t,r)
\bigl(
F(r,X_r^{t,x})
+
d(r)H(r,X_r^{t,x})
\bigr)\,\mathrm dr
+
\int_t^{t_k}
\Phi(t,r)H(r,X_r^{t,x})
\mathrm d \overleftarrow{B}_r,
\]
whose discretization involves a backward stochastic integral.
A naive Euler-type approximation yields only global strong-error order
$1/2$ and is therefore insufficient for the multilevel framework
developed in this work.
To overcome this difficulty,
we introduce a Forward--Backward Taylor discretization operator
$\mathcal S_{\mathrm{int}}^{\mathrm{FBT}}$,
which incorporates suitable higher-order corrections for the backward
stochastic integral.

Consider the integral discretization operator
$\mathcal{S}_{\mathrm{int}}^{\mathrm{FBT}}$ defined by
\begin{align}
\mathcal{S}_{\mathrm{int}}^{\mathrm{FBT}}
(\Phi,X,s,h;\Delta W_s,\Delta \overleftarrow B_s)
=&\,
\Phi h\bigl(F(s,X)+d(s) H(s,X)\bigr)
+\Phi H(s,X)\cdot \Delta\overleftarrow B_s
\notag\\
&+\Phi
\sum_{i=1}^{\ell}\sum_{j=1}^{d}
\bigl(H_x(s,X)\sigma(X)+\widetilde c(s)H(s,X)\bigr)_{ij}
J_{ij}^{WB}(s,h)
+\Phi
\sum_{i=1}^{\ell}\sum_{j=1}^{\ell}
d_j(s)H_i(s,X)
J_{ij}^{BB}(s,h),
\label{eq:Sint-FBT}
\end{align}
where
$J^{WB}_{s,h}$ and $J^{BB}_{s,h}$
are defined in
(\ref{eq:def-JWB})--(\ref{eq:def-JBB}).

Note that the first two terms in \eqref{eq:Sint-FBT} 
coincide with the direct discretization approximation, while two additional terms
$J_{ij}^{WB}(s,h)$ and $J_{ij}^{BB}(s,h)$ appears.
These extra terms serve as higher-order corrections to the backward
stochastic integral
\[
\int_s^{s+h}\Phi(t,r)H(r,X_r)\cdot \mathrm d \overleftarrow{B}_r,
\]
and are obtained from the taylor expansion of the
integrand.

The following two properties explain why the integral discretization operator
is defined in this way.
Detailed proofs are provided in
Appendix~\ref{app:strong-error-framework},
Paragraphs~\ref{para:proof of prop:int-strong-error}
and~\ref{para:proof of prop:int stable}.

\begin{proposition}[Strong-error order of the integral discretization]
\label{prop:int-strong-error}
Assume that
\(F,H,d,\widetilde c\) and the coefficients of \(X\) and \(\Phi\) are sufficiently
smooth with polynomial growth, and that the corresponding moments of
\(X\) and \(\Phi\) are uniformly bounded. 
Then the approximation generated by
\begin{align}
\widetilde Y_{k+1}^{(h)}
=
\widetilde Y_k^{(h)}
+
\mathcal S_{\mathrm{int}}^{\mathrm{FBT}}
\left(\Phi(t,t_k),X_{t_k}^{t,x},
t_k,h;\Delta W_{t_k},\Delta\overleftarrow B_{t_k}\right),
\quad
\widetilde Y_0^{(h)}=0,
\end{align}
satisfies
\begin{align}
\left\|
\sup_{0\le k\le N}
\left|
Y_{t_k}-\widetilde Y_k^{(h)}
\right|
\right\|_{L^q}
\le Ch,
\end{align}
for every $q\ge2$. Consequently, by Jensen's inequality, the same estimate
also holds for every $0<q<2$. Hence
$\mathcal S_{\mathrm{int}}^{\mathrm{FBT}}$ has strong-error order $1$ in the sense
of Definition~\ref{def:Strong error of S int}.
\end{proposition}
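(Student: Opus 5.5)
We write the global error as a sum of local one-step errors and bound the sum using its martingale structure. Setting $e_k:=Y_{t_k}-\widetilde Y_k^{(h)}$, we have $e_k=\sum_{m=0}^{k-1}R_m$, where
\[
R_m:=\int_{t_m}^{t_{m+1}}\Phi(t,r)A(r,X_r^{t,x})\,\mathrm dr+\int_{t_m}^{t_{m+1}}\Phi(t,r)H(r,X_r^{t,x})\,\mathrm d\overleftarrow B_r-\mathcal S_{\mathrm{int}}^{\mathrm{FBT}}\bigl(\Phi(t,t_m),X_{t_m}^{t,x},t_m,h;\Delta W_{t_m},\Delta\overleftarrow B_{t_m}\bigr),
\]
with $A=F+dH$. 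Since $X$ and $\Phi$ are exact, no error propagates between steps, and everything reduces to local Itô--Taylor estimates plus summation.

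\textbf{Local expansion.} On $[t_m,t_{m+1}]$ we expand the integrand $Z_r:=\Phi(t,r)H(r,X_r)$ about $t_m$. The process $\Phi$ solves
\[
\mathrm d\Phi=\Phi\bigl(c\,\mathrm dr+\widetilde c\,\mathrm dW_r+d\,\mathrm d\overleftarrow B_r\bigr),
\]
up to the appropriate Itô/backward correction; we check the drift from the definition of $\Phi(s,\tau)$ via the two-sided Itô formula of Pardoux--Peng. Applying the two-sided Itô formula to $Z_r$ gives
\[
Z_r-Z_{t_m}=\Phi(t,t_m)\bigl(H_x\sigma+\widetilde cH\bigr)(t_m,X_{t_m})(W_r-W_{t_m})+\Phi(t,t_m)\,d(t_m)H(t_m,X_{t_m})(B_{t_m}-B_r)+\rho_r,
\]
where $\rho_r$ collects $\mathrm dr$-terms, double stochastic integrals, and the frozen-coefficient errors. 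Integrating against $\mathrm d\overleftarrow B_r$ reproduces exactly the $J^{WB}$ and $J^{BB}$ correction terms in \eqref{eq:Sint-FBT}, using that $\widetilde c$ and $d$ are evaluated at $t_m$. Hence
\[
R_m=\int_{t_m}^{t_{m+1}}\bigl(\Phi A(r,X_r)-\Phi(t,t_m)A(t_m,X_{t_m})\bigr)\,\mathrm dr+\int_{t_m}^{t_{m+1}}\rho_r\,\mathrm d\overleftarrow B_r=:R_m^{(1)}+R_m^{(2)}.
\]

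\textbf{Bounding the pieces.} We bound each piece using the polynomial growth assumptions, the uniform moment bounds on $X,\Phi$, and Hölder's inequality.

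For $R_m^{(2)}$, $\|\rho_r\|_{L^q}\le C(r-t_m)$, so BDG gives $\|R_m^{(2)}\|_{L^q}\le Ch^{3/2}$. Moreover, $\sum_m R_m^{(2)}$ is a sum of backward-martingale increments with respect to the filtration $\mathcal F^W_{t,r}\vee\mathcal F^B_{r,T}$. The backward BDG inequality for the sum therefore gives
\[
\Bigl\|\sup_k\Bigl|\sum_{m<k}R_m^{(2)}\Bigr|\Bigr\|_{L^q}\le C\Bigl(\sum_m h\cdot h^2\Bigr)^{1/2}=Ch.
\]

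For $R_m^{(1)}$, write the integrand difference as a martingale part (terms linear in $W_r-W_{t_m}$ and $B_{t_m}-B_r$, of size $h^{1/2}$) plus a remainder of size $h$. The remainder contributes $N\cdot h\cdot h=O(h)$ after summation. The martingale part yields $\int_{t_m}^{t_{m+1}}(W_r-W_{t_m})\,\mathrm dr$-type terms, which are centered and of size $h^{3/2}$; treated as discrete martingale differences, they contribute $(N h^3)^{1/2}=O(h)$ by discrete BDG/Doob.

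\textbf{Main obstacle.} The delicate step is the martingale summation. This requires choosing a single filtration for which the mixed forward--backward increments (those involving $W_r-W_{t_m}$ integrated against $\mathrm d\overleftarrow B$, and $B$-increments integrated in $\mathrm dr$) are all genuine martingale differences. Because $W$ is adapted forward and $B$ backward, no ordinary filtration works. We plan to proceed as follows. First, condition on $B$ (which is independent of $W$) to handle terms that are martingale differences in $W$. Second, condition on $W$ to handle terms that are reverse martingale differences in $B$. We then apply a discrete BDG inequality separately to each type of term, enumerating the $B$-type sums in reverse order of $m$. This is legitimate because, given the other noise, the terms in each family are independent mean-zero increments with the needed measurability. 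For the supremum over $k$ in the reversed $B$-sums, we write $\sum_{m<k}=\sum_{m<N}-\sum_{m\ge k}$ and apply Doob's inequality to the reversed martingale. Finally, Jensen's inequality gives the bound for $q<2$.
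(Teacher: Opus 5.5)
Your decomposition is the paper's. The paper also telescopes the error into one-step errors and expands $\Phi(t,r)H(r,X_r)$ about $t_k$, so that the linear $W$- and $B$-terms produce exactly the $J^{WB}$ and $J^{BB}$ corrections. It then splits the local error into three pieces: an $O(h^2)$ finite-variation remainder; a centered $O(h^{3/2})$ time-integral fluctuation summed by discrete BDG; and a $\mathrm d\overleftarrow B$-integral of a remainder with $\int_{t_k}^{t_{k+1}}\|R^B_{k,r}\|_{L^q}^2\,\mathrm dr\le Ch^3$, summed by BDG.

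The step that fails as written is the one you call the main obstacle. The paper merely asserts the same forward/backward split at that point, so it needs the same repair. Since $\Phi(t,\cdot)$ is accumulated forward from $t$, the coefficient $\Phi(t,t_m)$ depends on $B$ on $[t,t_m]$. Likewise $\rho_r$ depends on $B$ on $[t,r]$, through $\Phi(t,r)$ and $B_{t_m}-B_r$. So when $d\not\equiv0$, $\rho_r$ is not $\mathcal F^W_{t,r}\vee\mathcal F^B_{r,T}$-measurable, and $\sum_m R_m^{(2)}$ is not a backward martingale. For the same reason, after conditioning on $W$, the terms $\Phi(t,t_m)d(t_m)A(t_m,X_{t_m})\int_{t_m}^{t_{m+1}}(B_{t_m}-B_r)\,\mathrm dr$ are not reverse martingale differences. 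In reversed order their coefficients depend on $B$-increments that are enumerated later, so reversing the enumeration goes in the wrong direction.

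The repair is simpler than your plan. Contrary to ``no ordinary filtration works'', every process in this proposition is adapted to the forward filtration $\mathcal H_r:=\mathcal F^W_{t,r}\vee\mathcal F^B_{t,r}$: this covers $X_r$, $\Phi(t,r)$ and all frozen coefficients. With respect to $(\mathcal H_{t_m})_m$, both the $W$-part and the $B$-part of the centered time-integral fluctuations are ordinary martingale differences. A single discrete BDG bound then gives $(Nh^3)^{1/2}=O(h)$, with no conditioning.

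For $R^{(2)}$, use the right-endpoint Riemann sums by which the paper defines $\int\cdot\,\mathrm d\overleftarrow B$. For the $\mathcal H$-adapted semimartingale $\rho$ these give $\int\rho_r\,\mathrm d\overleftarrow B_r=-\int\rho_r\,\mathrm dB_r-\langle\rho,B\rangle$.
\begin{itemize}
\item The forward It\^o integral is handled by forward BDG with your bound $\|\rho_r\|_{L^q}\le C(r-t_m)$.
\item The covariation over $[t_m,t_{m+1}]$ is $\int_{t_m}^{t_{m+1}}\gamma_r\,\mathrm dr$. Here $\gamma_r$ is, up to sign, the $\mathrm dB$-coefficient of $\rho$, i.e.\ the difference between $\Phi(t,r)d(r)H(r,X_r)$ and its value frozen at $t_m$.
\item This covariation is only $O(h^{3/2})$ per step, so a termwise bound would give order $1/2$. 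However, its leading part is linear in $W_r-W_{t_m}$ and $B_r-B_{t_m}$ with $\mathcal H_{t_m}$-measurable coefficients.
\item It therefore splits exactly like $R_m^{(1)}$, into a centered $O(h^{3/2})$ martingale difference and an $O(h^2)$ remainder, and again sums to $O(h)$.
\end{itemize}
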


\begin{proposition}
\label{prop:int stable}
Assume that 
$F,H$ are globally Lipschitz and have at most linear growth. 
Moreover, assume
that the coefficient
$H_x(t,x)\sigma(x)+\widetilde c(t)H(t,x)$
is globally Lipschitz and has at most linear growth.
Assume further that the exact and numerical input processes satisfy the uniform
moment bound
\[
\left\|
\sup_{0\le j\le N}
|\Phi(t,t_j)|
\right\|_{L^8}
+
\left\|
\sup_{0\le j\le N}
|\Phi_j^{(h)}|
\right\|_{L^8}
+
\left\|
\sup_{0\le j\le N}
|X_{t_j}^{t,x}|
\right\|_{L^8}
+
\left\|
\sup_{0\le j\le N}
|X_j^{(h)}|
\right\|_{L^8}
<\infty .
\]
Then
$\mathcal{S}_{\mathrm{int}}^{\mathrm{FBT}}$
satisfies the accumulated stability condition in
Definition~\ref{def:accumulated stability of Sint}. More precisely,
there exists $L_{\mathrm{int}}>0$, independent of $h$, such that
\begin{align*}
\left\|
\sup_{0\leq k\leq N}
\left|
\widetilde Y_k^{(h)}-Y_k^{(h)}
\right|
\right\|_{L^2}
\le
L_{\mathrm{int}}
\bigg(
&
\left\|
\sup_{0\leq j\leq N}
\left|
\Phi(t,t_j)-\Phi_j^{(h)}
\right|
\right\|_{L^4}
+
\left\|
\sup_{0\leq j\leq N}
\left|
X_{t_j}^{t,x}-X_j^{(h)}
\right|
\right\|_{L^4}
\bigg).
\end{align*}
\end{proposition}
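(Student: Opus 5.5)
We need to prove the accumulated stability estimate for $\mathcal S_{\mathrm{int}}^{\mathrm{FBT}}$. The plan is to write the difference as a sum of one-step differences and control it with martingale (BDG/Doob) inequalities. We have
\[
\widetilde Y_k^{(h)}-Y_k^{(h)}=\sum_{j=0}^{k-1}\bigl(\mathcal S_{\mathrm{int}}^{\mathrm{FBT}}(\Phi(t,t_j),X_{t_j}^{t,x},\dots)-\mathcal S_{\mathrm{int}}^{\mathrm{FBT}}(\Phi_j^{(h)},X_j^{(h)},\dots)\bigr),
\]
which involves no feedback from $Y$: the operator does not depend on $Y_k$. We split it according to the four terms in \eqref{eq:Sint-FBT}, namely the $h$-term $D^{(1)}_j$, the $\Delta\overleftarrow B$ term $D^{(2)}_j$, the $J^{WB}$ term $D^{(3)}_j$ and the $J^{BB}$ term $D^{(4)}_j$. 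For each term we write the coefficient difference, for instance $\Phi H(s,X)-\Phi^{(h)}H(s,X^{(h)})$, as
\[
(\Phi-\Phi^{(h)})H(s,X)+\Phi^{(h)}\bigl(H(s,X)-H(s,X^{(h)})\bigr).
\]
The linear growth and Lipschitz assumptions then give the pointwise bound
\[
|\text{coef}_j|\le C\bigl(|\Phi(t,t_j)-\Phi_j^{(h)}|(1+|X_{t_j}|)+|\Phi_j^{(h)}|\,|X_{t_j}-X_j^{(h)}|\bigr).
\]
The same bound holds for the coefficient $H_x\sigma+\widetilde cH$ (by hypothesis) and for $d_jH_i$ (since $d$ is bounded).

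For the drift-type term $D^{(1)}$ we bound $\sup_k|\sum_{j<k}D^{(1)}_j|\le h\sum_j|\mathrm{coef}_j|\le T\sup_j|\mathrm{coef}_j|$. Taking the $L^2$ norm and applying Hölder,
\[
\|\,|\Phi-\Phi^{(h)}|\,(1+|X|)\|_{L^2}\le\|\Phi-\Phi^{(h)}\|_{L^4}\,\|1+|X|\|_{L^4},
\]
and similarly for the second product. This uses the $L^8$ (hence $L^4$) moment bounds and yields the right-hand side with the sup-norms in $L^4$.

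For the stochastic terms $D^{(2)},D^{(3)},D^{(4)}$ we treat each partial sum as a discrete martingale in a suitable filtration. This is the main obstacle, because the coefficient at step $j$ depends on $W$ up to $t_j$ and on $B$ over $[t_j,T]$ (through $\Phi$ and through $\Phi^{(h)}$, which uses backward increments after $t_j$), whereas the increments $\Delta\overleftarrow B_{t_j}$, $J^{WB}$, $J^{BB}$ live on $[t_j,t_{j+1}]$. We will use the enlarged filtration $\mathcal G_j=\mathcal F^W_{t_j}\vee\mathcal F^B_{t_j,T}$ and reverse time in $B$. For the pure $B$-terms, with the backward indexing $j=N-1,\dots,0$, the coefficient is $\mathcal F^B_{t_{j+1},T}\vee\mathcal F^W_{t_j}$-measurable, and $W$ is independent of $B$. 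The increments $\Delta\overleftarrow B_{t_j}$ and $J^{BB}$ are independent of that $\sigma$-field with mean zero, and $J^{WB}$ has zero conditional mean given $W$ and the outside $B$. Hence each stochastic sum is a sum of orthogonal increments, which gives
\[
\Bigl\|\sum_j\mathrm{coef}_j\,\xi_j\Bigr\|_{L^2}^2=\sum_j\mathbb E\bigl[|\mathrm{coef}_j|^2\bigr]\,\mathbb E|\xi_j|^2\le Ch\sum_j\mathbb E|\mathrm{coef}_j|^2\le CT\,\mathbb E\sup_j|\mathrm{coef}_j|^2,
\]
using $\mathbb E|\Delta\overleftarrow B|^2\sim h$ and $\mathbb E|J|^2\sim h^2\le h$.

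The supremum over $k$ of partial sums requires a maximal inequality. Since the natural martingale ordering runs backward, we control $\sup_k|\sum_{j<k}|$ by $|\sum_{j<N}|+\sup_k|\sum_{j\ge k}|$ and apply Doob's inequality to the backward martingale $\sum_{j\ge k}$. If measurability of $\Phi(t,t_j)$ with respect to $B$ before $t_j$ causes trouble, note that it depends on $B$ only on $[t,t_j]$. In that case we instead use the forward-indexed filtration $\mathcal F^W_{t_{j+1}}\vee\mathcal F^B_{t,t_j}$, under which $\Delta\overleftarrow B_{t_j}$ is an independent increment; we choose whichever ordering matches the adaptedness of $\Phi^{(h)}$. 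We then bound $\mathbb E\sup|\mathrm{coef}|^2$ by Hölder exactly as for $D^{(1)}$. Summing the four contributions gives the estimate with $L_{\mathrm{int}}$ depending only on $T$, the Lipschitz/growth constants, $\sup|d|$ and the $L^8$ moment bounds, independent of $h$.
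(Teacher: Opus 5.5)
\paragraph{Comparison with the paper.} Your skeleton matches the paper's proof: no feedback from $Y$, a split into the four terms of \eqref{eq:Sint-FBT}, the Lipschitz/linear-growth bound on the coefficient differences, and H\"older with $L^4$ norms.

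The difference is in the correction terms. The paper uses martingale structure only for the $\Delta\overleftarrow B_{t_j}$ sum. It bounds the $J^{WB}$ and $J^{BB}$ sums by the plain triangle inequality, since $\|J^{WB}(t_j,h)\|_{L^p},\|J^{BB}(t_j,h)\|_{L^p}\le Ch$ and $Nh=T$. You run an orthogonality argument for all three stochastic sums. For $\Delta\overleftarrow B$ and $J^{WB}$ this works, and using independence of $c_j$ from the increment lets you get by with $L^4$ moments instead of the paper's $L^8$. The paper needs $L^8$ because it puts $J^{WB}$ into H\"older as a third factor.

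\paragraph{The $J^{BB}$ step is wrong as stated.} In \eqref{eq:def-JBB} the integrand $B^j_s-B^j_r$ is not backward-adapted. The right-endpoint sums defining $\int\cdot\,\mathrm d\overleftarrow B$ therefore contain $\sum_m (B^i_{u_m}-B^i_{u_{m+1}})^2\to h$ on the diagonal. So $\mathbb E[J^{BB}_{ii}(s,h)]=h\neq0$, the cross terms in your expansion of $\|\sum_j c_j\xi_j\|_{L^2}^2$ do not vanish, and the claimed identity fails for $D^{(4)}$.

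The repair is immediate. The coefficient $c_j$ depends on $(W,B)$ only on $[t,t_j]$, so it is independent of $J^{BB}(t_j,h)$, and
\[
\sum_j\|c_jJ^{BB}(t_j,h)\|_{L^2}=\sum_j\|c_j\|_{L^2}\|J^{BB}(t_j,h)\|_{L^2}\le CT\sup_j\|c_j\|_{L^2}.
\]
This is what the paper does. Alternatively, center $J^{BB}$ and treat the mean part like $D^{(1)}$.

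\paragraph{Commit to the forward filtration.} Your first option, backward indexing with $c_j$ being $\mathcal F^B_{t_{j+1},T}\vee\mathcal F^W_{t_j}$-measurable, is false. Both $\Phi(t,t_j)$ and $\Phi^{(h)}_j$ are built from $B$ on $[t,t_j]$, so $c_j$ is $\mathcal F^W_{t_j}\vee\mathcal F^B_{t,t_j}$-measurable.

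Your fallback is the right one. With $\mathcal H_j:=\mathcal F^W_{t_{j+1}}\vee\mathcal F^B_{t,t_j}$, the terms $c_j\Delta\overleftarrow B_{t_j}$ and $c_jJ^{WB}(t_j,h)$ are forward martingale differences. Doob's $L^2$ inequality then applies directly to $\sup_k|\sum_{j<k}\cdot|$, and the backward-splitting trick is unnecessary.

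This is more accurate than the paper's stated justification, ``BDG for backward increments, equivalently after reversing time''. In reversed time these coefficients are anticipating, so it is the forward structure that actually validates the $\Delta\overleftarrow B$ estimate.

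With these two corrections, the rest of your argument gives the stated bound.
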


\subsection{First-Order Greek Estimators}

The first-order Greek representation involves, in addition to the forward
diffusion $X$ and the weight process $\Phi$, the Jacobian flow $J^{t,x}$
associated with the forward SDE and a corresponding accumulated payoff
term.

The purpose of this subsection is twofold.
We first develop a general strong-error framework showing that the
convergence rate of the first-order Greek estimator is determined by the
approximation properties of the discretization operators for
$J^{t,x}$ and the accumulated payoff term.
We then construct suitable discretization operators satisfying the
requirements of the framework, leading to a global strong-error order one
approximation for the first-order Greek estimator.

\begin{definition}[Strong-error order of $S_J$]
\label{def:Strong error of S J}
The discretization operator $S_J$ is said to have strong-error order $p>0$
if there exists a constant $C_J>0$ such that
\[
\left\|
\sup_{0\le k\le N}
\left|
J_{t_k}^{t,x}-J_k^{(h)}
\right|
\right\|_{L^q}
\le C_J h^p
\]
for every $q\ge2$.
Here $J^{t,x}$ is the Jacobian flow defined in
\eqref{eq:app-jacobian}--\eqref{eq:app-jacobian-sde}, and
$\{J_k^{(h)}\}_{k=0}^N$ is generated by
$$
J_{k+1}^{(h)}=\mathcal S_J
\left(J_k^{(h)},X_k^{(h)},t_k,h;\Delta W_{t_k},\Delta\overleftarrow B_{t_k}\right),
\quad J_0^{(h)}=I_d .
$$
\end{definition}

\begin{definition}[Strong-error order of $S_{\mathrm{int}}^{(i)}$]
\label{def:Strong error of S int Greek}
Fix $1\le i\le d$. The discretization operator
$S_{\mathrm{int}}^{(i)}$ is said to have strong-error order $p>0$
if there exists a constant $C_{\mathrm{int}}^{(i)}>0$ such that
\[
\left\|
\sup_{0\le k\le N}
\left|
Y_{t_k}^{(i)}-\widetilde Y_k^{(i,h)}
\right|
\right\|_{L^q}
\le C_{\mathrm{int}}^{(i)} h^p
\]
for every $q\ge2$.
Here
$$
Y_{t_k}^{(i)}=\int_t^{t_k}\Phi(t,s)
\nabla_x\!\Big(F(s,X_s^{t,x})+d(s)H(s,X_s^{t,x})\Big)^\top
J_s^{t,x}e_i\,\mathrm ds+
\int_t^{t_k}\Phi(t,s)\nabla_xH(s,X_s^{t,x})^\top
J_s^{t,x}e_i\,\mathrm d \overleftarrow{B}_s.
$$
The approximation $\{\widetilde Y_k^{(i,h)}\}_{k=0}^N$ is generated by
$$
\widetilde Y_{k+1}^{(i,h)}
=\widetilde Y_k^{(i,h)}
+\mathcal S_{\mathrm{int}}^{(i)}
\left(\Phi(t,t_k),X_{t_k}^{t,x},J_{t_k}^{t,x},t_k,h;
\Delta W_{t_k},\Delta\overleftarrow B_{t_k}\right),
\quad
\widetilde Y_0^{(i,h)}=0,
$$
with $X,\Phi,J$ evaluated exactly.
\end{definition}

The following proposition establishes a general strong-error estimate for
the first-order Greek payoff estimator.
It shows that the convergence rate of the payoff approximation is determined
by the approximation properties of the underlying discretization operators
together with a suitable stability property of
$\mathcal S_{\mathrm{int}}^{(i)}$.
A detailed proof is provided in
Appendix~\ref{app:strong-error-framework},
Paragraph~\ref{para:proof of prop:error-first-order-greek-payoff}.

\begin{proposition}[Strong-error order for the first-order Greek payoff]
\label{prop:error-first-order-greek-payoff}
Fix $(t,x)\in[0,T]\times\mathbb R^d$ and $1\le i\le d$
and define
\begin{align*}
P_{t_k}^{(i)}
=&\Phi(t,t_k)\,\nabla G(X_{t_k}^{t,x})^\top J_{t_k}^i
+Y_{t_k}^{(i)},
\end{align*}
where
\begin{align*}
Y_{t_k}^{(i)}
=&
\int_t^{t_k}
\Phi(t,s)\,
\nabla_x\!\Big(F(s,X_s^{t,x})+d(s)H(s,X_s^{t,x})\Big)^\top
J_s^{t,x}e_i\,\mathrm ds  
+\int_t^{t_k}\Phi(t,s)\,\nabla_x H(s,X_s^{t,x})^\top J_s^{t,x}e_i\,
\mathrm d \overleftarrow{B}_s.
\end{align*}

Let $\{X_k^{(h)},\Phi_k^{(h)},J_k^{(h)}\}_{k=0}^N$
be generated by
$S_X,S_\Phi,S_J$
as in
Definition~\ref{def:Strong error of S X},
\ref{def:Strong error of S Phi}
and \ref{def:Strong error of S J},
and define
\begin{align*}
Y_{k+1}^{(i,h)}
=&Y_k^{(i,h)}
+\mathcal S_{\mathrm{int}}^{(i)}
\left(\Phi_k^{(h)},X_k^{(h)},J_k^{(h)},t_k,h;
\Delta W_{t_k},\Delta\overleftarrow B_{t_k}\right),
\quad
Y_0^{(i,h)}=0,\\
P_k^{(i,h)}
=&
\Phi_k^{(h)}
\nabla G(X_k^{(h)})^\top J_k^{(h)}e_i
+
Y_k^{(i,h)}.
\end{align*}

Assume:

\begin{enumerate}
\item
The strong-error orders of $S_X,S_\Phi,S_J$ and
$\mathcal S_{\mathrm{int}}^{(i)}$ are $p_X,p_\Phi,p_J,p_{\mathrm{int}}^{(i)}$, respectively.

\item
The operator $\mathcal S_{\mathrm{int}}^{(i)}$ satisfies the
accumulated stability estimate:
there exists $L_{\mathrm{int}}^{(i)}>0$, independent of $h$, such that
\[
\left\|
\sup_{0\le k\le N}
|\widetilde Y_k^{(i,h)}-Y_k^{(i,h)}|
\right\|_{L^2}
\le
L_{\mathrm{int}}^{(i)}
\bigg(
\left\|
\sup_{0\le j\le N}
|\Phi(t,t_j)-\Phi_j^{(h)}|
\right\|_{L^4}
+
\left\|
\sup_{0\le j\le N}
|X_{t_j}^{t,x}-X_j^{(h)}|
\right\|_{L^4}
+
\left\|
\sup_{0\le j\le N}
|J_{t_j}^{t,x}-J_j^{(h)}|
\right\|_{L^4}
\bigg),
\]
where $\widetilde Y_k^{(i,h)}$ is generated by
$\mathcal S_{\mathrm{int}}^{(i)}$ with the exact inputs
$\Phi(t,t_k),X_{t_k}^{t,x},J_{t_k}^{t,x}$
as in Definition~\ref{def:Strong error of S int Greek}.

\item
There exists $M>0$, independent of $h$, such that 
$\Phi_k^{(h)},\
\Phi(t,t_k),\ 
J_k^{(h)},\ 
J_{t_k}^{t,x}$
 and 
$\nabla G(X_{t_k}^{t,x}),\ \nabla G(X_k^{(h)})$
are uniformly bounded in
$L^{8}_{W,B}$
by $M$.

\item
$\nabla G$ is globally Lipschitz, i.e., there exists $L_{\nabla G}>0$ such that
\begin{align*}
\left|\nabla G(x)-\nabla G(y)\right|
\le L_{\nabla G}|x-y|,\quad x,y\in\mathbb R^d .
\end{align*}
\end{enumerate}

Then
\[
\left\|
\sup_{0\le k\le N}
\left|
P_{t_k}^{(i)}-P_k^{(i,h)}
\right|
\right\|_{L^2}
=
\mathcal{O}(h^p),
\quad
p=\min\{p_X,p_\Phi,p_J,p_{\mathrm{int}}^{(i)}\}.
\]
\end{proposition}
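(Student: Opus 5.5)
The plan is to split the payoff error into a terminal part and an accumulated part,
\[
P_{t_k}^{(i)}-P_k^{(i,h)}
=
\Big(\Phi(t,t_k)\nabla G(X_{t_k}^{t,x})^\top J_{t_k}^{t,x}e_i-\Phi_k^{(h)}\nabla G(X_k^{(h)})^\top J_k^{(h)}e_i\Big)
+\Big(Y_{t_k}^{(i)}-Y_k^{(i,h)}\Big),
\]
and bound each uniformly in $k$ in $L^2$, exactly as in the proof of Proposition~\ref{prop:error-u}. The supremum over $k$ is taken inside each term; the triangle inequality for $\|\sup_k|\cdot|\|_{L^2}$ then combines the two bounds.

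For the terminal (triple-product) term, I will telescope:
\[
abc-a'b'c'=(a-a')bc+a'(b-b')c+a'b'(c-c'),
\]
with $a=\Phi(t,t_k)$, $b=\nabla G(X_{t_k}^{t,x})$, $c=J_{t_k}^{t,x}e_i$ and primes denoting the numerical counterparts. Each product is bounded by taking $\sup_k$ of every factor and applying Hölder with exponents $(4,8,8)$, since $\tfrac12=\tfrac14+\tfrac18+\tfrac18$.
\begin{itemize}
\item The factor $\|\sup_k|\Phi(t,t_k)-\Phi_k^{(h)}|\|_{L^4}$ is at most $C_\Phi h^{p_\Phi}$ by Definition~\ref{def:Strong error of S Phi}, which is stated for every $q\ge2$.
\item The $\nabla G$ difference is at most $L_{\nabla G}\sup_k|X_{t_k}^{t,x}-X_k^{(h)}|$, which is $O(h^{p_X})$ in $L^4$.
\item The $J$ difference is $O(h^{p_J})$ in $L^4$.
\item The remaining two factors in each product are controlled in $L^8$ by assumption 3.
\end{itemize}
Assumption 3 only bounds the factors pointwise in $k$, not their running suprema. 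For $\Phi$ and $J$ I would recover the supremum bound from the exact processes via Doob/BDG plus the strong-error estimate: for example, $\sup_k|J_k^{(h)}|\le\sup_k|J_{t_k}^{t,x}|+\sup_k|J_{t_k}^{t,x}-J_k^{(h)}|$, with the second summand bounded in $L^8$ by the $q=8$ strong-error estimate. For $\nabla G$ I would use the Lipschitz bound to reduce to $\sup|X|$ moments. Alternatively, I would read assumption 3 as a bound on the suprema, as in Proposition~\ref{prop:int stable}. This gives $O(h^{\min\{p_X,p_\Phi,p_J\}})$ for the terminal part.

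For the accumulated term, I insert the exact-input approximation $\widetilde Y_k^{(i,h)}$ of Definition~\ref{def:Strong error of S int Greek}:
\[
Y_{t_k}^{(i)}-Y_k^{(i,h)}=(Y_{t_k}^{(i)}-\widetilde Y_k^{(i,h)})+(\widetilde Y_k^{(i,h)}-Y_k^{(i,h)}).
\]
The first piece is $O(h^{p_{\mathrm{int}}^{(i)}})$ in $L^2$, uniformly in $k$, by the strong-error order of $\mathcal S_{\mathrm{int}}^{(i)}$. The second piece is handled by the accumulated stability assumption 2, which bounds it by $L_{\mathrm{int}}^{(i)}$ times the $L^4$ sup-errors of $\Phi$, $X$ and $J$; these are $O(h^{p_\Phi})$, $O(h^{p_X})$ and $O(h^{p_J})$ respectively. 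Summing all contributions yields $O(h^p)$ with $p=\min\{p_X,p_\Phi,p_J,p_{\mathrm{int}}^{(i)}\}$.

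The only delicate point is the moment bookkeeping in the terminal term: ensuring the sup-in-$k$ moments of order $8$ are actually available, since assumption 3 is stated pointwise in $k$. I expect to handle this as described above, either through the $q=8$ strong-error estimates together with moment bounds on the exact flows, or by interpreting assumption 3 uniformly over the grid. Everything else is a direct application of Hölder's inequality and the stated definitions.
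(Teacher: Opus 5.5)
Your proposal matches the paper's proof: it uses the same decomposition (the three-way telescoping of $\Phi\,\nabla G^\top J e_i$ plus insertion of the exact-input approximation $\widetilde Y_k^{(i,h)}$), H\"older with exponents $(4,8,8)$, the strong-error order of $\mathcal S_{\mathrm{int}}^{(i)}$, and the accumulated stability estimate. The paper simply uses assumption~3 as a bound on $\|\sup_k|\cdot|\|_{L^8}$, which is your second option. The pointwise-versus-supremum issue you flag is therefore an imprecision in the statement, not a gap in your argument.
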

Applying Lemma~\ref{lem:EB to EW} with
\[
\Pi_{\ell,k}
=
P_{t_k^\ell}^{(i)}
-
P_{\ell,k}^{(i,h_\ell)}
\]
yields the fixed-$B$ conditional first-order Greek estimate with logarithmic
loss.

We now construct discretization operators satisfying the assumptions of
Proposition~\ref{prop:error-first-order-greek-payoff}.

For the Jacobian flow $J^{t,x}$, we use the standard Milstein
discretization. Since
\[
dJ_s^{t,x}
=
\nabla b(X_s^{t,x})J_s^{t,x}\,ds
+
\sum_{a=1}^{d}
\nabla \sigma_{\cdot a}(X_s^{t,x})J_s^{t,x}\,dW_s^a ,
\quad
J_t^{t,x}=I_d,
\]
we define
\begin{align}
\mathcal S_J^{\mathrm{mil}}
(J,X,s,h;\Delta W_s,\Delta\overleftarrow B_s)
=J
+\nabla b(X)J\,h
+\sum_{a=1}^{d}
\nabla\sigma_{\cdot a}(X)J\,\Delta W_s^a
+\sum_{a=1}^{d}\sum_{b=1}^{d}
L_a\!\big(\nabla\sigma_{\cdot b}J\big)(X,J)
\int_s^{s+h}
(W_r^a-W_s^a)\,\mathrm dW_r^b .
\end{align}
Under the regularity assumptions imposed on $b$ and $\sigma$,
this discretization has strong-error order one.
The remaining task is the construction of a suitable integral discretization
operator
$\mathcal S_{\mathrm{int}}^{(i),\mathrm{FBT}}$
for the accumulated payoff term in the first-order Greek representation.

As in the direct pricing case, the construction is based on a
first-order stochastic Taylor expansion of the corresponding integrands.
The resulting forward--backward Taylor correction terms involve both mixed forward--backward iterated integrals and purely backward iterated integrals.

Fix $1\le i\le d$. Recall that the first-order Greek integral is given by
\begin{align}
Y_{t_k}^{(i)}
=&\int_t^{t_k}\Phi(t,s)
\nabla_x\!\Big(F(s,X_s^{t,x})+d(s)H(s,X_s^{t,x})\Big)^\top
J_s^{t,x}e_i\,\mathrm ds
+\int_t^{t_k}\Phi(t,s)\nabla_xH(s,X_s^{t,x})^\top
J_s^{t,x}e_i\,\mathrm d \overleftarrow{B}_s.
\label{eq:first-order-greek-integral}
\end{align}
Here $J_s^{t,x}=\nabla_x X_s^{t,x}$ is the Jacobian flow and
$e_i$ is the $i$-th unit vector in $\mathbb R^d$.

Consider the first-order Greek integral discretization operator 
\begin{align}
&\mathcal S_{\mathrm{int}}^{(i),\mathrm{FBT}}
\left(\Phi,X,J,s,h;\Delta W_s,\Delta\overleftarrow B_s\right)\notag\\
=& \Phi h\,\nabla_x\!\Big(F(s,X)+d(s)H(s,X)\Big)^\top Je_i
+\Phi\left(\nabla_xH(s,X)^\top Je_i\right)\cdot\Delta\overleftarrow B_s
+\Phi\sum_{j=1}^{\ell}\sum_{\alpha=1}^{\ell}d_\alpha(s)
\left(\nabla_xH_j(s,X)^\top Je_i\right)
J_{j\alpha}^{BB}(s,h)\notag\\
&+\Phi\sum_{j=1}^{\ell}\sum_{a=1}^{d}
\bigg[\nabla_x^2H_j(s,X)[\sigma_{\cdot a}(X),Je_i]
+\nabla_xH_j(s,X)^\top\big(\nabla_x\sigma_{\cdot a}(X)Je_i\big)
+\widetilde c_a(s)\nabla_xH_j(s,X)^\top Je_i\bigg]J_{ja}^{WB}(s,h).
\label{eq:Sint-first-greek-mil}
\end{align}

The form of the operator is motivated by a first-order expansion of the
integrands with respect to the forward diffusion, the Jacobian flow, and
the weight process.
A detailed derivation is provided in
Appendix~\ref{app:strong-error-framework},
Paragraph~\ref{para:construction-first-greek}.

The following propositions show that the above Forward--Backward Taylor discretization achieves strong-error order one and satisfies the accumulated
stability estimate required in
Proposition~\ref{prop:error-first-order-greek-payoff}.
Detailed proofs are provided in
Appendix~\ref{app:strong-error-framework},
Paragraphs~\ref{para:proof of prop:first-order-greek-strong-error}
and~\ref{para:proof of prop:first-order-greek accumulated stability}.

\begin{proposition}[Strong-error order of the first-order Greek integral discretization]
\label{prop:first-order-greek-strong-error}
Assume that the coefficients $b,\sigma,F,H,d,\widetilde c$ are sufficiently
smooth with bounded derivatives up to the order used above, and assume that
$X$, $J$, and $\Phi$ have uniformly bounded moments of all required orders.
Then the approximation generated by
\begin{align}
\widetilde Y_{k+1}^{(i,h)}
=
\widetilde Y_k^{(i,h)}
+
\mathcal S_{\mathrm{int}}^{(i),\mathrm{FBT}}
\left(\Phi(t,t_k),X_{t_k}^{t,x},J_{t_k}^{t,x},
t_k,h;\Delta W_{t_k},\Delta\overleftarrow B_{t_k}\right),
\quad
\widetilde Y_0^{(i,h)}=0,
\end{align}
satisfies
\begin{align}
\left\|
\sup_{0\le k\le N}
\left|
Y_{t_k}^{(i)}
-
\widetilde Y_k^{(i,h)}
\right|
\right\|_{L^q}
\le C_qh,
\end{align}
for every $q\ge2$. Consequently, by Jensen's inequality, the same estimate
also holds for every $0<q<2$. Hence
$\mathcal S_{\mathrm{int}}^{(i),\mathrm{FBT}}$ has strong-error order $1$ in the sense
of Definition~\ref{def:Strong error of S int Greek}.
\end{proposition}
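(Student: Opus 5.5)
The plan is to write the global error as a sum of local one-step errors and to control that sum with a martingale argument. For each $k$, set
\[
R_k:=\int_{t_k}^{t_{k+1}}\Phi(t,s)\,\nabla_x A(s,X_s^{t,x})^\top J_s^{t,x}e_i\,\mathrm ds+\int_{t_k}^{t_{k+1}}\Phi(t,s)\,\nabla_xH(s,X_s^{t,x})^\top J_s^{t,x}e_i\,\mathrm d\overleftarrow B_s-\mathcal S_{\mathrm{int}}^{(i),\mathrm{FBT}}(\cdots),
\]
where $A=F+dH$ and the operator is evaluated at the exact inputs $(\Phi(t,t_k),X_{t_k}^{t,x},J_{t_k}^{t,x})$. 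Then $Y_{t_k}^{(i)}-\widetilde Y_k^{(i,h)}=\sum_{m<k}R_m$.

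\textbf{Step 1: expand the integrand.} I will define the scalar process $\psi_s:=\Phi(t,s)\,g(s,X_s^{t,x},J_s^{t,x})$, with $g(s,x,J)=\nabla_xH(s,x)^\top Je_i$ (componentwise in $j$). I will apply a two-sided Itô formula on $[t_k,s]$, treating the $\mathrm dW$ part of $X$, $J$ and $\Phi$ forward and the $\mathrm d\overleftarrow B$ part of $\Phi$ backward, as in the Pardoux--Peng two-sided calculus. This gives
\[
\psi_s=\psi_{t_k}+\sum_a \Gamma^{a}_{t_k}(W^a_s-W^a_{t_k})+\sum_\alpha \Phi\, d_\alpha g\,(B^\alpha_{t_k}-B^\alpha_s)+\rho_{k,s}.
\]
The coefficient $\Gamma^{a}$ is exactly the bracket in \eqref{eq:Sint-first-greek-mil}, made of the terms $\nabla^2_xH_j[\sigma_{\cdot a},Je_i]$, $\nabla_xH_j^\top\nabla\sigma_{\cdot a}Je_i$ and $\widetilde c_a\nabla_xH_j^\top Je_i$. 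The remainder $\rho_{k,s}$ collects three kinds of terms: the $\mathrm ds$ terms, the time-derivative terms, and the second-level iterated integrals whose integrands are frozen differences. By the smoothness and moment assumptions, $\|\rho_{k,s}\|_{L^q}\le C(s-t_k)$.

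Integrating the expansion against $\mathrm d\overleftarrow B$ reproduces the $\Delta\overleftarrow B$, $J^{WB}$ and $J^{BB}$ terms of the operator. The leftover is $\int_{t_k}^{t_{k+1}}\rho_{k,s}\,\mathrm d\overleftarrow B_s$.

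\textbf{Step 2: split the remainder into two parts.}

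\emph{Drift part.} The drift remainder $\int_{t_k}^{t_{k+1}}\bigl(\Phi\nabla A^\top Je_i\bigr)_s-\bigl(\Phi\nabla A^\top Je_i\bigr)_{t_k}\,\mathrm ds$ is $O(h^{3/2})$ in $L^q$ per step. Its $\mathcal F$-martingale part, namely the leading term linear in the increments, can be summed with a Burkholder--Davis--Gundy (BDG) inequality to give $O(h)$. The remaining part is $O(h^2)$ per step, hence $O(h)$ after summing.

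\emph{Stochastic part.} The backward integrals $\int_{t_k}^{t_{k+1}}\rho_{k,s}\,\mathrm d\overleftarrow B_s$ each have $L^q$ norm $O(h^{3/2})$. The task is to sum them with a square-function bound rather than the triangle inequality.

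\textbf{Step 3: the martingale sum (main obstacle).} The backward integrals are not adapted to a single forward filtration, because $\rho_{k,s}$ depends on $W$ on $[t_k,s]$ and on $B$ on $[s,T]$. I will handle this by reversing the order of summation. Index by the backward filtration $\mathcal G_k:=\sigma(W_r,\,r\le t_{k+1})\vee\mathcal F^B_{t_k,T}$, and check that the partial sums taken from the terminal index downward form discrete martingale differences. This works because each backward integral has zero conditional mean given the $W$-path together with the $B$-increments after $t_{k+1}$.

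The difficulty is the $\sup_k$ over forward partial sums. I will write $\sum_{m<k}=\sum_{m<N}-\sum_{m\ge k}$ and apply the discrete BDG inequality to the reversed martingale. This gives
\[
\Bigl\|\sup_k\Bigl|\sum_{m<k}\xi_m\Bigr|\Bigr\|_{L^q}\le C\Bigl\|\Bigl(\sum_m\xi_m^2\Bigr)^{1/2}\Bigr\|_{L^q}\le C\bigl(N\cdot h^{3}\bigr)^{1/2}=Ch,
\]
where $\xi_m$ are the martingale differences and Minkowski's inequality is used on $\sum_m\xi_m^2$ in $L^{q/2}$.

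If the mixed measurability fails for some terms (for example, the forward-Itô corrections inside $\rho$), I will split $\rho$ further. The purely forward stochastic pieces are treated via the forward filtration, and the $\mathrm ds$-type pieces via the crude bound $N\cdot h^2$. Combining all the parts yields $\|\sup_k|Y_{t_k}^{(i)}-\widetilde Y_k^{(i,h)}|\|_{L^q}\le C_qh$.
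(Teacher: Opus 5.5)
\textbf{Where the proposal matches the paper.} Your overall architecture is the paper's. You use a one-step consistency relation, then split the error into three pieces. The finite-variation remainder is summed crudely as $Nh^2$. The centred drift fluctuation, of size $h^{3/2}$, is summed by discrete BDG and Minkowski in $L^{q/2}$. The stochastic integral of an $O(s-t_k)$ remainder is summed by BDG to $(Nh^3)^{1/2}$.

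\textbf{The gap: Step~3.} This is the one place where you commit to a concrete mechanism, and it does not work as stated. Your claim that $\rho_{k,s}$ depends on $B$ only on $[s,T]$ is false. The remainder $\rho_{k,s}$ contains $\Phi(t,s)$ and $\Phi(t,t_k)$, and $\Phi(t,s)$ is built from $\int_t^s d(r)\,\mathrm d\overleftarrow B_r$, so it is a functional of $B$ on $[t,s]$. Two consequences follow for $\xi_k=\int_{t_k}^{t_{k+1}}\rho_{k,s}\cdot\mathrm d\overleftarrow B_s$.
\begin{itemize}
\item It is not measurable with respect to $\mathcal F^W_{t,T}\vee\mathcal F^B_{t_k,T}$. (Your $\mathcal G_k$, with $W$ up to $t_{k+1}$ and $B$ from $t_k$, is not even monotone in $k$.)
\item It is not centred given the $W$-path and the $B$-increments after $t_{k+1}$.
\end{itemize}
To see the second point, use the right-endpoint Riemann sums that define $\mathrm d\overleftarrow B$ in the paper. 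The $\mathrm dB$-part of $\mathrm d_s\Phi(t,s)$ is $-\Phi(t,s)\,d(s)\cdot\mathrm dB_s$. Writing $g_j(s)=\nabla_xH_j(s,X_s)^\top J_se_i$, one gets
\[
\xi_k=-\int_{t_k}^{t_{k+1}}\rho_{k,s}\cdot\mathrm dB_s+\sum_{j=1}^{\ell}\int_{t_k}^{t_{k+1}}\Bigl(\Phi(t,s)d_j(s)g_j(s)-\Phi(t,t_k)d_j(t_k)g_j(t_k)\Bigr)\,\mathrm ds .
\]
The second (bracket) term is of size $h^{3/2}$ per step, and its conditional mean given $W$ is nonzero. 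Your fallback does not cover it, because it treats $\mathrm ds$-pieces as $O(h^2)$; summing this term crudely only gives $h^{1/2}$. The paper's proof invokes ``BDG for backward It\^o integrals'' at exactly this point, so it cannot be cited to close Step~3.

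\textbf{The repair.} Drop the reversed filtration. The processes $X$, $J$, $\Phi(t,\cdot)$, and hence $\rho_{k,\cdot}$, are all adapted to the single increasing filtration $\mathcal F^W_{t,s}\vee\mathcal F^B_{t,s}$.
\begin{itemize}
\item The It\^o part then sums to one forward stochastic integral. BDG gives $C_q(Nh^3)^{1/2}=C_qh$ with the supremum over $k$ included directly, so the ``terminal minus tail'' device is unnecessary.
\item For the bracket part, expand its integrand once more. The leading term is linear in $W_s-W_{t_k}$ and $B_s-B_{t_k}$, with coefficients measurable at $t_k$. After $\int\mathrm ds$ it is a forward martingale difference of size $h^{3/2}$, handled exactly like your drift fluctuation in Step~2. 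What remains is $O(h^2)$ per step, which uses the smoothness of $d$.
\end{itemize}
With this modification your outline yields the claimed bound $C_qh$.
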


\begin{proposition}
\label{prop:first-order-greek accumulated stability}
Assume that
$F\in C_b^2$, $H\in C_b^3$, $\sigma\in C_b^2$.
Assume also that $d,\widetilde c$ are bounded. Moreover, assume that the exact
and numerical input processes satisfy the uniform moment bound
\small
\begin{align*}
&
\left\|\sup_{0\le j\le N}|\Phi(t,t_j)|\right\|_{L^{12}}
+
\left\|\sup_{0\le j\le N}|\Phi_j^{(h)}|\right\|_{L^{12}}
+
\left\|\sup_{0\le j\le N}|X_{t_j}^{t,x}|\right\|_{L^{12}}
+
\left\|\sup_{0\le j\le N}|X_j^{(h)}|\right\|_{L^{12}}
+
\left\|\sup_{0\le j\le N}|J_{t_j}^{t,x}|\right\|_{L^{12}}
+
\left\|\sup_{0\le j\le N}|J_j^{(h)}|\right\|_{L^{12}}
<\infty .
\end{align*}
\normalsize
Then $\mathcal S_{\mathrm{int}}^{(i),\mathrm{FBT}}$ satisfies the accumulated
stability estimate. More precisely, there exists
$L_{\mathrm{int}}^{(i)}>0$, independent of $h$, such that
\begin{align*}
\left\|
\sup_{0\le k\le N}
\left|
\widetilde Y_k^{(i,h)}-Y_k^{(i,h)}
\right|
\right\|_{L^2}
\le L_{\mathrm{int}}^{(i)}
\bigg(
&
\left\|
\sup_{0\le j\le N}
\left|
\Phi(t,t_j)-\Phi_j^{(h)}
\right|
\right\|_{L^4}+
\left\|
\sup_{0\le j\le N}
\left|
X_{t_j}^{t,x}-X_j^{(h)}
\right|
\right\|_{L^4}+
\left\|
\sup_{0\le j\le N}
\left|
J_{t_j}^{t,x}-J_j^{(h)}
\right|
\right\|_{L^4}
\bigg).
\end{align*}
\end{proposition}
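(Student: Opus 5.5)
The proof is a telescoping decomposition combined with martingale inequalities. Write
\[
\delta_k:=\widetilde Y_k^{(i,h)}-Y_k^{(i,h)}=\sum_{m=0}^{k-1}\Big(\mathcal S_{\mathrm{int}}^{(i),\mathrm{FBT}}(\Phi(t,t_m),X_{t_m}^{t,x},J_{t_m}^{t,x},\dots)-\mathcal S_{\mathrm{int}}^{(i),\mathrm{FBT}}(\Phi_m^{(h)},X_m^{(h)},J_m^{(h)},\dots)\Big).
\]
This splits into four groups of increments, one for each term in \eqref{eq:Sint-first-greek-mil}.

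The first group is the $h$-term, whose coefficient is $\Phi\,a(s,X)^\top Je_i$ with $a=\nabla_x(F+dH)$. The second is the $\Delta\overleftarrow B$-term with coefficient $\Phi\,\nabla_xH(s,X)^\top Je_i$. The third is the $J^{BB}$-term. The fourth is the $J^{WB}$-term, whose coefficient involves $\nabla_x^2H[\sigma_{\cdot a},Je_i]$, $\nabla_xH^\top\nabla_x\sigma_{\cdot a}Je_i$ and $\widetilde c_a\nabla_xH^\top Je_i$.

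For every group the coefficient has the trilinear form $\Phi\cdot g(s,X)\cdot J$ with $g$ bounded and Lipschitz. This holds because $F\in C_b^2$, $H\in C_b^3$, $\sigma\in C_b^2$, and $d,\widetilde c$ are bounded. The product rule then gives
\[
|\Delta\text{coef}_m|\le C\Big(|\Phi-\Phi^{(h)}||J|+|\Phi^{(h)}||X-X^{(h)}||J|+|\Phi^{(h)}||J-J^{(h)}|\Big),
\]
evaluated at step $m$. Write $E_\Phi,E_X,E_J$ for the three sup-errors, each taken over $0\le j\le N$.

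The $h$-term is handled pathwise. Its contribution to $\sup_k|\delta_k|$ is at most $T\,C\sup_m|\Delta\text{coef}_m|$. By Hölder with exponents $(4,4)$, $\|E_\Phi\sup|J|\|_{L^2}\le\|E_\Phi\|_{L^4}\|\sup|J|\|_{L^4}$. The terms containing $X$ involve a triple product and need $L^{12}$ moments, used with Hölder exponents $4,12,12$ after pairing. This is where the $L^{12}$ assumption enters.

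The stochastic terms are handled by martingale inequalities. The coefficient differences are $\mathcal F_{t_m}^W\vee\mathcal F_{t_m,T}^B$-measurable. The increments $\Delta\overleftarrow B_{t_m}$, $J^{BB}(t_m,h)$ and $J^{WB}(t_m,h)$ depend only on $(W,B)$ on $[t_m,t_{m+1}]$, so they are independent of the coefficient. They have mean zero and second moments $h$, $h^2/2$ and $h^2/2$ respectively.

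There is a filtration issue: $W$ runs forward while $B$ is measured backward, so the partial sums are not martingales in a single filtration. I would resolve this by enlarging the filtration to $\mathcal G_m:=\mathcal F_{t_{m}}^W\vee\mathcal F_{t,T}^B$ conditionally on all of $B$. This is not legitimate for the $\Delta\overleftarrow B$ term, because the coefficient depends on $B$ only through $\Phi$ on $[t,t_m]$, which is backward-measurable from $\mathcal F^B_{t,t_m}$.

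Instead, I would use the discrete filtration $\mathcal H_m:=\sigma(\Delta W_j,\Delta\overleftarrow B_j,J^{WB}_j,J^{BB}_j:\ j<m)$, taking the full path pieces on $[t,t_m]$. The coefficient at step $m$ is $\mathcal H_m$-measurable, since $\Phi(t,t_m)$ uses only $B$ on $[t,t_m]$. The step-$m$ noise is independent of $\mathcal H_m$ with mean zero. Hence each stochastic sum is an $(\mathcal H_m)$-martingale.

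The main obstacle is confirming this measurability and independence for the exact $\Phi(t,t_m)$, whose backward integral over $[t,t_m]$ is measurable with respect to the $B$-increments there. Doob's $L^2$ inequality then gives
\[
\Big\|\sup_k\Big|\sum_{m<k}\Delta\text{coef}_m\,\xi_m\Big|\Big\|_{L^2}^2\le 4\sum_m\mathbb E|\Delta\text{coef}_m|^2\,\mathbb E|\xi_m|^2\le C\,N h\,\big\|\sup_m|\Delta\text{coef}_m|\big\|_{L^2}^2 .
\]
This is bounded by the same Hölder estimate as the $h$-term, and in fact gains a factor $h$ for $J^{BB}$ and $J^{WB}$. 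Summing the four groups yields the claimed inequality with $L_{\mathrm{int}}^{(i)}$ depending only on $T$, the derivative bounds, and the $L^{12}$ moment constants.
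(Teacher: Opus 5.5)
Your proof is correct in substance and uses the same skeleton as the paper. You split $\widetilde Y_k^{(i,h)}-Y_k^{(i,h)}$ into the $h$-, $\Delta\overleftarrow B$-, $J^{BB}$- and $J^{WB}$-groups, bound the $h$-group pathwise, and handle the coefficient differences by the same product-rule and H\"older bookkeeping. The difference lies in the stochastic groups:
\begin{itemize}
\item The paper invokes a discrete BDG inequality ``for backward increments, equivalently after reversing time'' for the $\Delta\overleftarrow B$-group. It bounds the $J^{WB}$- and $J^{BB}$-groups crudely by the triangle inequality, using $\|J^{WB}(t_r,h)\|_{L^2},\|J^{BB}(t_r,h)\|_{L^2}\le Ch$ and $Nh=T$.
\item You instead work in the forward filtration $\mathcal H_m=\mathcal F^W_{t,t_m}\vee\mathcal F^B_{t,t_m}$ and apply Doob's $L^2$ inequality.
\end{itemize}
Your justification is the better one. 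A time-reversed filtration would need the step-$m$ coefficient to be measurable with respect to $B$ on $[t_{m+1},T]$. That fails here because $\Phi(t,t_m)$ and $\Phi_m^{(h)}$ are built from $B$ on $[t,t_m]$. In forward time, by contrast, $\sum_{m<k}D_m\cdot\Delta\overleftarrow B_{t_m}$ is simply a martingale transform. Your observation that $D_m$ is independent of the step-$m$ noise is also exactly what licenses the factorization $\|D_rJ^{WB}(t_r,h)\|_{L^2}=\|D_r\|_{L^2}\|J^{WB}(t_r,h)\|_{L^2}$, which the paper uses tacitly. You should delete your opening claim that the coefficients are $\mathcal F^W_{t_m}\vee\mathcal F^B_{t_m,T}$-measurable. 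It is false for $\Phi(t,t_m)$ and contradicts what you use afterwards.

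One step needs repair: $J^{BB}$ is not centered. Its integrand $B^j_s-B^j_r$ is not backward-adapted. With the paper's right-endpoint definition of the backward integral, the diagonal entries are $J^{BB}_{ii}(s,h)=\tfrac12\bigl((\Delta\overleftarrow B^i_s)^2+h\bigr)$, with mean $h$ and second moment $\tfrac32h^2$, not $0$ and $h^2/2$. So the diagonal $BB$-partial sums are not martingales, and Doob does not apply as written. The fix uses tools you already have; either of the following works.
\begin{itemize}
\item Write $J^{BB}_{ii}=(J^{BB}_{ii}-h)+h$, apply Doob to the centered part, and treat the mean part $h\,D_m$ pathwise like the $h$-group.
\item Follow the paper and sum $\|D_mJ^{BB}(t_m,h)\|_{L^2}\le Ch\|D_m\|_{L^2}$ over the $N$ steps, using your independence observation.
\end{itemize}
With that change your argument is complete.
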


\subsection{Second-Order Greek Estimators}
The second-order Greek representation involves, in addition to the
forward diffusion $X^{t,x}$ and the Jacobian flow $J^{t,x}$,
the second-order variational process
$K^{(ij),t,x}$ associated with the forward SDE.
For fixed $1\le i,j\le d$,
the process $K^{(ij),t,x}$ satisfies
\begin{align*}
dK_s^{(ij),t,x}
=\Big[\nabla_x b(X_s^{t,x})K_s^{(ij),t,x}
+\nabla_x^2 b(X_s^{t,x})
[J_s^i,J_s^j]\Big]\,ds
+\sum_{a=1}^{d}
\Big[\nabla_x\sigma_{\cdot a}(X_s^{t,x})
K_s^{(ij),t,x}
+\nabla_x^2\sigma_{\cdot a}(X_s^{t,x})
[J_s^i,J_s^j]
\Big]\,dW_s^a,
\end{align*}
where $J_s^i=J_s^{t,x}e_i.$

The purpose of this subsection is twofold.
We first develop a general strong-error framework showing that the
convergence rate of the second-order Greek estimator is determined by the
approximation properties of the discretization operators associated with
$K^{(ij),t,x}$ and the corresponding accumulated payoff term.
We then construct suitable discretization operators satisfying the
requirements of the framework, leading to a global strong-error order one
approximation for the second-order Greek estimator.

\begin{definition}[Strong-error order of $S_K$]
\label{def:Strong error of S K}
The discretization operator $S_K$ is said to have strong-error order $p>0$
if there exists a constant $C_K>0$ such that
\[
\left\|
\sup_{0\le k\le N}
\left|
K_{t_k}^{(ij),t,x}-K_k^{(ij,h)}
\right|
\right\|_{L^q}
\le C_K h^p
\]
for every $q\ge2$.
Here $K^{(ij),t,x}$ is the second tangent process defined in
\eqref{eq:app-second-tangent}--\eqref{eq:app-second-tangent-sde}, and
$\{K_k^{(ij,h)}\}_{k=0}^N$ is generated by
\begin{align*}
K_{k+1}^{(ij,h)}
=\mathcal S_K
\left(K_k^{(ij,h)},J_k^{(h)},X_k^{(h)},t_k,h;\Delta W_{t_k},\Delta\overleftarrow B_{t_k}\right),
\quad
K_0^{(ij,h)}=0.
\end{align*}
\end{definition}

\begin{definition}[Strong-error order of $S_{\mathrm{int}}^{(ij)}$]
\label{def:Strong error of S int second Greek}
The discretization operator $S_{\mathrm{int}}^{(ij)}$ is said to have strong-error order
$p>0$ if there exists a constant $C_{\mathrm{int}}^{(ij)}>0$ such that
\[
\left\|
\sup_{0\le k\le N}
\left|
Y_{t_k}^{(ij)}-\widetilde Y_k^{(ij,h)}
\right|
\right\|_{L^q}
\le C_{\mathrm{int}}^{(ij)}h^p
\]
for every $q\ge2$.
Here
\begin{align*}
\begin{aligned}
Y_{t_k}^{(ij)}
=&\int_t^{t_k}\Phi(t,s)
\Big((J_s^{t,x}e_j)^\top\nabla_x^2(F+dH)(s,X_s^{t,x})J_s^{t,x}e_i
+\nabla_x(F+dH)(s,X_s^{t,x})^\top K_s^{(ij),t,x}\Big)\,\mathrm ds  \\
&+\int_t^{t_k}\Phi(t,s)
\Big((J_s^{t,x}e_j)^\top\nabla_x^2H(s,X_s^{t,x})J_s^{t,x}e_i
+\nabla_xH(s,X_s^{t,x})^\top K_s^{(ij),t,x}\Big)\mathrm d \overleftarrow{B}_s.
\end{aligned}
\end{align*}
The approximation $\{\widetilde Y_k^{(ij,h)}\}_{k=0}^N$ is generated by
\begin{align*}
\widetilde Y_{k+1}^{(ij,h)}
=
\widetilde Y_k^{(ij,h)}
+
\mathcal S_{\mathrm{int}}^{(ij)}
\left(
\Phi(t,t_k),
X_{t_k}^{t,x},
J_{t_k}^{t,x},
K_{t_k}^{(ij),t,x},
t_k,h;
\Delta W_{t_k},\Delta\overleftarrow B_{t_k}
\right),
\quad
\widetilde Y_0^{(ij,h)}=0,
\end{align*}
with $X,\Phi,J,K$ evaluated exactly.
\end{definition}
The following proposition establishes a general strong-error estimate for
the second-order Greek payoff estimator.
In particular, if the underlying discretization operators satisfy suitable
strong-error and stability properties, then the resulting payoff
approximation inherits the same convergence rate.
The proof is deferred to
Appendix~\ref{app:strong-error-framework},
Paragraph~\ref{para:proof of prop:error-second-order-greek-payoff}.

\begin{proposition}[Strong-error order for the second-order Greek payoff]
\label{prop:error-second-order-greek-payoff}
Fix $(t,x)\in[0,T]\times\mathbb R^d$ and $1\le i,j\le d$.
Define
\begin{align*}
P_{t_k}^{(ij)}
=\Phi(t,t_k)
\Big((J_{t_k}^j)^\top\nabla_x^2G(X_{t_k}^{t,x})J_{t_k}^i
+\nabla_xG(X_{t_k}^{t,x})^\top K_{t_k}^{(ij),t,x}\Big)
+Y_{t_k}^{(ij)}.
\end{align*}
Let $\{X_k^{(h)},\Phi_k^{(h)},J_k^{(h)},K_k^{(ij,h)}\}_{k=0}^N$
be generated by $S_X,S_\Phi,S_J,S_K$ as in
Definition~\ref{def:Strong error of S X},
Definition~\ref{def:Strong error of S Phi},
Definition~\ref{def:Strong error of S J},
and Definition~\ref{def:Strong error of S K}.
Define
\begin{align*}
\begin{aligned}
Y_{k+1}^{(ij,h)}
=&
Y_k^{(ij,h)}
+
\mathcal S_{\mathrm{int}}^{(ij)}
\left(
\Phi_k^{(h)},
X_k^{(h)},
J_k^{(h)},
K_k^{(ij,h)},
t_k,h;
\Delta W_{t_k},\Delta\overleftarrow B_{t_k}
\right),
\quad
Y_0^{(ij,h)}=0,\\
P_k^{(ij,h)}
=&
\Phi_k^{(h)}
\Big(
(J_k^{j,h})^\top\nabla_x^2G(X_k^{(h)})J_k^{i,(h)}
+
\nabla_xG(X_k^{(h)})^\top K_k^{(ij,h)}
\Big)
+
Y_k^{(ij,h)}.
\end{aligned}
\end{align*}
Assume:

\begin{enumerate}
\item
The strong-error orders of $S_X,S_\Phi,S_J,S_K$ and
$\mathcal S_{\mathrm{int}}^{(ij)}$ are
$p_X,p_\Phi,p_J,p_K,p_{\mathrm{int}}^{(ij)}$, respectively.

\item
$\mathcal S_{\mathrm{int}}^{(ij)}$ satisfies the accumulated stability estimate:
there exists $L_{\mathrm{int}}^{(ij)}>0$, independent of $h$, such that
\begin{align*}
\left\|
\sup_{0\le k\le N}
|\widetilde Y_k^{(ij,h)}-Y_k^{(ij,h)}|
\right\|_{L^2}
\le
L_{\mathrm{int}}^{(ij)}
\bigg(&
\left\|
\sup_{0\le r\le N}
|\Phi(t,t_r)-\Phi_r^{(h)}|
\right\|_{L^4}
+
\left\|
\sup_{0\le r\le N}
|X_{t_r}^{t,x}-X_r^{(h)}|
\right\|_{L^4}\\
&+
\left\|
\sup_{0\le r\le N}
|J_{t_r}^{t,x}-J_r^{(h)}|
\right\|_{L^4}
+
\left\|
\sup_{0\le r\le N}
|K_{t_r}^{(ij),t,x}-K_r^{(ij,h)}|
\right\|_{L^4}
\bigg).
\end{align*}

\item
There exists $M>0$, independent of $h$, such that 
\[
\Phi(t,t_k),\ \Phi_k^{(h)},\
J_{t_k}^{t,x},\ J_k^{(h)},\
K_{t_k}^{(ij),t,x},\ K_k^{(ij,h)}
\quad
\text{ and }\quad
\nabla G(X_{t_k}^{t,x}),\ \nabla G(X_k^{(h)}),\
\nabla_x^2G(X_{t_k}^{t,x}),\ \nabla_x^2G(X_k^{(h)})
\]
are uniformly bounded in
$L^{12}_{W,B}$
by $M$.

\item
$\nabla G$ and $\nabla_x^2G$ are globally Lipschitz with constants
$L_{\nabla G}$ and $L_{\nabla^2 G}$.
\end{enumerate}

Then
\begin{align*}
\left\|
\sup_{0\le k\le N}
\left|
P_{t_k}^{(ij)}-P_k^{(ij,h)}
\right|
\right\|_{L^2}
=
\mathcal{O}(h^p),
\quad
p=\min\{p_X,p_\Phi,p_J,p_K,p_{\mathrm{int}}^{(ij)}\}.
\end{align*}
\end{proposition}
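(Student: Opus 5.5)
The plan is to split the error into a terminal part and an accumulated part, exactly as for Proposition~\ref{prop:error-u} and Proposition~\ref{prop:error-first-order-greek-payoff}. For each $k$,
\[
P_{t_k}^{(ij)}-P_k^{(ij,h)}=\bigl(T_{t_k}-T_k^{(h)}\bigr)+\bigl(Y_{t_k}^{(ij)}-\widetilde Y_k^{(ij,h)}\bigr)+\bigl(\widetilde Y_k^{(ij,h)}-Y_k^{(ij,h)}\bigr).
\]
Here $T_{t_k}=\Phi(t,t_k)\bigl((J_{t_k}^j)^\top\nabla_x^2G(X_{t_k}^{t,x})J_{t_k}^i+\nabla_xG(X_{t_k}^{t,x})^\top K_{t_k}^{(ij),t,x}\bigr)$, and $T_k^{(h)}$ is its discrete analogue. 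We take $\sup_k$ and the $L^2$ norm, and apply Minkowski. The second term is $\mathcal O(h^{p_{\mathrm{int}}^{(ij)}})$ by assumption~1 and Definition~\ref{def:Strong error of S int second Greek}, using Jensen to pass from $L^q$, $q\ge2$, to $L^2$. The third term is controlled by the accumulated stability estimate in assumption~2. Its right-hand side is a sum of $L^4$ sup-errors of $\Phi,X,J,K$, which are $\mathcal O(h^{p_\Phi}),\mathcal O(h^{p_X}),\mathcal O(h^{p_J}),\mathcal O(h^{p_K})$ by the strong-error definitions with $q=4$.

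The main work is the terminal term, which is a product of up to four random factors. I would telescope it by changing one factor at a time. For the Hessian piece, write $abcd-a'b'c'd'=(a-a')bcd+a'(b-b')cd+a'b'(c-c')d+a'b'c'(d-d')$ with $a=\Phi$, $b=J^j$, $c=\nabla^2G(X)$, $d=J^i$. Treat the gradient piece similarly with three factors $\Phi$, $\nabla G(X)$, $K$. Each summand has one difference factor and at most three factors that are uniformly bounded in $L^{12}$ by assumption~3.

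We bound $\sup_k$ of a product by the product of the $\sup_k$'s of the factors, then apply the generalized Hölder inequality with exponents $(4,12,12,12)$, since $\tfrac14+\tfrac3{12}=\tfrac12$. For three-factor terms we use $(4,12,12)$ and absorb the leftover through $L^{12}\subset L^{8}$, or equivalently $(4,8,8)$. Differences of $\nabla G$ and $\nabla^2G$ are handled by the Lipschitz property in assumption~4, giving $\le L\,\|\sup_k|X_{t_k}^{t,x}-X_k^{(h)}|\|_{L^4}$. The other differences are directly the $L^4$ sup-errors of $\Phi,J,K$.

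The one delicate point is that assumption~3 gives $L^{12}$ bounds for each fixed $k$, while Hölder on the $\sup_k$ needs bounds on $\sup_k|\cdot|$. I would first reduce to per-$k$ control. Alternatively, following the convention already implicit in the proof of Proposition~\ref{prop:error-u}, I would read assumption~3 as a bound on the running supremum, as in assumption~(3) of Theorem~\ref{thm:QAMLMC for u when given B}, and note this explicitly. A cruder fallback is also available: bound the sup of the bounded factors by sums over $k$, which costs at most polynomial powers of $N$. That loss is undesirable, so I will adopt the sup-moment interpretation.

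Summing all contributions gives $\mathcal O(h^{\min\{p_X,p_\Phi,p_J,p_K,p_{\mathrm{int}}^{(ij)}\}})$, which is the claim.
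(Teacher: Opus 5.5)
Your proposal is correct and follows the paper's proof essentially verbatim. The paper also splits off $Y_{t_k}^{(ij)}-\widetilde Y_k^{(ij,h)}$ and $\widetilde Y_k^{(ij,h)}-Y_k^{(ij,h)}$, telescopes the terminal product one factor at a time (six terms, in a slightly different order), and applies H\"older with exponents $(4,12,12,12)$ and $(4,8,8)$ together with the Lipschitz bounds on $\nabla G$ and $\nabla_x^2G$; your remark on per-$k$ versus running-supremum moment bounds is apt, since the paper silently uses the supremum reading of assumption~3, exactly as you propose.
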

Applying Lemma~\ref{lem:EB to EW} with
\[
\Pi_{\ell,k}
=
P_{t_k^\ell}^{(ij)}
-
P_{\ell,k}^{(ij,h_\ell)}
\]
yields the fixed-$B$ conditional second-order Greek estimate with logarithmic
loss.

We now construct discretization operators satisfying the assumptions of
Proposition~\ref{prop:error-second-order-greek-payoff}.

For the second-order variational process $K^{(ij),t,x}$, we use the
Milstein discretization
\begin{align}
\mathcal S_K^{\mathrm{mil}}
\left(K,J^i,J^j,X,s,h;\Delta W_s,\Delta\overleftarrow B_s\right)
= K
&+A_K(X,J^i,J^j,K)\,h
+\sum_{a=1}^{d}
B_{K,a}(X,J^i,J^j,K)\,\Delta W_s^a\notag\\
&+\sum_{a=1}^{d}\sum_{b=1}^{d}
\mathcal L_a^{X,J,K}B_{K,b}(X,J^i,J^j,K)
\int_s^{s+h}
(W_r^a-W_s^a)\,\mathrm dW_r^b .
\end{align}
Here
\begin{align*}
A_K(X,J^i,J^j,K)
=\nabla_x b(X)K
+\nabla_x^2 b(X)[J^i,J^j],
\quad
B_{K,a}(X,J^i,J^j,K)
=\nabla_x\sigma_{\cdot a}(X)K
+\nabla_x^2\sigma_{\cdot a}(X)[J^i,J^j].
\end{align*}
The differential operator
$\mathcal L_a^{X,J,K}$
is the diffusion vector field of the extended process
$(X,J^i,J^j,K)$ associated with the $a$-th Brownian component, namely
\begin{align*}
\mathcal L_a^{X,J,K}
=\sigma_{\cdot a}(X)\cdot\nabla_X
+\big(\nabla_x\sigma_{\cdot a}(X)J^i\big)\cdot\nabla_{J^i}
+\big(\nabla_x\sigma_{\cdot a}(X)J^j\big)\cdot\nabla_{J^j}
+\Big(\nabla_x\sigma_{\cdot a}(X)K
+\nabla_x^2\sigma_{\cdot a}(X)[J^i,J^j]
\Big)\cdot\nabla_K .
\end{align*}
Thus the approximation is generated by
\[
K_{k+1}^{(ij,h)}
=
\mathcal S_K^{\mathrm{mil}}
\left(
K_k^{(ij,h)},J_k^{(i,h)},J_k^{(j,h)},X_k^{(h)},t_k,h;
\Delta W_{t_k},\Delta\overleftarrow B_{t_k}
\right),
\quad
K_0^{(ij,h)}=0 .
\]
Under the regularity assumptions imposed on the coefficients,
this Milstein approximation achieves strong-error order one.

Similarly to the first order case,
the integral discretization operator
$\mathcal S_{\mathrm{int}}^{(ij),\mathrm{FBT}}$
is obtained from a first order stochastic Taylor expansion of the
integrands appearing in the second-order Greek representation.
resulting discretization.
A detailed derivation is provided in
Appendix~\ref{app:strong-error-framework},
Paragraph~\ref{para:construction-second-greek}.

Fix $1\le i,j\le d$. For simplicity, we write
\begin{align*}
X_s=X_s^{t,x},
\quad
J_s=J_s^{t,x},
\quad
K_s^{(ij)}=K_s^{(ij),t,x},
\quad
\Phi_s=\Phi(t,s),
\quad
J_s^j=J_s^{t,x}e_j,
\end{align*}
For the second-order Greek integral 
\begin{align*}
Y_{t_k}^{(ij)}
=&\int_t^{t_k}\Phi_s
\Big(\nabla_x^2(F+dH)(s,X_s)[J_s^j,J_s^i]
+\nabla_x(F+dH)(s,X_s)^\top K_s^{(ij)}\Big)\,\mathrm ds\notag\\
&+\int_t^{t_k}\Phi_s\Big(
\nabla_x^2H(s,X_s)[J_s^j,J_s^i]
+\nabla_xH(s,X_s)^\top K_s^{(ij)}\Big)
\mathrm d \overleftarrow{B}_s,
\end{align*}
we define the discretization operator by
\begin{align}
&\mathcal S_{\mathrm{int}}^{(ij),\mathrm{FBT}}
\left(\Phi,X,J,K,s,h;
\Delta W_s,\Delta\overleftarrow B_s\right)\notag\\
=&\Phi h
\Big(\nabla_x^2(F+dH)(s,X)[J^j,J^i]
+\nabla_x(F+dH)(s,X)^\top K\Big)
+\Phi\Big(\nabla_x^2H(s,X)[J^j,J^i]
+\nabla_xH(s,X)^\top K\Big)
\cdot\Delta\overleftarrow B_s\notag\\
&+\Phi\sum_{\nu=1}^{\ell}\sum_{a=1}^{d}
\bigg[\nabla_x^3H_\nu(s,X)
[\sigma_{\cdot a}(X),J^i,J^j]
+\nabla_x^2H_\nu(s,X)[(\nabla_x\sigma_{\cdot a}(X)J)^j,J^i]
+\nabla_x^2H_\nu(s,X)
[J^j,(\nabla_x\sigma_{\cdot a}(X)J)^i]\notag\\
&\hspace{3cm}
+\sigma_{\cdot a}(X)^\top\nabla_x^2H_\nu(s,X)K
+\nabla_xH_\nu(s,X)^\top
\big(\nabla_x\sigma_{\cdot a}(X)K
+\nabla_x^2\sigma_{\cdot a}(X)[J^i,J^j]
\big)\notag\\
&\hspace{3cm}
+\widetilde c_a(s)
\Big(\nabla_x^2H_\nu(s,X)[J^j,J^i]
+\nabla_xH_\nu(s,X)^\top K\Big)\bigg]
J_{\nu a}^{WB}(s,h)\notag\\
&+\Phi\sum_{\nu=1}^{\ell}\sum_{\alpha=1}^{\ell}
d_\alpha(s)\Big((J^j)^\top
\nabla_x^2H_\nu(s,X)J^i
+\nabla_xH_\nu(s,X)^\top K\Big)
J_{\nu\alpha}^{BB}(s,h),
\label{eq:Sint-second-greek-mil-full}
\end{align}
where
$J^i=Je_i$, $J^j:=Je_j$,
and
$(\nabla_x\sigma_{\cdot a}(X)J)^i
=\nabla_x\sigma_{\cdot a}(X)J^i.$

The following two propositions verify the assumptions of
Proposition~\ref{prop:error-second-order-greek-payoff}.
More precisely, we show that
$\mathcal S_{\mathrm{int}}^{(ij),\mathrm{FBT}}$
has strong-error order one and satisfies the required accumulated stability
estimate.
Detailed proofs are provided in
Appendix~\ref{app:strong-error-framework},
Paragraphs~\ref{para:proof of prop:second-order-greek-strong-error}
and~\ref{para:proof of prop:second-order-greek accumulated stability}.

\begin{proposition}[Strong-error order of the second-order Greek integral discretization]
\label{prop:second-order-greek-strong-error}
Assume that the coefficients are sufficiently smooth with bounded derivatives
up to the order used above, and assume that $X$, $J$, $K$, and $\Phi$ have
uniformly bounded moments of all required orders. Then the approximation
generated by
\begin{align*}
\widetilde Y_{k+1}^{(ij,h)}
=
\widetilde Y_k^{(ij,h)}
+
\mathcal S_{\mathrm{int}}^{(ij),\mathrm{FBT}}
\left(\Phi(t,t_k),X_{t_k}^{t,x},J_{t_k}^{t,x},
K_{t_k}^{(ij),t,x},t_k,h;\Delta W_{t_k},\Delta\overleftarrow B_{t_k}
\right),
\quad
\widetilde Y_0^{(ij,h)}=0,
\end{align*}
satisfies, for every fixed $q\ge2$,
\begin{align*}
\left\|
\sup_{0\le k\le N}
\left|
Y_{t_k}^{(ij)}-\widetilde Y_k^{(ij,h)}
\right|
\right\|_{L^q}
\le C_qh.
\end{align*}
Consequently, by Jensen's inequality, the same estimate also holds for
$0<q<2$. Hence
$\mathcal S_{\mathrm{int}}^{(ij),\mathrm{FBT}}$ has strong-error order $1$ in the
sense of Definition~\ref{def:Strong error of S int second Greek}.
\end{proposition}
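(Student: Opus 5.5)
We would prove Proposition~\ref{prop:second-order-greek-strong-error} by the same local-expansion-plus-martingale-summation argument used for Propositions~\ref{prop:int-strong-error} and~\ref{prop:first-order-greek-strong-error}, now applied to the augmented state $Z_s=(X_s,J_s,K_s^{(ij)},\Phi_s)$. Write the $\mathrm ds$-integrand and the $\mathrm d\overleftarrow B$-integrand of $Y^{(ij)}$ as
\[
\Phi_s\,\mathcal F(s,Z_s)\quad\text{and}\quad \Phi_s\,\mathcal H(s,Z_s),
\]
with $\mathcal H_\nu(s,Z)=\nabla_x^2H_\nu(s,X)[J^j,J^i]+\nabla_xH_\nu(s,X)^\top K$. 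The one-step error on $[t_k,t_{k+1}]$ then splits as $e_k=e_k^{\mathrm{dr}}+e_k^{B}$.

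\textbf{Drift part.} Freeze $\Phi_s\mathcal F(s,Z_s)$ at $t_k$. Itô's formula for the forward components, together with the backward Itô formula for the $B$-dependence of $\Phi$, gives
\[
e_k^{\mathrm{dr}}=\int_{t_k}^{t_{k+1}}\!\int_{t_k}^{s}(\cdots)\,\mathrm dW_r\,\mathrm ds+\int_{t_k}^{t_{k+1}}\!\int_{t_k}^{s}(\cdots)\,\mathrm d\overleftarrow B_r\,\mathrm ds+R_k,
\]
where $\|R_k\|_{L^q}=O(h^2)$. The first two terms have conditional mean zero with respect to a suitable mixed filtration, and their $L^q$ norm is $O(h^{3/2})$.

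\textbf{Backward-integral part.} Expand the integrand $\Phi_r\mathcal H(r,Z_r)$ to first order around $t_k$. Here $\Phi_r-\Phi_{t_k}\approx \Phi_{t_k}\bigl(\widetilde c\cdot(W_r-W_{t_k})+d\cdot(B_{t_k}-B_r)\bigr)$. For the increment $\mathcal H(r,Z_r)-\mathcal H(t_k,Z_{t_k})$, the martingale part is driven by $W$ only, since $X,J,K$ do not depend on $B$. Its coefficient on $\mathrm dW^a$ is $\mathcal L_a^{X,J,K}\mathcal H_\nu$. A direct computation of this derivative yields exactly the bracket multiplying $J^{WB}_{\nu a}$ in \eqref{eq:Sint-second-greek-mil-full}:
\begin{itemize}
\item the third-derivative term comes from differentiating in $X$ along $\sigma_{\cdot a}$;
\item the two mixed Hessian terms come from $\nabla\sigma_{\cdot a}J^i$ and $\nabla\sigma_{\cdot a}J^j$;
\item the $K$-terms come from the $K$-diffusion $\nabla\sigma_{\cdot a}K+\nabla^2\sigma_{\cdot a}[J^i,J^j]$, plus $\sigma_{\cdot a}^\top\nabla^2H_\nu K$;
\item the $\widetilde c_a$ term comes from $\Phi$.
\end{itemize}
The $d_\alpha$ term matches the $J^{BB}$ correction. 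So $e_k^{B}$ reduces to backward integrals of second-order remainders:
\begin{itemize}
\item double integrals $\int\!\int(\cdot)\,\mathrm dW\,\mathrm dW\,\mathrm d\overleftarrow B$ and similar triple iterated terms, each of size $O(h^{3/2})$ with zero conditional mean;
\item terms $\int\!\int(\cdot)\,\mathrm dr\,\mathrm d\overleftarrow B$, also $O(h^{3/2})$ with zero mean.
\end{itemize}
The time-dependence of $H$, $d$ and $\widetilde c$ contributes $O(h)$ integrands inside a backward integral, hence $O(h^{3/2})$ per step.

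\textbf{Summation.} Decompose $\sum_{k<n}e_k=M_n+A_n$. Here $A_n$ collects the $O(h^2)$ non-martingale remainders, so $\|\sup_n|A_n|\|_{L^q}\le Nh^2=O(h)$. $M_n$ is a discrete martingale, and to make it one we use the filtration $\mathcal G_k=\mathcal F^W_{t,t_k}\vee\mathcal F^B_{t_k,T}$ generated by the past forward and future backward noise. Burkholder--Davis--Gundy/Doob then gives $\|\sup_n|M_n|\|_{L^q}\le C_q(Nh^3)^{1/2}=O(h)$. The constants come from the moment bounds on $X,J,K,\Phi$ (assumed up to all orders, with $K$ of polynomial growth in $J$) and the bounded derivatives of $H$, $\sigma$ and $b$ up to order three, handled by Hölder's inequality.

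\textbf{Main obstacle.} The main obstacle is the martingale-structure step. The terms mix $W$ and $B$, and $\Phi$ depends on both, so the filtration $\mathcal G_k$ above is not monotone in the usual sense. The fix is to verify, term by term, that each $O(h^{3/2})$ piece has zero conditional expectation given both $\mathcal F^W_{t,t_k}\vee\mathcal F^B_{t,T}$ and $\mathcal F^W_{t,T}\vee\mathcal F^B_{t_{k+1},T}$, and then to split into a $W$-driven sum and a $B$-driven sum. For the $W$-driven sum we condition on all of $B$ and apply BDG; for the $B$-driven sum we run time backward and apply BDG conditionally on $W$. Pieces involving $\Phi$'s $B$-dependence inside forward integrals need separate care: we first factor $\Phi_{t_k}$ and exploit independence of the increments on $[t_k,t_{k+1}]$ from the rest.
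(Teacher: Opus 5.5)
Your proposal is essentially the paper's proof: a first-order stochastic Taylor expansion of the $\mathrm ds$- and $\mathrm d\overleftarrow{B}$-integrands whose $W$- and $B$-coefficients reproduce exactly the $J^{WB}$ bracket (the diffusion vector field of $(X,J^i,J^j,K)$ applied to $\mathcal H_\nu$, plus the $\widetilde c_a$ term from $\Phi$) and the $d_\alpha J^{BB}$ term, leaving $O(h^2)$ finite-variation remainders summed by the triangle inequality, and $O(h^{3/2})$ fluctuations together with backward integrals of $O(r-s)$ remainders summed by BDG to $C_q(Nh^3)^{1/2}\le C_qh$. Two small corrections: because $\Phi(t,t_k)$ is built from the $B$-increments on $[t,t_k]$, the $\Phi(t,t_k)$-weighted $B$-fluctuations are not adapted to the reversed $B$-filtration, so instead of conditioning on $W$ and reversing time you should use the joint forward filtration $\sigma(W_u-W_t,\,B_u-B_t:\ t\le u\le t_k)$, in which these fluctuations are genuine martingale differences (this is where your ``factor out $\Phi_{t_k}$'' remark leads, and the paper's appeal to reverse martingale differences glosses over the same point); and the $O(r-s)$ remainder in the expansion of $\nabla_x^2H_\nu(r,X_r)$ requires $\nabla_x^3H$ to be Lipschitz, that is, bounded fourth derivatives of $H$ as the paper uses, not only derivatives up to order three.
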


\begin{proposition}
\label{prop:second-order-greek accumulated stability}
Assume that
$F\in C_b^3$, $H\in C_b^4$, $\sigma\in C_b^3$
and $d$, $\widetilde c$ are bounded. Moreover, assume that
$\Phi(t,s)$, $\Phi^{(h)}$,
$X_{s}^{t,x}$, $X^{(h)}$,
$J_{s}^{t,x}$, $J^{(h)}$,
$K_{s}^{(ij),t,x}$, $K^{(ij,h)}$
are bounded in $L^{16}_{W,B}$.
Then $\mathcal S_{\mathrm{int}}^{(ij),\mathrm{FBT}}$ satisfies the accumulated
stability estimate. More precisely, there exists
$L_{\mathrm{int}}^{(ij)}>0$, independent of $h$, such that
\begin{align*}
\left\|
\sup_{0\le k\le N}
\left|
\widetilde Y_k^{(ij,h)}-Y_k^{(ij,h)}
\right|
\right\|_{L^2}
\le L_{\mathrm{int}}^{(ij)}
\bigg(
&
\left\|
\sup_{0\le r\le N}
\left|
\Phi(t,t_r)-\Phi_r^{(h)}
\right|
\right\|_{L^4}+
\left\|
\sup_{0\le r\le N}
\left|
X_{t_r}^{t,x}-X_r^{(h)}
\right|
\right\|_{L^4}\\
&+
\left\|
\sup_{0\le r\le N}
\left|
J_{t_r}^{t,x}-J_r^{(h)}
\right|
\right\|_{L^4}+
\left\|
\sup_{0\le r\le N}
\left|
K_{t_r}^{(ij),t,x}-K_r^{(ij,h)}
\right|
\right\|_{L^4}
\bigg).
\end{align*}
\end{proposition}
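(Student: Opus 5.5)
The plan is to reduce the accumulated stability estimate to a one-step Lipschitz bound on the increment map and then sum. Write $\widetilde Y_k^{(ij,h)}-Y_k^{(ij,h)}=\sum_{m=0}^{k-1}D_m$, where
\[
D_m:=\mathcal S_{\mathrm{int}}^{(ij),\mathrm{FBT}}(\Phi(t,t_m),X_{t_m},J_{t_m},K_{t_m},t_m,h;\cdot)-\mathcal S_{\mathrm{int}}^{(ij),\mathrm{FBT}}(\Phi_m^{(h)},X_m^{(h)},J_m^{(h)},K_m^{(ij,h)},t_m,h;\cdot).
\]
Following \eqref{eq:Sint-second-greek-mil-full}, split $D_m$ into four groups according to the random factor: the $h$ term, the $\Delta\overleftarrow B_{t_m}$ term, the $J^{WB}(t_m,h)$ term and the $J^{BB}(t_m,h)$ term. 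Each group has the form $\Psi(\text{exact inputs})-\Psi(\text{numerical inputs})$ times a random factor $\xi_m\in\{h,\Delta\overleftarrow B_{t_m}^\nu,J^{WB}_{\nu a},J^{BB}_{\nu\alpha}\}$. The coefficient $\Psi$ is a product of $\Phi$, derivatives of $F,H,\sigma$ up to orders $1,3,2$, and at most two factors among $J^i,J^j,K$; for instance $\nabla^3H[\sigma_{\cdot a},J^i,J^j]$ needs $H\in C_b^4$ so that its difference is Lipschitz in $X$.

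\textbf{Step 1 (pointwise Lipschitz bound).} Telescoping in each argument and using $F\in C_b^3$, $H\in C_b^4$, $\sigma\in C_b^3$, and $d,\widetilde c$ bounded, I expect
\[
|\Psi_m-\Psi_m^{(h)}|\le C\,\mathcal M_m\big(|\Delta\Phi|+|\Delta X|+|\Delta J|+|\Delta K|\big)_m,
\]
with $\mathcal M_m$ a polynomial of degree at most three in $|\Phi|,|J|,|K|$ over exact and numerical inputs. The degree is three because of terms like $\Phi\,\nabla^2H[J,J]$: the $\Delta X$ difference keeps $\Phi JJ$. Taking the supremum over $m$ of the input errors and applying Hölder with exponents $4$ and $4$ bounds $\|\sup_m\mathcal M_m\|_{L^4}$; this requires each factor to be in $L^{12}$, which the assumed $L^{16}$ bounds cover with room to spare.

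\textbf{Step 2 (summation).} The $h$ term gives $\sum_m h\,|\Psi_m-\Psi_m^{(h)}|\le T\sup_m|\cdots|$ directly. For the stochastic factors the sum cannot be bounded by absolute values, since $\sum|\Delta\overleftarrow B_{t_m}|\sim h^{-1/2}$. Instead I would use martingale structure. Conditional on $W$ and $B$ outside $[t_m,t_{m+1}]$, the factors $\Delta\overleftarrow B_{t_m}$, $J^{WB}$ and $J^{BB}$ have mean zero and second moments $h$, $h^2/2$ and $h^2/2$. Meanwhile $\Psi_m-\Psi_m^{(h)}$ depends on $W$ up to $t_m$ and on $B$ increments over $[t,t_m]$. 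If this direction-of-time issue can be handled, Doob/BDG on the discrete sum bounds $\|\sup_k|\sum_{m<k}(\Psi_m-\Psi_m^{(h)})\xi_m|\|_{L^2}^2$ by $\sum_m\mathbb E|\Psi_m-\Psi_m^{(h)}|^2\,O(h)$. That quantity is at most $T\,\mathbb E\sup_m|\cdots|^2$, and Step 1 then finishes.

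\textbf{Main obstacle.} The difficulty is the adaptedness in Step 2. The numerical inputs at $t_m$ use $\Delta\overleftarrow B_{t_r}$ for $r<m$, so the sum is not a martingale in either the forward filtration of $W$ or the backward filtration of $B$ alone. I would first work with the filtration $\mathcal G_m=\sigma(W_s,B_s-B_t:\,s\le t_m)$. Since $B$ increments on $[t_m,t_{m+1}]$ are independent of $\mathcal G_m$ regardless of the integration direction, and $J^{BB},J^{WB}$ are measurable on $[t_m,t_{m+1}]$ with zero conditional mean, the sum is a discrete $\mathcal G_m$-martingale. If a term fails to have conditional mean zero, I would fall back to Cauchy–Schwarz with $\mathbb E[\sup_k|\cdot|^2]\le N\sum_m\mathbb E|D_m|^2$. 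This only works where the factor is $O(h)$ in $L^2$, as for $J^{WB}$ and $J^{BB}$; for the $\Delta\overleftarrow B_{t_m}$ term the martingale argument is essential.
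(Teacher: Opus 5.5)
Your proposal is correct and is essentially the paper's proof. The shared steps are: the same four-way split of $\mathcal S_{\mathrm{int}}^{(ij),\mathrm{FBT}}$; Lipschitz/H\"older bounds on each coefficient difference from the $L^{16}$ moments (the paper applies H\"older step by step, which needs only the stated moment bounds rather than a bound on $\sup_m\mathcal M_m$); a direct sum for the $h$-term; a BDG/Doob bound for the $\Delta\overleftarrow B$-term; and plain summation of $L^2$ norms with $\|J^{WB}\|_{L^2},\|J^{BB}\|_{L^2}\le Ch$ for the two correction terms. That last step is your fallback, and it is the one you must use for $J^{BB}$: with the paper's right-endpoint definition, $J^{BB}_{\nu\nu}(s,h)=\tfrac12\big((\Delta\overleftarrow B^\nu_s)^2+h\big)$ has mean $h$, not zero. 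Your forward filtration $\mathcal G_m=\sigma(W_s,B_s-B_t:s\le t_m)$ is the sound justification for the BDG step. The paper appeals to ``reversing time'', but the coefficients depend on $B$ over $[t,t_r]$ through $\Phi(t,t_r)$ and $\Phi^{(h)}_r$, so it is the forward-in-time martingale structure that makes that step work.
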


\section{Experiments on Conditional MLMC}
\label{sec:experiment}
In this section, we perform classical Monte Carlo experiments to
validate the discretization operators developed in
Section~\ref{sec:Strong-Order-One Numerical Schemes for Pricing and Greek Estimators}.

More specifically, we numerically investigate the multilevel
convergence rates
$\alpha$, $\beta$, $\gamma$
defined through
\begin{align*}
  |\mathbb E[P_\ell-P]|
=\mathcal O(h_\ell^\alpha),
\quad
\mathrm{Var}(P_\ell-P_{\ell-1})
=\mathcal O(h_\ell^\beta),
\quad
C_\ell=\mathcal O(h_\ell^{-\gamma}),
\end{align*}
where $P_\ell$ denotes the level-$\ell$ approximation and
$C_\ell$ is the average cost of generating one sample on level $\ell$.
According to Theorem~\ref{thm:QAMLMC for u when given B}
and Corollary~\ref{cor:app-qmlmc-greeks},
the proposed numerical scheme is expected to
yield
\begin{align*}
  \alpha=1,\quad\beta=2,\quad\gamma=1.
\end{align*}
The purpose of the experiments is therefore to verify these predicted
rates for the pricing estimator, the first-order Greek estimator, and
the second-order Greek estimator.

We consider the SPDE~\eqref{eq:spde0} with coefficients
\begin{align*}
c(s)&=-0.30+0.3s, \quad d(s)=0.15+0.2s, \quad \tilde c(s)=0.20+0.2s,\\
F(s,x)&=0.05e^{-s}\sin x,\quad
H(s,x)=0.08e^{-s}\cos x.
\end{align*}
The forward diffusion is given by
\begin{align*}
dX_s=b(X_s)\,ds+\sigma(X_s)\,dW_s,
\end{align*}
where
\begin{align*}
b(x)=\kappa(\mu-x),
\quad
\sigma(x)=\nu(1+0.5x),
\end{align*}
with parameters
\begin{align*}
\kappa=1.2,\quad \mu=0,\quad \nu=0.35.
\end{align*}
The terminal payoff is chosen as
\begin{align*}
G(x)=\sin(x).
\end{align*}
Unless otherwise specified, we take
\begin{align*}
t=0,\quad T=1, \quad x_0=0.3.
\end{align*}

For each experiment, we fix the maximal refinement level
\begin{align*}
  L_{\max}=12
\end{align*}
and use
\begin{align*}
N=2\times 10^4
\end{align*}
Monte Carlo samples on each level.

For the bias estimate, the exact target $P$ is approximated by the
finest-level approximation $P_{L_{\max}}$. Thus, we estimate
\begin{align*}
\left|\mathbb E[P_\ell-P_{L_{\max}}]\right|
\end{align*}
for $\ell<L_{\max}$, and obtain $\alpha$ from the regression
\begin{align*}
\left|
\mathbb E[P_\ell-P_{L_{\max}}]
\right|
=\mathcal O(h_\ell^\alpha).
\end{align*}

For the variance estimate, we use the coupled level difference
\begin{align*}
\Delta_\ell=P_\ell-P_{\ell-1},
\end{align*}
and estimate
\begin{align*}
\mathrm{Var}(\Delta_\ell)
=\mathcal O(h_\ell^\beta).
\end{align*}

Finally, $C_\ell$ denotes the average computational cost of generating
one level-$\ell$ sample, and $\gamma$ is estimated from
\begin{align*}
C_\ell
=\mathcal O(h_\ell^{-\gamma}).
\end{align*}

For each realization of the backward Brownian path $B$, we estimate
the exponents $\alpha$, $\beta$, and $\gamma$.
The experiment is repeated for ten independent realizations of $B$,
and the reported values correspond to the empirical mean of the
resulting estimates.

In particular, the mixed forward--backward iterated integral
\begin{align*}
J^{WB}_{s,h}
=\int_s^{s+h}
(W_r-W_s)\,\mathrm d \overleftarrow{B}_r
\end{align*}
is sampled using the Gaussian approximation
\begin{align*}
J^{WB}_{s,h}
\approx\frac12\,\Delta W\,\Delta B
+\sqrt{\frac{h\Delta W^2+h \Delta B^2+h^2}{12}}\,Z,
\quad
Z\sim N(0,1),
\end{align*}
where $Z$ is independent of $\Delta W$ and $\Delta B$.
This approximation matches the conditional mean and variance of
$J^{WB}_{s,h}$ given $(\Delta W,\Delta B)$.

\begin{figure}[ht]
\centering
\includegraphics[width=0.32\textwidth]{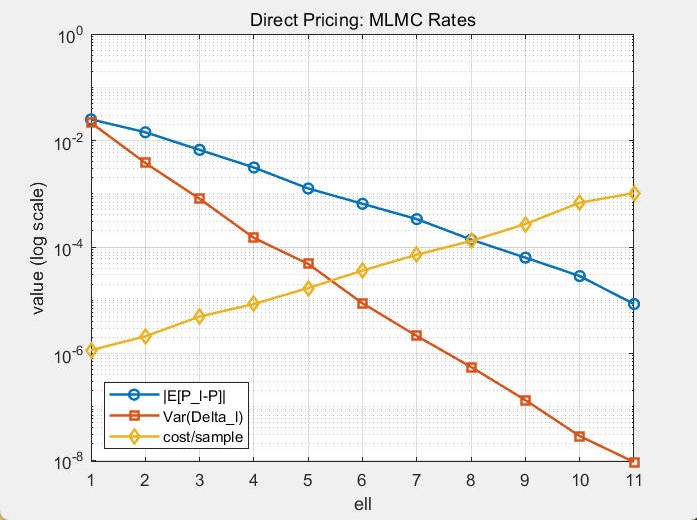}
\includegraphics[width=0.32\textwidth]{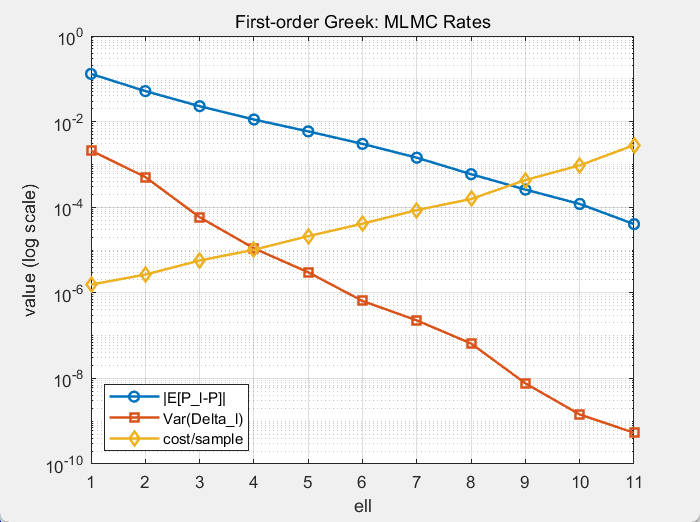}
\includegraphics[width=0.32\textwidth]{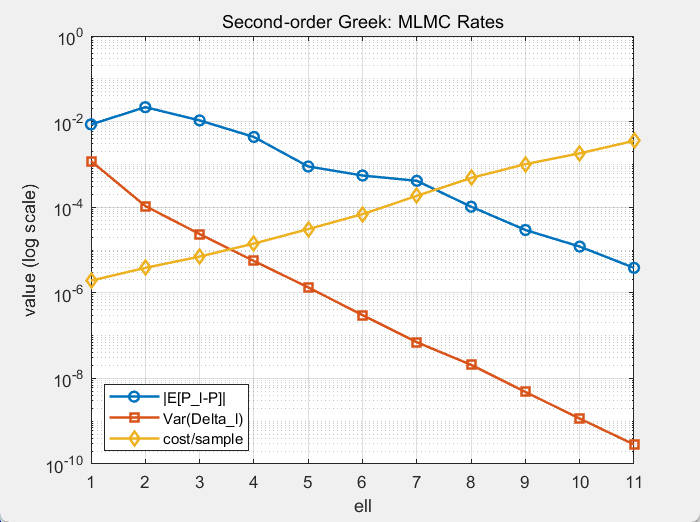}
\caption{Multilevel bias, variance, and cost for the pricing estimator
(left), first-order Greek estimator (middle), and second-order Greek
estimator (right).}
\label{fig:mlmc-rates}
\end{figure}
\begin{table}[ht]
\centering
\setlength{\tabcolsep}{8pt}
\begin{tabular}{c|ccccccccccc}
\toprule
& $B_1$ & $B_2$ & $B_3$ & $B_4$ & $B_5$ & $B_6$ & $B_7$ & $B_8$ & $B_9$ & $B_{10}$ & \textbf{Mean} \\
\midrule
$\alpha$ & 1.087 & 1.188 & 1.050 & 1.286 & 1.291 & 1.135 & 1.107 & 1.012 & 1.033 & 1.257 & \textbf{1.145} \\
$\beta$  & 2.008 & 2.053 & 1.993 & 2.043 & 2.075 & 2.065 & 2.045 & 2.075 & 2.042 & 2.057 & \textbf{2.045} \\
$\gamma$ & 0.891 & 0.999 & 0.990 & 0.996 & 0.987 & 1.008 & 0.999 & 0.996 & 1.003 & 1.006 & \textbf{0.987} \\
\bottomrule
\end{tabular}
\caption{Estimated multilevel exponents $(\alpha,\beta,\gamma)$ for the direct pricing estimator.}
\label{tab:direct-pricing-exponents}
\end{table}

\begin{table}[ht]
\centering
\setlength{\tabcolsep}{8pt}
\begin{tabular}{c|ccccccccccc}
\toprule
& $B_1$ & $B_2$ & $B_3$ & $B_4$ & $B_5$ & $B_6$ & $B_7$ & $B_8$ & $B_9$ & $B_{10}$ & \textbf{Mean} \\
\midrule
$\alpha$ & 1.152 & 1.071 & 1.174 & 1.054 & 1.071 & 1.090 & 1.104 & 1.050 & 1.370 & 1.045 & \textbf{1.118} \\
$\beta$  & 2.057 & 2.027 & 2.074 & 1.998 & 1.896 & 2.036 & 1.962 & 1.898 & 2.080 & 1.972 & \textbf{2.000} \\
$\gamma$ & 1.004 & 0.996 & 0.995 & 0.997 & 1.000 & 1.002 & 1.000 & 0.992 & 1.000 & 0.995 & \textbf{0.998} \\
\bottomrule
\end{tabular}
\caption{Estimated multilevel exponents $(\alpha,\beta,\gamma)$ for the first-order Greek estimator.}
\label{tab:first-greek-exponents}
\end{table}

\begin{table}[ht]
\centering
\setlength{\tabcolsep}{8pt}
\begin{tabular}{c|ccccccccccc}
\toprule
& $B_1$ & $B_2$ & $B_3$ & $B_4$ & $B_5$ & $B_6$ & $B_7$ & $B_8$ & $B_9$ & $B_{10}$ & \textbf{Mean} \\
\midrule
$\alpha$ & 1.101 & 0.914 & 0.970 & 0.872 & 1.041 & 1.076 & 1.061 & 0.984 & 0.991 & 0.998 & \textbf{1.001} \\
$\beta$  & 2.037 & 2.067 & 2.012 & 2.000 & 2.025 & 2.071 & 2.014 & 2.009 & 2.083 & 2.059 & \textbf{2.038} \\
$\gamma$ & 0.993 & 0.980 & 0.997 & 0.992 & 0.994 & 1.006 & 1.006 & 0.996 & 1.004 & 0.996 & \textbf{0.996} \\
\bottomrule
\end{tabular}
\caption{Estimated multilevel exponents $(\alpha,\beta,\gamma)$ for the second-order Greek estimator.}
\label{tab:second-greek-exponents}
\end{table}

The empirical rates shown in Figure~\ref{fig:mlmc-rates} and
Tables~\ref{tab:direct-pricing-exponents}--\ref{tab:second-greek-exponents}
report the estimated multilevel exponents for the direct pricing estimator,
the first-order Greek estimator, and the second-order Greek estimator.

For the direct pricing estimator, the averaged exponents over ten independent
realizations of the backward Brownian motion are
\begin{align*}
\alpha=1.145,\quad
\beta=2.045,\quad
\gamma=0.987.
\end{align*}
The first-order Greek estimator gives
\begin{align*}
\alpha=1.118,\quad
\beta=2.000,\quad
\gamma=0.998,
\end{align*}
while the second-order Greek estimator gives
\begin{align*}
\alpha=1.001,\quad
\beta=2.038,\quad
\gamma=0.996.
\end{align*}

Across all three estimators, the variance and cost exponents are particularly
stable with respect to the realization of the backward Brownian motion.
Moreover, the estimated exponents remain close to
\begin{align*}
(\alpha,\beta,\gamma)=(1,2,1),
\end{align*}
which provides strong numerical evidence for the expected multilevel scaling
relations
\begin{align*}
\alpha=1,
\quad
\beta=2,
\quad
\gamma=1.
\end{align*}
These results are in excellent agreement with the theoretical predictions
obtained from the global strong-error framework and the associated
Forward--Backward Taylor schemes developed in
Section~\ref{sec:Strong-Order-One Numerical Schemes for Pricing and Greek Estimators}.
Consequently, the assumptions required by
Theorem~\ref{thm:QAMLMC for u when given B}
and
Corollary~\ref{cor:app-qmlmc-greeks}
are numerically supported.

\section{Discussion and Future Work}
\label{sec:discussion}

In this paper, we developed a quantum-accelerated multilevel Monte
Carlo (QA-MLMC) framework for stochastic partial differential equations (SPDEs) arising
from stochastic-environment financial models.
The main idea is to exploit the BDSDE representation of the SPDE solution
to reformulate pricing and sensitivity estimation as conditional and
nested expectation estimation problems.
These expectation problems can then be estimated by QA-MLMC, leading to
quadratic quantum speedups for derivative pricing and Greek
estimation in stochastic environments.

The proposed framework contains two main components.
First, for a fixed realization of the backward Brownian motion, we
construct a conditional QA-MLMC estimator for the SPDE quantity of
interest.
Second, for quantities involving an additional average over the random
environment, we develop a nested QA-MLMC estimator.
Under the multilevel assumptions on bias, variance, and cost, these
estimators achieve quantum sampling complexity of order
\[
\widetilde{\mathcal O}(\epsilon^{-1}),
\]
for an additive error tolerance
$\epsilon$.
We also show that the same algorithmic structure applies not only to
prices, but also to first-order and second-order Greeks.

A key numerical ingredient is the Forward--Backward Taylor
scheme developed in this work.
Unlike discretization methods that eliminate the forward Brownian
randomness through conditional expectations at each time step, the
Forward--Backward Taylor discretization keeps the joint pathwise
dependence on the forward and backward Brownian motions.
This feature is essential for constructing coupled level differences in
the conditional multilevel estimator.
The strong-error order one convergence results for pricing and Greek
estimators provide the numerical foundation for the complexity analysis
of the proposed quantum algorithms.

There are several natural directions for future work.
First, although the present paper focuses on SPDEs arising from
stochastic-environment financial models, the conditional and nested
estimation structure is more general.
It would be interesting to extend the proposed QA-MLMC framework to
broader classes of SPDEs and stochastic systems with common noise,
random coefficients, or random media, where BDSDE-type representations
can be used to connect SPDE solutions with expectation estimation
problems.

Second, the Forward--Backward Taylor discretization introduced in this
work suggests several further numerical developments.
One direction is to construct higher-order or adaptive
forward--backward schemes that remain compatible with conditional
multilevel couplings.
Another is to analyze such schemes under weaker regularity assumptions,
more general noise structures, or path-dependent functionals.
A better understanding of the relation between strong convergence,
level variance, and sample cost would also help optimize the resulting
MLMC and QA-MLMC complexity.

Finally, the nested estimator developed here could be extended to more
general nested or nonlinear quantities, such as risk measures
 and multi-layer conditional expectations.
These problems arise naturally in financial applications and in
stochastic systems with random environments, and provide a promising
setting for further applications of quantum computing.

\begin{acknowledgments}
JPL acknowledges support from Quantum Science and Technology--National Science and Technology Major Project (Grant No.~2024ZD0300500), Excellent Young Scientists Fund Program, start-up funding from Tsinghua University and Beijing Institute of Mathematical Sciences and Applications. 
Z.L. was supported by the Beijing Natural Science Foundation Key Program (Grant No.~Z220002). R.L. and Z.L. were supported by BMSTC and ACZSP (Grant No.~Z221100002722017). 
\end{acknowledgments}

\bibliography{ref.bib}

\begin{thebibliography}{76}%
\makeatletter
\providecommand \@ifxundefined [1]{%
 \@ifx{#1\undefined}
}%
\providecommand \@ifnum [1]{%
 \ifnum #1\expandafter \@firstoftwo
 \else \expandafter \@secondoftwo
 \fi
}%
\providecommand \@ifx [1]{%
 \ifx #1\expandafter \@firstoftwo
 \else \expandafter \@secondoftwo
 \fi
}%
\providecommand \natexlab [1]{#1}%
\providecommand \enquote  [1]{``#1''}%
\providecommand \bibnamefont  [1]{#1}%
\providecommand \bibfnamefont [1]{#1}%
\providecommand \citenamefont [1]{#1}%
\providecommand \href@noop [0]{\@secondoftwo}%
\providecommand \href [0]{\begingroup \@sanitize@url \@href}%
\providecommand \@href[1]{\@@startlink{#1}\@@href}%
\providecommand \@@href[1]{\endgroup#1\@@endlink}%
\providecommand \@sanitize@url [0]{\catcode `\\12\catcode `\$12\catcode `\&12\catcode `\#12\catcode `\^12\catcode `\_12\catcode `\%12\relax}%
\providecommand \@@startlink[1]{}%
\providecommand \@@endlink[0]{}%
\providecommand \url  [0]{\begingroup\@sanitize@url \@url }%
\providecommand \@url [1]{\endgroup\@href {#1}{\urlprefix }}%
\providecommand \urlprefix  [0]{URL }%
\providecommand \Eprint [0]{\href }%
\providecommand \doibase [0]{https://doi.org/}%
\providecommand \selectlanguage [0]{\@gobble}%
\providecommand \bibinfo  [0]{\@secondoftwo}%
\providecommand \bibfield  [0]{\@secondoftwo}%
\providecommand \translation [1]{[#1]}%
\providecommand \BibitemOpen [0]{}%
\providecommand \bibitemStop [0]{}%
\providecommand \bibitemNoStop [0]{.\EOS\space}%
\providecommand \EOS [0]{\spacefactor3000\relax}%
\providecommand \BibitemShut  [1]{\csname bibitem#1\endcsname}%
\let\auto@bib@innerbib\@empty
\bibitem [{\citenamefont {Black}\ and\ \citenamefont {Scholes}(1973)}]{BlackScholes1973}%
  \BibitemOpen
  \bibfield  {author} {\bibinfo {author} {\bibfnamefont {F.}~\bibnamefont {Black}}\ and\ \bibinfo {author} {\bibfnamefont {M.}~\bibnamefont {Scholes}},\ }\bibfield  {title} {\bibinfo {title} {The pricing of options and corporate liabilities},\ }\href {https://doi.org/10.1086/260062} {\bibfield  {journal} {\bibinfo  {journal} {Journal of Political Economy}\ }\textbf {\bibinfo {volume} {81}},\ \bibinfo {pages} {637} (\bibinfo {year} {1973})}\BibitemShut {NoStop}%
\bibitem [{\citenamefont {Heston}(1993)}]{Heston1993}%
  \BibitemOpen
  \bibfield  {author} {\bibinfo {author} {\bibfnamefont {S.~L.}\ \bibnamefont {Heston}},\ }\bibfield  {title} {\bibinfo {title} {A closed-form solution for options with stochastic volatility with applications to bond and currency options},\ }\href {https://doi.org/10.1093/rfs/6.2.327} {\bibfield  {journal} {\bibinfo  {journal} {The Review of Financial Studies}\ }\textbf {\bibinfo {volume} {6}},\ \bibinfo {pages} {327} (\bibinfo {year} {1993})}\BibitemShut {NoStop}%
\bibitem [{\citenamefont {Fouque}\ \emph {et~al.}(2000)\citenamefont {Fouque}, \citenamefont {Papanicolaou},\ and\ \citenamefont {Sircar}}]{FouquePapanicolaouSircar2000}%
  \BibitemOpen
  \bibfield  {author} {\bibinfo {author} {\bibfnamefont {J.-P.}\ \bibnamefont {Fouque}}, \bibinfo {author} {\bibfnamefont {G.}~\bibnamefont {Papanicolaou}},\ and\ \bibinfo {author} {\bibfnamefont {R.}~\bibnamefont {Sircar}},\ }\href@noop {} {\emph {\bibinfo {title} {Derivatives in Financial Markets with Stochastic Volatility}}}\ (\bibinfo  {publisher} {Cambridge University Press},\ \bibinfo {address} {Cambridge},\ \bibinfo {year} {2000})\BibitemShut {NoStop}%
\bibitem [{\citenamefont {Bergomi}(2015)}]{Bergomi2015}%
  \BibitemOpen
  \bibfield  {author} {\bibinfo {author} {\bibfnamefont {L.}~\bibnamefont {Bergomi}},\ }\href {https://www.lorenzobergomi.com/} {\emph {\bibinfo {title} {Stochastic Volatility Modeling}}},\ Financial Mathematics Series\ (\bibinfo  {publisher} {Chapman and Hall/CRC},\ \bibinfo {year} {2015})\BibitemShut {NoStop}%
\bibitem [{\citenamefont {Heath}\ \emph {et~al.}(1992)\citenamefont {Heath}, \citenamefont {Jarrow},\ and\ \citenamefont {Morton}}]{HeathJarrowMorton1992}%
  \BibitemOpen
  \bibfield  {author} {\bibinfo {author} {\bibfnamefont {D.}~\bibnamefont {Heath}}, \bibinfo {author} {\bibfnamefont {R.}~\bibnamefont {Jarrow}},\ and\ \bibinfo {author} {\bibfnamefont {A.}~\bibnamefont {Morton}},\ }\bibfield  {title} {\bibinfo {title} {Bond pricing and the term structure of interest rates: A new methodology},\ }\href {https://doi.org/10.2307/2951677} {\bibfield  {journal} {\bibinfo  {journal} {Econometrica}\ }\textbf {\bibinfo {volume} {60}},\ \bibinfo {pages} {77} (\bibinfo {year} {1992})}\BibitemShut {NoStop}%
\bibitem [{\citenamefont {Santa-Clara}\ and\ \citenamefont {Sornette}(2001)}]{santa2001dynamics}%
  \BibitemOpen
  \bibfield  {author} {\bibinfo {author} {\bibfnamefont {P.}~\bibnamefont {Santa-Clara}}\ and\ \bibinfo {author} {\bibfnamefont {D.}~\bibnamefont {Sornette}},\ }\bibfield  {title} {\bibinfo {title} {The dynamics of the forward interest rate curve with stochastic string shocks},\ }\href@noop {} {\bibfield  {journal} {\bibinfo  {journal} {The Review of Financial Studies}\ }\textbf {\bibinfo {volume} {14}},\ \bibinfo {pages} {149} (\bibinfo {year} {2001})},\ \Eprint {https://arxiv.org/abs/arXiv:cond-mat/9801321} {arXiv:cond-mat/9801321} \BibitemShut {NoStop}%
\bibitem [{\citenamefont {Cont}(2005)}]{cont2005modeling}%
  \BibitemOpen
  \bibfield  {author} {\bibinfo {author} {\bibfnamefont {R.}~\bibnamefont {Cont}},\ }\bibfield  {title} {\bibinfo {title} {Modeling term structure dynamics: an infinite dimensional approach},\ }\href@noop {} {\bibfield  {journal} {\bibinfo  {journal} {International Journal of theoretical and applied finance}\ }\textbf {\bibinfo {volume} {8}},\ \bibinfo {pages} {357} (\bibinfo {year} {2005})},\ \Eprint {https://arxiv.org/abs/arXiv:cond-mat/9902018} {arXiv:cond-mat/9902018} \BibitemShut {NoStop}%
\bibitem [{\citenamefont {Benth}\ \emph {et~al.}(2008)\citenamefont {Benth}, \citenamefont {Benth},\ and\ \citenamefont {Koekebakker}}]{benth2008stochastic}%
  \BibitemOpen
  \bibfield  {author} {\bibinfo {author} {\bibfnamefont {F.~E.}\ \bibnamefont {Benth}}, \bibinfo {author} {\bibfnamefont {J.~S.}\ \bibnamefont {Benth}},\ and\ \bibinfo {author} {\bibfnamefont {S.}~\bibnamefont {Koekebakker}},\ }\href@noop {} {\emph {\bibinfo {title} {Stochastic modelling of electricity and related markets}}},\ Vol.~\bibinfo {volume} {11}\ (\bibinfo  {publisher} {World Scientific},\ \bibinfo {year} {2008})\BibitemShut {NoStop}%
\bibitem [{\citenamefont {Hull}\ and\ \citenamefont {Basu}(2016)}]{hull2016options}%
  \BibitemOpen
  \bibfield  {author} {\bibinfo {author} {\bibfnamefont {J.~C.}\ \bibnamefont {Hull}}\ and\ \bibinfo {author} {\bibfnamefont {S.}~\bibnamefont {Basu}},\ }\href@noop {} {\emph {\bibinfo {title} {Options, futures, and other derivatives}}}\ (\bibinfo  {publisher} {Pearson Education India},\ \bibinfo {year} {2016})\BibitemShut {NoStop}%
\bibitem [{\citenamefont {Printems}(2001)}]{printems2001discretization}%
  \BibitemOpen
  \bibfield  {author} {\bibinfo {author} {\bibfnamefont {J.}~\bibnamefont {Printems}},\ }\bibfield  {title} {\bibinfo {title} {On the discretization in time of parabolic stochastic partial differential equations},\ }\href@noop {} {\bibfield  {journal} {\bibinfo  {journal} {ESAIM: Mathematical Modelling and Numerical Analysis}\ }\textbf {\bibinfo {volume} {35}},\ \bibinfo {pages} {1055} (\bibinfo {year} {2001})}\BibitemShut {NoStop}%
\bibitem [{\citenamefont {Larsson}\ and\ \citenamefont {Thom{\'e}e}(2003)}]{larsson2003partial}%
  \BibitemOpen
  \bibfield  {author} {\bibinfo {author} {\bibfnamefont {S.}~\bibnamefont {Larsson}}\ and\ \bibinfo {author} {\bibfnamefont {V.}~\bibnamefont {Thom{\'e}e}},\ }\href@noop {} {\emph {\bibinfo {title} {Partial differential equations with numerical methods}}}\ (\bibinfo  {publisher} {Springer},\ \bibinfo {year} {2003})\BibitemShut {NoStop}%
\bibitem [{\citenamefont {Gy{\"o}ngy}(1999)}]{gyongy1999lattice}%
  \BibitemOpen
  \bibfield  {author} {\bibinfo {author} {\bibfnamefont {I.}~\bibnamefont {Gy{\"o}ngy}},\ }\bibfield  {title} {\bibinfo {title} {Lattice approximations for stochastic quasi-linear parabolic partial differential equations driven by space-time white noise {II}},\ }\href@noop {} {\bibfield  {journal} {\bibinfo  {journal} {Potential Analysis}\ }\textbf {\bibinfo {volume} {11}},\ \bibinfo {pages} {1} (\bibinfo {year} {1999})}\BibitemShut {NoStop}%
\bibitem [{\citenamefont {Lord}\ and\ \citenamefont {Rougemont}(2004)}]{lord2004numerical}%
  \BibitemOpen
  \bibfield  {author} {\bibinfo {author} {\bibfnamefont {G.~J.}\ \bibnamefont {Lord}}\ and\ \bibinfo {author} {\bibfnamefont {J.}~\bibnamefont {Rougemont}},\ }\bibfield  {title} {\bibinfo {title} {A numerical scheme for stochastic pdes with gevrey regularity},\ }\href@noop {} {\bibfield  {journal} {\bibinfo  {journal} {IMA journal of numerical analysis}\ }\textbf {\bibinfo {volume} {24}},\ \bibinfo {pages} {587} (\bibinfo {year} {2004})}\BibitemShut {NoStop}%
\bibitem [{\citenamefont {Brenner}\ and\ \citenamefont {Scott}(2008)}]{brenner2008mathematical}%
  \BibitemOpen
  \bibfield  {author} {\bibinfo {author} {\bibfnamefont {S.~C.}\ \bibnamefont {Brenner}}\ and\ \bibinfo {author} {\bibfnamefont {L.~R.}\ \bibnamefont {Scott}},\ }\href@noop {} {\emph {\bibinfo {title} {The mathematical theory of finite element methods}}}\ (\bibinfo  {publisher} {Springer},\ \bibinfo {year} {2008})\BibitemShut {NoStop}%
\bibitem [{\citenamefont {Bungartz}\ and\ \citenamefont {Griebel}(2004)}]{BungartzGriebel2004}%
  \BibitemOpen
  \bibfield  {author} {\bibinfo {author} {\bibfnamefont {H.-J.}\ \bibnamefont {Bungartz}}\ and\ \bibinfo {author} {\bibfnamefont {M.}~\bibnamefont {Griebel}},\ }\bibfield  {title} {\bibinfo {title} {Sparse grids},\ }\href {https://www.cambridge.org/core/journals/acta-numerica/article/sparse-grids/47EA2993DB84C9D231BB96ECB26F615C} {\bibfield  {journal} {\bibinfo  {journal} {Acta Numerica}\ }\textbf {\bibinfo {volume} {13}},\ \bibinfo {pages} {147} (\bibinfo {year} {2004})}\BibitemShut {NoStop}%
\bibitem [{\citenamefont {Hout}\ \emph {et~al.}(2010)\citenamefont {Hout}, \citenamefont {Foulon} \emph {et~al.}}]{hout2010adi}%
  \BibitemOpen
  \bibfield  {author} {\bibinfo {author} {\bibfnamefont {I.}~\bibnamefont {Hout}}, \bibinfo {author} {\bibfnamefont {S.}~\bibnamefont {Foulon}}, \emph {et~al.},\ }\bibfield  {title} {\bibinfo {title} {{ADI} finite difference schemes for option pricing in the heston model with correlation.},\ }\href@noop {} {\bibfield  {journal} {\bibinfo  {journal} {International Journal of Numerical Analysis \& Modeling}\ }\textbf {\bibinfo {volume} {7}} (\bibinfo {year} {2010})},\ \Eprint {https://arxiv.org/abs/arXiv:0811.3427v1} {arXiv:0811.3427v1} \BibitemShut {NoStop}%
\bibitem [{\citenamefont {Pardoux}\ and\ \citenamefont {Peng}(1990)}]{Pardoux1990AdaptedSO}%
  \BibitemOpen
  \bibfield  {author} {\bibinfo {author} {\bibfnamefont {E.}~\bibnamefont {Pardoux}}\ and\ \bibinfo {author} {\bibfnamefont {S.}~\bibnamefont {Peng}},\ }\bibfield  {title} {\bibinfo {title} {Adapted solution of a backward stochastic differential equation},\ }\href@noop {} {\bibfield  {journal} {\bibinfo  {journal} {Systems \& control letters}\ }\textbf {\bibinfo {volume} {14}},\ \bibinfo {pages} {55} (\bibinfo {year} {1990})}\BibitemShut {NoStop}%
\bibitem [{\citenamefont {Pardoux}\ and\ \citenamefont {Peng}(2005)}]{Pardoux1992BackwardSD}%
  \BibitemOpen
  \bibfield  {author} {\bibinfo {author} {\bibfnamefont {E.}~\bibnamefont {Pardoux}}\ and\ \bibinfo {author} {\bibfnamefont {S.}~\bibnamefont {Peng}},\ }\bibfield  {title} {\bibinfo {title} {Backward stochastic differential equations and quasilinear parabolic partial differential equations},\ }in\ \href@noop {} {\emph {\bibinfo {booktitle} {Stochastic Partial Differential Equations and Their Applications: Proceedings of IFIP WG 7/1 International Conference University of North Carolina at Charlotte, NC June 6--8, 1991}}}\ (\bibinfo {organization} {Springer},\ \bibinfo {year} {2005})\ pp.\ \bibinfo {pages} {200--217}\BibitemShut {NoStop}%
\bibitem [{\citenamefont {Kobylanski}(2000)}]{kobylanski2000backward}%
  \BibitemOpen
  \bibfield  {author} {\bibinfo {author} {\bibfnamefont {M.}~\bibnamefont {Kobylanski}},\ }\bibfield  {title} {\bibinfo {title} {Backward stochastic differential equations and partial differential equations with quadratic growth},\ }\href@noop {} {\bibfield  {journal} {\bibinfo  {journal} {Annals of probability}\ ,\ \bibinfo {pages} {558}} (\bibinfo {year} {2000})}\BibitemShut {NoStop}%
\bibitem [{\citenamefont {Peng}(1992)}]{peng1992generalized}%
  \BibitemOpen
  \bibfield  {author} {\bibinfo {author} {\bibfnamefont {S.}~\bibnamefont {Peng}},\ }\bibfield  {title} {\bibinfo {title} {A generalized dynamic programming principle and {Hamilton-Jacobi-Bellman} equation},\ }\href@noop {} {\bibfield  {journal} {\bibinfo  {journal} {Stochastics: An International Journal of Probability and Stochastic Processes}\ }\textbf {\bibinfo {volume} {38}},\ \bibinfo {pages} {119} (\bibinfo {year} {1992})}\BibitemShut {NoStop}%
\bibitem [{\citenamefont {Yong}\ and\ \citenamefont {Zhou}(1999)}]{yong1999stochastic}%
  \BibitemOpen
  \bibfield  {author} {\bibinfo {author} {\bibfnamefont {J.}~\bibnamefont {Yong}}\ and\ \bibinfo {author} {\bibfnamefont {X.~Y.}\ \bibnamefont {Zhou}},\ }\href@noop {} {\emph {\bibinfo {title} {Stochastic controls: {H}amiltonian systems and {HJB} equations}}},\ Vol.~\bibinfo {volume} {43}\ (\bibinfo  {publisher} {Springer Science \& Business Media},\ \bibinfo {year} {1999})\BibitemShut {NoStop}%
\bibitem [{\citenamefont {El~Karoui}\ \emph {et~al.}(1997)\citenamefont {El~Karoui}, \citenamefont {Peng},\ and\ \citenamefont {Quenez}}]{Karoui1997BackwardSD}%
  \BibitemOpen
  \bibfield  {author} {\bibinfo {author} {\bibfnamefont {N.}~\bibnamefont {El~Karoui}}, \bibinfo {author} {\bibfnamefont {S.}~\bibnamefont {Peng}},\ and\ \bibinfo {author} {\bibfnamefont {M.~C.}\ \bibnamefont {Quenez}},\ }\bibfield  {title} {\bibinfo {title} {Backward stochastic differential equations in finance},\ }\href@noop {} {\bibfield  {journal} {\bibinfo  {journal} {Mathematical finance}\ }\textbf {\bibinfo {volume} {7}},\ \bibinfo {pages} {1} (\bibinfo {year} {1997})}\BibitemShut {NoStop}%
\bibitem [{\citenamefont {Peng}(1997)}]{peng1997backward}%
  \BibitemOpen
  \bibfield  {author} {\bibinfo {author} {\bibfnamefont {S.}~\bibnamefont {Peng}},\ }\bibfield  {title} {\bibinfo {title} {Backward {SDE} and related g-expectation},\ }\href@noop {} {\bibfield  {journal} {\bibinfo  {journal} {Pitman research notes in mathematics series}\ ,\ \bibinfo {pages} {141}} (\bibinfo {year} {1997})}\BibitemShut {NoStop}%
\bibitem [{\citenamefont {Peng}(2004)}]{peng2004nonlinear}%
  \BibitemOpen
  \bibfield  {author} {\bibinfo {author} {\bibfnamefont {S.}~\bibnamefont {Peng}},\ }\bibfield  {title} {\bibinfo {title} {Nonlinear expectations, nonlinear evaluations and risk measures},\ }in\ \href@noop {} {\emph {\bibinfo {booktitle} {Stochastic Methods in Finance: Lectures given at the CIME-EMS Summer School held in Bressanone/Brixen, Italy, July 6-12, 2003}}}\ (\bibinfo  {publisher} {Springer},\ \bibinfo {year} {2004})\ pp.\ \bibinfo {pages} {165--253}\BibitemShut {NoStop}%
\bibitem [{\citenamefont {Pardoux}\ and\ \citenamefont {Peng}(1994)}]{Pardoux1994BackwardDS}%
  \BibitemOpen
  \bibfield  {author} {\bibinfo {author} {\bibfnamefont {{\'E}.}~\bibnamefont {Pardoux}}\ and\ \bibinfo {author} {\bibfnamefont {S.}~\bibnamefont {Peng}},\ }\bibfield  {title} {\bibinfo {title} {Backward doubly stochastic differential equations and systems of quasilinear{ SPDE}s},\ }\href@noop {} {\bibfield  {journal} {\bibinfo  {journal} {Probability theory and related fields}\ }\textbf {\bibinfo {volume} {98}},\ \bibinfo {pages} {209} (\bibinfo {year} {1994})}\BibitemShut {NoStop}%
\bibitem [{\citenamefont {Zhang}(2004)}]{zhang2004numerical}%
  \BibitemOpen
  \bibfield  {author} {\bibinfo {author} {\bibfnamefont {J.}~\bibnamefont {Zhang}},\ }\bibfield  {title} {\bibinfo {title} {A numerical scheme for {BSDE}s},\ }\href@noop {} {\bibfield  {journal} {\bibinfo  {journal} {The annals of applied probability}\ }\textbf {\bibinfo {volume} {14}},\ \bibinfo {pages} {459} (\bibinfo {year} {2004})}\BibitemShut {NoStop}%
\bibitem [{\citenamefont {Bouchard}\ and\ \citenamefont {Touzi}(2004)}]{bouchard2004discrete}%
  \BibitemOpen
  \bibfield  {author} {\bibinfo {author} {\bibfnamefont {B.}~\bibnamefont {Bouchard}}\ and\ \bibinfo {author} {\bibfnamefont {N.}~\bibnamefont {Touzi}},\ }\bibfield  {title} {\bibinfo {title} {Discrete-time approximation and {Monte-Carlo} simulation of backward stochastic differential equations},\ }\href@noop {} {\bibfield  {journal} {\bibinfo  {journal} {Stochastic Processes and their applications}\ }\textbf {\bibinfo {volume} {111}},\ \bibinfo {pages} {175} (\bibinfo {year} {2004})}\BibitemShut {NoStop}%
\bibitem [{\citenamefont {Gobet}\ \emph {et~al.}(2005)\citenamefont {Gobet}, \citenamefont {Lemor},\ and\ \citenamefont {Warin}}]{gobet2005regression}%
  \BibitemOpen
  \bibfield  {author} {\bibinfo {author} {\bibfnamefont {E.}~\bibnamefont {Gobet}}, \bibinfo {author} {\bibfnamefont {J.-P.}\ \bibnamefont {Lemor}},\ and\ \bibinfo {author} {\bibfnamefont {X.}~\bibnamefont {Warin}},\ }\bibfield  {title} {\bibinfo {title} {A regression-based {Monte Carlo} method to solve backward stochastic differential equations},\ }\href@noop {} {\bibfield  {journal} {\bibinfo  {journal} {The Annals of Applied Probability}\ }\textbf {\bibinfo {volume} {15}},\ \bibinfo {pages} {2172} (\bibinfo {year} {2005})},\ \Eprint {https://arxiv.org/abs/arXiv:math/0508491} {arXiv:math/0508491} \BibitemShut {NoStop}%
\bibitem [{\citenamefont {Han}\ \emph {et~al.}(2017)\citenamefont {Han}, \citenamefont {Jentzen} \emph {et~al.}}]{han2017deep}%
  \BibitemOpen
  \bibfield  {author} {\bibinfo {author} {\bibfnamefont {J.}~\bibnamefont {Han}}, \bibinfo {author} {\bibfnamefont {A.}~\bibnamefont {Jentzen}}, \emph {et~al.},\ }\bibfield  {title} {\bibinfo {title} {Deep learning-based numerical methods for high-dimensional parabolic partial differential equations and backward stochastic differential equations},\ }\href@noop {} {\bibfield  {journal} {\bibinfo  {journal} {Communications in mathematics and statistics}\ }\textbf {\bibinfo {volume} {5}},\ \bibinfo {pages} {349} (\bibinfo {year} {2017})},\ \Eprint {https://arxiv.org/abs/arXiv:1706.04702} {arXiv:1706.04702} \BibitemShut {NoStop}%
\bibitem [{\citenamefont {Han}\ \emph {et~al.}(2018)\citenamefont {Han}, \citenamefont {Jentzen},\ and\ \citenamefont {E}}]{han2018solving}%
  \BibitemOpen
  \bibfield  {author} {\bibinfo {author} {\bibfnamefont {J.}~\bibnamefont {Han}}, \bibinfo {author} {\bibfnamefont {A.}~\bibnamefont {Jentzen}},\ and\ \bibinfo {author} {\bibfnamefont {W.}~\bibnamefont {E}},\ }\bibfield  {title} {\bibinfo {title} {Solving high-dimensional partial differential equations using deep learning},\ }\href@noop {} {\bibfield  {journal} {\bibinfo  {journal} {Proceedings of the National Academy of Sciences}\ }\textbf {\bibinfo {volume} {115}},\ \bibinfo {pages} {8505} (\bibinfo {year} {2018})},\ \Eprint {https://arxiv.org/abs/arXiv:1707.02568} {arXiv:1707.02568} \BibitemShut {NoStop}%
\bibitem [{\citenamefont {Beck}\ \emph {et~al.}(2019)\citenamefont {Beck}, \citenamefont {E},\ and\ \citenamefont {Jentzen}}]{beck2019machine}%
  \BibitemOpen
  \bibfield  {author} {\bibinfo {author} {\bibfnamefont {C.}~\bibnamefont {Beck}}, \bibinfo {author} {\bibfnamefont {W.}~\bibnamefont {E}},\ and\ \bibinfo {author} {\bibfnamefont {A.}~\bibnamefont {Jentzen}},\ }\bibfield  {title} {\bibinfo {title} {Machine learning approximation algorithms for high-dimensional fully nonlinear partial differential equations and second-order backward stochastic differential equations},\ }\href@noop {} {\bibfield  {journal} {\bibinfo  {journal} {Journal of Nonlinear Science}\ }\textbf {\bibinfo {volume} {29}},\ \bibinfo {pages} {1563} (\bibinfo {year} {2019})},\ \Eprint {https://arxiv.org/abs/arXiv:1709.05963} {arXiv:1709.05963} \BibitemShut {NoStop}%
\bibitem [{\citenamefont {Aman}(2013)}]{aman2013numerical}%
  \BibitemOpen
  \bibfield  {author} {\bibinfo {author} {\bibfnamefont {A.}~\bibnamefont {Aman}},\ }\bibfield  {title} {\bibinfo {title} {A numerical scheme for backward doubly stochastic differential equations},\ }\href@noop {} {\bibfield  {journal} {\bibinfo  {journal} {Bernoulli}\ }\textbf {\bibinfo {volume} {19}},\ \bibinfo {pages} {93} (\bibinfo {year} {2013})},\ \Eprint {https://arxiv.org/abs/arXiv:1011.6170} {arXiv:1011.6170} \BibitemShut {NoStop}%
\bibitem [{\citenamefont {Bachouch}\ \emph {et~al.}(2016)\citenamefont {Bachouch}, \citenamefont {Gobet},\ and\ \citenamefont {Matoussi}}]{bachouch2016empirical}%
  \BibitemOpen
  \bibfield  {author} {\bibinfo {author} {\bibfnamefont {A.}~\bibnamefont {Bachouch}}, \bibinfo {author} {\bibfnamefont {E.}~\bibnamefont {Gobet}},\ and\ \bibinfo {author} {\bibfnamefont {A.}~\bibnamefont {Matoussi}},\ }\bibfield  {title} {\bibinfo {title} {Empirical regression method for backward doubly stochastic differential equations},\ }\href@noop {} {\bibfield  {journal} {\bibinfo  {journal} {SIAM/ASA Journal on Uncertainty Quantification}\ }\textbf {\bibinfo {volume} {4}},\ \bibinfo {pages} {358} (\bibinfo {year} {2016})}\BibitemShut {NoStop}%
\bibitem [{\citenamefont {Bao}\ \emph {et~al.}(2016)\citenamefont {Bao}, \citenamefont {Cao}, \citenamefont {Meir},\ and\ \citenamefont {Zhao}}]{bao2016first}%
  \BibitemOpen
  \bibfield  {author} {\bibinfo {author} {\bibfnamefont {F.}~\bibnamefont {Bao}}, \bibinfo {author} {\bibfnamefont {Y.}~\bibnamefont {Cao}}, \bibinfo {author} {\bibfnamefont {A.}~\bibnamefont {Meir}},\ and\ \bibinfo {author} {\bibfnamefont {W.}~\bibnamefont {Zhao}},\ }\bibfield  {title} {\bibinfo {title} {A first order scheme for backward doubly stochastic differential equations},\ }\href@noop {} {\bibfield  {journal} {\bibinfo  {journal} {SIAM/ASA Journal on Uncertainty Quantification}\ }\textbf {\bibinfo {volume} {4}},\ \bibinfo {pages} {413} (\bibinfo {year} {2016})}\BibitemShut {NoStop}%
\bibitem [{\citenamefont {Giles}(2008)}]{giles2008multilevel}%
  \BibitemOpen
  \bibfield  {author} {\bibinfo {author} {\bibfnamefont {M.~B.}\ \bibnamefont {Giles}},\ }\bibfield  {title} {\bibinfo {title} {Multilevel {Monte Carlo} path simulation},\ }\href@noop {} {\bibfield  {journal} {\bibinfo  {journal} {Operations research}\ }\textbf {\bibinfo {volume} {56}},\ \bibinfo {pages} {607} (\bibinfo {year} {2008})}\BibitemShut {NoStop}%
\bibitem [{\citenamefont {Giles}(2009{\natexlab{a}})}]{giles2009multilevel}%
  \BibitemOpen
  \bibfield  {author} {\bibinfo {author} {\bibfnamefont {M.~B.}\ \bibnamefont {Giles}},\ }\bibfield  {title} {\bibinfo {title} {Multilevel {Monte Carlo} for basket options},\ }in\ \href@noop {} {\emph {\bibinfo {booktitle} {Proceedings of the 2009 Winter Simulation Conference (WSC)}}}\ (\bibinfo {organization} {IEEE},\ \bibinfo {year} {2009})\ pp.\ \bibinfo {pages} {1283--1290}\BibitemShut {NoStop}%
\bibitem [{\citenamefont {Burgos}\ and\ \citenamefont {Giles}(2012)}]{burgos2012computing}%
  \BibitemOpen
  \bibfield  {author} {\bibinfo {author} {\bibfnamefont {S.}~\bibnamefont {Burgos}}\ and\ \bibinfo {author} {\bibfnamefont {M.~B.}\ \bibnamefont {Giles}},\ }\bibfield  {title} {\bibinfo {title} {Computing {Greeks} using multilevel path simulation},\ }in\ \href@noop {} {\emph {\bibinfo {booktitle} {Monte Carlo and Quasi-Monte Carlo Methods 2010}}}\ (\bibinfo  {publisher} {Springer},\ \bibinfo {year} {2012})\ pp.\ \bibinfo {pages} {281--296}\BibitemShut {NoStop}%
\bibitem [{\citenamefont {Giles}\ and\ \citenamefont {Szpruch}(2018)}]{giles2018multilevel}%
  \BibitemOpen
  \bibfield  {author} {\bibinfo {author} {\bibfnamefont {M.~B.}\ \bibnamefont {Giles}}\ and\ \bibinfo {author} {\bibfnamefont {L.}~\bibnamefont {Szpruch}},\ }\bibfield  {title} {\bibinfo {title} {Multilevel {Monte Carlo} methods for applications in finance},\ }\href@noop {} {\bibfield  {journal} {\bibinfo  {journal} {High-Performance Computing in Finance}\ ,\ \bibinfo {pages} {197}} (\bibinfo {year} {2018})},\ \Eprint {https://arxiv.org/abs/arXiv:1212.1377} {arXiv:1212.1377} \BibitemShut {NoStop}%
\bibitem [{\citenamefont {Giles}\ and\ \citenamefont {Haji-Ali}(2019)}]{giles2019multilevel}%
  \BibitemOpen
  \bibfield  {author} {\bibinfo {author} {\bibfnamefont {M.~B.}\ \bibnamefont {Giles}}\ and\ \bibinfo {author} {\bibfnamefont {A.-L.}\ \bibnamefont {Haji-Ali}},\ }\bibfield  {title} {\bibinfo {title} {Multilevel nested simulation for efficient risk estimation},\ }\href@noop {} {\bibfield  {journal} {\bibinfo  {journal} {SIAM/ASA Journal on Uncertainty Quantification}\ }\textbf {\bibinfo {volume} {7}},\ \bibinfo {pages} {497} (\bibinfo {year} {2019})},\ \Eprint {https://arxiv.org/abs/arXiv:1802.05016} {arXiv:1802.05016} \BibitemShut {NoStop}%
\bibitem [{\citenamefont {Barth}\ \emph {et~al.}(2013)\citenamefont {Barth}, \citenamefont {Lang},\ and\ \citenamefont {Schwab}}]{barth2013multilevel}%
  \BibitemOpen
  \bibfield  {author} {\bibinfo {author} {\bibfnamefont {A.}~\bibnamefont {Barth}}, \bibinfo {author} {\bibfnamefont {A.}~\bibnamefont {Lang}},\ and\ \bibinfo {author} {\bibfnamefont {C.}~\bibnamefont {Schwab}},\ }\bibfield  {title} {\bibinfo {title} {Multilevel {Monte Carlo} method for parabolic stochastic partial differential equations},\ }\href@noop {} {\bibfield  {journal} {\bibinfo  {journal} {BIT Numerical Mathematics}\ }\textbf {\bibinfo {volume} {53}},\ \bibinfo {pages} {3} (\bibinfo {year} {2013})}\BibitemShut {NoStop}%
\bibitem [{\citenamefont {Iliev}\ \emph {et~al.}(2017)\citenamefont {Iliev}, \citenamefont {Mohring},\ and\ \citenamefont {Shegunov}}]{iliev2017renormalization}%
  \BibitemOpen
  \bibfield  {author} {\bibinfo {author} {\bibfnamefont {O.}~\bibnamefont {Iliev}}, \bibinfo {author} {\bibfnamefont {J.}~\bibnamefont {Mohring}},\ and\ \bibinfo {author} {\bibfnamefont {N.}~\bibnamefont {Shegunov}},\ }\bibfield  {title} {\bibinfo {title} {Renormalization based {MLMC} method for scalar elliptic{ SPDE}},\ }in\ \href@noop {} {\emph {\bibinfo {booktitle} {International Conference on Large-Scale Scientific Computing}}}\ (\bibinfo {organization} {Springer},\ \bibinfo {year} {2017})\ pp.\ \bibinfo {pages} {295--303}\BibitemShut {NoStop}%
\bibitem [{\citenamefont {Chada}\ \emph {et~al.}(2022)\citenamefont {Chada}, \citenamefont {Hoel}, \citenamefont {Jasra},\ and\ \citenamefont {Zouraris}}]{chada2022improved}%
  \BibitemOpen
  \bibfield  {author} {\bibinfo {author} {\bibfnamefont {N.~K.}\ \bibnamefont {Chada}}, \bibinfo {author} {\bibfnamefont {H.}~\bibnamefont {Hoel}}, \bibinfo {author} {\bibfnamefont {A.}~\bibnamefont {Jasra}},\ and\ \bibinfo {author} {\bibfnamefont {G.~E.}\ \bibnamefont {Zouraris}},\ }\bibfield  {title} {\bibinfo {title} {Improved efficiency of multilevel{ Monte Carlo} for stochastic {PDE }through strong pairwise coupling},\ }\href@noop {} {\bibfield  {journal} {\bibinfo  {journal} {Journal of Scientific Computing}\ }\textbf {\bibinfo {volume} {93}},\ \bibinfo {pages} {62} (\bibinfo {year} {2022})},\ \Eprint {https://arxiv.org/abs/arXiv:2108.00794} {arXiv:2108.00794} \BibitemShut {NoStop}%
\bibitem [{\citenamefont {Giles}\ and\ \citenamefont {Reisinger}(2012)}]{giles2012stochastic}%
  \BibitemOpen
  \bibfield  {author} {\bibinfo {author} {\bibfnamefont {M.~B.}\ \bibnamefont {Giles}}\ and\ \bibinfo {author} {\bibfnamefont {C.}~\bibnamefont {Reisinger}},\ }\bibfield  {title} {\bibinfo {title} {Stochastic finite differences and multilevel {Monte Carlo }for a class of {SPDEs} in finance},\ }\href@noop {} {\bibfield  {journal} {\bibinfo  {journal} {SIAM journal on financial mathematics}\ }\textbf {\bibinfo {volume} {3}},\ \bibinfo {pages} {572} (\bibinfo {year} {2012})},\ \Eprint {https://arxiv.org/abs/arXiv:1204.1442} {arXiv:1204.1442} \BibitemShut {NoStop}%
\bibitem [{\citenamefont {Brassard}\ \emph {et~al.}(2000)\citenamefont {Brassard}, \citenamefont {Hoyer}, \citenamefont {Mosca},\ and\ \citenamefont {Tapp}}]{brassard2000quantum}%
  \BibitemOpen
  \bibfield  {author} {\bibinfo {author} {\bibfnamefont {G.}~\bibnamefont {Brassard}}, \bibinfo {author} {\bibfnamefont {P.}~\bibnamefont {Hoyer}}, \bibinfo {author} {\bibfnamefont {M.}~\bibnamefont {Mosca}},\ and\ \bibinfo {author} {\bibfnamefont {A.}~\bibnamefont {Tapp}},\ }\bibfield  {title} {\bibinfo {title} {Quantum amplitude amplification and estimation},\ }\href {https://arxiv.org/abs/quant-ph/0005055} {\bibfield  {journal} {\bibinfo  {journal} {arXiv preprint quant-ph/0005055}\ } (\bibinfo {year} {2000})}\BibitemShut {NoStop}%
\bibitem [{\citenamefont {Heinrich}(2002)}]{heinrich2002quantum}%
  \BibitemOpen
  \bibfield  {author} {\bibinfo {author} {\bibfnamefont {S.}~\bibnamefont {Heinrich}},\ }\bibfield  {title} {\bibinfo {title} {Quantum summation with an application to integration},\ }\href@noop {} {\bibfield  {journal} {\bibinfo  {journal} {Journal of Complexity}\ }\textbf {\bibinfo {volume} {18}},\ \bibinfo {pages} {1} (\bibinfo {year} {2002})},\ \Eprint {https://arxiv.org/abs/arXiv:quant-ph/0105116} {arXiv:quant-ph/0105116} \BibitemShut {NoStop}%
\bibitem [{\citenamefont {Montanaro}(2015)}]{Montanaro2015}%
  \BibitemOpen
  \bibfield  {author} {\bibinfo {author} {\bibfnamefont {A.}~\bibnamefont {Montanaro}},\ }\bibfield  {title} {\bibinfo {title} {Quantum speedup of {M}onte {C}arlo methods},\ }\href@noop {} {\bibfield  {journal} {\bibinfo  {journal} {Proceedings of the Royal Society A: Mathematical, Physical and Engineering Sciences}\ }\textbf {\bibinfo {volume} {471}},\ \bibinfo {pages} {20150301} (\bibinfo {year} {2015})},\ \Eprint {https://arxiv.org/abs/arXiv:1504.06987} {arXiv:1504.06987} \BibitemShut {NoStop}%
\bibitem [{\citenamefont {Kothari}\ and\ \citenamefont {O'Donnell}(2023)}]{Kothari2022MeanEW}%
  \BibitemOpen
  \bibfield  {author} {\bibinfo {author} {\bibfnamefont {R.}~\bibnamefont {Kothari}}\ and\ \bibinfo {author} {\bibfnamefont {R.}~\bibnamefont {O'Donnell}},\ }\bibfield  {title} {\bibinfo {title} {Mean estimation when you have the source code; or, quantum {Monte Carlo} methods},\ }in\ \href@noop {} {\emph {\bibinfo {booktitle} {Proceedings of the 2023 Annual ACM-SIAM Symposium on Discrete Algorithms (SODA)}}}\ (\bibinfo {organization} {SIAM},\ \bibinfo {year} {2023})\ pp.\ \bibinfo {pages} {1186--1215},\ \Eprint {https://arxiv.org/abs/arXiv:2208.07544} {arXiv:2208.07544} \BibitemShut {NoStop}%
\bibitem [{\citenamefont {An}\ \emph {et~al.}(2021)\citenamefont {An}, \citenamefont {Linden}, \citenamefont {Liu}, \citenamefont {Montanaro}, \citenamefont {Shao},\ and\ \citenamefont {Wang}}]{An2021quantumaccelerated}%
  \BibitemOpen
  \bibfield  {author} {\bibinfo {author} {\bibfnamefont {D.}~\bibnamefont {An}}, \bibinfo {author} {\bibfnamefont {N.}~\bibnamefont {Linden}}, \bibinfo {author} {\bibfnamefont {J.-P.}\ \bibnamefont {Liu}}, \bibinfo {author} {\bibfnamefont {A.}~\bibnamefont {Montanaro}}, \bibinfo {author} {\bibfnamefont {C.}~\bibnamefont {Shao}},\ and\ \bibinfo {author} {\bibfnamefont {J.}~\bibnamefont {Wang}},\ }\bibfield  {title} {\bibinfo {title} {Quantum-accelerated multilevel {Monte Carlo} methods for stochastic differential equations in mathematical finance},\ }\href@noop {} {\bibfield  {journal} {\bibinfo  {journal} {Quantum}\ }\textbf {\bibinfo {volume} {5}},\ \bibinfo {pages} {481} (\bibinfo {year} {2021})},\ \Eprint {https://arxiv.org/abs/arXiv:2012.06283} {arXiv:2012.06283} \BibitemShut {NoStop}%
\bibitem [{\citenamefont {Blanchet}\ \emph {et~al.}(2024)\citenamefont {Blanchet}, \citenamefont {Szegedy},\ and\ \citenamefont {Wang}}]{blanchet2024quadratic}%
  \BibitemOpen
  \bibfield  {author} {\bibinfo {author} {\bibfnamefont {J.}~\bibnamefont {Blanchet}}, \bibinfo {author} {\bibfnamefont {M.}~\bibnamefont {Szegedy}},\ and\ \bibinfo {author} {\bibfnamefont {G.}~\bibnamefont {Wang}},\ }\bibfield  {title} {\bibinfo {title} {Quadratic speed-up in infinite variance quantum {Monte Carlo}},\ }\href {https://arxiv.org/abs/2401.07497} {\bibfield  {journal} {\bibinfo  {journal} {arXiv preprint arXiv:2401.07497}\ } (\bibinfo {year} {2024})}\BibitemShut {NoStop}%
\bibitem [{\citenamefont {Blanchet}\ \emph {et~al.}(2026)\citenamefont {Blanchet}, \citenamefont {Hamoudi}, \citenamefont {Szegedy},\ and\ \citenamefont {Wang}}]{Blanchet2025NonlinearQM}%
  \BibitemOpen
  \bibfield  {author} {\bibinfo {author} {\bibfnamefont {J.}~\bibnamefont {Blanchet}}, \bibinfo {author} {\bibfnamefont {Y.}~\bibnamefont {Hamoudi}}, \bibinfo {author} {\bibfnamefont {M.}~\bibnamefont {Szegedy}},\ and\ \bibinfo {author} {\bibfnamefont {G.}~\bibnamefont {Wang}},\ }\bibfield  {title} {\bibinfo {title} {Quantum speedup of non-linear{ Monte Carlo} problems},\ }\href@noop {} {\bibfield  {journal} {\bibinfo  {journal} {Advances in Neural Information Processing Systems}\ }\textbf {\bibinfo {volume} {38}},\ \bibinfo {pages} {18736} (\bibinfo {year} {2026})},\ \Eprint {https://arxiv.org/abs/arXiv:2502.05094} {arXiv:2502.05094} \BibitemShut {NoStop}%
\bibitem [{\citenamefont {Ozgul}\ \emph {et~al.}(2025)\citenamefont {Ozgul}, \citenamefont {Li}, \citenamefont {Mahdavi},\ and\ \citenamefont {Wang}}]{ozgul2025quantum}%
  \BibitemOpen
  \bibfield  {author} {\bibinfo {author} {\bibfnamefont {G.}~\bibnamefont {Ozgul}}, \bibinfo {author} {\bibfnamefont {X.}~\bibnamefont {Li}}, \bibinfo {author} {\bibfnamefont {M.}~\bibnamefont {Mahdavi}},\ and\ \bibinfo {author} {\bibfnamefont {C.}~\bibnamefont {Wang}},\ }\bibfield  {title} {\bibinfo {title} {Quantum speedups for markov chain {Monte Carlo} methods with application to optimization},\ }\href {https://arxiv.org/abs/2504.03626} {\bibfield  {journal} {\bibinfo  {journal} {arXiv preprint arXiv:2504.03626}\ } (\bibinfo {year} {2025})}\BibitemShut {NoStop}%
\bibitem [{\citenamefont {Li}\ and\ \citenamefont {Liu}(2026)}]{li2026quantum}%
  \BibitemOpen
  \bibfield  {author} {\bibinfo {author} {\bibfnamefont {X.}~\bibnamefont {Li}}\ and\ \bibinfo {author} {\bibfnamefont {J.-P.}\ \bibnamefont {Liu}},\ }\bibfield  {title} {\bibinfo {title} {Quantum algorithms for {Gibbs} expectation of non-log-concave and heavy-tailed distributions},\ }\href {https://arxiv.org/abs/2604.00656} {\bibfield  {journal} {\bibinfo  {journal} {arXiv preprint arXiv:2604.00656}\ } (\bibinfo {year} {2026})}\BibitemShut {NoStop}%
\bibitem [{\citenamefont {Rebentrost}\ \emph {et~al.}(2018)\citenamefont {Rebentrost}, \citenamefont {Gupt},\ and\ \citenamefont {Bromley}}]{rebentrost2018quantum}%
  \BibitemOpen
  \bibfield  {author} {\bibinfo {author} {\bibfnamefont {P.}~\bibnamefont {Rebentrost}}, \bibinfo {author} {\bibfnamefont {B.}~\bibnamefont {Gupt}},\ and\ \bibinfo {author} {\bibfnamefont {T.~R.}\ \bibnamefont {Bromley}},\ }\bibfield  {title} {\bibinfo {title} {Quantum computational finance: {Monte Carlo} pricing of financial derivatives},\ }\href@noop {} {\bibfield  {journal} {\bibinfo  {journal} {Physical Review A}\ }\textbf {\bibinfo {volume} {98}},\ \bibinfo {pages} {022321} (\bibinfo {year} {2018})},\ \Eprint {https://arxiv.org/abs/arXiv:1805.00109} {arXiv:1805.00109} \BibitemShut {NoStop}%
\bibitem [{\citenamefont {Stamatopoulos}\ \emph {et~al.}(2020)\citenamefont {Stamatopoulos}, \citenamefont {Egger}, \citenamefont {Sun}, \citenamefont {Zoufal}, \citenamefont {Iten}, \citenamefont {Shen},\ and\ \citenamefont {Woerner}}]{stamatopoulos2020option}%
  \BibitemOpen
  \bibfield  {author} {\bibinfo {author} {\bibfnamefont {N.}~\bibnamefont {Stamatopoulos}}, \bibinfo {author} {\bibfnamefont {D.~J.}\ \bibnamefont {Egger}}, \bibinfo {author} {\bibfnamefont {Y.}~\bibnamefont {Sun}}, \bibinfo {author} {\bibfnamefont {C.}~\bibnamefont {Zoufal}}, \bibinfo {author} {\bibfnamefont {R.}~\bibnamefont {Iten}}, \bibinfo {author} {\bibfnamefont {N.}~\bibnamefont {Shen}},\ and\ \bibinfo {author} {\bibfnamefont {S.}~\bibnamefont {Woerner}},\ }\bibfield  {title} {\bibinfo {title} {Option pricing using quantum computers},\ }\href@noop {} {\bibfield  {journal} {\bibinfo  {journal} {Quantum}\ }\textbf {\bibinfo {volume} {4}},\ \bibinfo {pages} {291} (\bibinfo {year} {2020})},\ \Eprint {https://arxiv.org/abs/arXiv:1905.02666} {arXiv:1905.02666} \BibitemShut {NoStop}%
\bibitem [{\citenamefont {Herman}\ \emph {et~al.}(2026)\citenamefont {Herman}, \citenamefont {Sun}, \citenamefont {Liu}, \citenamefont {Pistoia}, \citenamefont {Che}, \citenamefont {Otter}, \citenamefont {Chakrabarti},\ and\ \citenamefont {Harrow}}]{herman2026quantum}%
  \BibitemOpen
  \bibfield  {author} {\bibinfo {author} {\bibfnamefont {D.}~\bibnamefont {Herman}}, \bibinfo {author} {\bibfnamefont {Y.}~\bibnamefont {Sun}}, \bibinfo {author} {\bibfnamefont {J.-P.}\ \bibnamefont {Liu}}, \bibinfo {author} {\bibfnamefont {M.}~\bibnamefont {Pistoia}}, \bibinfo {author} {\bibfnamefont {C.}~\bibnamefont {Che}}, \bibinfo {author} {\bibfnamefont {R.}~\bibnamefont {Otter}}, \bibinfo {author} {\bibfnamefont {S.}~\bibnamefont {Chakrabarti}},\ and\ \bibinfo {author} {\bibfnamefont {A.}~\bibnamefont {Harrow}},\ }\bibfield  {title} {\bibinfo {title} {Quantum speedups for derivative pricing beyond {Black-Scholes}},\ }\href {https://arxiv.org/abs/2602.03725} {\bibfield  {journal} {\bibinfo  {journal} {arXiv preprint arXiv:2602.03725}\ } (\bibinfo {year} {2026})}\BibitemShut {NoStop}%
\bibitem [{\citenamefont {Guseynov}\ \emph {et~al.}(2026)\citenamefont {Guseynov}, \citenamefont {Liu}, \citenamefont {Pun},\ and\ \citenamefont {Vaidya}}]{guseynov2026end}%
  \BibitemOpen
  \bibfield  {author} {\bibinfo {author} {\bibfnamefont {N.}~\bibnamefont {Guseynov}}, \bibinfo {author} {\bibfnamefont {N.}~\bibnamefont {Liu}}, \bibinfo {author} {\bibfnamefont {C.~S.}\ \bibnamefont {Pun}},\ and\ \bibinfo {author} {\bibfnamefont {T.}~\bibnamefont {Vaidya}},\ }\bibfield  {title} {\bibinfo {title} {End-to-end {PDE}-based quantum algorithms for multi-asset option pricing under local and stochastic volatility},\ }\href {https://arxiv.org/abs/2605.26610} {\bibfield  {journal} {\bibinfo  {journal} {arXiv preprint arXiv:2605.26610}\ } (\bibinfo {year} {2026})}\BibitemShut {NoStop}%
\bibitem [{\citenamefont {Fujita}\ \emph {et~al.}(2024)\citenamefont {Fujita}, \citenamefont {Miyamoto},\ and\ \citenamefont {Sekine}}]{fujita2024application}%
  \BibitemOpen
  \bibfield  {author} {\bibinfo {author} {\bibfnamefont {M.}~\bibnamefont {Fujita}}, \bibinfo {author} {\bibfnamefont {K.}~\bibnamefont {Miyamoto}},\ and\ \bibinfo {author} {\bibfnamefont {J.}~\bibnamefont {Sekine}},\ }\bibfield  {title} {\bibinfo {title} {Application of quantum {M}onte {C}arlo integration to {M}arkovian backward stochastic differential equations},\ }\href@noop {} {\bibfield  {journal} {\bibinfo  {journal} {JSIAM Letters}\ }\textbf {\bibinfo {volume} {16}},\ \bibinfo {pages} {105} (\bibinfo {year} {2024})}\BibitemShut {NoStop}%
\bibitem [{\citenamefont {Jin}\ \emph {et~al.}(2025)\citenamefont {Jin}, \citenamefont {Liu},\ and\ \citenamefont {Wei}}]{jin2025quantum}%
  \BibitemOpen
  \bibfield  {author} {\bibinfo {author} {\bibfnamefont {S.}~\bibnamefont {Jin}}, \bibinfo {author} {\bibfnamefont {N.}~\bibnamefont {Liu}},\ and\ \bibinfo {author} {\bibfnamefont {W.}~\bibnamefont {Wei}},\ }\bibfield  {title} {\bibinfo {title} {Quantum algorithms for stochastic differential equations: A {S}chr{\"o}dingerisation approach},\ }\href@noop {} {\bibfield  {journal} {\bibinfo  {journal} {Journal of Scientific Computing}\ }\textbf {\bibinfo {volume} {104}},\ \bibinfo {pages} {56} (\bibinfo {year} {2025})},\ \Eprint {https://arxiv.org/abs/arXiv:2412.14868} {arXiv:2412.14868} \BibitemShut {NoStop}%
\bibitem [{\citenamefont {Bravyi}\ \emph {et~al.}(2025)\citenamefont {Bravyi}, \citenamefont {Manson-Sawko}, \citenamefont {Zayats},\ and\ \citenamefont {Zhuk}}]{bravyi2025quantum}%
  \BibitemOpen
  \bibfield  {author} {\bibinfo {author} {\bibfnamefont {S.}~\bibnamefont {Bravyi}}, \bibinfo {author} {\bibfnamefont {R.}~\bibnamefont {Manson-Sawko}}, \bibinfo {author} {\bibfnamefont {M.}~\bibnamefont {Zayats}},\ and\ \bibinfo {author} {\bibfnamefont {S.}~\bibnamefont {Zhuk}},\ }\bibfield  {title} {\bibinfo {title} {Quantum simulation of a noisy classical nonlinear dynamics},\ }\href {https://arxiv.org/abs/2507.06198} {\bibfield  {journal} {\bibinfo  {journal} {arXiv preprint arXiv:2507.06198}\ } (\bibinfo {year} {2025})}\BibitemShut {NoStop}%
\bibitem [{\citenamefont {Yang}\ and\ \citenamefont {Liu}(2025)}]{yang2025circuitefficientrandomizedquantumsimulation}%
  \BibitemOpen
  \bibfield  {author} {\bibinfo {author} {\bibfnamefont {S.}~\bibnamefont {Yang}}\ and\ \bibinfo {author} {\bibfnamefont {J.-P.}\ \bibnamefont {Liu}},\ }\bibfield  {title} {\bibinfo {title} {Circuit-efficient randomized quantum simulation of non-unitary dynamics with observable-driven and symmetry-aware designs},\ }\href {https://arxiv.org/abs/2509.08030} {\bibfield  {journal} {\bibinfo  {journal} {arXiv preprint arXiv:2509.08030}\ } (\bibinfo {year} {2025})}\BibitemShut {NoStop}%
\bibitem [{\citenamefont {Li}\ \emph {et~al.}(2026)\citenamefont {Li}, \citenamefont {Catli}, \citenamefont {Lim}, \citenamefont {Pocrnic}, \citenamefont {An}, \citenamefont {Liu},\ and\ \citenamefont {Wiebe}}]{li2026efficientquantumsimulationnonlinear}%
  \BibitemOpen
  \bibfield  {author} {\bibinfo {author} {\bibfnamefont {X.}~\bibnamefont {Li}}, \bibinfo {author} {\bibfnamefont {A.~B.}\ \bibnamefont {Catli}}, \bibinfo {author} {\bibfnamefont {H.~K.}\ \bibnamefont {Lim}}, \bibinfo {author} {\bibfnamefont {M.}~\bibnamefont {Pocrnic}}, \bibinfo {author} {\bibfnamefont {D.}~\bibnamefont {An}}, \bibinfo {author} {\bibfnamefont {J.-P.}\ \bibnamefont {Liu}},\ and\ \bibinfo {author} {\bibfnamefont {N.}~\bibnamefont {Wiebe}},\ }\bibfield  {title} {\bibinfo {title} {Efficient quantum simulation for nonlinear stochastic differential equations},\ }\href {https://arxiv.org/abs/2603.12398} {\bibfield  {journal} {\bibinfo  {journal} {arXiv preprint arXiv:2603.12398}\ } (\bibinfo {year} {2026})}\BibitemShut {NoStop}%
\bibitem [{\citenamefont {Bravyi}\ \emph {et~al.}(2026)\citenamefont {Bravyi}, \citenamefont {Byrne}, \citenamefont {Zayats},\ and\ \citenamefont {Zhuk}}]{bravyi2026quantumalgorithmsstochasticnonlinear}%
  \BibitemOpen
  \bibfield  {author} {\bibinfo {author} {\bibfnamefont {S.}~\bibnamefont {Bravyi}}, \bibinfo {author} {\bibfnamefont {A.}~\bibnamefont {Byrne}}, \bibinfo {author} {\bibfnamefont {M.}~\bibnamefont {Zayats}},\ and\ \bibinfo {author} {\bibfnamefont {S.}~\bibnamefont {Zhuk}},\ }\bibfield  {title} {\bibinfo {title} {Quantum algorithms for stochastic nonlinear differential equations},\ }\href {https://arxiv.org/abs/2606.08349} {\bibfield  {journal} {\bibinfo  {journal} {arXiv preprint arXiv:2606.08349}\ } (\bibinfo {year} {2026})}\BibitemShut {NoStop}%
\bibitem [{\citenamefont {Sun}\ \emph {et~al.}(2026)\citenamefont {Sun}, \citenamefont {Wang},\ and\ \citenamefont {Blanchet}}]{sun2026optimalquantumspeedupsrepeatedly}%
  \BibitemOpen
  \bibfield  {author} {\bibinfo {author} {\bibfnamefont {Y.}~\bibnamefont {Sun}}, \bibinfo {author} {\bibfnamefont {G.}~\bibnamefont {Wang}},\ and\ \bibinfo {author} {\bibfnamefont {J.}~\bibnamefont {Blanchet}},\ }\href@noop {} {\bibinfo {title} {Optimal quantum speedups for repeatedly nested expectation estimation}} (\bibinfo {year} {2026}),\ \Eprint {https://arxiv.org/abs/2602.08120} {arXiv:2602.08120 [quant-ph]} \BibitemShut {NoStop}%
\bibitem [{\citenamefont {Bally}\ and\ \citenamefont {Matoussi}(2001)}]{BallyMatoussi2001}%
  \BibitemOpen
  \bibfield  {author} {\bibinfo {author} {\bibfnamefont {V.}~\bibnamefont {Bally}}\ and\ \bibinfo {author} {\bibfnamefont {A.}~\bibnamefont {Matoussi}},\ }\bibfield  {title} {\bibinfo {title} {Weak solutions for {SPDEs} and backward doubly stochastic differential equations},\ }\href {https://doi.org/10.1023/A:1007825232513} {\bibfield  {journal} {\bibinfo  {journal} {Journal of Theoretical Probability}\ }\textbf {\bibinfo {volume} {14}},\ \bibinfo {pages} {125} (\bibinfo {year} {2001})}\BibitemShut {NoStop}%
\bibitem [{\citenamefont {Merton}(1973)}]{Merton1973RationalOptionPricing}%
  \BibitemOpen
  \bibfield  {author} {\bibinfo {author} {\bibfnamefont {R.~C.}\ \bibnamefont {Merton}},\ }\bibfield  {title} {\bibinfo {title} {Theory of rational option pricing},\ }\href {https://doi.org/10.2307/3003143} {\bibfield  {journal} {\bibinfo  {journal} {The Bell Journal of Economics and Management Science}\ }\textbf {\bibinfo {volume} {4}},\ \bibinfo {pages} {141} (\bibinfo {year} {1973})}\BibitemShut {NoStop}%
\bibitem [{\citenamefont {Giles}(2015)}]{giles2015multilevel}%
  \BibitemOpen
  \bibfield  {author} {\bibinfo {author} {\bibfnamefont {M.~B.}\ \bibnamefont {Giles}},\ }\bibfield  {title} {\bibinfo {title} {Multilevel {Monte Carlo} methods},\ }\href@noop {} {\bibfield  {journal} {\bibinfo  {journal} {Acta numerica}\ }\textbf {\bibinfo {volume} {24}},\ \bibinfo {pages} {259} (\bibinfo {year} {2015})},\ \Eprint {https://arxiv.org/abs/arXiv:1304.5472} {arXiv:1304.5472} \BibitemShut {NoStop}%
\bibitem [{\citenamefont {Jerrum}\ \emph {et~al.}(1986)\citenamefont {Jerrum}, \citenamefont {Valiant},\ and\ \citenamefont {Vazirani}}]{Jerrum1986RandomGO}%
  \BibitemOpen
  \bibfield  {author} {\bibinfo {author} {\bibfnamefont {M.~R.}\ \bibnamefont {Jerrum}}, \bibinfo {author} {\bibfnamefont {L.~G.}\ \bibnamefont {Valiant}},\ and\ \bibinfo {author} {\bibfnamefont {V.~V.}\ \bibnamefont {Vazirani}},\ }\bibfield  {title} {\bibinfo {title} {Random generation of combinatorial structures from a uniform distribution},\ }\href@noop {} {\bibfield  {journal} {\bibinfo  {journal} {Theoretical computer science}\ }\textbf {\bibinfo {volume} {43}},\ \bibinfo {pages} {169} (\bibinfo {year} {1986})}\BibitemShut {NoStop}%
\bibitem [{\citenamefont {Broadie}\ and\ \citenamefont {Glasserman}(1996)}]{BroadieGlasserman1996}%
  \BibitemOpen
  \bibfield  {author} {\bibinfo {author} {\bibfnamefont {M.}~\bibnamefont {Broadie}}\ and\ \bibinfo {author} {\bibfnamefont {P.}~\bibnamefont {Glasserman}},\ }\bibfield  {title} {\bibinfo {title} {Estimating security price derivatives using simulation},\ }\href {https://doi.org/10.1287/mnsc.42.2.269} {\bibfield  {journal} {\bibinfo  {journal} {Management Science}\ }\textbf {\bibinfo {volume} {42}},\ \bibinfo {pages} {269} (\bibinfo {year} {1996})}\BibitemShut {NoStop}%
\bibitem [{\citenamefont {Fourni{\'e}}\ \emph {et~al.}(1999)\citenamefont {Fourni{\'e}}, \citenamefont {Lasry}, \citenamefont {Lebuchoux}, \citenamefont {Lions},\ and\ \citenamefont {Touzi}}]{Fournie1999}%
  \BibitemOpen
  \bibfield  {author} {\bibinfo {author} {\bibfnamefont {E.}~\bibnamefont {Fourni{\'e}}}, \bibinfo {author} {\bibfnamefont {J.-M.}\ \bibnamefont {Lasry}}, \bibinfo {author} {\bibfnamefont {J.}~\bibnamefont {Lebuchoux}}, \bibinfo {author} {\bibfnamefont {P.-L.}\ \bibnamefont {Lions}},\ and\ \bibinfo {author} {\bibfnamefont {N.}~\bibnamefont {Touzi}},\ }\bibfield  {title} {\bibinfo {title} {Applications of malliavin calculus to monte carlo methods in finance},\ }\href {https://doi.org/10.1007/s007800050068} {\bibfield  {journal} {\bibinfo  {journal} {Finance and Stochastics}\ }\textbf {\bibinfo {volume} {3}},\ \bibinfo {pages} {391} (\bibinfo {year} {1999})}\BibitemShut {NoStop}%
\bibitem [{\citenamefont {Giles}(2009{\natexlab{b}})}]{Giles2009Vibrato}%
  \BibitemOpen
  \bibfield  {author} {\bibinfo {author} {\bibfnamefont {M.~B.}\ \bibnamefont {Giles}},\ }\bibfield  {title} {\bibinfo {title} {Vibrato monte carlo sensitivities},\ }in\ \href {https://doi.org/10.1007/978-3-642-04107-5_23} {\emph {\bibinfo {booktitle} {Monte Carlo and Quasi-Monte Carlo Methods 2008}}}\ (\bibinfo  {publisher} {Springer},\ \bibinfo {year} {2009})\ pp.\ \bibinfo {pages} {369--382}\BibitemShut {NoStop}%
\bibitem [{\citenamefont {Bismut}(1984)}]{Bismut1984}%
  \BibitemOpen
  \bibfield  {author} {\bibinfo {author} {\bibfnamefont {J.-M.}\ \bibnamefont {Bismut}},\ }\href@noop {} {\emph {\bibinfo {title} {Large Deviations and the Malliavin Calculus}}},\ \bibinfo {series} {Progress in Mathematics}, Vol.~\bibinfo {volume} {45}\ (\bibinfo  {publisher} {Birkh{\"a}user Boston},\ \bibinfo {year} {1984})\BibitemShut {NoStop}%
\bibitem [{\citenamefont {Elworthy}\ and\ \citenamefont {Li}(1994)}]{ElworthyLi1994}%
  \BibitemOpen
  \bibfield  {author} {\bibinfo {author} {\bibfnamefont {K.~D.}\ \bibnamefont {Elworthy}}\ and\ \bibinfo {author} {\bibfnamefont {X.-M.}\ \bibnamefont {Li}},\ }\bibfield  {title} {\bibinfo {title} {Formulae for the derivatives of heat semigroups},\ }\href {https://doi.org/10.1006/jfan.1994.1124} {\bibfield  {journal} {\bibinfo  {journal} {Journal of Functional Analysis}\ }\textbf {\bibinfo {volume} {125}},\ \bibinfo {pages} {252} (\bibinfo {year} {1994})}\BibitemShut {NoStop}%
\bibitem [{\citenamefont {Fourni{\'e}}\ \emph {et~al.}(2001)\citenamefont {Fourni{\'e}}, \citenamefont {Lasry}, \citenamefont {Lebuchoux},\ and\ \citenamefont {Lions}}]{Fournie2001}%
  \BibitemOpen
  \bibfield  {author} {\bibinfo {author} {\bibfnamefont {E.}~\bibnamefont {Fourni{\'e}}}, \bibinfo {author} {\bibfnamefont {J.-M.}\ \bibnamefont {Lasry}}, \bibinfo {author} {\bibfnamefont {J.}~\bibnamefont {Lebuchoux}},\ and\ \bibinfo {author} {\bibfnamefont {P.-L.}\ \bibnamefont {Lions}},\ }\bibfield  {title} {\bibinfo {title} {Applications of malliavin calculus to monte carlo methods in finance. {II}},\ }\href {https://doi.org/10.1007/PL00013529} {\bibfield  {journal} {\bibinfo  {journal} {Finance and Stochastics}\ }\textbf {\bibinfo {volume} {5}},\ \bibinfo {pages} {201} (\bibinfo {year} {2001})}\BibitemShut {NoStop}%
\bibitem [{\citenamefont {Gobet}\ and\ \citenamefont {Kohatsu-Higa}(2003)}]{GobetKohatsuHiga2003}%
  \BibitemOpen
  \bibfield  {author} {\bibinfo {author} {\bibfnamefont {E.}~\bibnamefont {Gobet}}\ and\ \bibinfo {author} {\bibfnamefont {A.}~\bibnamefont {Kohatsu-Higa}},\ }\bibfield  {title} {\bibinfo {title} {Computation of greeks for barrier and look-back options using malliavin calculus},\ }\href@noop {} {\bibfield  {journal} {\bibinfo  {journal} {Electronic Communications in Probability}\ }\textbf {\bibinfo {volume} {8}},\ \bibinfo {pages} {51} (\bibinfo {year} {2003})}\BibitemShut {NoStop}%
\bibitem [{\citenamefont {Lord}\ \emph {et~al.}(2010)\citenamefont {Lord}, \citenamefont {Koekkoek},\ and\ \citenamefont {van Dijk}}]{LordKoekkoekVanDijk2010}%
  \BibitemOpen
  \bibfield  {author} {\bibinfo {author} {\bibfnamefont {R.}~\bibnamefont {Lord}}, \bibinfo {author} {\bibfnamefont {R.}~\bibnamefont {Koekkoek}},\ and\ \bibinfo {author} {\bibfnamefont {D.}~\bibnamefont {van Dijk}},\ }\bibfield  {title} {\bibinfo {title} {A comparison of biased simulation schemes for stochastic volatility models},\ }\href {https://doi.org/10.1080/14697680802392496} {\bibfield  {journal} {\bibinfo  {journal} {Quantitative Finance}\ }\textbf {\bibinfo {volume} {10}},\ \bibinfo {pages} {177} (\bibinfo {year} {2010})}\BibitemShut {NoStop}%
\bibitem [{\citenamefont {Kac}(1949)}]{Kac1949OnDO}%
  \BibitemOpen
  \bibfield  {author} {\bibinfo {author} {\bibfnamefont {M.}~\bibnamefont {Kac}},\ }\bibfield  {title} {\bibinfo {title} {On distributions of certain wiener functionals},\ }\href@noop {} {\bibfield  {journal} {\bibinfo  {journal} {Transactions of the American Mathematical Society}\ }\textbf {\bibinfo {volume} {65}},\ \bibinfo {pages} {1} (\bibinfo {year} {1949})}\BibitemShut {NoStop}%
\end{thebibliography}%

\onecolumngrid
\clearpage

\appendix
\begin{center}
    \textbf{Supplementary Materials}
\end{center}

\section{Terminal-value SPDE and Its BDSDE Representation}\label{sec:SPDE and BDSDE}

It is well known that the Feynman-Kac formula provides a classical probabilistic representation
for solutions of linear parabolic PDEs~\cite{Kac1949OnDO}.
Given a terminal-value problem of the form
$$
\partial_t u(t,x) + \mathcal{L} u(t,x) - r(t,x)u(t,x) + F(t,x) = 0,
\quad u(T,x)=G(x),
$$
the Feynman--Kac formula represents the solution $u(t,x)$
as the conditional expectation of a functional of the solution to an associated SDE.
More precisely, $u(t,x)$ can be written as
$$
u(t,x)=\mathbb{E}\left[\Phi\left(X^{t,x}_{T}\right)\right],
$$
where $X^{t,x}$ solves the SDE whose infinitesimal generator coincides with
the differential operator $\mathcal{L}$.

An important extension of the Feynman-Kac framework was achieved through the theory of
backward stochastic differential equations (BSDEs).
The seminal work of Pardoux and Peng established that BSDEs provide probabilistic
representations for semilinear terminal-value PDEs~\cite{Pardoux1990AdaptedSO,Pardoux1992BackwardSD}.
In particular, \cite{Karoui1997BackwardSD} discusses applications of BSDEs in mathematical
finance, including the pricing of European options.

However, When the evolution equation contains an additional noise term of the form
$G(t,x)\mathrm{d}W_t$, the solution becomes a stochastic process in both time and space, leading to
a stochastic partial differential equation.
In this setting, neither the classical Feynman--Kac formula nor the BSDE framework is
applicable.

Backward doubly stochastic differential equations (BDSDEs)
~\cite{Pardoux1994BackwardDS,BallyMatoussi2001}
were introduced precisely to overcome
this limitation.
By incorporating an additional backward It\^o integral with respect to a time-reversed Brownian
motion, BDSDEs extend the BSDE framework and provide a rigorous probabilistic representation for
terminal-value SPDEs.

In contrast to classical BSDEs, which are associated with deterministic or random-coefficient
partial differential equations, BDSDEs involve two sources of randomness: a forward Brownian
motion and a backward stochastic integral.
This additional backward component allows BDSDEs to faithfully capture the intrinsic randomness
of SPDE solutions while preserving the backward-in-time structure imposed by the terminal
condition.

Our goal is to give a probabilistic representation for the solution of
the quasilinear backward SPDEs:
\begin{equation}\label{eq:SPDE(with BDSDE)}
  \begin{cases}
\partial_t u(t,x)
=
\left[\mathcal{L}u(t,x)
+ f\left(t,x,u(t,x),(\sigma^\top\nabla u)(t,x)\right)\right]\,\mathrm{d}t
+ h(t,x,u(t,x),(\sigma^\top\nabla u)(t,x))\,\mathrm{d}B_t,\quad t\in[0,T],\\
u(T,x)=G(x).
\end{cases}
\end{equation}
There $u:\mathbb{R_+}\times \mathbb{R}^d\rightarrow \mathbb{R}^k$ and
  $\mathcal{L}u=\left( Lu_1,\cdots ,Lu_k \right)^\top$
with $L=\frac{1}{2}\sum\limits_{i,j=1}^d(\sigma\sigma^\top)_{ij}\frac{\partial^2}{\partial x_i\partial x_j}+\sum\limits_{i=1}^db_i\frac{\partial}{\partial x_i}.$
Alternatively, we can write the SPDE~\eqref{eq:SPDE(with BDSDE)} in the integral form
\begin{align}\label{eq:SPDE(with BDSDE)-integral}
  u(t,x)=G(x)+\int_t^T \left[\mathcal{L}u(s,x)
+ f\left(s,x,u(s,x),(\sigma^\top\nabla u)(s,x)\right)\right]\,\mathrm{d}s
+ \int_t^T h(s,x,u(s,x),(\sigma^\top\nabla u)(s,x))\mathrm d \overleftarrow{B}_s.
\end{align}
The backward It\^o integral with respect to $\mathrm d \overleftarrow{B}_s$ is defined as
\begin{align*}
  \int_t^T \phi(s) \mathrm d \overleftarrow{B}_s
:= \int_0^{T-t} \phi_{T-s}\,\mathrm{d}B_s',
\end{align*}
where $B_s' = B_T - B_{T-s}$ is the time-reversed Brownian motion.
Equivalently, for any partition $t=t_0<\cdots<t_n=T$,
\begin{align*}
  \int_t^T \phi(s)\mathrm d \overleftarrow{B}_s
= \lim_{|\Pi|\to 0}\sum_{i=0}^{n-1}
\phi_{t_{i+1}}\left(B_{t_i}-B_{t_{i+1}}\right),
\end{align*}
with convergence in $L^2(\Omega)$.

The connection with BDSDEs is obtained as follows.
For each $(t,x)\in\mathbb{R}_+\times \mathbb{R}^d$, let
$\{X_s^{t,x};t\leq s\leq T\}$ be the solution of the SDE:
\begin{equation}\label{eq:SDE(with BDSDE)}
\begin{cases}
\mathrm{d}X_s^{t,x}=b(X_s^{t,x})\,\mathrm{d}s
+\sigma(X_s^{t,x})\,\mathrm{d}W_s,\quad s\in[t,T],\\
X_t^{t,x}=x.
\end{cases}
\end{equation}
Assume that the SPDE~\eqref{eq:SPDE(with BDSDE)} has a classical solution.
Then the couple $\left(Y_s^{t,x},Z_s^{t,x}\right)$ where 
\begin{align*}
  Y_s^{t,x}=u(s,X_s^{t,x}),\quad Z_s^{t,x}=(\sigma^\top\nabla u)(s,X_s^{t,x})
\end{align*}
verify the following BDSDE:
\begin{align}\label{eq:BDSDE}
  Y_s^{t,x}=G(X_T^{t,x})+\int_s^T f(r,X_r^{t,x},Y_r^{t,x},Z_r^{t,x})\,\mathrm{d}r 
  +\int_s^T h(r,X_r^{t,x},Y_r^{t,x},Z_r^{t,x})\mathrm d \overleftarrow{B}_r-\int_s^T Z_r^{t,x}\,\mathrm{d}W_r,\quad t\leq s\leq T, 
\end{align}
or alternatively
\begin{equation}\label{eq:BDSDE-d}
\begin{cases}
\mathrm{d}Y_s^{t,x}=
- f(s,X_s^{t,x},Y_s^{t,x},Z_s^{t,x})\,\mathrm{d}s
- h(s,X_s^{t,x},Y_s^{t,x},Z_s^{t,x})\mathrm d \overleftarrow{B}_s
+ Z_s^{t,x}\,\mathrm{d}W_s,
\quad s\in[t,T],\\
Y_T^{t,x}=G(X_T^{t,x}).
\end{cases}
\end{equation}

In the following, 
based on~\cite{Pardoux1994BackwardDS,BallyMatoussi2001},
we provide a more detailed exposition of results related to BDSDEs.

\paragraph{Fundamentals of BDSDEs}

Let $(\Omega,\mathcal F,\mathbb P)$ be a probability space
 and $T>0$ be a fixed terminal time.
 Let $\{W_{t}, 0\leq t\leq T\}$ and $\{B_{t},0\leq t\leq T\}$ be two
  mutually independent standard Brownian motion processes 
  with values in $\mathbb{R}^{d}$ and in $\mathbb{R}^{l}$, respectively.
For each $t\in [0,T]$, we define
 \begin{align*}
 \mathcal{F}_{t} :=  \mathcal{F}_{0,t}^{W} \vee \mathcal{F}_{t,T}^{B},
 \end{align*}
  where for any process $\{\eta_{t}\}, \ \mathcal{F}_{s,t}^{\eta}=\sigma\{\eta_{r}-\eta_{s};s\leq r\leq t\}\vee \mathcal{N}$ and
 $\mathcal{N}$ is the class of $\mathbb{P}$-null sets of $\mathcal{F}$.
Note that the collection $\{ \mathcal{F}_{t}; t\in[0,T] \}$ is neither increasing nor decreasing, and it does not constitute a filtration.

We recall some notation from Pardoux and Peng~\cite{Pardoux1994BackwardDS}.
For $k\in\mathbb{N}$, we denote by
$C^k(\mathbb{R}^p;\mathbb{R}^q)$ the space of $C^k$ functions from $\mathbb{R}^p$
to $\mathbb{R}^q$,
by $C^k_b(\mathbb{R}^p;\mathbb{R}^q)$ the subspace of functions whose partial
derivatives up to order $k$ are bounded,
and by $C^k_p(\mathbb{R}^p;\mathbb{R}^q)$ the space of functions whose partial
derivatives up to order $k$ have at most polynomial growth at infinity.

Let $M^{2}([0,T]; \mathbb{R}^{n})$ be the set of 
$n$-dimensional jointly measurable stochastic processes  $\{ \varphi_{t}; t\in[0,T] \}$ which satisfy:\\
(i)  $\|\varphi \|_{M^{2}}^{2} =\mathbb{E}\left[\int_0^T |\varphi_{t}|^{2} \mathrm dt\right]<\infty$;\\
(ii)  $\varphi_{t}$ is $\mathcal{F}_{t}$-measurable  for a.e. $t\in[0,T].$

Similarly, let $S^{2}([0,T]; \mathbb{R}^{n})$ be the set of $n$-dimensional continuous stochastic processes, which satisfy:\\
(i)  $\|\varphi \|_{S^{2}}^{2} =\mathbb{E}\left[\sup\limits_{0\leq t\leq T} |\varphi_{t}|^{2}\right]<\infty$;\\
(ii) $\varphi_{t}$ is $\mathcal{F}_{t}$-measurable for a.e. $t\in[0,T].$\\

\begin{assumption}\label{ass:H1}
  Assume that
\begin{align*}
  f& : \Omega \times [0,T] \times \mathbb{R}^{k} \times \mathbb{R}^{k \times d} \to \mathbb{R}^{k},\\
  h &: \Omega \times [0,T] \times \mathbb{R}^{k} \times \mathbb{R}^{k \times d} \to \mathbb{R}^{k \times l},
\end{align*}
are jointly measurable and such that for any fixed $(y,z)$,
$$
f(\cdot,\cdot,y,z)\in M^{2}([0,T];\mathbb{R}^{k}),\quad
 h(\cdot,\cdot,y,z)\in   M^{2}([0,T];\mathbb{R}^{k \times l}).
$$
We assume moreover that there exist some constants $c>0$ and $0< \alpha<1$ such that for every
$(\omega,t)\in \Omega\times[0,T]$ and 
$(y_1,z_1),(y_2,z_2)\in \mathbb{R}^{k}\times\mathbb{R}^{k\times d},$
the following inequalities hold:
\begin{align}
| f(t,y_1,z_1)-f(t,y_2,z_2)|^{2}
&\leq c\left(| y_1-y_2|^{2}+\| z_1-z_2\|^{2}\right), \label{eq:f_lipschitz}\\
\|h(t,y_1,z_1)-h(t,y_2,z_2)\|^{2}
&\leq c| y_1-y_2|^{2}+\alpha \| z_1-z_2\|^{2}. \label{eq:h_lipschitz}
\end{align}
Here $|y|$ denotes the Euclidean norm and
$\|z\|^{2} = \operatorname{Tr}(z z^{*}).$
\end{assumption}

\begin{remark}
Note that, in the SPDE~\eqref{eq:SPDE(with BDSDE)} and the BDSDE~\eqref{eq:BDSDE},
the coefficient functions under consideration are of the form
\begin{align*}
f &: [0,T]\times \mathbb{R}^d \times \mathbb{R}^k \times \mathbb{R}^{k\times d} \to \mathbb{R}^k,\\
h &: [0,T]\times \mathbb{R}^d \times \mathbb{R}^k \times \mathbb{R}^{k\times d} \to \mathbb{R}^{k\times \ell}.
\end{align*}

Let $b\in C_b^3(\mathbb{R}^d;\mathbb{R}^d)$ and
$\sigma\in C_b^3(\mathbb{R}^d;\mathbb{R}^{d\times d})$.
For each $t\in[0,T]$ and $x\in\mathbb{R}^d$, we denote by
$\{X_s^{t,x};\, t\le s\le T\}$ the unique strong solution of the
SDE~\eqref{eq:SDE(with BDSDE)}.
Then the two formulations are then naturally linked through the forward diffusion
process by adopting the shorthand notation
$$
f(s,y,z) = f\left(s, X_s^{t,x}, y, z\right),
\quad
h(s,y,z) = h\left(s, X_s^{t,x}, y, z\right).
$$
Moreover, we assume that for any $s\in[0,T]$, 
$(x,y,z)\rightarrow \left(f(s,x,y,z),h(s,x,y,z)\right)$ is of class $C^3$.
\end{remark}

\begin{proposition}(Theorem 1.1 in~\cite{Pardoux1994BackwardDS} )
Under the Assumption~\ref{ass:H1}, the BDSDE
\begin{align*}
  Y_t=\xi+\int_t^Tf(s,Y_s,Z_s)\,\mathrm{d}s
  +\int_t^Th(s,Y_s,Z_s)\mathrm d \overleftarrow{B}_s
  -\int_t^T Z_s\,\mathrm{d}W_s,\quad 0\leq t\leq T,
\end{align*}
has  unique solution
$$(Y,Z)\in S^2([0,T];\mathbb{R}^k)\times M^2([0,T];\mathbb{R}^{k\times d})$$
for any $\xi\in L^2\left(\Omega,\mathcal{F}_T,\mathbb{P};\mathbb{R}^k\right)$.
\end{proposition}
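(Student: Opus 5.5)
The plan follows the classical Pardoux--Peng strategy. First I solve the equation when the coefficients do not depend on $(y,z)$, using a martingale representation in an enlarged filtration. I then obtain the general case as the fixed point of a contraction on $M^2\times M^2$ with an exponentially weighted norm. The constant $\alpha<1$ in \eqref{eq:h_lipschitz} is exactly what makes the contraction work.

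\emph{Step 1 (coefficients independent of $(y,z)$).} Let $f\in M^2([0,T];\mathbb R^k)$ and $h\in M^2([0,T];\mathbb R^{k\times l})$ be given processes. Since $\{\mathcal F_t\}$ is not a filtration, I introduce the genuine filtration $\mathcal G_t:=\mathcal F^W_{0,t}\vee\mathcal F^B_{0,T}$ and the square-integrable $\mathcal G_t$-martingale
\[
M_t:=\mathbb E\Big[\xi+\int_0^T f_s\,\mathrm ds+\int_0^T h_s\,\mathrm d\overleftarrow B_s\,\Big|\,\mathcal G_t\Big].
\]
Because $B$ is independent of $W$, $\mathcal F^B_{0,T}$ is an initial enlargement that leaves $W$ a $\mathcal G_t$-Brownian motion. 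The martingale representation theorem therefore gives a $\mathcal G$-predictable $Z$ with $M_t=M_0+\int_0^t Z_s\,\mathrm dW_s$. I then set
\[
Y_t:=\mathbb E\Big[\xi+\int_t^T f_s\,\mathrm ds+\int_t^T h_s\,\mathrm d\overleftarrow B_s\,\Big|\,\mathcal G_t\Big].
\]
A direct computation shows that $(Y,Z)$ satisfies the BDSDE.

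The delicate point is adaptedness. The random variable inside the conditional expectation is $\mathcal F^W_{0,T}\vee\mathcal F^B_{t,T}$-measurable, and $\mathcal F^B_{0,t}$ is independent of $\mathcal F^W_{0,T}\vee\mathcal F^B_{t,T}$. Hence conditioning on $\mathcal G_t$ equals conditioning on $\mathcal F_t$, so $Y_t$ is $\mathcal F_t$-measurable. For $Z$ I would use the same independence argument on $\int_t^{t+\varepsilon}Z\,\mathrm dW$ (or identify $Z$ via $\mathrm d\langle M,W\rangle$) to get $\mathcal F_t$-measurability for a.e.\ $t$. Uniqueness in this step follows from the energy identity of Step 2 applied to the difference of two solutions, which has zero data.

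\emph{Step 2 (Itô formula).} I establish the two-sided Itô formula for $e^{\beta s}|Y_s|^2$:
\[
e^{\beta t}|Y_t|^2+\int_t^T e^{\beta s}\big(\beta|Y_s|^2+\|Z_s\|^2\big)\mathrm ds
= e^{\beta T}|\xi|^2+2\!\int_t^T\! e^{\beta s}\langle Y_s,f_s\rangle\mathrm ds+\int_t^T e^{\beta s}\|h_s\|^2\mathrm ds+(\text{forward and backward stochastic integrals}).
\]
The backward integral contributes $+\|h\|^2\,\mathrm ds$, whereas the forward integral contributes $-\|Z\|^2\,\mathrm ds$. Both stochastic integrals have zero expectation after localization, using $Y\in S^2$.

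\emph{Step 3 (contraction).} Define $\Lambda(U,V)=(Y,Z)$ as the Step 1 solution with data $f(s,U_s,V_s)$ and $h(s,U_s,V_s)$; these lie in $M^2$ by the Lipschitz bounds and the $M^2$ assumption at fixed $(y,z)$. For two inputs, write $\Delta Y=Y-Y'$, $\Delta Z=Z-Z'$, $\Delta U=U-U'$, $\Delta V=V-V'$. Step 2 applied to the differences, together with $2\langle\Delta Y,\Delta f\rangle\le \frac{c}{\varepsilon}|\Delta Y|^2+\varepsilon(|\Delta U|^2+\|\Delta V\|^2)$ and \eqref{eq:h_lipschitz}, gives
\[
\mathbb E\!\int e^{\beta s}\big((\beta-\tfrac{c}{\varepsilon})|\Delta Y|^2+\|\Delta Z\|^2\big)\mathrm ds\le \mathbb E\!\int e^{\beta s}\big((\varepsilon+c)|\Delta U|^2+(\varepsilon+\alpha)\|\Delta V\|^2\big)\mathrm ds .
\]
I choose $\varepsilon$ with $\alpha+\varepsilon<1$, and then $\beta$ large enough that $\beta-\tfrac{c}{\varepsilon}\ge \bar c:=\frac{c+\varepsilon}{\alpha+\varepsilon}$. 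With the weighted norm $\bar c\,\|\cdot\|^2_{M^2}+\|\cdot\|^2_{M^2}$, the map $\Lambda$ is a strict contraction with factor $\alpha+\varepsilon<1$. Banach's fixed point theorem then yields a unique fixed point, and this fixed point is the solution. Continuity of $Y$ and $Y\in S^2$ follow from the BDG inequality applied to both stochastic integrals. Uniqueness in $S^2\times M^2$ follows because any solution is a fixed point of $\Lambda$.

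I expect the main obstacle to be Step 1, together with the Itô formula. Because $\mathcal F_t$ is not a filtration, the measurability of $Z$ and the vanishing of the expectations of the mixed forward/backward integrals must be justified through the independence and enlargement argument rather than by standard martingale theory.
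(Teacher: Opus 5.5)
Your proposal is correct and follows essentially the same route as its source: the paper gives no proof of its own and simply quotes Theorem~1.1 of Pardoux--Peng~\cite{Pardoux1994BackwardDS}, whose argument consists of exactly your three steps. These are (i) martingale representation in the genuine filtration $\mathcal F^W_{0,t}\vee\mathcal F^B_{0,T}$, plus an independence argument to recover $\mathcal F_t$-measurability, when $f,h$ do not depend on $(y,z)$; (ii) the doubly stochastic It\^o formula for $e^{\beta s}|Y_s|^2$ with its $+\|h\|^2$ and $-\|Z\|^2$ corrections; and (iii) a Banach fixed point in an $e^{\beta s}$-weighted norm on $M^2\times M^2$, where $\alpha<1$ gives the strict contraction. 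Your bookkeeping of a free $\varepsilon$ with $\alpha+\varepsilon<1$ and weight $\bar c=(c+\varepsilon)/(\alpha+\varepsilon)$ on the $Y$-component checks out, and is just a parametrized form of the same estimate.
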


\begin{assumption}\label{ass:H2}
There exists a constant $c>0$ such that, for all
$(t,y,z)\in[0,T]\times \mathbb{R}^k\times\mathbb{R}^{k\times d}$,
$$
h(t,y,z)h(t,y,z)^{*}
\leq zz^{*}+c\left(\|h(t,0,0)\|^{2}+|y|^{2}\right)I.
$$
\end{assumption}

\begin{assumption}\label{ass:H3}
For all $t\in[0,T]$, $x\in\mathbb{R}^d$, $y\in\mathbb{R}^k$ and
$z,\theta\in\mathbb{R}^{k\times d}$, it holds that
$$
h_z'(t,x,y,z)\,\theta\theta^{*}\,h_z'(t,x,y,z)^{*}
\leq \theta\theta^{*}.
$$
\end{assumption}

The following two theorems establish the connection between solutions of BDSDEs and SPDEs in the general setting.
\begin{theorem}[Theorem~3.1 in~\cite{Pardoux1994BackwardDS}]
Assume that $f$ and $h$ satisfy Assumptions~\ref{ass:H1} and~\ref{ass:H2}, and that $G\in C^2$.
Let $u$ be a solution to the SPDE~\eqref{eq:SPDE(with BDSDE)}.
Then
$$
u(t,x)=Y_t^{t,x},
$$
where $\{(Y_s^{t,x},Z_s^{t,x});\, t\leq s\leq T\}$
is the unique solution of the BDSDE~\eqref{eq:BDSDE}.
\end{theorem}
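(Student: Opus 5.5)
The plan is the classical Pardoux--Peng verification argument: apply a generalized It\^o formula to $s\mapsto u(s,X_s^{t,x})$ on $[t,T]$, show that the resulting pair solves the BDSDE~\eqref{eq:BDSDE}, and conclude with the uniqueness result (Theorem~1.1 of~\cite{Pardoux1994BackwardDS}, recalled above).

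First I fix a solution $u$ of the SPDE~\eqref{eq:SPDE(with BDSDE)}. I assume it is regular enough that $u(s,\cdot)$ is $C^2$ and that $u(s,x)$ is $\mathcal F^B_{s,T}$-measurable, since it is built backward from $G$ using only increments of $B$ on $[s,T]$. I then set
\[
Y_s:=u(s,X_s^{t,x}),\qquad Z_s:=(\sigma^\top\nabla u)(s,X_s^{t,x}).
\]
Because $X_s^{t,x}$ is $\mathcal F^W_{t,s}$-measurable, $Y_s$ and $Z_s$ are $\mathcal F_s=\mathcal F^W_{0,s}\vee\mathcal F^B_{s,T}$-measurable. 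This gives the adaptedness required for membership in $S^2\times M^2$. The square-integrability itself comes from polynomial growth of $u$ and $\nabla u$ (inherited from $G\in C^2$ and the growth bounds in Assumption~\ref{ass:H2}) together with the moment bounds on $X$.

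The key step is the It\^o--Ventzell type formula. Take a partition $t=s_0<\dots<s_n=T$ and write
\[
u(s_{i+1},X_{s_{i+1}})-u(s_i,X_{s_i})=\bigl[u(s_{i+1},X_{s_{i+1}})-u(s_{i+1},X_{s_i})\bigr]+\bigl[u(s_{i+1},X_{s_i})-u(s_i,X_{s_i})\bigr].
\]
In the first bracket $u(s_{i+1},\cdot)$ is independent of the $W$-increment, so a Taylor expansion gives $\mathcal L u\,\mathrm ds+\nabla u^\top\sigma\,\mathrm dW$. The second bracket is handled with the integral form~\eqref{eq:SPDE(with BDSDE)-integral} evaluated at the frozen point $X_{s_i}$, which gives $-[\mathcal L u+f]\,\mathrm ds-h\,\mathrm d\overleftarrow B$. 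Here the backward integral appears with right-endpoint evaluation, consistent with the definition of $\mathrm d\overleftarrow B$. Summing and letting the mesh go to zero, the $\mathcal L u$ terms cancel, no cross-variation arises because $W$ and $B$ are independent, and we obtain
\[
G(X_T)-u(t,x)=-\int_t^T f\,\mathrm dr-\int_t^T h\,\mathrm d\overleftarrow B_r+\int_t^T Z_r\,\mathrm dW_r,
\]
with $f,h$ evaluated at $(r,X_r,Y_r,Z_r)$. The same identity holds on $[s,T]$, so $(Y,Z)$ solves~\eqref{eq:BDSDE}.

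Uniqueness under Assumptions~\ref{ass:H1}--\ref{ass:H2} then identifies $(Y,Z)$ with the unique BDSDE solution, and in particular $u(t,x)=Y_t^{t,x}$.

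The main obstacle I expect is justifying the limit in the It\^o--Ventzell step. The remainder terms must vanish in $L^2$, and the forward and backward Riemann sums must converge to the correct It\^o integrals without mixing evaluation points. I would handle this by truncation/localization using the $C^2$ regularity of $u$ in $x$ and moment bounds, and I would use the independence of $W$ and $B$ to show the mixed second-order terms have zero limit.
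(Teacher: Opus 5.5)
Your argument has the same structure as the Pardoux--Peng proof that the paper cites: telescope over a partition, apply It\^o's formula in $x$ with time frozen, use the integral form of the SPDE in $t$ with space frozen, let the mesh go to $0$, and conclude by uniqueness in $S^2\times M^2$. The It\^o--Wentzell step is sound. Freezing time at $s_{i+1}$ and space at $X_{s_i}$ is legitimate; every frozen integrand is in fact $\mathcal F_r$-measurable.

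You invoke $W\perp B$ only for the forward piece. It is equally what justifies substituting the random point $X_{s_i}$ into $\int_{s_i}^{s_{i+1}}h(r,x,\dots)\,\mathrm d\overleftarrow B_r$; work in the backward filtration $\mathcal F^W_{0,T}\vee\mathcal F^B_{r,T}$ for that. There are also no mixed second-order terms to control. Each bracket is an exact identity: It\^o's formula for the $\sigma(B)$-measurable function $u(s_{i+1},\cdot)$, and the SPDE at a fixed point. The limit only needs continuity of the integrands plus the forward/backward isometries, or localization.

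The genuine gap is the integrability step you need in order to invoke uniqueness in $S^2\times M^2$. Polynomial growth of $u$ and $\nabla u$ is not inherited from $G\in C^2$ and Assumption~\ref{ass:H2}. $C^2$ carries no growth bound, so even $G(X_T)\in L^2$ can fail. Assumption~\ref{ass:H2} only constrains $h$ in terms of $(y,z)$ and says nothing about how an arbitrary solution behaves as $|x|\to\infty$.

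No argument can recover this, because without a growth restriction the statement is false. Work in dimension one with $b=0$, $\sigma=1$, $f=h=0$, $G=0$. Let $w(x,t)=\sum_{n\ge0}g^{(n)}(t)x^{2n}/(2n)!$ be Tychonoff's solution of $w_t=w_{xx}$, where $g(t)=e^{-1/t^2}$ for $t>0$ and $g=0$ for $t\le0$. Then $u(t,x):=w(x,(T-t)/2)$ has the following properties:
\begin{itemize}
\item it is deterministic, hence $\mathcal F^B_{t,T}$-measurable;
\item it is a classical solution of $\partial_tu+\frac12u_{xx}=0$ with $u(T,\cdot)=0$;
\item $u(t,0)=g((T-t)/2)>0$ for $t<T$.
\end{itemize}
Yet the BDSDE solution is $(Y,Z)=(0,0)$.

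So a growth or integrability condition must be part of what ``solution'' means. For instance, require $|u(s,x)|+|\nabla u(s,x)|\le C(1+|x|^p)$ with a random constant $C$ measurable with respect to $B$ and $\mathbb E[C^2]<\infty$. Alternatively, assume directly that $\bigl(u(\cdot,X^{t,x}),\sigma^\top\nabla u(\cdot,X^{t,x})\bigr)\in S^2\times M^2$. State this as a hypothesis. The same holds for the $\mathcal F^B_{s,T}$-measurability of $u(s,x)$: it is part of the definition of a solution, not a consequence of ``building $u$ backward from $G$.''

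With that hypothesis, independence of $C$ and $X$ gives $\mathbb E\sup_s|Y_s|^2\le2\,\mathbb E[C^2]\,\mathbb E\sup_s(1+|X_s|^{2p})<\infty$, and similarly for $Z$. The rest of your proof then goes through.
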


\begin{theorem}[Theorem 3.2 in~\cite{Pardoux1994BackwardDS}]
  Let $f$, $G$ and $h$ satisfy Assumption~\ref{ass:H1},~\ref{ass:H2} and ~\ref{ass:H3},
   then
  $$\{u(t,x)\triangleq Y_t^{t,x};0\leq t\leq T,x\in\mathbb{R}^d\}$$
  is the unique classical solution of the system of backward SPDEs~\eqref{eq:SDE(with BDSDE)},
  where $\{(Y_s^{t,x},Z_s^{t,x});t\leq s\leq T\}$ is the unique solution of the BDSDE~\eqref{eq:BDSDE}.
\end{theorem}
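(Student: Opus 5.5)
The plan follows the strategy of \cite{Pardoux1994BackwardDS}. It has three parts: regularity of the random field $u(t,x):=Y_t^{t,x}$, verification that $u$ satisfies the integral form \eqref{eq:SPDE(with BDSDE)-integral}, and uniqueness. The uniqueness part is immediate from the preceding Theorem~3.1 of \cite{Pardoux1994BackwardDS}. Any classical solution $v$ of \eqref{eq:SPDE(with BDSDE)} satisfies $v(t,x)=Y_t^{t,x}$, where $(Y^{t,x},Z^{t,x})$ is the unique solution of \eqref{eq:BDSDE}, and the BDSDE solution is unique by Theorem~1.1 of \cite{Pardoux1994BackwardDS}. So the work lies in existence, i.e.\ in showing that $u$ is a classical solution.

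\textbf{Step 1 (regularity and Markov property).} I will first show that $x\mapsto (X^{t,x},Y^{t,x},Z^{t,x})$ is twice differentiable in the $S^2\times S^2\times M^2$ sense. Since $b,\sigma\in C_b^3$, the forward flow and its first and second variations exist with moments of all orders. The derivatives $(\nabla_x Y,\nabla_x Z)$ and $(\nabla_x^2 Y,\nabla_x^2 Z)$ formally solve linear BDSDEs whose backward-integral coefficient is $h'_y(\cdot)\,\cdot+h'_z(\cdot)\,\cdot$. Assumption~\ref{ass:H1}, specifically the condition $\alpha<1$ in \eqref{eq:h_lipschitz}, gives well-posedness of the first variation. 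Assumption~\ref{ass:H3}, $h'_z\theta\theta^*h_z'^*\le\theta\theta^*$, is exactly what is needed to close the $L^p$ estimates ($p>2$) for the difference quotients and the second-order variational BDSDE. Assumption~\ref{ass:H2} supplies a priori $L^p$ bounds on $Y$.

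I then establish continuity in $(t,x)$ via Kolmogorov's criterion. This yields $u\in C^{0,2}$ with $u(t,\cdot)$ measurable with respect to $\mathcal F^B_{t,T}$. Uniqueness of the BDSDE and the flow property $X^{s,X_s^{t,x}}_r=X^{t,x}_r$ then give the Markov identities $Y_s^{t,x}=u(s,X_s^{t,x})$ and $Z_s^{t,x}=(\sigma^\top\nabla u)(s,X_s^{t,x})$, with the latter obtained by differentiating $Y$ in the direction of the Malliavin derivative in $W$.

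\textbf{Step 2 (verification).} Fix $x$ and a partition $t=t_0<\dots<t_n=T$. I decompose
\[
u(t_i,x)-u(t_{i+1},x)=\bigl[Y^{t_i,x}_{t_i}-Y^{t_i,x}_{t_{i+1}}\bigr]+\bigl[u(t_{i+1},X^{t_i,x}_{t_{i+1}})-u(t_{i+1},x)\bigr].
\]
By \eqref{eq:BDSDE}, the first bracket equals $\int_{t_i}^{t_{i+1}}f\,\mathrm dr+\int_{t_i}^{t_{i+1}}h\,\mathrm d\overleftarrow B_r-\int_{t_i}^{t_{i+1}}Z_r\,\mathrm dW_r$, all evaluated along $X^{t_i,x}$.

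For the second bracket, $u(t_{i+1},\cdot)$ is $\mathcal F^B_{t_{i+1},T}$-measurable, hence independent of $W$ on $[t_i,t_{i+1}]$. I may therefore apply the ordinary It\^o formula with $u(t_{i+1},\cdot)$ frozen, which gives $\int_{t_i}^{t_{i+1}}\mathcal Lu(t_{i+1},X_r)\,\mathrm dr+\int_{t_i}^{t_{i+1}}\nabla u(t_{i+1},X_r)\sigma(X_r)\,\mathrm dW_r$.

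Summing over $i$, the $\mathrm dW$ terms cancel up to a remainder in which $u(t_{i+1},\cdot)$ and $u(r,\cdot)$ are compared, because $Z_r=(\sigma^\top\nabla u)(r,X_r)$. Letting the mesh go to zero, I will use continuity of $u$, $\nabla u$, $\nabla^2u$ together with the moment bounds from Step~1 and dominated convergence. Since $X^{t_i,x}_r\to x$ for $r\in[t_i,t_{i+1}]$, the Riemann sums converge in probability to $\int_t^T[\mathcal Lu+f](s,x)\,\mathrm ds$ and $\int_t^T h(s,x,u,\sigma^\top\nabla u)\,\mathrm d\overleftarrow B_s$. This yields \eqref{eq:SPDE(with BDSDE)-integral}.

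\textbf{Main obstacle.} I expect Step~1 to be the main difficulty, namely the $C^2$ regularity in $x$ with uniform $L^p$ bounds. The backward integrand $h$ depends on $Z$, so the derivative BDSDEs involve $h'_z\nabla Z$. Standard $L^2$ estimates close only because $\alpha<1$, while $L^p$ estimates for $p>2$ require Assumption~\ref{ass:H3}. I would first derive the $L^p$ a priori estimate for a general linear BDSDE by applying It\^o's formula to $|Y|^p$ (with backward It\^o corrections carrying the sign that \ref{ass:H3} controls), and then apply it to difference quotients. A secondary technical point is the limiting argument in Step~2 for the backward integral, where the frozen-time integrands must converge in $M^2$. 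I would handle this using $\mathcal F_r$-measurability and the $L^2$ isometry of the backward It\^o integral.
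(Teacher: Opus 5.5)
The paper gives no proof of this statement; it quotes Theorem~3.2 of \cite{Pardoux1994BackwardDS}. Your plan is essentially the original Pardoux--Peng argument and is correct as an outline: variational BDSDEs with $L^p$ bounds from Assumptions~\ref{ass:H1}--\ref{ass:H3} plus Kolmogorov's criterion for $C^{0,2}$ regularity, the identifications $Y_s^{t,x}=u(s,X_s^{t,x})$ and $Z_s^{t,x}=(\sigma^\top\nabla u)(s,X_s^{t,x})$, the partition argument with $u(t_{i+1},\cdot)$ frozen, and uniqueness via the preceding Theorem~3.1. Step~1 relies on smoothness that Assumptions~\ref{ass:H1}--\ref{ass:H3} do not contain, namely $b,\sigma\in C_b^3$ and $f,h$ of class $C^3$ in $(x,y,z)$ from the preceding remark, plus a sufficiently smooth terminal function such as $G\in C_p^3$; the displayed statement omits these, so state them explicitly.
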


\paragraph{Linear SPDEs and BDSDEs}
In our applications of SPDEs, we are mainly interested in the linear case.
Therefore, following~\cite{BallyMatoussi2001}, we present the linear setting
that will be used in this work.

In the linear setting, consider
 \begin{align*}
    f(t,x,y,z)=F(t,x)+c(t)y+\widetilde{c}(t)z,\quad h(t,x,y,z)=H(t,x)+d(t)y,
  \end{align*}
  where $c,\widetilde{c}$ are bounded deterministic functions.
Then the BDSDE~\eqref{eq:BDSDE} becomes
\begin{align}
\label{eq:linear BDSDE}
  Y_s^{t,x}=G(X_T^{t,x})+\int_s^T \left[F(r,X_r^{t,x})+c(r)Y_r^{t,x}+\widetilde{c}(r)Z_r^{t,x}\right]\,\mathrm{d}r 
  +\int_s^T \left[H(r,X_r^{t,x})+d(r)Y_r^{t,x}\right]\mathrm d \overleftarrow{B}_r-
  \int_s^T Z_r^{t,x}\,\mathrm{d}W_r,\quad t\leq s\leq T, 
\end{align}
and the SPDE~\eqref{eq:SPDE(with BDSDE)} becomes
\begin{equation}\label{eq:linear SPDE(with BDSDE)}
  \begin{cases}
\mathrm{d} u(t,x)
=
\Bigl[\mathcal{L}u(t,x)
+ F(t,x)
+ c(t)u(t,x)
+ \widetilde{c}(t)(\sigma^\top\nabla u)(t,x)\Bigr]\,\mathrm{d}t
+ \Bigl[H(t,x)+d(t)u(t,x)\Bigr]\,\mathrm{d}B_t,\quad t\in[0,T],\\
u(T,x)=G(x).
\end{cases}
\end{equation}

\begin{proposition}[Linear BDSDE representation, Proposition~2.1 in \cite{BallyMatoussi2001}]
Assume that $G\in C_p^3(\mathbb{R}^d)$ and that
$F, H\in C_b^3([0,T]\times\mathbb{R}^d)$.
Let $\{X_s^{t,x};t\leq s\leq T\}$ be the solution of the SDE~\eqref{eq:SDE(with BDSDE)}
and define the stochastic exponential $\Phi(s,r)$ as
  \begin{align*}
    \Phi(s,\tau)=\exp\left(\int_s^\tau c(r)\,\mathrm{d}r 
    +\int_s^\tau d(r)\mathrm d \overleftarrow{B}_r
    +\int_s^\tau \widetilde{c}(r)\,\mathrm{d}W_r
    -\frac{1}{2}\int_s^\tau\left(|\widetilde{c}(r)|^2-|d(r)|^2\right)\mathrm d r\right).
  \end{align*}
Then,

(i)The unique solution $\{(Y_s^{t,x},Z_s^{t,x});\, t\le s\le T\}$ of
\eqref{eq:linear BDSDE} is given by
\begin{align*}
Y_s^{t,x}
= \Phi(s,T) G\bigl(X_T^{t,x}\bigr)
+ \int_s^T \Phi(s,r)
\Bigl( F\bigl(r,X_r^{t,x}\bigr) + d(r) H\bigl(r,X_r^{t,x}\bigr) \Bigr)\,\mathrm{d}r 
 + \int_s^T \Phi(s,r) H\bigl(r,X_r^{t,x}\bigr)
\mathrm d \overleftarrow{B}_r
- \int_s^T \Phi(s,r) Z_r^{t,x}\,\mathrm{d}W_r ,
\end{align*}
and it can be written as
  \begin{align*}
    Y_s^{t,x}=\mathbb{E}\left[\Phi(s,T)G(X_T^{t,x})
    +\int_s^T\Phi(s,r)\left(F(r,X_r^{t,x})+d(r) H(r,X_r^{t,x})\right)\,\mathrm{d}r 
    +\int_s^T\Phi(s,r)H(r,X_r^{t,x})\mathrm d \overleftarrow{B}_r\,|\,\mathcal{F}_{t,s}^W\vee \mathcal{F}_{t,T}^B  \right].
  \end{align*}

  (ii)The SPDE~\eqref{eq:linear SPDE(with BDSDE)} has a unique solution $u$ and it can 
  be written as
  \begin{align*}
    u(t,x)=\mathbb{E}\left[\Phi(t,T)G(X_T^{t,x})
    +\int_t^T\Phi(t,r)\left(F(r,X_r^{t,x})+d(r) H(r,X_r^{t,x})\right)\,\mathrm{d}r 
    +\int_t^T\Phi(t,r)H(r,X_r^{t,x})\mathrm d \overleftarrow{B}_r\,|\,\mathcal{F}_{t,T}^B  \right].
  \end{align*}
\end{proposition}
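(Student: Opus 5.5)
The plan is to prove part (i) by an explicit \emph{variation-of-constants} argument, then obtain part (ii) by taking $s=t$ and identifying $Y_t^{t,x}$ with $u(t,x)$ through the general SPDE--BDSDE correspondence (Theorems~3.1 and~3.2 of~\cite{Pardoux1994BackwardDS}).

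\textbf{Existence, uniqueness and the product rule.} First check that the linear coefficients $f=F+cy+\widetilde c z$ and $h=H+dy$ satisfy Assumption~\ref{ass:H1}. Since $h$ does not depend on $z$, we may take $\alpha=0$, and boundedness of $c,\widetilde c,d$ gives the Lipschitz bounds. Theorem~1.1 of~\cite{Pardoux1994BackwardDS} then yields a unique solution $(Y,Z)\in S^2\times M^2$ of \eqref{eq:linear BDSDE}.

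Fix $s$ and consider $r\mapsto \Phi(s,r)$ on $[s,T]$. Because the coefficients are bounded and deterministic, $\Phi(s,\cdot)$ is an explicit exponential with moments of all orders. It solves a linear two-sided equation $\mathrm d_r\Phi(s,r)=\Phi(s,r)\bigl[c\,\mathrm dr+\widetilde c\,\mathrm dW_r+d\,\mathrm d\overleftarrow B_r+(\text{correction})\,\mathrm dr\bigr]$. The correction term is fixed by the $-\tfrac12(|\widetilde c|^2-|d|^2)$ compensator; this is where the sign of the backward It\^o correction enters.

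Next apply the generalized It\^o formula for processes with both forward and backward integrals (Lemma~1.3 of~\cite{Pardoux1994BackwardDS}) to the product $\Phi(s,r)Y_r$. The quadratic covariation terms are used as follows.
\begin{itemize}
\item The $W$-covariation produces $\Phi\,\widetilde c\,Z\,\mathrm dr$, which cancels the drift $\widetilde c Z$ in $f$.
\item The $B$-covariation produces $\Phi\,d\,(H+dY)\,\mathrm dr$ with the backward sign. Together with the $|d|^2$ term in $\Phi$, it cancels the $d^2Y$ contribution and leaves exactly the extra running term $d(r)H(r,X_r)$.
\item The $cY$ terms cancel directly.
\end{itemize}
Integrating from $s$ to $T$ and using $\Phi(s,s)=1$ and $Y_T=G(X_T^{t,x})$ gives the pathwise identity in (i). The residual $\mathrm dW$-integral has integrand $\Phi(s,r)\bigl(Z_r+\widetilde c(r)Y_r\bigr)$, which can be relabelled in the stated form.

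\textbf{Conditional expectation.} Introduce the genuine filtration $\mathcal G_r=\mathcal F^W_{t,r}\vee\mathcal F^B_{t,T}$. By independence of $W$ and $B$, $W$ remains a $\mathcal G$-Brownian motion. The integrand of the $\mathrm dW$-integral is $\mathcal G_r$-measurable, since $\mathcal F_r\subset\mathcal G_r$, and $\Phi(s,r)$ only involves $W$-increments after $s$ and $B$.

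Square integrability of the integrand follows from H\"older's inequality, combining all-order moments of $\Phi$ with $Y\in S^2$ and $Z\in M^2$. If needed, localize with stopping times and pass to the limit using the BDG inequality. Hence the stochastic integral is a $\mathcal G$-martingale increment and has zero conditional expectation given $\mathcal G_s=\mathcal F^W_{t,s}\vee\mathcal F^B_{t,T}$. The remaining terms on the right-hand side are unchanged by this conditioning except through their dependence on $W$ after $s$, and $Y_s$ is $\mathcal G_s$-measurable, so conditioning gives the representation formula in (i).

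\textbf{Part (ii).} Set $s=t$. Then $\mathcal F^W_{t,t}$ is trivial, so $Y_t^{t,x}$ equals the stated conditional expectation given $\mathcal F^B_{t,T}$. It remains to verify Assumptions~\ref{ass:H2} and~\ref{ass:H3}:
\begin{itemize}
\item $h_z'=0$, so Assumption~\ref{ass:H3} is trivial.
\item $hh^*\le 2(\|H\|^2+|d|^2|y|^2)I$, so Assumption~\ref{ass:H2} holds.
\end{itemize}
The regularity $G\in C_p^3$ and $F,H\in C_b^3$ supplies the $C^3$ hypothesis. Theorem~3.2 of~\cite{Pardoux1994BackwardDS} then shows that $u(t,x):=Y_t^{t,x}$ is the unique classical solution, and Theorem~3.1 gives uniqueness among solutions.

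I expect the main obstacle to be the product-rule computation. The difficulty is keeping track of the opposite sign of the It\^o correction for backward integrals, and the fact that $\mathcal F_r$ is not a filtration; the enlargement to $\mathcal G_r$ is what makes the martingale argument legitimate.
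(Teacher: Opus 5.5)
Your overall route (variation of constants, conditioning on $\mathcal F^W_{t,s}\vee\mathcal F^B_{t,T}$, then Pardoux--Peng for (ii)) is the standard one. The paper gives no proof and only cites \cite{BallyMatoussi2001}. The gap is that you assert the central product-rule computation instead of carrying it out, and for the $\Phi$ in the statement it does not come out as you claim.

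With exponent $-\tfrac12(|\widetilde c|^2-|d|^2)$, the backward It\^o correction $-\tfrac12|d|^2$ of Lemma~1.3 in \cite{Pardoux1994BackwardDS} exactly cancels the $+\tfrac12|d|^2$ in the exponent. So, formally, $\mathrm d_r\Phi(s,r)=\Phi(s,r)\bigl(c\,\mathrm dr+\widetilde c\,\mathrm dW_r+d\,\mathrm d\overleftarrow B_r\bigr)$ with no correction at all. Combine this with $\mathrm dY_r=-(F+cY+\widetilde cZ)\,\mathrm dr-(H+dY)\,\mathrm d\overleftarrow B_r+Z\,\mathrm dW_r$ and the covariation terms you list. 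The drift of $\Phi Y$ is then
\[
\Phi\bigl(-F-cY-\widetilde cZ\bigr)+\Phi cY+\Phi\widetilde cZ+\Phi d(H+dY)=\Phi\bigl(-F+dH+|d|^2Y\bigr),
\]
so the $|d|^2Y$ term is doubled, not cancelled. Cancellation requires $\mathrm d_r\Phi$ to contain $-|d|^2\Phi\,\mathrm dr$, i.e.\ the exponent must carry $-\tfrac12(|\widetilde c|^2+|d|^2)$. With that exponent, integrating over $[s,T]$ gives the running term $F-dH$, not $F+dH$.

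The printed formula is in fact false for $d\neq0$. Take $F=H=c=\widetilde c=0$ and $G\equiv1$. The unique solution is $Z=0$, $Y_s=\exp\bigl(\int_s^T d\,\mathrm d\overleftarrow B_r-\tfrac12\int_s^T|d|^2\,\mathrm dr\bigr)$, a time-reversed exponential martingale with $\mathbb E Y_s=1$ (as also follows by taking expectations in the BDSDE). The stated representation instead returns $\Phi(s,T)$, which differs from this by the deterministic factor $e^{\int_s^T|d|^2\,\mathrm dr}$. So no argument can prove (i)--(ii) as printed. The signs, presumably a transcription slip, must be corrected, and doing your computation explicitly is what reveals them.

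Second, even with the corrected exponent, the product rule is not legitimate in the form you use it. $\Phi(s,r)$ depends on $B|_{[s,r]}$, so it is not $\mathcal F^W_{0,r}\vee\mathcal F^B_{r,T}$-measurable, and Lemma~1.3 does not apply to $\Phi(s,\cdot)Y$. Your enlarged $\mathcal G_r$ only rescues the $W$-martingale step: $B$ is $\mathcal G_t$-measurable, so it cannot carry backward It\^o calculus there.

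The standard remedy is to split $\Phi(s,r)=\Phi^W(s,r)\Psi_s\Psi_r^{-1}$, where
\begin{itemize}
\item $\Phi^W(s,r)=\exp\bigl(\int_s^r\widetilde c\,\mathrm dW_u-\tfrac12\int_s^r|\widetilde c|^2\,\mathrm du\bigr)$;
\item $\Psi_r=\exp\bigl(\int_r^T c\,\mathrm du+\int_r^T d\,\mathrm d\overleftarrow B_u-\tfrac12\int_r^T|d|^2\,\mathrm du\bigr)$, which is $\mathcal F^B_{r,T}$-measurable.
\end{itemize}
Lemma~1.3 applied to $\Phi^W(s,r)\Psi_r^{-1}Y_r$ gives drift $\Phi^W\Psi_r^{-1}(-F+dH)$, backward part $-\Phi^W\Psi_r^{-1}H$, and $W$-part $\Phi^W\Psi_r^{-1}(Z+\widetilde cY)$. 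Integrate over $[s,T]$ and multiply by the $\mathcal F^B_{s,T}$-measurable factor $\Psi_s$. This also fixes the meaning of $\int_s^T\Phi(s,r)H\,\mathrm d\overleftarrow B_r$, whose integrand anticipates $B$: it is $\Psi_s\int_s^T\Psi_r^{-1}\Phi^W(s,r)H_r\,\mathrm d\overleftarrow B_r$.

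Your $\mathcal G$-martingale argument for the $\mathrm dW$-term and your reduction of (ii) to Theorems~3.1--3.2 of Pardoux--Peng then go through, with two small fixes. Assumption~\ref{ass:H1} needs $\alpha\in(0,1)$; any such value works, but $\alpha=0$ is not allowed. And the fact you need is that $(Y^{t,x}_r,Z^{t,x}_r)$ is $\mathcal F^W_{t,r}\vee\mathcal F^B_{r,T}$-measurable, since $\mathcal F_r\subset\mathcal G_r$ fails for $t>0$.
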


\section{Proofs for the Strong-error Framework}
\label{app:strong-error-framework}

This appendix collects the technical proofs underlying the strong-error
analysis developed in
Section~\ref{sec:Strong-Order-One Numerical Schemes for Pricing and Greek Estimators}.
The corresponding discretization operators are introduced in the main text,
and here we establish the local consistency, strong-error, and accumulated
stability estimates required for the global strong-error order one convergence
results.

\paragraph{Notation and conventions.}
Throughout this appendix, \(C\) and \(C_q\) denote generic constants,
independent of the time step \(h\), whose values may change from line to line.
Whenever a uniform grid is used, we write
\[
t_k=t_0+kh,\quad k=0,\ldots,N,\quad t_N=T,
\]
with \(h=(T-t_0)/N\).
For simplicity, denote
$$A(r,x):=F(r,x)+d(r)H(r,x).$$

Unless explicitly stated otherwise, all \(L^p\) norms are taken with respect to
the joint law of the Brownian motions \((W,B)\). That is,
\[
\|\cdot\|_{L^p}:=\|\cdot\|_{L^p_{W,B}} .
\]
When a fixed realization of the backward Brownian motion \(B\) is considered,
we write the conditional norm explicitly as \(L_W^p\).

\paragraph{Forward--backward information and martingale estimates.}
In the BDSDE setting, the natural information at a grid point \(t_k\) consists
of the forward \(W\)-information up to \(t_k\) and the backward \(B\)-information
from \(t_k\) to \(T\). At the grid level we write this information as
\[
\mathcal G_k
:=
\sigma(W_{t_0},\ldots,W_{t_k})
\vee
\sigma(B_{t_k}-B_{t_m}: k\le m\le N)
\vee\mathcal N .
\]
The family \((\mathcal G_k)_{k=0}^N\) is not a filtration in the ordinary
increasing-time sense: the \(W\)-part is increasing in \(k\), while the
\(B\)-part is decreasing in \(k\). Thus, when estimating accumulated local
fluctuations below, we do not regard \((\mathcal G_k)\) itself as a filtration.

Instead, each martingale-type contribution is treated by splitting it into its
forward and backward parts. The \(W\)-terms are estimated as ordinary forward
martingale differences
and the \(B\)-terms are estimated as
backward martingale differences, or equivalently as ordinary martingale
differences after introducing the reversed Brownian motion
\[
\widehat B_u:=B_T-B_{T-u},\quad 0\le u\le T .
\]
The Burkholder--Davis--Gundy estimates used below are always applied
in this split sense, and the resulting forward and reversed-time estimates are
combined by the triangle and Minkowski inequalities.

\paragraph{A joint-to-conditional estimate.}
Most of the strong-error estimates in this appendix are first proved under the
joint law of the two Brownian motions \((W,B)\). However, in conditional setting,
we need to interpret the estimate after fixing a realization of the
backward Brownian motion \(B\), so that the remaining randomness comes only
from the forward Brownian motion \(W\). The following elementary lemma provides
this passage from joint \(L^p_{W,B}\)-bounds to conditional \(L^p_W\)-bounds for
\(\mathbb P_B\)-almost every realization of \(B\).

The point of the lemma is that a deterministic joint strong-error estimate on
a sequence of dyadic time steps can be converted into an almost-sure-in-\(B\)
conditional estimate, at the cost of a harmless logarithmic-type factor. More
precisely, if an error family satisfies
\[
\mathbb E_{W,B}
\left[
\sup_{0\le k\le N_\ell}
|\Pi_{\ell,k}|^p
\right]
\le C_p2^{-p\ell},
\]
then for almost every fixed \(B\), the conditional \(W\)-error has the same
dyadic decay rate up to the factor \((1+\ell)^a\), with a finite random
constant depending on \(B\). This loss is mild and is sufficient for the
pathwise-in-\(B\) estimates used below.

\begin{lemma}
\label{lem:EB to EW}
Let \(p\ge2\). Assume that we are given a jointly measurable family of
random variables
\[
\{\Pi_{\ell,k}\}_{0\le k\le N_\ell,\,\ell\ge1}
\]
depending on both \(W\) and \(B\), such that
\[
\mathbb E_{W,B}
\left[
\sup_{0\le k\le N_\ell}
\left|
\Pi_{\ell,k}
\right|^p
\right]
\le C_p2^{-p\ell},
\quad \ell\ge1.
\]
Then, for every $a>1$, there exists a finite random
constant $C_{p,a}(B)<\infty$ for $\mathbb P_B$-almost every realization of $B$, such that
for all $\ell\ge1$,
\[
\mathbb E_W
\left[
\sup_{0\le k\le N_\ell}
\left|
\Pi_{\ell,k}
\right|^p
\,\middle|\,B
\right]
\le
C_{p,a}(B)(1+\ell)^a2^{-p\ell}.
\]

\end{lemma}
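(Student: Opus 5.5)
The plan is to work with the nonnegative conditional error functional
\[
Z_\ell(B):=\mathbb E_W\!\left[\sup_{0\le k\le N_\ell}|\Pi_{\ell,k}|^p\,\middle|\,B\right],
\]
which is well defined as a $[0,\infty]$-valued measurable function of $B$. This follows from joint measurability of $\Pi_{\ell,k}$, independence of $W$ and $B$, and Fubini--Tonelli, which identify the conditional expectation with the integral over $W$ for fixed $B$. By the tower property (again Tonelli, since everything is nonnegative), the hypothesis becomes $\mathbb E_B[Z_\ell(B)]\le C_p2^{-p\ell}$.

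Next, rescale and sum. Define
\[
S(B):=\sum_{\ell\ge1}\frac{2^{p\ell}Z_\ell(B)}{(1+\ell)^a}.
\]
By monotone convergence,
\[
\mathbb E_B[S(B)]\le C_p\sum_{\ell\ge1}(1+\ell)^{-a}<\infty,
\]
because $a>1$. Hence $S(B)<\infty$ on a set $\Omega_B^\ast$ with $\mathbb P_B(\Omega_B^\ast)=1$. On $\Omega_B^\ast$, each summand is bounded by the whole sum, so
\[
Z_\ell(B)\le S(B)(1+\ell)^a2^{-p\ell}\quad\text{for all }\ell\ge1.
\]
Setting $C_{p,a}(B):=S(B)$ then gives the claim. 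The exceptional set does not depend on $\ell$, since a single series controls all levels simultaneously. An equivalent route is Markov's inequality with thresholds $\ell^a$ combined with Borel--Cantelli, but the series argument is shorter and directly produces a finite random constant.

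The only delicate step is the measure-theoretic one: making sure that ``conditioning on $B$'' is the same as fixing a realization of $B$ and integrating over $W$. This is where independence of $W$ and $B$ and joint measurability are needed, so that $Z_\ell$ is a measurable version defined for every $B$ and the null set $\Omega_B^\ast$ can be fixed once for all $\ell$. After that, the argument is just Tonelli plus summability of $(1+\ell)^{-a}$. I also note that $p\ge2$ plays no role beyond finiteness; the same proof works for any $p>0$.
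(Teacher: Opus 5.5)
Your proof is correct and is the same as the paper's. The paper also defines $C_{p,a}(B)$ as the weighted series $\sum_{\ell\ge1}2^{p\ell}(1+\ell)^{-a}\,\mathbb E_W[\sup_{k}|\Pi_{\ell,k}|^p\mid B]$, bounds its $\mathbb E_B$-expectation termwise using $a>1$, and bounds each term by the whole sum. Your Tonelli/independence remarks make explicit the identification of conditioning on $B$ with integrating over $W$ for fixed $B$, which the paper uses implicitly, and your observation that $p\ge2$ plays no role is also correct.
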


\begin{proof}
For each $\ell\ge1$, define
\[
V_\ell^{(p)}(B)
:=
\mathbb E_W
\left[
\sup_{0\le k\le N_\ell}
\left|
\Pi_{\ell,k}
\right|^p
\,\middle|\,B
\right].
\]
Then
\[
\mathbb E_B\left[V_\ell^{(p)}(B)\right]
=
\mathbb E_{W,B}
\left[
\sup_{0\le k\le N_\ell}
\left|
\Pi_{\ell,k}
\right|^p
\right]
\le
C_p2^{-p\ell}.
\]

Fix $a>1$ and define
\[
C_{p,a}(B)
:=
\sum_{\ell=1}^{\infty}
\frac{2^{p\ell}}{(1+\ell)^a}
V_\ell^{(p)}(B).
\]
Since all terms in the series are nonnegative,
\[
\mathbb E_B\left[C_{p,a}(B)\right]
=
\sum_{\ell=1}^{\infty}
\frac{2^{p\ell}}{(1+\ell)^a}
\mathbb E_B\left[V_\ell^{(p)}(B)\right]
\le
C_p
\sum_{\ell=1}^{\infty}
\frac{1}{(1+\ell)^a}
<\infty,
\]
where the last inequality follows from $a>1$. Hence
\[
C_{p,a}(B)<\infty
\]
for $\mathbb P_B$-almost every realization of $B$.

For such a realization of $B$, since every term in the defining series of
$C_{p,a}(B)$ is nonnegative, we have, for every $\ell\ge1$,
\[
\frac{2^{p\ell}}{(1+\ell)^a}
V_\ell^{(p)}(B)
\le
C_{p,a}(B).
\]
Therefore,
\[
V_\ell^{(p)}(B)
\le
C_{p,a}(B)(1+\ell)^a2^{-p\ell},
\]
i.e.
\[
\mathbb E_W
\left[
\sup_{0\le k\le N_\ell}
\left|
\Pi_{\ell,k}
\right|^p
\,\middle|\,B
\right]
\le
C_{p,a}(B)(1+\ell)^a2^{-p\ell},
\]
for all $\ell\ge1$ and for $\mathbb P_B$-almost every realization of $B$.
\end{proof}

\subsection{Direct Pricing Estimator}

\paragraph{Proof of Proposition~\ref{prop:error-u}}\label{para:proof of prop:error-u}

\begin{proposition*}[Strong-error order for the direct pricing payoff]
Fix $(t,x)\in[0,T]\times\mathbb{R}^d$ and a uniform grid $\{t_k\}_{k=0}^N$ with $h=(T-t)/N$.
Let
\begin{align*}
 P_{t_k}=\Phi(t,t_k)G\left(X_{t_k}^{t,x}\right)+Y_{t_k},
\end{align*}
where $Y_{t_k}$ is the exact accumulated payoff
\begin{align*}
  Y_{t_k}=\int_t^{t_k}\Phi(t,r)\left(F(r,X_r^{t,x})+d(r)H(r,X_r^{t,x})\right)\,\mathrm{d} r
+\int_t^{t_k}\Phi(t,r)H(r,X_r^{t,x})\,\mathrm d \overleftarrow{B}_r.
\end{align*}

Let $\left\{\left(X_k^{(h)},\Phi_k^{(h)}\right)\right\}_{k=0}^N$ be the approximations generated by the schemes
$S_X,S_\Phi$ as in Definition~\ref{def:Strong error of S X},~\ref{def:Strong error of S Phi}
 and define
 \begin{align*}
   Y_{k+1}^{(h)}
=&
Y_k^{(h)}
+
\mathcal{S}_{\mathrm{int}}
\left(
\Phi_k^{(h)},
X_k^{(h)},t_k,h;
\Delta W_{t_k},
\Delta\overleftarrow B_{t_k}
\right),
\quad
Y_0^{(h)}=0,\\
  P_k^{(h)}=&\Phi^{(h)}_k G\left(X_k^{(h)}\right)+Y_k^{(h)}.
\end{align*}

Assume:
\begin{enumerate}
    \item The strong-error orders of $\mathcal{S}_X$, $\mathcal{S}_\Phi$ and $\mathcal{S}_{\mathrm{int}}$
are $p_X,p_\Phi,p_{\mathrm{int}}$ respectively.
Moreover, $\mathcal{S}_{\mathrm{int}}$
satisfies the accumulated stability estimate
defined in Definition~\ref{def:accumulated stability of Sint}.

\item 
There exists $M>0$, independent of $h$, such that 
$\Phi_k^{(h)}$
 and 
$G(X_{t_k}^{t,x})$
are uniformly bounded in
$L^{4}_{W,B}$
by $M$.

\item 
$G$ is globally Lipschitz, i.e., there exists $L_G>0$ such that $|G(x)-G(y)|\leq L_G|x-y|$.
\end{enumerate}

Then the payoff approximation satisfies the joint strong-error bound
\begin{align*}
\left\|
\sup_{0\leq k\leq N}
\left|
P_{t_k}-P^{(h)}_k
\right|
\right\|_{L^2}
=
\mathcal{O}(h^p),
\quad
p=\min\{p_X,p_\Phi,p_{\mathrm{int}}\}.
\end{align*}
\end{proposition*}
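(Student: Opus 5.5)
We split the payoff error into a terminal-type part and an accumulated part. For each $k$,
\[
P_{t_k}-P_k^{(h)}
=\bigl(\Phi(t,t_k)G(X_{t_k}^{t,x})-\Phi_k^{(h)}G(X_k^{(h)})\bigr)
+\bigl(Y_{t_k}-\widetilde Y_k^{(h)}\bigr)
+\bigl(\widetilde Y_k^{(h)}-Y_k^{(h)}\bigr),
\]
where $\widetilde Y_k^{(h)}$ is the auxiliary sequence of Definition~\ref{def:Strong error of S int}, i.e.\ the sequence generated by $\mathcal S_{\mathrm{int}}$ with exact inputs $\Phi(t,t_k)$ and $X_{t_k}^{t,x}$. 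We take $\sup_k$ and the $L^2$ norm, and apply Minkowski. This bounds the total error by the sum of three terms $I_1+I_2+I_3$.

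\emph{Term $I_2$.} This is exactly the consistency error of $\mathcal S_{\mathrm{int}}$. By Definition~\ref{def:Strong error of S int} with $q=2$ it is at most $C_{\mathrm{int}}h^{p_{\mathrm{int}}}$.

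\emph{Term $I_3$.} This is controlled by the accumulated stability estimate of Definition~\ref{def:accumulated stability of Sint}. That estimate bounds $I_3$ by
\[
L_{\mathrm{int}}\Bigl(\bigl\|\sup_j|\Phi(t,t_j)-\Phi_j^{(h)}|\bigr\|_{L^4}+\bigl\|\sup_j|X_{t_j}^{t,x}-X_j^{(h)}|\bigr\|_{L^4}\Bigr).
\]
Definitions~\ref{def:Strong error of S Phi} and~\ref{def:Strong error of S X} are stated for every $q\ge2$, so we use them with $q=4$. This gives $I_3\le L_{\mathrm{int}}(C_\Phi h^{p_\Phi}+C_Xh^{p_X})$.

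\emph{Term $I_1$.} This is the product term and the only place where the moment assumptions are needed. We write
\[
\Phi(t,t_k)G(X_{t_k})-\Phi_k^{(h)}G(X_k^{(h)})
=\bigl(\Phi(t,t_k)-\Phi_k^{(h)}\bigr)G(X_{t_k})+\Phi_k^{(h)}\bigl(G(X_{t_k})-G(X_k^{(h)})\bigr).
\]
The ordering of this split is chosen deliberately: it puts the exact $G(X_{t_k}^{t,x})$ next to the $\Phi$-error and the numerical $\Phi_k^{(h)}$ next to the $G$-error, which are exactly the quantities assumption~2 controls.

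For the sup over $k$, we bound the first piece by $\sup_k|\Phi(t,t_k)-\Phi_k^{(h)}|\cdot\sup_k|G(X_{t_k})|$. Using the Lipschitz property of $G$ (assumption~3), we bound the second by $L_G\sup_k|\Phi_k^{(h)}|\cdot\sup_k|X_{t_k}-X_k^{(h)}|$. Hölder's inequality $\|fg\|_{L^2}\le\|f\|_{L^4}\|g\|_{L^4}$ then yields
\[
I_1\le M\,C_\Phi h^{p_\Phi}+L_G M\,C_Xh^{p_X}.
\]

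\emph{Main obstacle.} Hölder needs the $L^4$ bound on the supremum over $k$ of $|\Phi_k^{(h)}|$ and $|G(X_{t_k})|$, whereas assumption~2 as literally stated gives a uniform bound for each fixed $k$. The supremum form is exactly what hypothesis~(3) of Theorem~\ref{thm:QAMLMC for u when given B} supplies, so we will interpret assumption~2 in that sup sense.

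If one insists on the pointwise version, a fallback is to bound $\sup_k$ of the product using the $L^4$ strong bounds on the errors together with a discrete maximal moment. For $\Phi^{(h)}$ this follows from its explicit exponential form. For $G(X)$ it follows from BDG and the linear growth of $G$, which is itself a consequence of $G$ being Lipschitz.

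Summing the three terms gives a bound $C(h^{p_X}+h^{p_\Phi}+h^{p_{\mathrm{int}}})=\mathcal O(h^p)$ for $h\le h_0$, with $p=\min\{p_X,p_\Phi,p_{\mathrm{int}}\}$. The constant is independent of $h$.
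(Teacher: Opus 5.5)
Your proposal is correct and is essentially the paper's proof. It uses the same four-term decomposition, including the same ordering of the product split (exact $G(X_{t_k}^{t,x})$ against the $\Phi$-error and $\Phi_k^{(h)}$ against the $G$-error), H\"older in $L^4\times L^4$, the strong-error definitions at $q=4$, the consistency bound for $\mathcal S_{\mathrm{int}}$, and the accumulated stability estimate. The paper also silently reads assumption~2 in the supremum form $\|\sup_k|\Phi_k^{(h)}|\|_{L^4},\ \|\sup_k|G(X_{t_k}^{t,x})|\|_{L^4}\le M$, so your remark on this point is a fair clarification rather than a departure from the paper.
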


\begin{proof}
First, for each $0\le k\le N$, we have
\begin{align*}
 P_{t_k}-P_k^{(h)}
 =&
 \left(
 \Phi(t,t_k)G\left(X_{t_k}^{t,x}\right)+Y_{t_k}
 \right)
 -
 \left(
 \Phi_k^{(h)}G\left(X_k^{(h)}\right)+Y_k^{(h)}
 \right)\\
=&
\left(\Phi(t,t_k)-\Phi_k^{(h)}\right)
G\left(X_{t_k}^{t,x}\right)
+
\Phi_k^{(h)}
\left(
G\left(X_{t_k}^{t,x}\right)-G\left(X_k^{(h)}\right)
\right)+
\left(Y_{t_k}-\widetilde Y_k^{(h)}\right)
+
\left(\widetilde Y_k^{(h)}-Y_k^{(h)}\right).
\end{align*}

Taking the supremum over $0\le k\le N$ and then using the triangle
inequality and H\"older's inequality, we obtain
\begin{align*}
&
\left\|
\sup_{0\le k\le N}
\left|
P_{t_k}-P_k^{(h)}
\right|
\right\|_{L^2}\\
\le&
\left\|
\sup_{0\le k\le N}
\left|
\left(\Phi(t,t_k)-\Phi_k^{(h)}\right)
G\left(X_{t_k}^{t,x}\right)
\right|
\right\|_{L^2}+
\left\|
\sup_{0\le k\le N}
\left|
\Phi_k^{(h)}
\left(
G\left(X_{t_k}^{t,x}\right)-G\left(X_k^{(h)}\right)
\right)
\right|
\right\|_{L^2}\\
&+
\left\|
\sup_{0\le k\le N}
\left|
Y_{t_k}-\widetilde Y_k^{(h)}
\right|
\right\|_{L^2}
+
\left\|
\sup_{0\le k\le N}
\left|
\widetilde Y_k^{(h)}-Y_k^{(h)}
\right|
\right\|_{L^2}\\
\le&
\left\|
\sup_{0\le k\le N}
\left|
\Phi(t,t_k)-\Phi_k^{(h)}
\right|
\right\|_{L^4}
\left\|
\sup_{0\le k\le N}
\left|
G\left(X_{t_k}^{t,x}\right)
\right|
\right\|_{L^4}+
\left\|
\sup_{0\le k\le N}
\left|
\Phi_k^{(h)}
\right|
\right\|_{L^4}
\left\|
\sup_{0\le k\le N}
\left|
G\left(X_{t_k}^{t,x}\right)-G\left(X_k^{(h)}\right)
\right|
\right\|_{L^4}+
C_{\mathrm{int}}h^{p_{\mathrm{int}}}\\
&+
L_{\mathrm{int}}
\bigg(
\left\|
\sup_{0\le j\le N}
\left|
\Phi(t,t_j)-\Phi_j^{(h)}
\right|
\right\|_{L^4}
+
\left\|
\sup_{0\le j\le N}
\left|
X_{t_j}^{t,x}-X_j^{(h)}
\right|
\right\|_{L^4}
\bigg).
\end{align*}
By the moment assumption and the Lipschitz continuity of $G$,
\begin{align*}
\left\|
\sup_{0\le k\le N}
\left|
P_{t_k}-P_k^{(h)}
\right|
\right\|_{L^2}
\le&
M
\left\|
\sup_{0\le k\le N}
\left|
\Phi(t,t_k)-\Phi_k^{(h)}
\right|
\right\|_{L^4}
+
ML_G
\left\|
\sup_{0\le k\le N}
\left|
X_{t_k}^{t,x}-X_k^{(h)}
\right|
\right\|_{L^4}\\
&+
C_{\mathrm{int}}h^{p_{\mathrm{int}}}
+
L_{\mathrm{int}}
\bigg(
\left\|
\sup_{0\le j\le N}
\left|
\Phi(t,t_j)-\Phi_j^{(h)}
\right|
\right\|_{L^4}
+
\left\|
\sup_{0\le j\le N}
\left|
X_{t_j}^{t,x}-X_j^{(h)}
\right|
\right\|_{L^4}
\bigg).
\end{align*}
Using the pathwise-in-time strong-error order estimates for $S_X$ and $S_\Phi$, we
therefore get
\begin{align*}
&
\left\|
\sup_{0\le k\le N}
\left|
P_{t_k}-P_k^{(h)}
\right|
\right\|_{L^2}
\le
C_\Phi(M+L_{\mathrm{int}})h^{p_\Phi}
+
C_X(ML_G+L_{\mathrm{int}})h^{p_X}
+
C_{\mathrm{int}}h^{p_{\mathrm{int}}}.
\end{align*}
Hence, for all sufficiently small $h$,
\begin{align*}
\left\|
\sup_{0\le k\le N}
\left|
P_{t_k}-P_k^{(h)}
\right|
\right\|_{L^2}
\le
C h^p,
\quad
p=\min\{p_X,p_\Phi,p_{\mathrm{int}}\}.
\end{align*}
This proves
\[
\left\|
\sup_{0\leq k\leq N}
\left|
P_{t_k}-P^{(h)}_k
\right|
\right\|_{L^2}
=
\mathcal{O}(h^{p}).
\]
\end{proof}

\paragraph{Proof of Proposition~\ref{prop:int-strong-error}}
\label{para:proof of prop:int-strong-error}
\begin{proposition*}[Strong error of the integral discretization]
  Assume that
\(F,H,d,\widetilde c\) and the coefficients of \(X\) and \(\Phi\) are sufficiently
smooth with polynomial growth, and that the corresponding moments of
\(X\) and \(\Phi\) are uniformly bounded. 
Then the approximation generated by
\begin{align}
\widetilde Y_{k+1}^{(h)}
=
\widetilde Y_k^{(h)}
+
\mathcal S_{\mathrm{int}}^{\mathrm{FBT}}
\left(\Phi(t,t_k),X_{t_k}^{t,x},
t_k,h;\Delta W_{t_k},\Delta\overleftarrow B_{t_k}\right),
\quad
\widetilde Y_0^{(h)}=0,
\end{align}
satisfies
\begin{align}
\left\|
\sup_{0\le k\le N}
\left|
Y_{t_k}-\widetilde Y_k^{(h)}
\right|
\right\|_{L^q}
\le Ch,
\end{align}
for every $q\ge2$. Consequently, by Jensen's inequality, the same estimate
also holds for every $0<q<2$. Hence
$\mathcal S_{\mathrm{int}}^{\mathrm{FBT}}$ has strong-error order $1$ in the sense
of Definition~\ref{def:Strong error of S int}.
\end{proposition*}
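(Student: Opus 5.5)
We plan to decompose the error on each subinterval $[t_k,t_{k+1}]$ and sum the local errors, splitting into drift and stochastic contributions. Write $Y_{t_{k}}-\widetilde Y_k^{(h)}=\sum_{m=0}^{k-1}R_m$, with
\[
R_m=\int_{t_m}^{t_{m+1}}\bigl(\Phi(t,r)A(r,X_r)-\Phi(t,t_m)A(t_m,X_{t_m})\bigr)\,\mathrm dr
+\int_{t_m}^{t_{m+1}}\Phi(t,r)H(r,X_r)\,\mathrm d\overleftarrow B_r-\bigl(\text{backward terms of }\mathcal S^{\mathrm{FBT}}_{\mathrm{int}}\bigr).
\]
The drift part is handled by an It\^o expansion of $r\mapsto\Phi(t,r)A(r,X_r)$ around $t_m$. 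This gives $\Phi(t,r)A(r,X_r)-\Phi_{t_m}A(t_m,X_{t_m})=M_m(r)+O_{L^q}(r-t_m)$. Here $M_m(r)$ consists of a forward stochastic integral in $dW$ and a backward integral in $d\overleftarrow B$ over $[t_m,r]$, both of size $O_{L^q}(h^{1/2})$. The $O(r-t_m)$ part sums to $O(h)$ trivially. The martingale part $\int_{t_m}^{t_{m+1}}M_m(r)\,dr$ is split into its $W$- and $B$-components. Each component is a martingale difference in the forward filtration $\mathcal F^W_{t_m}\vee\mathcal F^B_{t,T}$, respectively the reversed filtration of $\widehat B$, and has size $O(h^{3/2})$. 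Discrete BDG (Burkholder) in the split sense of the appendix conventions then bounds the supremum over $k$ of the sum by $C(Nh^3)^{1/2}=Ch$.

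For the backward integral, we expand the integrand $\Phi(t,r)H(r,X_r)$ at $t_m$ using It\^o's formula for the doubly stochastic setting. Here $d\Phi(t,r)=\Phi(t,r)\bigl(c\,dr+\widetilde c\,dW_r+d\,\mathrm d\overleftarrow B_r+\ldots\bigr)$ and $dX=b\,dr+\sigma\,dW$. This yields
\[
\Phi(t,r)H(r,X_r)=\Phi_{t_m}H(t_m,X_{t_m})+\Phi_{t_m}(H_x\sigma+\widetilde cH)(W_r-W_{t_m})+\Phi_{t_m}d(t_m)H\,(B_{t_m}-B_r)+\rho_m(r),
\]
up to sign conventions for the backward increments. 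Integrating against $\mathrm d\overleftarrow B_r$ reproduces exactly the terms $\Phi H\Delta\overleftarrow B$, $J^{WB}$ and $J^{BB}$ of \eqref{eq:Sint-FBT}. The error is then $\int_{t_m}^{t_{m+1}}\rho_m(r)\,\mathrm d\overleftarrow B_r$, which must be shown to be of size $O(h)$ in $L^q$ with a fresh time-variable remainder. This needs $\|\rho_m(r)\|_{L^q}\le C(r-t_m)$. The remainder $\rho_m$ contains the time/drift part (of size $O(r-t_m)$), the replacement of coefficients evaluated at $r$ by those at $t_m$ inside the first-order terms, and double stochastic integrals. Both of the latter are $O(r-t_m)$ in $L^q$ by the smoothness and moment hypotheses.

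Since $\rho_m$ is not adapted to either one-sided filtration alone, the backward integrals are treated as follows. Each backward integral $\int_{t_m}^{t_{m+1}}\rho_m\,\mathrm d\overleftarrow B$ is a martingale increment for the reversed Brownian motion $\widehat B$, with $W$ adjoined to the initial $\sigma$-field; $W$ is independent of $B$, so it can be included in the filtration. Under this enlarged reversed filtration the backward integral is a genuine It\^o integral. Hence BDG gives
\[
\Bigl\|\sup_k\Bigl|\sum_{m<k}\int\rho_m\,\mathrm d\overleftarrow B\Bigr|\Bigr\|_{L^q}\le C\Bigl(\sum_m\int_{t_m}^{t_{m+1}}\|\rho_m(r)\|_{L^q}^2\,dr\Bigr)^{1/2}\le C(Nh^3)^{1/2}=Ch.
\]
The same trick, putting all of $W$ into the initial $\sigma$-field, also handles the $B$-parts of the drift martingale terms. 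Symmetrically, all of $B$ is put into the initial $\sigma$-field for the $W$-parts.

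The main obstacle is justifying this It\^o expansion for $\Phi(t,r)H(r,X_r)$. Because $\Phi$ involves both forward and backward integrals, the product rule must use the doubly stochastic It\^o formula of Pardoux--Peng, in which the $d\overleftarrow B$ quadratic variation enters with the opposite sign. For the same reason, the correct sign in the $J^{BB}$ coefficient has to be checked against the definition $J^{BB}=\int(B_{t_m}-B_r)\,\mathrm d\overleftarrow B_r$. We would first verify this by computing the expansion of the exponential $\Phi$ explicitly on $[t_m,r]$, writing $\Phi(t,r)=\Phi_{t_m}\exp(\cdots)$ and Taylor expanding the exponent to first order. This avoids the general formula, leaving all remainders as explicit polynomial-moment terms.
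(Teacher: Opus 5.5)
Your skeleton matches the paper's proof. Per step you get an $O(h^2)$ finite-variation remainder, an $O(h^{3/2})$ centered fluctuation from the $\mathrm dr$-integral, and a backward integral of a remainder with $\|\rho_m(r)\|_{L^q}\le C(r-t_m)$, all summed to $O(h)$ through $(Nh^3)^{1/2}$. The gap is in how you justify the martingale structure, which is what turns the naive bound $N\cdot O(h^{3/2})=O(h^{1/2})$ into $O(h)$.

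Adjoining all of $W$ to the reversed filtration $\mathcal R_r:=\sigma(W)\vee\mathcal F^B_{r,T}$ removes only the $W$-anticipation. The weight $\Phi(t,r)$ contains $\exp\bigl(\int_t^r d(u)\,\mathrm d\overleftarrow B_u\bigr)$, so $\Phi(t,r)$, the frozen coefficients $\Phi(t,t_m)(\cdots)$, and hence $\rho_m(r)$ depend on $B$ on $[t,r]$. That information is not $\mathcal R_r$-measurable. Consequently:
\begin{itemize}
\item $\sum_{m<k}\int_{t_m}^{t_{m+1}}\rho_m\,\mathrm d\overleftarrow B$ is not an $(\mathcal R_r)$-martingale, and backward BDG cannot be invoked as you propose.
\item The $B$-part of your drift fluctuation, with coefficient $\Phi(t,u)d(u)A(u,X_u)$, is not a reversed martingale difference either.
\item The Pardoux--Peng It\^o formula does not apply to $r\mapsto\Phi(t,r)$, since it needs $\mathcal F^W_r\vee\mathcal F^B_{r,T}$-measurable integrands. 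Your fallback of expanding the exponential of Wiener integrals directly is the right way to get the expansion.
\end{itemize}
The paper's proof simply invokes backward BDG at this point, so it does not settle the issue for you.

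What works is the forward joint filtration $\mathcal G_r:=\mathcal F^W_{t,r}\vee\mathcal F^B_{t,r}$. There $W$ and $B$ are independent $\mathcal G$-Brownian motions, and $X$, $\Phi(t,\cdot)$ and all frozen coefficients are adapted. Both the $W$- and $B$-parts of the drift fluctuation are then genuine $\mathcal G$-martingale differences, with no splitting needed.

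Under the paper's right-endpoint Riemann-sum definition, the backward integral of a $\mathcal G$-semimartingale integrand equals, up to the sign convention, a forward It\^o integral plus its bracket with $B$. Apply this identity consistently to the exact integral and to $J^{WB},J^{BB}$. With this convention $J^{BB}_{ii}$ has mean $h$, and that matches the leading bracket $\Phi(t,t_m)d(t_m)H(t_m,X_{t_m})h$ of the exact integral. This is what pins down the sign and normalization of $J^{BB}$ that you flagged.

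For the remainder, split into two parts:
\begin{itemize}
\item The forward It\^o part is handled by ordinary BDG with $\int_{t_m}^{t_{m+1}}\|\rho_m(r)\|_{L^q}^2\,\mathrm dr\le Ch^3$.
\item The bracket part is $\int_{t_m}^{t_{m+1}}(\beta_r-\beta_{t_m})\,\mathrm dr$, where $\beta_r=\pm\Phi(t,r)d(r)H(r,X_r)$ is the $\mathrm dB$-coefficient of the integrand. It is only $O(h^{3/2})$ per step, so naive summation loses half an order. One more It\^o expansion of $\beta_r-\beta_{t_m}$ splits it into an $O(h^2)$ drift plus an $O(h^{3/2})$ $\mathcal G$-martingale difference, and both sum to $O(h)$.
\end{itemize}
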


\begin{proof}

For $k=0,\ldots,N-1$, write the exact one-step integral as
\begin{align*}
I_k
:=
\int_{t_k}^{t_{k+1}}
\Phi(t,r)\bigl(F(r,X_r^{t,x})+d(r)H(r,X_r^{t,x})\bigr)\,\mathrm dr
+
\int_{t_k}^{t_{k+1}}
\Phi(t,r)H(r,X_r^{t,x})\cdot
\mathrm d \overleftarrow{B}_r.
\end{align*}
Then, for each $0\le n\le N$,
\begin{align*}
Y_{t_n}-\widetilde Y_n^{(h)}
=
\sum_{k=0}^{n-1}
\left[
I_k-
\mathcal S_{\mathrm{int}}^{\mathrm{FBT}}
\left(
\Phi(t,t_k),X_{t_k}^{t,x},
t_k,h;
\Delta W_{t_k},\Delta\overleftarrow B_{t_k}
\right)
\right].
\end{align*}

We first record the one-step consistency decomposition. 
By the smoothness assumptions and the local It\^o--Taylor expansions of
$X$ and $\Phi$, for $r\in[t_k,t_{k+1}]$,
\begin{align*}
\Phi(t,r)A(r,X_r^{t,x})
=\Phi(t,t_k)A(t_k,X_{t_k}^{t,x})
+&\Phi(t,t_k)
\sum_{a=1}^d
\left(
\nabla_xA(t_k,X_{t_k}^{t,x})^\top
\sigma_{\cdot a}(X_{t_k}^{t,x})+\widetilde c_a(t_k)A(t_k,X_{t_k}^{t,x})\right)
\left(W_r^a-W_{t_k}^a\right)\\
+&\Phi(t,t_k)\,\sum_{\alpha=1}^{\ell}
d_\alpha(t_k)\,
A(t_k,X_{t_k}^{t,x})
\left(B_{t_{k}}^\alpha-B_r^\alpha\right)
+
R_{k,r}^{D},
\end{align*}
where
$\|R_{k,r}^{D}\|_{L^q}
\le
C_q(r-t_k).$

Set
$A_{k,h}^{D}:=\int_{t_k}^{t_{k+1}}R_{k,r}^{D}\,\mathrm dr$
and
\small
\begin{align*}
M_{k,h}^{D}
:=&
\Phi(t,t_k)
\sum_{a=1}^d
\left(
\nabla_xA(t_k,X_{t_k}^{t,x})^\top
\sigma_{\cdot a}(X_{t_k}^{t,x})+\widetilde c_a(t_k)A(t_k,X_{t_k}^{t,x})\right)
\int_{t_k}^{t_{k+1}}
\left(W_r^a-W_{t_k}^a\right)\,\mathrm dr\\
&\quad+\Phi(t,t_k)\,\sum_{\alpha=1}^{\ell}
d_\alpha(t_k)\,
A(t_k,X_{t_k}^{t,x})
\int_{t_k}^{t_{k+1}}
\left(B_{t_{k}}^\alpha-B_r^\alpha\right)\,\mathrm dr .
\end{align*}
\normalsize
Then 
\begin{align}
  \label{eq:direct-A-D-bound}
  \|A_{k,h}^{D}\|_{L^q}\le C_qh^2,
\end{align}
and \(M_{k,h}^{D}\) is a martingale-type local fluctuation
in the sense of the discrete forward--backward information convention above:
its \(W\)-part is a
forward martingale difference and its \(B\)-part is a reverse martingale
difference, equivalently a martingale difference after reversing the \(B\)-time.
Moreover,
\begin{align}
\|M_{k,h}^{D}\|_{L^q}
\le C_qh^{3/2}.
\label{eq:direct-M-D-bound}
\end{align}
We have
\begin{align*}
\int_{t_k}^{t_{k+1}}
\Phi(t,r)A(r,X_r^{t,x})\,\mathrm dr
-
h\Phi(t,t_k)A(t_k,X_{t_k}^{t,x})
=
A_{k,h}^{D}+M_{k,h}^{D}.
\end{align*}

Next, for the backward stochastic integral, the local expansions of
$X$, $H$, and $\Phi$ yield
\begin{align*}
\Phi(t,r)H(r,X_r^{t,x})
=
\Phi(t,t_k)H(t_k,X_{t_k}^{t,x})
&+
\Phi(t,t_k)
\bigl(
H_x(t_k,X_{t_k}^{t,x})\sigma(X_{t_k}^{t,x})
+\widetilde c(t_k)H(t_k,X_{t_k}^{t,x})
\bigr)
\left(W_r-W_{t_k}\right)\\
&+
\Phi(t,t_k)d(t_k)H(t_k,X_{t_k}^{t,x})
\left(B_{t_{k}}-B_r\right)
+
R_{k,r}^{B},
\end{align*}
with
\begin{align}
\|R_{k,r}^{B}\|_{L^q}
\le
C_q\Big((r-t_k)+(r-t_k)^{1/2}h^{1/2}\Big).
\label{eq:direct-backward-integrand-remainder}
\end{align}
Therefore, after subtracting the corresponding Forward--Backward Taylor
correction terms in
$\mathcal S_{\mathrm{int}}^{\mathrm{FBT}}$, the one-step error has the form
\begin{align}
I_k-
\mathcal S_{\mathrm{int}}^{\mathrm{FBT}}
\left(
\Phi(t,t_k),X_{t_k}^{t,x},
t_k,h;
\Delta W_{t_k},\Delta\overleftarrow B_{t_k}
\right)
=
A_{k,h}^{D}+M_{k,h}^{D}+R_{k,h}^{B},
\label{eq:direct-one-step-consistency}
\end{align}
where
\begin{align*}
R_{k,h}^{B}
:=
\int_{t_k}^{t_{k+1}}
R_{k,r}^{B}\cdot\mathrm d \overleftarrow{B}_r.
\end{align*}
Moreover, by \eqref{eq:direct-backward-integrand-remainder},
\begin{align}
\int_{t_k}^{t_{k+1}}
\|R_{k,r}^{B}\|_{L^q}^2\,\mathrm dr
\le C_qh^3.
\label{eq:direct-integrated-RB-bound}
\end{align}

Using \eqref{eq:direct-one-step-consistency}, we have, for every
$0\le n\le N$,
\begin{align*}
Y_{t_n}-\widetilde Y_n^{(h)}
=
\sum_{k=0}^{n-1}A_{k,h}^{D}
+
\sum_{k=0}^{n-1}M_{k,h}^{D}
+
\sum_{k=0}^{n-1}R_{k,h}^{B}.
\end{align*}

We estimate the three accumulated terms separately. For the finite-variation
part, by the pathwise bound
\[
\sup_{0\le n\le N}
\left|
\sum_{k=0}^{n-1}A_{k,h}^{D}
\right|
\le
\sum_{k=0}^{N-1}
|A_{k,h}^{D}|,
\]
and \eqref{eq:direct-A-D-bound}, we get
\begin{align}
\left\|
\sup_{0\le n\le N}
\left|
\sum_{k=0}^{n-1}A_{k,h}^{D}
\right|
\right\|_{L^q}
\le
\sum_{k=0}^{N-1}
\|A_{k,h}^{D}\|_{L^q}
\le
C_qNh^2
\le
C_qh.
\label{eq:direct-global-A-bound}
\end{align}

For the martingale-type contribution, the discrete
Burkholder--Davis--Gundy inequality gives
\begin{align}
\left\|
\sup_{0\le n\le N}
\left|
\sum_{k=0}^{n-1}M_{k,h}^{D}
\right|
\right\|_{L^q}
&\le
C_q
\left\|
\left(
\sum_{k=0}^{N-1}
|M_{k,h}^{D}|^2
\right)^{1/2}
\right\|_{L^q}
\notag\\
&=
C_q
\left\|
\sum_{k=0}^{N-1}
|M_{k,h}^{D}|^2
\right\|_{L^{q/2}}^{1/2}
\notag\\
&\le
C_q
\left(
\sum_{k=0}^{N-1}
\|M_{k,h}^{D}\|_{L^q}^2
\right)^{1/2}
\notag\\
&\le
C_q(Nh^3)^{1/2}
\le
C_qh.
\label{eq:direct-global-M-bound}
\end{align}
Here we used Minkowski's inequality in $L^{q/2}$, which is valid because
$q\ge2$.

Finally, since
\[
\sum_{k=0}^{n-1}R_{k,h}^{B}
=
\int_t^{t_n}
R_r^{B,h}\cdot\mathrm d \overleftarrow{B}_r,
\quad
R_r^{B,h}:=R_{k,r}^{B}\quad
\text{for }r\in[t_k,t_{k+1}],
\]
the Burkholder--Davis--Gundy inequality for backward It\^o integrals yields
\begin{align}
&
\left\|
\sup_{0\le n\le N}
\left|
\sum_{k=0}^{n-1}R_{k,h}^{B}
\right|
\right\|_{L^q}
\notag\\
\le &
C_q
\left\|
\left(
\sum_{k=0}^{N-1}
\int_{t_k}^{t_{k+1}}
|R_{k,r}^{B}|^2\,\mathrm dr
\right)^{1/2}
\right\|_{L^q}
\notag\\
\le &
C_q
\left(
\sum_{k=0}^{N-1}
\int_{t_k}^{t_{k+1}}
\|R_{k,r}^{B}\|_{L^q}^2\,\mathrm dr
\right)^{1/2}
\notag\\
\le &
C_q(Nh^3)^{1/2}
\le
C_qh.
\label{eq:direct-global-RB-bound}
\end{align}

Combining
\eqref{eq:direct-global-A-bound},
\eqref{eq:direct-global-M-bound}, and
\eqref{eq:direct-global-RB-bound}, we obtain
\begin{align*}
\left\|
\sup_{0\le n\le N}
\left|
Y_{t_n}-\widetilde Y_n^{(h)}
\right|
\right\|_{L^q}
\le
C_qh,
\quad q\ge2.
\end{align*}
This proves the claimed estimate for $q\ge2$.

For $0<q<2$, Jensen's inequality gives
\begin{align*}
\left\|
\sup_{0\le n\le N}
\left|
Y_{t_n}-\widetilde Y_n^{(h)}
\right|
\right\|_{L^q}
\le
\left\|
\sup_{0\le n\le N}
\left|
Y_{t_n}-\widetilde Y_n^{(h)}
\right|
\right\|_{L^2}
\le C_2h.
\end{align*}
Hence $\mathcal S_{\mathrm{int}}^{\mathrm{FBT}}$ has strong-error order $1$ in the
sense of Definition~\ref{def:Strong error of S int}.
\end{proof}

\paragraph{Proof of Proposition~\ref{prop:int stable}}
\label{para:proof of prop:int stable}
\begin{proposition*}
Assume that 
$F,H$ are globally Lipschitz and have at most linear growth. 
Moreover, assume
that the coefficient
$H_x(t,x)\sigma(x)+\widetilde c(t)H(t,x)$
is globally Lipschitz and has at most linear growth.
Assume further that the exact and numerical input processes satisfy the uniform
moment bound
\[
\left\|
\sup_{0\le j\le N}
|\Phi(t,t_j)|
\right\|_{L^8}
+
\left\|
\sup_{0\le j\le N}
|\Phi_j^{(h)}|
\right\|_{L^8}
+
\left\|
\sup_{0\le j\le N}
|X_{t_j}^{t,x}|
\right\|_{L^8}
+
\left\|
\sup_{0\le j\le N}
|X_j^{(h)}|
\right\|_{L^8}
<\infty .
\]
Then
$\mathcal{S}_{\mathrm{int}}^{\mathrm{FBT}}$
satisfies the accumulated stability condition in
Definition~\ref{def:accumulated stability of Sint}. More precisely,
there exists $L_{\mathrm{int}}>0$, independent of $h$, such that
\begin{align*}
\left\|
\sup_{0\leq k\leq N}
\left|
\widetilde Y_k^{(h)}-Y_k^{(h)}
\right|
\right\|_{L^2}
\le
L_{\mathrm{int}}
\bigg(
&
\left\|
\sup_{0\leq j\leq N}
\left|
\Phi(t,t_j)-\Phi_j^{(h)}
\right|
\right\|_{L^4}
+
\left\|
\sup_{0\leq j\leq N}
\left|
X_{t_j}^{t,x}-X_j^{(h)}
\right|
\right\|_{L^4}
\bigg).
\end{align*}
\end{proposition*}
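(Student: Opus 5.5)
The plan is to write the difference $\widetilde Y_n^{(h)}-Y_n^{(h)}$ as a telescoping sum of one-step differences of $\mathcal S_{\mathrm{int}}^{\mathrm{FBT}}$, evaluated once at the exact inputs $(\Phi(t,t_k),X_{t_k}^{t,x})$ and once at the numerical inputs $(\Phi_k^{(h)},X_k^{(h)})$. Set $e^\Phi_k:=\Phi(t,t_k)-\Phi_k^{(h)}$ and $e^X_k:=X_{t_k}^{t,x}-X_k^{(h)}$. By \eqref{eq:Sint-FBT}, the one-step difference $D_k$ splits into four pieces, each multiplying a Brownian quantity known at step $k$:
\begin{align*}
D_k={}&h\,\delta A_k+\delta H_k\cdot\Delta\overleftarrow B_{t_k}
+\sum_{i,j}(\delta \Gamma_k)_{ij}J^{WB}_{ij}(t_k,h)
+\sum_{i,j}(\delta \Lambda_k)_{ij}J^{BB}_{ij}(t_k,h).
\end{align*}
Here $\delta A_k$ is the difference of $\Phi A(t_k,X)$ at the two input pairs. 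The terms $\delta H_k$, $\delta\Gamma_k$ and $\delta\Lambda_k$ are the analogous differences for $\Phi H$, for $\Phi(H_x\sigma+\widetilde cH)$, and for $\Phi\, d_j H_i$.

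For each coefficient I use the product splitting $\Phi f(X)-\Phi' f(X')=(\Phi-\Phi')f(X)+\Phi'(f(X)-f(X'))$. Combined with the global Lipschitz and linear growth hypotheses, this gives the pointwise bound
\begin{align*}
|\delta A_k|+|\delta H_k|+|\delta\Gamma_k|+|\delta\Lambda_k|\le C\big(|e^\Phi_k|(1+|X_{t_k}^{t,x}|)+|\Phi_k^{(h)}|\,|e^X_k|\big).
\end{align*}
Taking $\sup_j$ and applying Hölder with exponents $(4,4)$ inside $L^2$ turns this into
\begin{align*}
\Big\|\sup_k|\delta\cdot_k|\Big\|_{L^2}\le C\Big(\Big\|\sup_j|e^\Phi_j|\Big\|_{L^4}+\Big\|\sup_j|e^X_j|\Big\|_{L^4}\Big),
\end{align*}
where the $L^8$ moment hypothesis is more than enough to supply the $L^4$ factors.

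The summed contributions are handled term by term. The drift part gives $\sum_k h|\delta A_k|\le T\sup_k|\delta A_k|$ pathwise, which is immediate. The three stochastic sums are treated as in the proof of Proposition~\ref{prop:int-strong-error}, splitting into forward and backward martingale-difference parts.

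\begin{itemize}
\item $\Delta\overleftarrow B_{t_k}$ and $J^{BB}$ are backward increments on $[t_k,t_{k+1}]$. The coefficients $\delta H_k,\delta\Lambda_k$ are built from $W$ up to $t_k$ and $B$ after $t_{k+1}$ (through $\Phi$, up to a shift), so in reversed time these are martingale differences.
\item $J^{WB}$ has conditional mean zero in either time direction.
\end{itemize}

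Doob/BDG in $L^2$, combined with the Itô isometry ($\mathbb E|\Delta\overleftarrow B|^2=O(h)$, $\mathbb E|J^{WB}|^2,\mathbb E|J^{BB}|^2=O(h^2)$), bounds each stochastic sum by
\begin{align*}
C\Big(\sum_k h\,\|\delta_k\|_{L^2}^2\Big)^{1/2}\le C\,\Big\|\sup_k|\delta_k|\Big\|_{L^2}.
\end{align*}
Summing the four contributions gives the stability estimate with $L_{\mathrm{int}}$ independent of $h$.

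The main obstacle is the adaptedness issue. The coefficient $\delta H_k$ involves $\Phi(t,t_k)$ and $\Phi_k^{(h)}$, which depend on the $B$-increments on $[t,t_k]$. In the backward time direction these are not measurable before the increment on $[t_k,t_{k+1}]$, and likewise $W$-dependence breaks forward adaptedness. So neither the forward nor the reversed filtration makes the sum a martingale.

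To resolve this I would exploit independence. The coefficients depend only on increments of $(W,B)$ on $[t,t_k]$, and those are independent of the increments on $[t_k,t_{k+1}]$. Hence the sum is a martingale with respect to the enlarged filtration $\mathcal H_k=\sigma\{(W,B)\text{ on }[t,t_k]\}$. Under $\mathcal H_k$ the increments $\Delta\overleftarrow B_{t_k}$, $J^{WB}$ and $J^{BB}$ all have conditional mean zero and conditional second moments of order $h$, $h^2$ and $h^2$ respectively. This makes the discrete BDG argument legitimate, since its coefficients are $\mathcal H_k$-measurable and the multipliers are independent mean-zero. I expect this measurability check, rather than the Hölder bookkeeping, to be the delicate point.
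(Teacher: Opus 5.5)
Your proof is correct in outline and follows the paper's skeleton. You telescope the one-step differences of $\mathcal S_{\mathrm{int}}^{\mathrm{FBT}}$ and split them into the $h$-term, the $\Delta\overleftarrow B$-term and the $J^{WB}$, $J^{BB}$ corrections. You reduce every coefficient difference to the two $L^4$ error norms via the product splitting, Lipschitz/linear growth and H\"older, and you bound the $h$-term pathwise.

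Where you differ is in the stochastic sums. The paper uses a BDG bound only for the $\Delta\overleftarrow B$ sum. For the $WB$ and $BB$ corrections it uses the triangle inequality over the $N$ steps, since each correction is already $O(h)$ in $L^p$. For $WB$ this goes through H\"older with exponents $(4,8,8)$, which is where the $L^8$ hypothesis enters.

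For the $\Delta\overleftarrow B$ sum the paper invokes BDG ``after reversing time''. As you point out, $\Phi(t,t_k)$ and $\Phi^{(h)}_k$ depend on $B$ on $[t,t_k]$, so the coefficients are not adapted to the reversed $B$-filtration. Your joint forward filtration $\mathcal H_k$ is the correct repair: the coefficients are $\mathcal H_k$-measurable and the increments on $[t_k,t_{k+1}]$ are independent of $\mathcal H_k$. It also gives $\mathbb E|\delta_k\zeta_k|^2=\mathbb E|\delta_k|^2\,\mathbb E|\zeta_k|^2$ for $\zeta_k\in\{\Delta\overleftarrow B_{t_k},J^{WB}(t_k,h),J^{BB}(t_k,h)\}$, so $L^4$ moments suffice.

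One claim is wrong, though easily repaired: $J^{BB}$ is not centered. With the right-endpoint Riemann-sum definition of the backward integral used in the paper, $J^{BB}_{ii}(s,h)=\tfrac12(\Delta\overleftarrow B^i_s)^2+\tfrac h2$, whose mean is $h$. Hence $\sum_k(\delta\Lambda_k)_{ii}J^{BB}_{ii}(t_k,h)$ is not an $\mathcal H$-martingale. There are two easy fixes:
\begin{enumerate}
\item Subtract the deterministic mean, which contributes a drift bounded pathwise by $T\sup_k|\delta\Lambda_k|$, and apply Doob to the centered part.
\item Use the triangle inequality with independence, as the paper does: $\|(\delta\Lambda_k)_{ij}J^{BB}_{ij}(t_k,h)\|_{L^2}=\|(\delta\Lambda_k)_{ij}\|_{L^2}\|J^{BB}_{ij}(t_k,h)\|_{L^2}\le Ch\,\|\sup_k|\delta\Lambda_k|\|_{L^2}$, summed over the $N$ steps.
\end{enumerate}

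Also delete your first bullet. The coefficients are not functions of $B$ after $t_{k+1}$, and reversed-time martingale differences are not available, as you yourself note in the last paragraph. With these changes the argument is complete.
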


\begin{proof}
Set
$\Delta\Phi_j:=\Phi(t,t_j)-\Phi_j^{(h)}$ and
$\Delta X_j:=X_{t_j}^{t,x}-X_j^{(h)}$.
Define
\[
\delta_\Phi
:=
\left\|
\sup_{0\le j\le N}
|\Delta\Phi_j|
\right\|_{L^4},
\quad
\delta_X
:=
\left\|
\sup_{0\le j\le N}
|\Delta X_j|
\right\|_{L^4},
\quad
\delta:=\delta_\Phi+\delta_X.
\]

By the definitions of $\widetilde Y_k^{(h)}$ and $Y_k^{(h)}$, we have
\begin{align*}
\widetilde Y_k^{(h)}-Y_k^{(h)}
=
\sum_{j=0}^{k-1}
\Bigg[
&\mathcal{S}_{\mathrm{int}}^{\mathrm{FBT}}
\left(\Phi(t,t_j),X_{t_j}^{t,x},t_j,h;
\Delta W_{t_j},\Delta\overleftarrow B_{t_j}\right)
-
\mathcal{S}_{\mathrm{int}}^{\mathrm{FBT}}
\left(\Phi_j^{(h)},X_j^{(h)},t_j,h;
\Delta W_{t_j},\Delta\overleftarrow B_{t_j}\right)
\Bigg].
\end{align*}

We estimate the four contributions in
$\mathcal{S}_{\mathrm{int}}^{\mathrm{FBT}}$ separately.

First consider the time-integral term. 
Since $F$ and $H$ are globally Lipschitz with at most linear growth, and
$d$ is bounded, $A$ is globally Lipschitz in $x$ and has at most linear
growth. Hence, using H\"older's inequality and the uniform moment bounds,
\begin{align*}
&
\left\|
\sup_{0\le k\le N}
\left|
\sum_{j=0}^{k-1}
h\Big[
\Phi(t,t_j)A(t_j,X_{t_j}^{t,x})
-
\Phi_j^{(h)}A(t_j,X_j^{(h)})
\Big]
\right|
\right\|_{L^2}
\\
\le&
\sum_{j=0}^{N-1}
h
\left\|
\Phi(t,t_j)A(t_j,X_{t_j}^{t,x})
-
\Phi_j^{(h)}A(t_j,X_j^{(h)})
\right\|_{L^2}
\\
\le&
\sum_{j=0}^{N-1}
h
\bigg(
\left\|
\Delta\Phi_j A(t_j,X_{t_j}^{t,x})
\right\|_{L^2}
+
\left\|
\Phi_j^{(h)}
\big[
A(t_j,X_{t_j}^{t,x})-A(t_j,X_j^{(h)})
\big]
\right\|_{L^2}
\bigg)
\\
\le&
C
\sum_{j=0}^{N-1}
h
\left(
\|\Delta\Phi_j\|_{L^4}
+
\|\Delta X_j\|_{L^4}
\right)
\\
\le&
CT(\delta_\Phi+\delta_X).
\end{align*}

Next consider the backward stochastic increment term. 
By the same Lipschitz and growth estimates,
\[
\|\Phi(t,t_j)H(t_j,X_{t_j}^{t,x})
-
\Phi_j^{(h)}H(t_j,X_j^{(h)})\|_{L^2}
\le
C
\left(
\|\Delta\Phi_j\|_{L^4}
+
\|\Delta X_j\|_{L^4}
\right).
\]
Using the discrete Burkholder--Davis--Gundy inequality for the backward
increments, equivalently after reversing time, we obtain
\begin{align*}
&
\left\|
\sup_{0\le k\le N}
\left|
\sum_{j=0}^{k-1}
\left(\Phi(t,t_j)H(t_j,X_{t_j}^{t,x})
-
\Phi_j^{(h)}H(t_j,X_j^{(h)})\right)\cdot \Delta\overleftarrow B_{t_j}
\right|
\right\|_{L^2}
\\
\le&
C
\left\|
\left(
\sum_{j=0}^{N-1}
h\left|\Phi(t,t_j)H(t_j,X_{t_j}^{t,x})
-
\Phi_j^{(h)}H(t_j,X_j^{(h)})\right|^2
\right)^{1/2}
\right\|_{L^2}
\\
\le&
C
\left(
\sum_{j=0}^{N-1}
h
\left\|\Phi(t,t_j)H(t_j,X_{t_j}^{t,x})
-
\Phi_j^{(h)}H(t_j,X_j^{(h)})\right\|_{L^2}^2
\right)^{1/2}
\\
\le&
C\sqrt T(\delta_\Phi+\delta_X).
\end{align*}

Now consider the mixed \(WB\) correction term. For
\(1\le i\le \ell\) and \(1\le j\le d\), set
\[
K_{i j}(s,x)
:=
\nabla_xH_i(s,x)^\top\sigma_{\cdot j}(x)
+
\widetilde c_j(s)H_i(s,x).
\]
By assumption, each \(K_{i j}\) is globally Lipschitz in \(x\) and has at
most linear growth, uniformly in \(s\).

The \(WB\) contribution to the stability difference is
\[
\sum_{r=0}^{k-1}
\sum_{i=1}^{\ell}\sum_{j=1}^{d}
\left(\Phi(t,t_r)K_{i j}(t_r,X_{t_r}^{t,x})
-
\Phi_r^{(h)}K_{i j}(t_r,X_r^{(h)})\right)
J_{i j}^{WB}(t_r,h).
\]
Decompose
\[
\Phi(t,t_r)K_{i j}(t_r,X_{t_r}^{t,x})
-
\Phi_r^{(h)}K_{i j}(t_r,X_r^{(h)})
=
\Delta\Phi_r K_{i j}(t_r,X_{t_r}^{t,x})
+
\Phi_r^{(h)}
\Big(
K_{i j}(t_r,X_{t_r}^{t,x})
-
K_{i j}(t_r,X_r^{(h)})
\Big).
\]
For every \(p\ge2\), the Brownian iterated increment satisfies
\[
\|J_{i j}^{WB}(t_r,h)\|_{L^p}\le C_p h.
\]
Hence, by H\"older's inequality with exponents \(4,8,8\), the uniform
\(L^8\) moment bounds, and the linear growth of \(K_{i j}\),
\begin{align*}
&
\left\|
\Delta\Phi_r
K_{i j}(t_r,X_{t_r}^{t,x})
J_{i j}^{WB}(t_r,h)
\right\|_{L^2}
\le
\|\Delta\Phi_r\|_{L^4}
\|K_{i j}(t_r,X_{t_r}^{t,x})\|_{L^8}
\|J_{i j}^{WB}(t_r,h)\|_{L^8}
\le
Ch\|\Delta\Phi_r\|_{L^4}.
\end{align*}
Similarly, using the Lipschitz continuity of \(K_{i j}\),
\begin{align*}
&
\left\|
\Phi_r^{(h)}
\Big(
K_{i j}(t_r,X_{t_r}^{t,x})
-
K_{i j}(t_r,X_r^{(h)})
\Big)
J_{i j}^{WB}(t_r,h)
\right\|_{L^2}
\\
\le&
\|\Phi_r^{(h)}\|_{L^8}
\left\|
K_{i j}(t_r,X_{t_r}^{t,x})
-
K_{i j}(t_r,X_r^{(h)})
\right\|_{L^4}
\|J_{i j}^{WB}(t_r,h)\|_{L^8}
\\
\le&
Ch\|\Delta X_r\|_{L^4}.
\end{align*}
Therefore, for every \(r,i,j\),
\[
\left\|
\left(\Phi(t,t_r)K_{i j}(t_r,X_{t_r}^{t,x})
-
\Phi_r^{(h)}K_{i j}(t_r,X_r^{(h)})\right)
J_{i j}^{WB}(t_r,h)
\right\|_{L^2}
\le
Ch
\left(
\|\Delta\Phi_r\|_{L^4}
+
\|\Delta X_r\|_{L^4}
\right).
\]

Since \(\ell\) and \(d\) are fixed, summing over the component indices gives
\begin{align*}
&
\left\|
\sup_{0\le k\le N}
\left|
\sum_{r=0}^{k-1}
\sum_{i=1}^{\ell}\sum_{j=1}^{d}
\left(\Phi(t,t_r)K_{i j}(t_r,X_{t_r}^{t,x})
-
\Phi_r^{(h)}K_{i j}(t_r,X_r^{(h)})\right)
J_{i j}^{WB}(t_r,h)
\right|
\right\|_{L^2}
\\
\le&
\sum_{r=0}^{N-1}
\sum_{i=1}^{\ell}\sum_{j=1}^{d}
\left\|
\left(\Phi(t,t_r)K_{i j}(t_r,X_{t_r}^{t,x})
-
\Phi_r^{(h)}K_{i j}(t_r,X_r^{(h)})\right)
J_{i j}^{WB}(t_r,h)
\right\|_{L^2}
\\
\le&
C
\sum_{r=0}^{N-1}
h
\left(
\|\Delta\Phi_r\|_{L^4}
+
\|\Delta X_r\|_{L^4}
\right)
\\
\le&
CT(\delta_\Phi+\delta_X).
\end{align*}

Finally, consider the backward--backward correction term appearing in
$\mathcal S_{\mathrm{int}}^{\mathrm{FBT}}$:
\[
\Phi
\sum_{i=1}^{\ell}\sum_{j=1}^{\ell}
d_j(t_r)H_i(t_r,X)
J_{ij}^{BB}(t_r,h).
\]
Since $d$ is bounded and $H$ is globally Lipschitz with at most linear growth,
H\"older's inequality and the uniform moment bounds give
\[
\left\|
d_j(t_r)
\Big[
\Phi(t,t_r)H_i(t_r,X_{t_r}^{t,x})
-
\Phi_r^{(h)}H_i(t_r,X_r^{(h)})
\Big]
\right\|_{L^2}
\le
C
\left(
\|\Delta\Phi_r\|_{L^4}
+
\|\Delta X_r\|_{L^4}
\right).
\]
Moreover, for every $1\le i,j\le\ell$,
\[
\left\|
J_{ij}^{BB}(t_r,h)
\right\|_{L^2}
\le Ch.
\]
Therefore,
\begin{align*}
&
\left\|
\sup_{0\le k\le N}
\left|
\sum_{r=0}^{k-1}
\sum_{i=1}^{\ell}\sum_{j=1}^{\ell}
d_j(t_r)
\Big[
\Phi(t,t_r)H_i(t_r,X_{t_r}^{t,x})
-
\Phi_r^{(h)}H_i(t_r,X_r^{(h)})
\Big]
J_{ij}^{BB}(t_r,h)
\right|
\right\|_{L^2}
\\
\le&
\sum_{r=0}^{N-1}
\sum_{i=1}^{\ell}\sum_{j=1}^{\ell}
\left\|
d_j(t_r)
\Big[
\Phi(t,t_r)H_i(t_r,X_{t_r}^{t,x})
-
\Phi_r^{(h)}H_i(t_r,X_r^{(h)})
\Big]
J_{ij}^{BB}(t_r,h)
\right\|_{L^2}
\\
\le&
C
\sum_{r=0}^{N-1}
h
\left(
\|\Delta\Phi_r\|_{L^4}
+
\|\Delta X_r\|_{L^4}
\right)
\\
\le&
CT
\left(
\delta_\Phi+\delta_X
\right).
\end{align*}

Combining the four estimates, we conclude that
\begin{align*}
\left\|
\sup_{0\leq k\leq N}
\left|
\widetilde Y_k^{(h)}-Y_k^{(h)}
\right|
\right\|_{L^2}
\le
L_{\mathrm{int}}
\left(
\delta_\Phi+\delta_X
\right).
\end{align*}
Substituting the definitions of $\delta_\Phi$ and $\delta_X$ gives
\begin{align*}
\left\|
\sup_{0\leq k\leq N}
\left|
\widetilde Y_k^{(h)}-Y_k^{(h)}
\right|
\right\|_{L^2}
\le
L_{\mathrm{int}}
\bigg(
&
\left\|
\sup_{0\leq j\leq N}
\left|
\Phi(t,t_j)-\Phi_j^{(h)}
\right|
\right\|_{L^4}
+
\left\|
\sup_{0\leq j\leq N}
\left|
X_{t_j}^{t,x}-X_j^{(h)}
\right|
\right\|_{L^4}
\bigg).
\end{align*}
This is the desired accumulated stability estimate.
\end{proof}

\subsection{First-order Greek Estimators}
\paragraph{Proof of Proposition~\ref{prop:error-first-order-greek-payoff}}
\label{para:proof of prop:error-first-order-greek-payoff}
\begin{proposition*}[Strong-error order for the first-order Greek payoff]
Fix $(t,x)\in[0,T]\times\mathbb R^d$ and $1\le i\le d$
and define
\begin{align*}
P_{t_k}^{(i)}
=&\Phi(t,t_k)\,\nabla G(X_{t_k}^{t,x})^\top J_{t_k}^{t,x}e_i
+Y_{t_k}^{(i)},
\end{align*}
where
\begin{align*}
Y_{t_k}^{(i)}
=&
\int_t^{t_k}
\Phi(t,s)\,
\nabla_x\!\Big(F(s,X_s^{t,x})+d(s)H(s,X_s^{t,x})\Big)^\top
J_s^{t,x}e_i\,\mathrm ds  
+\int_t^{t_k}\Phi(t,s)\,\nabla_x H(s,X_s^{t,x})^\top J_s^{t,x}e_i\,
\mathrm d \overleftarrow{B}_s.
\end{align*}

Let $\{X_k^{(h)},\Phi_k^{(h)},J_k^{(h)}\}_{k=0}^N$
be generated by
$S_X,S_\Phi,S_J$
as in
Definition~\ref{def:Strong error of S X},
\ref{def:Strong error of S Phi}
and \ref{def:Strong error of S J},
and define
\begin{align*}
Y_{k+1}^{(i,h)}
=&Y_k^{(i,h)}
+\mathcal S_{\mathrm{int}}^{(i)}
\left(\Phi_k^{(h)},X_k^{(h)},J_k^{(h)},t_k,h;
\Delta W_{t_k},\Delta\overleftarrow B_{t_k}\right),
\quad
Y_0^{(i,h)}=0,\\
P_k^{(i,h)}
=&
\Phi_k^{(h)}
\nabla G(X_k^{(h)})^\top J_k^{(h)}e_i
+
Y_k^{(i,h)}.
\end{align*}

Assume:

\begin{enumerate}
\item
The strong-error orders of $S_X,S_\Phi,S_J$ and
$\mathcal S_{\mathrm{int}}^{(i)}$ are $p_X,p_\Phi,p_J,p_{\mathrm{int}}^{(i)}$, respectively.

\item
The operator $\mathcal S_{\mathrm{int}}^{(i)}$ satisfies the
accumulated stability estimate:
there exists $L_{\mathrm{int}}^{(i)}>0$, independent of $h$, such that
\[
\left\|
\sup_{0\le k\le N}
|\widetilde Y_k^{(i,h)}-Y_k^{(i,h)}|
\right\|_{L^2}
\le
L_{\mathrm{int}}^{(i)}
\bigg(
\left\|
\sup_{0\le j\le N}
|\Phi(t,t_j)-\Phi_j^{(h)}|
\right\|_{L^4}
+
\left\|
\sup_{0\le j\le N}
|X_{t_j}^{t,x}-X_j^{(h)}|
\right\|_{L^4}
+
\left\|
\sup_{0\le j\le N}
|J_{t_j}^{t,x}-J_j^{(h)}|
\right\|_{L^4}
\bigg),
\]
where $\widetilde Y_k^{(i,h)}$ is generated by
$\mathcal S_{\mathrm{int}}^{(i)}$ with the exact inputs
$\Phi(t,t_k),X_{t_k}^{t,x},J_{t_k}^{t,x}$
as in Definition~\ref{def:Strong error of S int Greek}.

\item
There exists $M>0$, independent of $h$, such that 
$\Phi_k^{(h)},\
\Phi(t,t_k),\ 
J_k^{(h)},\ 
J_{t_k}^{t,x}$
 and 
$\nabla G(X_{t_k}^{t,x}),\ \nabla G(X_k^{(h)})$
are uniformly bounded in
$L^{8}_{W,B}$
by $M$.

\item
$\nabla G$ is globally Lipschitz, i.e., there exists $L_{\nabla G}>0$ such that
\begin{align*}
\left|\nabla G(x)-\nabla G(y)\right|
\le L_{\nabla G}|x-y|,\quad x,y\in\mathbb R^d .
\end{align*}
\end{enumerate}

Then
\[
\left\|
\sup_{0\le k\le N}
\left|
P_{t_k}^{(i)}-P_k^{(i,h)}
\right|
\right\|_{L^2}
=
\mathcal{O}(h^p),
\quad
p=\min\{p_X,p_\Phi,p_J,p_{\mathrm{int}}^{(i)}\}.
\]
\end{proposition*}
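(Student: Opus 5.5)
The argument will mirror the proof of Proposition~\ref{prop:error-u}. I will split the error by inserting the exact-input accumulated term $\widetilde Y_k^{(i,h)}$ and telescoping the terminal product. For each $0\le k\le N$ I will write
\begin{align*}
P_{t_k}^{(i)}-P_k^{(i,h)}
=&\bigl(\Phi(t,t_k)-\Phi_k^{(h)}\bigr)\nabla G(X_{t_k}^{t,x})^\top J_{t_k}^{t,x}e_i
+\Phi_k^{(h)}\bigl(\nabla G(X_{t_k}^{t,x})-\nabla G(X_k^{(h)})\bigr)^\top J_{t_k}^{t,x}e_i\\
&+\Phi_k^{(h)}\nabla G(X_k^{(h)})^\top\bigl(J_{t_k}^{t,x}-J_k^{(h)}\bigr)e_i
+\bigl(Y_{t_k}^{(i)}-\widetilde Y_k^{(i,h)}\bigr)
+\bigl(\widetilde Y_k^{(i,h)}-Y_k^{(i,h)}\bigr).
\end{align*}
I will then take $\sup_k$ inside each term and apply the triangle inequality in $L^2$.

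The three terminal terms are triple products, so I will use H\"older with exponents $(4,8,8)$, since $\tfrac14+\tfrac18+\tfrac18=\tfrac12$. Each term puts the difference factor in $L^4$ and the two remaining factors in $L^8$. Assumption 3 bounds the $L^8$ factors by $M$, which gives $M^2\|\sup_k|\Delta\Phi_k|\|_{L^4}$ for the first term and $M^2\|\sup_k|\Delta J_k|\|_{L^4}$ for the third. For the second term, the Lipschitz bound on $\nabla G$ (assumption 4) gives $|\nabla G(X_{t_k}^{t,x})-\nabla G(X_k^{(h)})|\le L_{\nabla G}|\Delta X_k|$, so the term is at most $M^2L_{\nabla G}\|\sup_k|\Delta X_k|\|_{L^4}$.

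The fourth term is at most $C_{\mathrm{int}}^{(i)}h^{p_{\mathrm{int}}^{(i)}}$ by Definition~\ref{def:Strong error of S int Greek}. The fifth term is controlled by assumption 2 in terms of the same $L^4$ sup-norms of $\Delta\Phi$, $\Delta X$ and $\Delta J$.

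Finally, the definitions of strong-error order for $S_X$, $S_\Phi$ and $S_J$ hold for every $q\ge2$, in particular $q=4$. They bound these sup-norms by $C_Xh^{p_X}$, $C_\Phi h^{p_\Phi}$ and $C_Jh^{p_J}$. Summing all contributions gives $Ch^{p}$ with $p$ the minimum order, for $h\le1$.

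I do not expect a real obstacle. The only care needed is in the H\"older bookkeeping. Assumption 3 controls $\Phi_k^{(h)}$, $J_{t_k}^{t,x}$ and $\nabla G(\cdot)$ only pointwise in $k$, whereas the estimate needs $\sup_k$ inside the norm. I will handle this either by reading assumption 3 as a bound on the sup in $L^8$, as Theorem~\ref{thm:QAMLMC for u when given B} does, or by bounding $\sup_k|ab|\le\sup_k|a|\,\sup_k|b|$ under that reading.
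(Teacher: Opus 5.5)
Your proposal is correct and follows the paper's proof essentially line by line: the same five-term decomposition, H\"older with exponents $(4,8,8)$, the Lipschitz bound on $\nabla G$, the exact-input strong-error estimate, the accumulated stability assumption, and the strong-error orders at $q=4$. You are also right that assumption~3 must be read as a bound on $\sup_k$ inside the $L^8$ norm; the paper's proof relies on exactly this reading without saying so.
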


\begin{proof}
For each $0\le k\le N$, we have
\begin{align*}
P_{t_k}^{(i)}-P_k^{(i,h)}
=&
\left(
\Phi(t,t_k)\nabla G(X_{t_k}^{t,x})^\top J_{t_k}^{t,x}e_i
+
Y_{t_k}^{(i)}
\right)
-
\left(
\Phi_k^{(h)}\nabla G(X_k^{(h)})^\top J_k^{(h)}e_i
+
Y_k^{(i,h)}
\right)
\\
=&
\left(\Phi(t,t_k)-\Phi_k^{(h)}\right)
\nabla G(X_{t_k}^{t,x})^\top J_{t_k}^{t,x}e_i
+
\Phi_k^{(h)}
\left(
\nabla G(X_{t_k}^{t,x})
-
\nabla G(X_k^{(h)})
\right)^\top
J_{t_k}^{t,x}e_i
\\
&+
\Phi_k^{(h)}
\nabla G(X_k^{(h)})^\top
\left(
J_{t_k}^{t,x}-J_k^{(h)}
\right)e_i
+
\left(
Y_{t_k}^{(i)}
-
\widetilde Y_k^{(i,h)}
\right)
+
\left(
\widetilde Y_k^{(i,h)}
-
Y_k^{(i,h)}
\right).
\end{align*}

Taking the supremum over $0\le k\le N$ and then the
$L^2$ norm, the triangle inequality gives
\begin{align*}
&
\left\|
\sup_{0\le k\le N}
\left|
P_{t_k}^{(i)}
-
P_k^{(i,h)}
\right|
\right\|_{L^2}
\\
\le&
\left\|
\sup_{0\le k\le N}
\left|
\left(\Phi(t,t_k)-\Phi_k^{(h)}\right)
\nabla G(X_{t_k}^{t,x})^\top J_{t_k}^{t,x}e_i
\right|
\right\|_{L^2}
+
\left\|
\sup_{0\le k\le N}
\left|
\Phi_k^{(h)}
\left(
\nabla G(X_{t_k}^{t,x})
-
\nabla G(X_k^{(h)})
\right)^\top
J_{t_k}^{t,x}e_i
\right|
\right\|_{L^2}
\\
&+
\left\|
\sup_{0\le k\le N}
\left|
\Phi_k^{(h)}
\nabla G(X_k^{(h)})^\top
\left(
J_{t_k}^{t,x}
-
J_k^{(h)}
\right)e_i
\right|
\right\|_{L^2}
+
\left\|
\sup_{0\le k\le N}
\left|
Y_{t_k}^{(i)}
-
\widetilde Y_k^{(i,h)}
\right|
\right\|_{L^2}
+
\left\|
\sup_{0\le k\le N}
\left|
\widetilde Y_k^{(i,h)}
-
Y_k^{(i,h)}
\right|
\right\|_{L^2}.
\end{align*}

We estimate the three payoff terms separately. By H\"older's inequality,
\begin{align*}
&
\left\|
\sup_{0\le k\le N}
\left|
\left(\Phi(t,t_k)-\Phi_k^{(h)}\right)
\nabla G(X_{t_k}^{t,x})^\top J_{t_k}^{t,x}e_i
\right|
\right\|_{L^2}
\\
\le&
\left\|
\sup_{0\le k\le N}
\left|
\Phi(t,t_k)-\Phi_k^{(h)}
\right|
\right\|_{L^4}
\left\|
\sup_{0\le k\le N}
\left|
\nabla G(X_{t_k}^{t,x})
\right|
\right\|_{L^8}
\left\|
\sup_{0\le k\le N}
\left|
J_{t_k}^{t,x}
\right|
\right\|_{L^8}
\\
\le&
C_\Phi M^2 h^{p_\Phi}.
\end{align*}
Similarly, using the Lipschitz continuity of $\nabla G$,
\begin{align*}
&
\left\|
\sup_{0\le k\le N}
\left|
\Phi_k^{(h)}
\left(
\nabla G(X_{t_k}^{t,x})
-
\nabla G(X_k^{(h)})
\right)^\top
J_{t_k}^{t,x}e_i
\right|
\right\|_{L^2}
\\
\le &
\left\|
\sup_{0\le k\le N}
\left|
\Phi_k^{(h)}
\right|
\right\|_{L^8}
\left\|
\sup_{0\le k\le N}
\left|
\nabla G(X_{t_k}^{t,x})
-
\nabla G(X_k^{(h)})
\right|
\right\|_{L^4}
\left\|
\sup_{0\le k\le N}
\left|
J_{t_k}^{t,x}
\right|
\right\|_{L^8}
\\
\le &
L_{\nabla G}M^2
\left\|
\sup_{0\le k\le N}
\left|
X_{t_k}^{t,x}
-
X_k^{(h)}
\right|
\right\|_{L^4}
\\
\le&
C_XL_{\nabla G}M^2h^{p_X}.
\end{align*}
For the Jacobian term,
\begin{align*}
&
\left\|
\sup_{0\le k\le N}
\left|
\Phi_k^{(h)}
\nabla G(X_k^{(h)})^\top
\left(
J_{t_k}^{t,x}
-
J_k^{(h)}
\right)e_i
\right|
\right\|_{L^2}
\\
\le &
\left\|
\sup_{0\le k\le N}
\left|
\Phi_k^{(h)}
\right|
\right\|_{L^8}
\left\|
\sup_{0\le k\le N}
\left|
\nabla G(X_k^{(h)})
\right|
\right\|_{L^8}
\left\|
\sup_{0\le k\le N}
\left|
J_{t_k}^{t,x}
-
J_k^{(h)}
\right|
\right\|_{L^4}
\\
\le &
C_JM^2h^{p_J}.
\end{align*}

By the exact-input integral discretization estimate,
\begin{align*}
\left\|
\sup_{0\le k\le N}
\left|
Y_{t_k}^{(i)}
-
\widetilde Y_k^{(i,h)}
\right|
\right\|_{L^2}
\le
C_{\mathrm{int}}^{(i)}h^{p_{\mathrm{int}}^{(i)}}.
\end{align*}
By the accumulated stability assumption,
\begin{align*}
&
\left\|
\sup_{0\le k\le N}
\left|
\widetilde Y_k^{(i,h)}
-
Y_k^{(i,h)}
\right|
\right\|_{L^2}
\\
\le &
L_{\mathrm{int}}^{(i)}
\bigg(
\left\|
\sup_{0\le j\le N}
\left|
\Phi(t,t_j)-\Phi_j^{(h)}
\right|
\right\|_{L^4}
+
\left\|
\sup_{0\le j\le N}
\left|
X_{t_j}^{t,x}-X_j^{(h)}
\right|
\right\|_{L^4}
+
\left\|
\sup_{0\le j\le N}
\left|
J_{t_j}^{t,x}-J_j^{(h)}
\right|
\right\|_{L^4}
\bigg)
\\
\le &
L_{\mathrm{int}}^{(i)}
\left(
C_\Phi h^{p_\Phi}
+
C_Xh^{p_X}
+
C_Jh^{p_J}
\right).
\end{align*}

Combining the above estimates, we obtain
\begin{align*}
&
\left\|
\sup_{0\le k\le N}
\left|
P_{t_k}^{(i)}
-
P_k^{(i,h)}
\right|
\right\|_{L^2}
\le
C_\Phi M^2h^{p_\Phi}
+
C_XL_{\nabla G}M^2h^{p_X}
+
C_JM^2h^{p_J}
+
C_{\mathrm{int}}^{(i)}h^{p_{\mathrm{int}}^{(i)}}
+
L_{\mathrm{int}}^{(i)}
\left(
C_\Phi h^{p_\Phi}
+
C_Xh^{p_X}
+
C_Jh^{p_J}
\right).
\end{align*}
Therefore, 
\begin{align*}
\left\|
\sup_{0\le k\le N}
\left|
P_{t_k}^{(i)}
-
P_k^{(i,h)}
\right|
\right\|_{L^2}
\le
C h^p,
\quad
p=\min\{p_X,p_\Phi,p_J,p_{\mathrm{int}}^{(i)}\}.
\end{align*}
\end{proof}

\paragraph{Construction of
$\mathcal S_{\mathrm{int}}^{(i),\mathrm{FBT}}$}
\label{para:construction-first-greek}
Fix $1\le i\le d$. Recall that the first-order Greek integral is given by
\begin{align}
Y_{t_k}^{(i)}
=&\int_t^{t_k}\Phi(t,s)
\nabla_x\!\Big(F(s,X_s^{t,x})+d(s)H(s,X_s^{t,x})\Big)^\top
J_s^{t,x}e_i\,\mathrm ds
+\int_t^{t_k}\Phi(t,s)\nabla_xH(s,X_s^{t,x})^\top
J_s^{t,x}e_i\,\mathrm d \overleftarrow{B}_s.
\label{eq:first-order-greek-integral}
\end{align}
Here $J_s^{t,x}=\nabla_x X_s^{t,x}$ is the Jacobian flow and
$e_i$ is the $i$-th unit vector in $\mathbb R^d$.

For notational simplicity, throughout this subsection we write
\[
X_s=X_s^{t,x},\quad
J_s=J_s^{t,x},\quad
\Phi_s=\Phi(t,s).
\]

Recall the mixed iterated integral
\begin{align}
J^{WB}_{s,h}
=
\Big(J_{ij}^{WB}(s,h)\Big)_{1\le i\le \ell,\ 1\le j\le d},
\quad
J_{ij}^{WB}(s,h)
:=
\int_s^{s+h}
(W_r^j-W_s^j)\,
\mathrm d \overleftarrow{B}_r^i.
\label{eq:def-JWB-first-greek}
\end{align}
Similarly, define the backward iterated integral
\begin{align}
J^{BB}_{s,h}
=
\Big(J_{ij}^{BB}(s,h)\Big)_{1\le i,j\le \ell},
\quad
J_{ij}^{BB}(s,h)
:=
\int_s^{s+h}
(B_{s}^j-B_r^j)\,
\mathrm d \overleftarrow{B}_r^i.
\label{eq:def-JBB-first-greek}
\end{align}

We define the first-order Greek integral discretization operator by
\begin{align}
&\mathcal S_{\mathrm{int}}^{(i),\mathrm{FBT}}
\left(\Phi,X,J,s,h;\Delta W_s,\Delta\overleftarrow B_s\right)\notag\\
=& \Phi h\,\nabla_x\!\Big(F(s,X)+d(s)H(s,X)\Big)^\top Je_i
+\Phi\left(\nabla_xH(s,X)^\top Je_i\right)\cdot\Delta\overleftarrow B_s
+\Phi\sum_{j=1}^{\ell}\sum_{\alpha=1}^{\ell}d_\alpha(s)
\left(\nabla_xH_j(s,X)^\top Je_i\right)
J_{j\alpha}^{BB}(s,h)\notag\\
&+\Phi\sum_{j=1}^{\ell}\sum_{a=1}^{d}
\bigg[\nabla_x^2H_j(s,X)\big[\sigma_{\cdot a}(X),Je_i\big]
+\nabla_xH_j(s,X)^\top\big(\nabla_x\sigma_{\cdot a}(X)Je_i\big)
+\widetilde c_a(s)\nabla_xH_j(s,X)^\top Je_i\bigg]J_{ja}^{WB}(s,h).
\label{eq:Sint-first-greek-mil}
\end{align}

The construction of this operator is natural, 
as it is obtained from first-order Taylor expansions of the integrands.
For $r\in[s,s+h]$, the local expansions are
\begin{align}
X_r-X_s
=\sigma(X_s)(W_r-W_s)+R^X_{s,r},
\label{eq:X-expansion-first-greek}
\end{align}
and
\begin{align}
J_r-J_s
=\sum_{a=1}^d\nabla_x\sigma_{\cdot a}(X_s)J_s
(W_r^a-W_s^a)+R^J_{s,r}.
\label{eq:J-expansion-first-greek}
\end{align}
Moreover,
\begin{align}
\Phi(t,r)
=\Phi(t,s)+\Phi(t,s)\widetilde c(s)\cdot(W_r-W_s)
+\Phi(t,s)d(s)\cdot(B_{s}-B_r)
+R^\Phi_{s,r}.
\label{eq:Phi-expansion-first-greek}
\end{align}
Under the standard smoothness and moment assumptions, for every $q\ge2$,
\begin{align}
\|R^X_{s,r}\|_{L^q}
+\|R^J_{s,r}\|_{L^q}
+\|R^\Phi_{s,r}\|_{L^q}
\le C(r-s).
\label{eq:basic-remainder-first-greek}
\end{align}

For each $1\le j\le \ell$, expanding with respect to $X_r$, we obtain
\begin{align}
&\nabla_xH_j(r,X_r)^\top J_se_i
=\nabla_xH_j(s,X_s)^\top J_se_i
+(X_r-X_s)^\top\nabla_x^2H_j(s,X_s)
J_se_i+R^{X,H,i}_{j,s,r}.
\label{eq:X-part-taylor-first-greek}
\end{align}
For the Taylor remainder in the spatial expansion of
$\nabla_xH_j(r,X_r)^\top J_se_i$, we have
\begin{align}
|R^{X,H,i}_{j,s,r}|
\le C\left(|r-s|+|X_r-X_s|^2\right)|J_se_i|.
\label{eq:RXH-pointwise-bound-first-greek}
\end{align}
Hence, using
$\|X_r-X_s\|_{L^{2q}}\le C(r-s)^{1/2}$
and the uniform moment bounds of $J_s$, we obtain
\begin{align}
\|R^{X,H,i}_{j,s,r}\|_{L^q}
\le C(r-s).
\label{eq:RXH-bound-first-greek}
\end{align}
Using \eqref{eq:X-expansion-first-greek},
we deduce
\begin{align}
&(X_r-X_s)^\top\nabla_x^2H_j(s,X_s)J_se_i
=\sum_{a=1}^d\sigma_{\cdot a}(X_s)^\top
\nabla_x^2H_j(s,X_s)J_se_i(W_r^a-W_s^a)
+\widetilde R^{X,H,i}_{j,s,r}.
\label{eq:X-H-expanded-no-Psi-first-greek}
\end{align}

Next, expanding with respect to the Jacobian flow $J_r$.
Using~\eqref{eq:J-expansion-first-greek}
we obtain
\begin{align}
&\nabla_xH_j(s,X_s)^\top
(J_r-J_s)e_i
=\sum_{a=1}^d\nabla_xH_j(s,X_s)^\top
\Big(\nabla_x\sigma_{\cdot a}(X_s)J_se_i\Big)
(W_r^a-W_s^a)+R^{J,H,i}_{j,s,r}.
\label{eq:J-H-expanded-no-Psi-first-greek}
\end{align}
There
$R^{J,H,i}_{j,s,r}
=\nabla_xH_j(s,X_s)^\top R^J_{s,r}e_i,$
and hence
$\|R^{J,H,i}_{j,s,r}\|_{L^q}
\le C(r-s).$

Combining
\eqref{eq:X-part-taylor-first-greek}
and
\eqref{eq:J-H-expanded-no-Psi-first-greek},
we finally obtain
\begin{align}
&\nabla_xH_j(r,X_r)^\top J_re_i\notag\\
=&\nabla_xH_j(s,X_s)^\top J_se_i
+\sum_{a=1}^d\bigg[\sigma_{\cdot a}(X_s)^\top
\nabla_x^2H_j(s,X_s)J_se_i
+\nabla_xH_j(s,X_s)^\top
\Big(\nabla_x\sigma_{\cdot a}(X_s)J_se_i\Big)\bigg](W_r^a-W_s^a)
+R^{H,i}_{j,s,r},
\label{eq:H-gradient-expansion-no-Psi-first-greek}
\end{align}
where $\|R^{H,i}_{j,s,r}\|_{L^q}\le C(r-s)$.

Combining \eqref{eq:Phi-expansion-first-greek} and
\eqref{eq:H-gradient-expansion-no-Psi-first-greek}, we obtain
\begin{align}
&\Phi(t,r)\nabla_xH_j(r,X_r)^\top J_re_i\notag\\
=&\Phi(t,s)\nabla_xH_j(s,X_s)^\top J_se_i\notag\\
&+\Phi(t,s)\sum_{a=1}^d
\bigg[\sigma_{\cdot a}(X_s)^\top
\nabla_x^2H_j(s,X_s)J_se_i
+\nabla_xH_j(s,X_s)^\top
\Big(\nabla_x\sigma_{\cdot a}(X_s)J_se_i\Big)
+\widetilde c_a(s)
\nabla_xH_j(s,X_s)^\top J_se_i
\bigg](W_r^a-W_s^a)\notag\\
&+\Phi(t,s)\sum_{\alpha=1}^{\ell}
d_\alpha(s)
\left(\nabla_xH_j(s,X_s)^\top J_se_i\right)
(B_{s}^{\alpha}-B_r^\alpha)
+R^{\Phi H,i}_{j,s,r}.
\label{eq:Phi-H-gradient-expansion-first-greek}
\end{align}
There
\begin{align}
|R^{\Phi H,i}_{j,s,r}|
\le C|R^\Phi_{s,r}|
+C|R^{H,i}_{j,s,r}|
+C\Big(|W_r-W_s|+|B_{s}-B_r|\Big)|W_r-W_s|.
\label{eq:RPhiH-pointwise-bound-first-greek}
\end{align}
Using
\[
\|W_r-W_s\|_{L^{2q}}\le C(r-s)^{1/2},
\quad\|B_{s}-B_r\|_{L^{2q}}\le C(r-s)^{1/2}\le Ch^{1/2},
\]
we obtain
\begin{align}
\|R^{\Phi H,i}_{j,s,r}\|_{L^q}
\le C(r-s).
\label{eq:RPhiH-bound-sharp-first-greek}
\end{align}
In particular,
\begin{align}
\int_s^{s+h}
\|R^{\Phi H,i}_{j,s,r}\|_{L^q}^2\,\mathrm dr
\le Ch^3.
\label{eq:RPhiH-integrated-bound-first-greek}
\end{align}

Finally, substituting \eqref{eq:Phi-H-gradient-expansion-first-greek} into the backward
stochastic integral gives
\begin{align}
&\int_s^{s+h}\Phi(t,r)\nabla_xH(r,X_r)^\top J_re_i\,
\mathrm d \overleftarrow{B}_r\notag\\
=&\Phi(t,s)
\left(\nabla_xH(s,X_s)^\top J_se_i\right)
\cdot\Delta\overleftarrow B_s\notag\\
&+\Phi(t,s)
\sum_{j=1}^{\ell}\sum_{a=1}^{d}
\bigg[\sigma_{\cdot a}(X_s)^\top
\nabla_x^2H_j(s,X_s)J_se_i
+\nabla_xH_j(s,X_s)^\top
\big(\nabla_x\sigma_{\cdot a}(X_s)J_se_i\big)
+\widetilde c_a(s)
\nabla_xH_j(s,X_s)^\top J_se_i
\bigg]
J_{ja}^{WB}(s,h)
\notag\\
&+
\Phi(t,s)
\sum_{j=1}^{\ell}\sum_{\alpha=1}^{\ell}
d_\alpha(s)
\left(
\nabla_xH_j(s,X_s)^\top J_se_i
\right)
J_{ j\alpha}^{BB}(s,h)
+R^{B,i}_{s,h}.
\label{eq:backward-expansion-first-greek}
\end{align}
Here the remainder is given by
\begin{align}
R^{B,i}_{s,h}:=\sum_{j=1}^{\ell}
\int_s^{s+h}R^{\Phi H,i}_{j,s,r}
\mathrm d \overleftarrow{B}_r^j.
\label{eq:RB-definition-first-greek}
\end{align}

Similarly, for the time integral part,
we can expand the integrand as
\begin{align}
&\Phi(t,r)
\nabla_x\!\Big(F(r,X_r)+d(r)H(r,X_r)\Big)^\top J_re_i\notag\\
=&\Phi(t,s)\nabla_x\!\Big(F(s,X_s)+d(s)H(s,X_s)\Big)^\top J_se_i\notag\\
&+\Phi(t,s)\sum_{a=1}^{d}
\bigg[\sigma_{\cdot a}(X_s)^\top
\nabla_x^2\!\Big(F(s,X_s)+d(s)H(s,X_s)\Big)J_se_i
+\nabla_x\!\Big(F(s,X_s)+d(s)H(s,X_s)\Big)^\top
\big(\nabla_x\sigma_{\cdot a}(X_s)J_se_i\big)\notag\\
&\hspace{4cm}
+\widetilde c_a(s)
\nabla_x\!\Big(F(s,X_s)+d(s)H(s,X_s)\Big)^\top J_se_i\bigg]
(W_r^a-W_s^a)\notag\\
&+\Phi(t,s)\sum_{\alpha=1}^{\ell}
d_\alpha(s)\nabla_x\!\Big(F(s,X_s)+d(s)H(s,X_s)\Big)^\top J_se_i\,
(B_{s}^\alpha-B_r^\alpha)+R^{D,i}_{s,r}
\label{eq:drift-integrand-expansion-first-greek}
\end{align}
with
$\|R^{D,i}_{s,r}\|_{L^q}
\le
C_q(r-s).$

Integrating \eqref{eq:drift-integrand-expansion-first-greek} over
$[s,s+h]$, we obtain
\begin{align}
&\int_s^{s+h}\Phi(t,r)
\nabla_x\!\Big(F(r,X_r)+d(r)H(r,X_r)\Big)^\top J_re_i
\,\mathrm dr
=\Phi(t,s)h
\nabla_x\!\Big(F(s,X_s)+d(s)H(s,X_s)\Big)^\top J_se_i
+A^{D,i}_{s,h}+M^{D,i}_{s,h},
\label{eq:drift-expansion-first-greek}
\end{align}
where
\begin{align}
A^{D,i}_{s,h}=\int_s^{s+h}R^{D,i}_{s,r}\,\mathrm dr,
\label{eq:AD-explicit-first-greek}
\end{align}
and
\begin{align}
M^{D,i}_{s,h}
=&\Phi(t,s)
\sum_{a=1}^{d}
\bigg[
\sigma_{\cdot a}(X_s)^\top
\nabla_x^2\!\Big(F(s,X_s)+d(s)H(s,X_s)\Big)
J_se_i
+\nabla_x\!\Big(F(s,X_s)+d(s)H(s,X_s)\Big)^\top
\big(\nabla_x\sigma_{\cdot a}(X_s)J_se_i\big)\notag\\
&\hspace{2.8cm}+\widetilde c_a(s)
\nabla_x\!\Big(F(s,X_s)+d(s)H(s,X_s)\Big)^\top J_se_i\bigg]
\int_s^{s+h}(W_r^a-W_s^a)\,\mathrm dr\notag\\
&+\Phi(t,s)\sum_{\alpha=1}^{\ell}
d_\alpha(s)\nabla_x\!\Big(F(s,X_s)+d(s)H(s,X_s)\Big)^\top J_se_i
\int_s^{s+h}(B_{s}^\alpha-B_r^\alpha)\,\mathrm dr.
\label{eq:MD-explicit-first-greek}
\end{align}
Here $A^{D,i}_{s,h}$ is the accumulated Taylor remainder in the
time integral and is of finite-variation type. The term
$M^{D,i}_{s,h}$ collects the first-order stochastic fluctuations of
$X$, $J$, and $\Phi$ inside the time integral. It is centered conditionally
on the information at time $s$ and is therefore treated as the local
martingale-type contribution.
We have
\begin{align}
\|A^{D,i}_{s,h}\|_{L^q}
\le\int_s^{s+h}\|R^{D,i}_{s,r}\|_{L^q}\,\mathrm dr
\le C\int_s^{s+h}(r-s)\,\mathrm dr
\le Ch^2.
\label{eq:AD-bound-first-greek}
\end{align}
Moreover, since
\begin{align}
\left\|\int_s^{s+h}(W_r^a-W_s^a)\,\mathrm dr\right\|_{L^q}
\le Ch^{3/2},
\quad
\left\|
\int_s^{s+h}(B_{s}^\alpha-B_r^\alpha)\,\mathrm dr
\right\|_{L^q}
\le Ch^{3/2},
\end{align}
and the coefficients have uniformly bounded moments, we also have
\begin{align}
\|M^{D,i}_{s,h}\|_{L^q}
\le Ch^{3/2}.
\label{eq:MD-bound-first-greek}
\end{align}

Combining \eqref{eq:backward-expansion-first-greek} and
\eqref{eq:drift-expansion-first-greek}, we have the one-step consistency
relation
\begin{align}
&\int_s^{s+h}\Phi(t,r)
\nabla_x\!\Big(F(r,X_r)+d(r)H(r,X_r)\Big)^\top J_re_i\,\mathrm dr+
\int_s^{s+h}\Phi(t,r)\nabla_xH(r,X_r)^\top J_re_i\,
\mathrm d \overleftarrow{B}_r\notag\\
=&\mathcal S_{\mathrm{int}}^{(i),\mathrm{FBT}}
\left(\Phi(t,s),X_s,J_s,s,h;
\Delta W_s,\Delta\overleftarrow B_s\right)
+R^{B,i}_{s,h}+A^{D,i}_{s,h}+M^{D,i}_{s,h}.
\label{eq:one-step-consistency-first-greek}
\end{align}

\paragraph{Proof of Proposition~\ref{prop:first-order-greek-strong-error}}
\label{para:proof of prop:first-order-greek-strong-error}
\begin{proposition*}[Strong-error order of the first-order Greek integral discretization]
Assume that the coefficients $b,\sigma,F,H,d,\widetilde c$ are sufficiently
smooth with bounded derivatives up to the order used above, and assume that
$X$, $J$, and $\Phi$ have uniformly bounded moments of all required orders.
Then the approximation generated by
\begin{align}
\widetilde Y_{k+1}^{(i,h)}
=
\widetilde Y_k^{(i,h)}
+
\mathcal S_{\mathrm{int}}^{(i),\mathrm{FBT}}
\left(\Phi(t,t_k),X_{t_k}^{t,x},J_{t_k}^{t,x},
t_k,h;\Delta W_{t_k},\Delta\overleftarrow B_{t_k}\right),
\quad
\widetilde Y_0^{(i,h)}=0,
\end{align}
satisfies
\begin{align}
\left\|
\sup_{0\le k\le N}
\left|
Y_{t_k}^{(i)}
-
\widetilde Y_k^{(i,h)}
\right|
\right\|_{L^q}
\le C_qh,
\end{align}
for every $q\ge2$. Consequently, by Jensen's inequality, the same estimate
also holds for every $0<q<2$. Hence
$\mathcal S_{\mathrm{int}}^{(i),\mathrm{FBT}}$ has strong-error order $1$ in the sense
of Definition~\ref{def:Strong error of S int Greek}.
\end{proposition*}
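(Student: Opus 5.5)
The plan mirrors the proof of Proposition~\ref{prop:int-strong-error}, now using the one-step consistency relation \eqref{eq:one-step-consistency-first-greek} just derived. For $k=0,\ldots,N-1$, let $I_k^{(i)}$ denote the exact one-step increment of $Y^{(i)}$ over $[t_k,t_{k+1}]$. Applying \eqref{eq:one-step-consistency-first-greek} with $s=t_k$ and exact inputs $(\Phi(t,t_k),X_{t_k}^{t,x},J_{t_k}^{t,x})$ gives the telescoped error
\[
Y_{t_n}^{(i)}-\widetilde Y_n^{(i,h)}
=
\sum_{k=0}^{n-1}A^{D,i}_{t_k,h}
+\sum_{k=0}^{n-1}M^{D,i}_{t_k,h}
+\sum_{k=0}^{n-1}R^{B,i}_{t_k,h}.
\]
The three sums are estimated separately in $L^q$ after taking $\sup_n$, for fixed $q\ge2$.

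The finite-variation sum is bounded pathwise by $\sum_k|A^{D,i}_{t_k,h}|$. Minkowski's inequality and \eqref{eq:AD-bound-first-greek} then give $C_qNh^2\le C_qh$.

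The local fluctuation sum is handled by splitting $M^{D,i}_{t_k,h}$ into its $W$-part and its $B$-part. The $W$-part has coefficients measurable with respect to the forward information at $t_k$ (they depend on $\Phi(t,t_k)$, which contains $B$-increments on $[t,t_k]$). Conditionally on the whole $B$ path, however, the $W$-part is a forward martingale difference sequence, since $\int(W_r-W_{t_k})\,\mathrm dr$ is centered and independent of the past. The $B$-part is treated symmetrically after reversing time via $\widehat B$, with $W$ and the forward quantities held fixed. One subtlety arises here: $\Phi(t,t_k)$ depends on $B$ on $[t,t_k]$, which is the ``future'' in reversed time. I would resolve this by writing $\Phi(t,t_k)=\Phi(t,T)/\Phi(t_k,T)$, or more simply by conditioning on $W$ and on $B|_{[t,t_k]}$ appropriately. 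The increment $\int_{t_k}^{t_{k+1}}(B_{t_k}-B_r)\,\mathrm dr$ is independent of everything outside $[t_k,t_{k+1}]$, so each summand is still centered given all other blocks. I expect to use a Burkholder/Rosenthal-type inequality for sums of such conditionally centered, block-independent terms. Combined with $\|M^{D,i}_{t_k,h}\|_{L^q}\le Ch^{3/2}$ from \eqref{eq:MD-bound-first-greek}, this gives $C_q(Nh^3)^{1/2}\le C_qh$, exactly as in \eqref{eq:direct-global-M-bound}.

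The remainder sum is written as a single backward It\^o integral $\int_t^{t_n}R_r^{B,i,h}\cdot\mathrm d\overleftarrow{B}_r$, with a piecewise-defined integrand. The backward BDG inequality and \eqref{eq:RPhiH-integrated-bound-first-greek} then yield $C_q(Nh^3)^{1/2}\le C_qh$.

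The main obstacle is justifying the martingale structure, for two reasons. First, $\mathcal G_k$ is not a filtration. Second, the integrand $R^{\Phi H,i}_{j,s,r}$ must be backward-adapted for the backward It\^o integral and BDG to apply. $R^{\Phi H,i}$ contains $\Phi(t,r)$, $X_r$ and $J_r$, which involve forward $W$-information and $B$ on $[t,r]$, whereas backward adaptedness requires measurability with respect to $\mathcal F^W_{t,T}\vee\mathcal F^B_{r,T}$ in the sense of \cite{Pardoux1994BackwardDS}. I would handle this by taking the integrand to be measurable with respect to $\mathcal F^W$ (all of it) together with $B$-increments on $[t,r]$. I would then apply the BDG inequality conditionally on $W$ to the reversed-time structure, and absorb any non-adapted $\Phi$-dependence via the factorization of $\Phi$ mentioned above. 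The final step, which requires checking that the remaining dependence sits on the correct side, is where I expect the most care to be needed. Once these three bounds hold, the case $0<q<2$ follows from Jensen's inequality.
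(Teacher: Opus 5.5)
Your decomposition and the three estimates are the paper's proof. The paper telescopes \eqref{eq:one-step-consistency-first-greek} and bounds $\sum A^{D,i}$ pathwise, $\sum M^{D,i}$ by discrete BDG, and $\sum R^{B,i}$ by a backward BDG combined with \eqref{eq:RPhiH-integrated-bound-first-greek}. It justifies the martingale structure only through its forward--backward information convention. The measurability issues you raise are genuine and are not addressed there, but part of your remedy is misdirected.

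For $M^{D,i}$, no time reversal is needed. The quantities $\Phi(t,t_k)$, $X_{t_k}^{t,x}$ and $J_{t_k}^{t,x}$ are all measurable with respect to the joint forward filtration $\mathcal L_k:=\sigma(W_u-W_t,\,B_u-B_t:\ t\le u\le t_k)$. Both $\int_{t_k}^{t_{k+1}}(W_r-W_{t_k})\,\mathrm dr$ and $\int_{t_k}^{t_{k+1}}(B_{t_k}-B_r)\,\mathrm dr$ are centered, independent of $\mathcal L_k$, and $\mathcal L_{k+1}$-measurable. Hence the whole of $M^{D,i}_{t_k,h}$ is an ordinary $(\mathcal L_k)$-martingale difference; its $B$-part is a forward, not a reverse, martingale difference. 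This is your ``condition on $W$ and on $B|_{[t,t_k]}$'' idea. Drop the factorization $\Phi(t,t_k)=\Phi(t,T)/\Phi(t_k,T)$: it makes the weight depend on all of $B$ and on $W$ after $t_k$.

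For $R^{B,i}$, your plan would not go through. The integrand $R^{\Phi H,i}_{j,t_k,r}$ depends on $B|_{[t_k,r]}$ through $\Phi(t,r)$ and also through the subtracted Taylor term $\Phi(t,t_k)d_\alpha(t_k)(\cdots)(B^\alpha_{t_k}-B^\alpha_r)$. No rewriting of $\Phi$ makes it backward-adapted, and the exact integrand of $Y^{(i)}$ has the same defect. So neither backward BDG nor a reversed-time BDG conditional on $W$ applies.

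What works is to fix the right-endpoint Riemann-sum definition of $\mathrm d\overleftarrow B$ from Appendix~\ref{sec:SPDE and BDSDE} and stay in forward time. For a continuous $(\mathcal L_r)$-semimartingale $\phi$ one has $\int_s^{s+h}\phi_r\,\mathrm d\overleftarrow B^j_r=-\int_s^{s+h}\phi_r\,\mathrm dB^j_r-\langle\phi,B^j\rangle_s^{s+h}$. Applied to the remainder, the It\^o part is bounded by forward BDG and \eqref{eq:RPhiH-integrated-bound-first-greek}, exactly as you intended. There is, however, an additional bracket term that the paper's argument does not account for. On each step it equals $\int_{t_k}^{t_{k+1}}\bigl[\Phi(t,r)\,d(r)\cdot g_r-\Phi(t,t_k)\,d(t_k)\cdot g_{t_k}\bigr]\mathrm dr$, with $g_r:=(\nabla_xH_j(r,X_r)^\top J_r e_i)_{1\le j\le\ell}$. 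The same mechanism gives $\mathbb E[J^{BB}_{jj}(s,h)]=h$ under this convention. This term is only $O(h^{3/2})$ per step, so a pathwise sum would give only $h^{1/2}$. It has the structure of $A^{D,i}+M^{D,i}$, however: an $\mathcal L_k$-conditionally centered $O(h^{3/2})$ leading part plus an $O(h^2)$ remainder, using smoothness of $d$ in time. The same two arguments therefore give $C_qh$, and the order-one conclusion stands.
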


\begin{proof}

By \eqref{eq:one-step-consistency-first-greek}, for each $0\le k\le N$,
\begin{align}
Y_{t_k}^{(i)}-\widetilde Y_k^{(i,h)}
=
\sum_{j=0}^{k-1}A_{t_j,h}^{D,i}
+
\sum_{j=0}^{k-1}M_{t_j,h}^{D,i}
+
\sum_{j=0}^{k-1}R_{t_j,h}^{B,i}.
\label{eq:global-error-decomposition-first-greek}
\end{align}
For $k=0$, the sums are 0 and the identity is consistent with
$Y_{t_0}^{(i)}=\widetilde Y_0^{(i,h)}=0$.

For the finite-variation part, using the pathwise bound
\[
\sup_{0\le k\le N}
\left|
\sum_{j=0}^{k-1}A_{t_j,h}^{D,i}
\right|
\le
\sum_{j=0}^{N-1}
\left|
A_{t_j,h}^{D,i}
\right|,
\]
we obtain
\begin{align}
\left\|
\sup_{0\le k\le N}
\left|
\sum_{j=0}^{k-1}A_{t_j,h}^{D,i}
\right|
\right\|_{L^q}
&\le
\sum_{j=0}^{N-1}
\left\|
A_{t_j,h}^{D,i}
\right\|_{L^q}
\le
C_qNh^2
\le
C_qh.
\label{eq:global-A-bound-first-greek}
\end{align}

For the martingale-type part, by the discrete forward--backward information
convention above, the \(W\)-part of
\(\{M_{t_j,h}^{D,i}\}_{j=0}^{N-1}\) is a forward martingale-difference
sequence, while the \(B\)-part is a reverse martingale-difference sequence.
Hence, applying the discrete Burkholder--Davis--Gundy inequality,
$\left\|M_{t_j,h}^{D,i}\right\|_{L^q}
\le C_qh^{3/2}$
give
\begin{align}
\left\|
\sup_{0\le k\le N}
\left|
\sum_{j=0}^{k-1}M_{t_j,h}^{D,i}
\right|
\right\|_{L^q}
\le&
C_q
\left\|
\left(
\sum_{j=0}^{N-1}
\left|
M_{t_j,h}^{D,i}
\right|^2
\right)^{1/2}
\right\|_{L^q}
\notag\\
=&
C_q
\left\|
\sum_{j=0}^{N-1}
\left|
M_{t_j,h}^{D,i}
\right|^2
\right\|_{L^{q/2}}^{1/2}
\notag\\
\le&
C_q
\left(
\sum_{j=0}^{N-1}
\left\|
M_{t_j,h}^{D,i}
\right\|_{L^q}^2
\right)^{1/2}
\notag\\
\le&
C_q(Nh^3)^{1/2}
\le
C_qh.
\label{eq:global-MD-bound-first-greek}
\end{align}
Here the third inequality uses Minkowski's inequality in $L^{q/2}$, which is
valid because $q\ge2$.

It remains to estimate the accumulated backward stochastic remainder. By
definition,
\begin{align}
\sum_{j=0}^{k-1}R_{t_j,h}^{B,i}
=
\sum_{j=0}^{k-1}
\sum_{l=1}^{\ell}
\int_{t_j}^{t_{j+1}}
R^{\Phi H,i}_{l,t_j,r}
\,\mathrm d \overleftarrow{B}_r^l.
\end{align}
Applying the Burkholder--Davis--Gundy inequality for backward stochastic
integrals, equivalently after reversing time, yields
\begin{align}
&
\left\|
\sup_{0\le k\le N}
\left|
\sum_{j=0}^{k-1}R_{t_j,h}^{B,i}
\right|
\right\|_{L^q}
\notag\\
\le &
C_q
\left\|
\left(
\sum_{j=0}^{N-1}
\sum_{l=1}^{\ell}
\int_{t_j}^{t_{j+1}}
\left|
R^{\Phi H,i}_{l,t_j,r}
\right|^2
\,\mathrm dr
\right)^{1/2}
\right\|_{L^q}
\notag\\
\le &
C_q
\left(
\sum_{j=0}^{N-1}
\sum_{l=1}^{\ell}
\int_{t_j}^{t_{j+1}}
\left\|
R^{\Phi H,i}_{l,t_j,r}
\right\|_{L^q}^2
\,\mathrm dr
\right)^{1/2}.
\label{eq:global-RB-BDG-first-greek}
\end{align}
Using the integrated remainder estimate
\begin{align}
\sum_{l=1}^{\ell}
\int_{t_j}^{t_{j+1}}
\left\|
R^{\Phi H,i}_{l,t_j,r}
\right\|_{L^q}^2
\,\mathrm dr
\le
C_qh^3,
\end{align}
we get
\begin{align}
\left\|
\sup_{0\le k\le N}
\left|
\sum_{j=0}^{k-1}R_{t_j,h}^{B,i}
\right|
\right\|_{L^q}
\le
C_q(Nh^3)^{1/2}
\le
C_qh.
\label{eq:global-RB-bound-first-greek}
\end{align}

Combining
\eqref{eq:global-A-bound-first-greek},
\eqref{eq:global-MD-bound-first-greek},
and
\eqref{eq:global-RB-bound-first-greek} with
\eqref{eq:global-error-decomposition-first-greek}, we conclude that
\begin{align}
\left\|
\sup_{0\le k\le N}
\left|
Y_{t_k}^{(i)}
-
\widetilde Y_k^{(i,h)}
\right|
\right\|_{L^q}
\le
C_qh,
\quad q\ge2.
\end{align}

Finally, for $0<q<2$, Jensen's inequality gives
\begin{align*}
\left\|
\sup_{0\le k\le N}
\left|
Y_{t_k}^{(i)}
-
\widetilde Y_k^{(i,h)}
\right|
\right\|_{L^q}
\le
\left\|
\sup_{0\le k\le N}
\left|
Y_{t_k}^{(i)}
-
\widetilde Y_k^{(i,h)}
\right|
\right\|_{L^2}
\le
C_2h.
\end{align*}
This proves the claimed strong-error order one estimate.
\end{proof}

\paragraph{Proof of Proposition~\ref{prop:first-order-greek accumulated stability}}
\label{para:proof of prop:first-order-greek accumulated stability}
\begin{proposition*}
Assume that
$F\in C_b^2$, $H\in C_b^3$, $\sigma\in C_b^2$.
Assume also that $d,\widetilde c$ are bounded. Moreover, assume that the exact
and numerical input processes satisfy the uniform moment bound
\small
\begin{align*}
&
\left\|\sup_{0\le j\le N}|\Phi(t,t_j)|\right\|_{L^{12}}
+
\left\|\sup_{0\le j\le N}|\Phi_j^{(h)}|\right\|_{L^{12}}
+
\left\|\sup_{0\le j\le N}|X_{t_j}^{t,x}|\right\|_{L^{12}}
+
\left\|\sup_{0\le j\le N}|X_j^{(h)}|\right\|_{L^{12}}
+
\left\|\sup_{0\le j\le N}|J_{t_j}^{t,x}|\right\|_{L^{12}}
+
\left\|\sup_{0\le j\le N}|J_j^{(h)}|\right\|_{L^{12}}
<\infty .
\end{align*}
\normalsize
Then $\mathcal S_{\mathrm{int}}^{(i),\mathrm{FBT}}$ satisfies the accumulated
stability estimate. More precisely, there exists
$L_{\mathrm{int}}^{(i)}>0$, independent of $h$, such that
\begin{align*}
\left\|
\sup_{0\le k\le N}
\left|
\widetilde Y_k^{(i,h)}-Y_k^{(i,h)}
\right|
\right\|_{L^2}
\le L_{\mathrm{int}}^{(i)}
\bigg(
&
\left\|
\sup_{0\le j\le N}
\left|
\Phi(t,t_j)-\Phi_j^{(h)}
\right|
\right\|_{L^4}+
\left\|
\sup_{0\le j\le N}
\left|
X_{t_j}^{t,x}-X_j^{(h)}
\right|
\right\|_{L^4}+
\left\|
\sup_{0\le j\le N}
\left|
J_{t_j}^{t,x}-J_j^{(h)}
\right|
\right\|_{L^4}
\bigg).
\end{align*}
\end{proposition*}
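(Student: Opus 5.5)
We mirror the proof of Proposition~\ref{prop:int stable}. Set $\Delta\Phi_j:=\Phi(t,t_j)-\Phi_j^{(h)}$, $\Delta X_j:=X_{t_j}^{t,x}-X_j^{(h)}$, $\Delta J_j:=J_{t_j}^{t,x}-J_j^{(h)}$, and let $\delta_\Phi,\delta_X,\delta_J$ be the $L^4$ norms of their running suprema. By construction,
\[
\widetilde Y_k^{(i,h)}-Y_k^{(i,h)}=\sum_{j=0}^{k-1}\Big[\mathcal S_{\mathrm{int}}^{(i),\mathrm{FBT}}(\Phi(t,t_j),X_{t_j}^{t,x},J_{t_j}^{t,x},\dots)-\mathcal S_{\mathrm{int}}^{(i),\mathrm{FBT}}(\Phi_j^{(h)},X_j^{(h)},J_j^{(h)},\dots)\Big].
\]
We split it into the four contributions of \eqref{eq:Sint-first-greek-mil}: the $h$-drift term, the $\Delta\overleftarrow B$ term, the $J^{BB}$ term and the $J^{WB}$ term. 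Each coefficient has the generic form $\Phi\,\kappa(s,X)\,Je_i$ with $\kappa$ one of $\nabla_x(F+dH)$, $\nabla_xH_j$, $d_\alpha\nabla_xH_j$, or the $WB$ bracket
\[
\nabla_x^2H_j[\sigma_{\cdot a},\cdot\,]+\nabla_xH_j^\top\nabla_x\sigma_{\cdot a}+\widetilde c_a\nabla_xH_j^\top .
\]
Under $F\in C_b^2$, $H\in C_b^3$, $\sigma\in C_b^2$ and bounded $d,\widetilde c$, each such $\kappa$ is globally Lipschitz in $x$. For the $WB$ bracket this needs $\nabla^2H\cdot\sigma$ Lipschitz, and $\sigma$ may grow linearly, so $\kappa$ is Lipschitz with a constant of linear growth in $x$; it has at most linear growth.

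The key pointwise estimate is a three-factor telescoping:
\[
\Phi\kappa(X)J-\Phi^{(h)}\kappa(X^{(h)})J^{(h)}=\Delta\Phi\,\kappa(X)J+\Phi^{(h)}\big(\kappa(X)-\kappa(X^{(h)})\big)J+\Phi^{(h)}\kappa(X^{(h)})\,\Delta J .
\]
We bound each product in $L^2$ by Hölder with the difference factor in $L^4$ and the remaining factors in $L^{12}$, since $\tfrac14+\tfrac1{12}+\tfrac1{12}+\tfrac1{12}=\tfrac12$ leaves room for one further $L^{12}$ factor. That extra factor is either the linear-growth Lipschitz constant $(1+|X|+|X^{(h)}|)$ of the $WB$ coefficient or, for the correction terms, the iterated integral itself. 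This is exactly why $L^{12}$ moments are assumed. The result is
\[
\|\text{coefficient difference}\|_{L^2}\le C\big(\|\Delta\Phi_j\|_{L^4}+\|\Delta X_j\|_{L^4}+\|\Delta J_j\|_{L^4}\big).
\]
For the $J^{WB}$ and $J^{BB}$ terms, $\|J^{WB}\|_{L^{12}},\|J^{BB}\|_{L^{12}}\le Ch$ supplies the extra factor.

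Summation then proceeds as in Proposition~\ref{prop:int stable}. For the drift, $J^{WB}$ and $J^{BB}$ terms we take the crude bound $\sup_k|\sum_{j<k}\cdot|\le\sum_j|\cdot|$, which gives $\sum_j Ch(\cdots)\le CT(\delta_\Phi+\delta_X+\delta_J)$. For the $\Delta\overleftarrow B$ term we use discrete BDG for backward increments after time reversal; the $\Delta\overleftarrow B_{t_j}$ are independent of the coefficients evaluated at $t_j$, which are measurable with respect to $W$ up to $t_j$ and $B$ after $t_{j+1}$. This gives $C(\sum_j h\|\cdot\|_{L^2}^2)^{1/2}\le C\sqrt T(\delta_\Phi+\delta_X+\delta_J)$. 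Adding the four bounds yields the claim with an explicit $L^{(i)}_{\mathrm{int}}$.

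The main obstacle is the $WB$ coefficient, where $\nabla_x^2H_j(s,X)[\sigma_{\cdot a}(X),Je_i]$ is only locally Lipschitz with linearly growing constant when $\sigma$ is unbounded. We handle it by writing its increment as
\[
\big(\nabla^2H_j(X)-\nabla^2H_j(X^{(h)})\big)[\sigma(X),J]+\nabla^2H_j(X^{(h)})\big[\sigma(X)-\sigma(X^{(h)}),J\big],
\]
using $\nabla^3H$ and $\nabla\sigma$ bounded, and absorbing $|\sigma(X)|\lesssim1+|X|$ into the $L^{12}$ budget. A secondary point is the measurability needed for BDG in the backward term: we rely on the forward--backward information convention stated earlier in the appendix rather than treating $(\mathcal G_k)$ as a filtration.
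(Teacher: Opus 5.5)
Your proposal follows the paper's proof essentially step for step. You use the same four-way split of $\mathcal S_{\mathrm{int}}^{(i),\mathrm{FBT}}$, the same add-and-subtract telescoping with the difference factor in $L^4$, crude $\sum_j|\cdot|$ bounds for the drift, $J^{WB}$ and $J^{BB}$ terms, and discrete BDG for the $\Delta\overleftarrow B$ sum. The paper reads $\sigma\in C_b^2$ as $\sigma$ bounded; that is what yields its bound $C(1+|J_{t_r}^{t,x}|)|\Delta X_r|+C|\Delta J_r|$ for the $WB$ coefficient. Under that reading your count $\tfrac{1}{4}+3\cdot\tfrac{1}{12}=\tfrac{1}{2}$, with the iterated integral as the extra $L^{12}$ factor, is correct.

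\textbf{The step that does not close.} The step you call the main obstacle fails as written. If $\sigma$ grows linearly, the $\Delta X$-piece of the $WB$ term is bounded by $C|\Phi_j^{(h)}|\,(1+|X_{t_j}^{t,x}|)\,|J_{t_j}^{t,x}|\,|\Delta X_j|\,|J^{WB}(t_j,h)|$. Here the growth factor and the iterated integral appear \emph{together}, not ``either/or''. Hölder would then need $4\cdot\tfrac{1}{12}+\tfrac{1}{4}=\tfrac{7}{12}>\tfrac{1}{2}$, which exceeds the assumed $L^{12}$ moments, and $J^{WB}$ is unbounded so it cannot be sent to $L^\infty$.

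\textbf{The fix is independence.} $J^{WB}(t_j,h)$ and $J^{BB}(t_j,h)$ are functionals of the increments of $(W,B)$ on $[t_j,t_{j+1}]$. Every coefficient at $t_j$ is a functional of $(W,B)$ on $[t,t_j]$. So for the coefficient difference $D_j$ you get $\|D_j\,J^{WB}(t_j,h)\|_{L^2}=\|D_j\|_{L^2}\|J^{WB}(t_j,h)\|_{L^2}\le Ch\|D_j\|_{L^2}$, and then $3\cdot\tfrac{1}{12}+\tfrac{1}{4}=\tfrac{1}{2}$ suffices. This is how the paper handles the correction terms: it bounds $\|D^{WB}_{ja,r}\|_{L^2}\le C\delta$ and $\|J^{WB}_{ja}(t_r,h)\|_{L^2}\le Ch$ separately, which is legitimate only because of this independence.

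\textbf{A smaller point on measurability.} Your measurability claim for the $\Delta\overleftarrow B$ term is incorrect. $X$ and $J$ (exact and Milstein) depend only on $W$, while $\Phi(t,t_j)$ and $\Phi_j^{(h)}$ depend on $B$ over $[t,t_j]$, not after $t_{j+1}$. Hence $\sum_{j<k}D_j\cdot\Delta\overleftarrow B_{t_j}$ is a martingale transform for the \emph{forward} filtration generated by $(W,B)$. Ordinary discrete BDG gives $C(\sum_j h\|D_j\|_{L^2}^2)^{1/2}$ directly; in $L^2$, Doob's inequality plus orthogonality of the increments is enough. Time reversal is not needed, and the reversed filtration would not make these coefficients adapted. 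The independence you actually use is correct, so your bound $C\sqrt T(\delta_\Phi+\delta_X+\delta_J)$ stands.
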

\begin{proof}
Set
\[
\Delta\Phi_r:=\Phi(t,t_r)-\Phi_r^{(h)},\quad
\Delta X_r:=X_{t_r}^{t,x}-X_r^{(h)},\quad
\Delta J_r:=J_{t_r}^{t,x}-J_r^{(h)}
\]
and define
\[
\delta_\Phi
:=
\left\|
\sup_{0\le r\le N}
|\Delta\Phi_r|
\right\|_{L^4},
\quad
\delta_X
:=
\left\|
\sup_{0\le r\le N}
|\Delta X_r|
\right\|_{L^4},
\quad
\delta_J
:=
\left\|
\sup_{0\le r\le N}
|\Delta J_r|
\right\|_{L^4}.
\]
Write $\delta:=\delta_\Phi+\delta_X+\delta_J$.

By the definitions of $\widetilde Y_k^{(i,h)}$ and $Y_k^{(i,h)}$, we have
\begin{align}
\widetilde Y_k^{(i,h)}-Y_k^{(i,h)}
=\sum_{r=0}^{k-1}\Bigg[
&\mathcal S_{\mathrm{int}}^{(i),\mathrm{FBT}}
\left(\Phi(t,t_r),X_{t_r}^{t,x},J_{t_r}^{t,x},t_r,h;
\Delta W_{t_r},\Delta\overleftarrow B_{t_r}\right)
-
\mathcal S_{\mathrm{int}}^{(i),\mathrm{FBT}}
\left(\Phi_r^{(h)},X_r^{(h)},J_r^{(h)},t_r,h;
\Delta W_{t_r},\Delta\overleftarrow B_{t_r}\right)
\Bigg].
\label{eq:first-greek-stability-decomposition}
\end{align}
We estimate the contribution of each term in
$\mathcal S_{\mathrm{int}}^{(i),\mathrm{FBT}}$.

First, consider the time-integral term
$\Phi h\,\nabla_x\!\Big(F(s,X)+d(s)H(s,X)\Big)^\top Je_i,$
since $F\in C_b^2$, $H\in C_b^3$, and $d$ is bounded, the map
$x\mapsto \nabla_x(F+dH)(t_r,x)$
is bounded and globally Lipschitz, uniformly in $r$. Hence, by adding and
subtracting intermediate terms and using H\"older's inequality together with
the uniform moment bounds,
\begin{align*}
&
\left\|
\Phi(t,t_r)
\nabla_x(F+dH)(t_r,X_{t_r}^{t,x})^\top J_{t_r}^{t,x}e_i
-
\Phi_r^{(h)}
\nabla_x(F+dH)(t_r,X_r^{(h)})^\top J_r^{(h)}e_i
\right\|_{L^2}
\\
\le&
C\left(
\|\Delta\Phi_r\|_{L^4}
+
\|\Delta X_r\|_{L^4}
+
\|\Delta J_r\|_{L^4}
\right)
\\
\le& C\delta.
\end{align*}
Therefore, using the pathwise bound
\[
\sup_{0\le k\le N}
\left|
\sum_{r=0}^{k-1}h A_r
\right|
\le
\sum_{r=0}^{N-1}h|A_r|,
\]
we obtain
\begin{align}
&
\left\|
\sup_{0\le k\le N}
\left|
\sum_{r=0}^{k-1}h
\Big[
\Phi(t,t_r)
\nabla_x(F+dH)(t_r,X_{t_r}^{t,x})^\top J_{t_r}^{t,x}e_i
-
\Phi_r^{(h)}
\nabla_x(F+dH)(t_r,X_r^{(h)})^\top J_r^{(h)}e_i
\Big]
\right|
\right\|_{L^2}
\notag\\
\le&
\sum_{r=0}^{N-1}
h
\left\|
\Phi(t,t_r)
\nabla_x(F+dH)(t_r,X_{t_r}^{t,x})^\top J_{t_r}^{t,x}e_i
-
\Phi_r^{(h)}
\nabla_x(F+dH)(t_r,X_r^{(h)})^\top J_r^{(h)}e_i
\right\|_{L^2}
\notag\\
\le&
CT\delta.
\label{eq:first-greek-stability-time-term}
\end{align}

Next, consider the backward stochastic increment term
$\Phi\left(\nabla_xH(s,X)^\top Je_i\right)\cdot\Delta\overleftarrow B_s.$
Since $H\in C_b^3$, the map $x\mapsto\nabla_xH(t_r,x)$ is bounded and
globally Lipschitz. The same H\"older argument gives
\[
\|\Phi(t,t_r)
\nabla_xH(t_r,X_{t_r}^{t,x})^\top J_{t_r}^{t,x}e_i
-
\Phi_r^{(h)}
\nabla_xH(t_r,X_r^{(h)})^\top J_r^{(h)}e_i\|_{L^2}
\le
C\left(
\|\Delta\Phi_r\|_{L^4}
+
\|\Delta X_r\|_{L^4}
+
\|\Delta J_r\|_{L^4}
\right)
\le C\delta .
\]
By the discrete Burkholder--Davis--Gundy inequality for backward stochastic
increments, equivalently after reversing time,
\begin{align}
&\left\|
\sup_{0\le k\le N}
\left|
\sum_{r=0}^{k-1}
\left(\Phi(t,t_r)
\nabla_xH(t_r,X_{t_r}^{t,x})^\top J_{t_r}^{t,x}e_i
-
\Phi_r^{(h)}
\nabla_xH(t_r,X_r^{(h)})^\top J_r^{(h)}e_i\right)\cdot\Delta\overleftarrow B_{t_r}
\right|
\right\|_{L^2}\notag\\
\le&
C
\left\|
\left(
\sum_{r=0}^{N-1}
h\left|\Phi(t,t_r)
\nabla_xH(t_r,X_{t_r}^{t,x})^\top J_{t_r}^{t,x}e_i
-
\Phi_r^{(h)}
\nabla_xH(t_r,X_r^{(h)})^\top J_r^{(h)}e_i\right|^2
\right)^{1/2}
\right\|_{L^2}\notag\\
\le&
C
\left(
\sum_{r=0}^{N-1}
h\left\|\Phi(t,t_r)
\nabla_xH(t_r,X_{t_r}^{t,x})^\top J_{t_r}^{t,x}e_i
-
\Phi_r^{(h)}
\nabla_xH(t_r,X_r^{(h)})^\top J_r^{(h)}e_i\right\|_{L^2}^2
\right)^{1/2}\notag\\
\le&
C\sqrt T\,\delta .
\label{eq:first-greek-stability-B-term}
\end{align}

Now consider the $BB$ correction term. With the convention
\[
J_{j\alpha}^{BB}(s,h)
=
\int_s^{s+h}
(B_{s}^{\alpha}-B_u^\alpha)\,
\mathrm d \overleftarrow{B}_u^j,
\]
the $BB$ contribution is
\[
\Phi
\sum_{j=1}^{\ell}\sum_{\alpha=1}^{\ell}
d_\alpha(s)
\left(\nabla_xH_j(s,X)^\top Je_i\right)
J_{j\alpha}^{BB}(s,h).
\]
Set
\[
D_{j\alpha,r}^{BB}
:=
d_\alpha(t_r)
\Big[
\Phi(t,t_r)
\nabla_xH_j(t_r,X_{t_r}^{t,x})^\top J_{t_r}^{t,x}e_i
-
\Phi_r^{(h)}
\nabla_xH_j(t_r,X_r^{(h)})^\top J_r^{(h)}e_i
\Big].
\]
Since $d$ is bounded and $H\in C_b^3$,
\[
\|D_{j\alpha,r}^{BB}\|_{L^2}
\le C\delta.
\]
Moreover, for every $j,\alpha$,
\[
\|J_{j\alpha}^{BB}(t_r,h)\|_{L^2}\le Ch.
\]
Therefore,
\begin{align}
&
\left\|
\sup_{0\le k\le N}
\left|
\sum_{r=0}^{k-1}
\sum_{j=1}^{\ell}\sum_{\alpha=1}^{\ell}
D_{j\alpha,r}^{BB}
J_{j\alpha}^{BB}(t_r,h)
\right|
\right\|_{L^2}
\le
\sum_{r=0}^{N-1}
\sum_{j=1}^{\ell}\sum_{\alpha=1}^{\ell}
\left\|
D_{j\alpha,r}^{BB}
J_{j\alpha}^{BB}(t_r,h)
\right\|_{L^2}
\le
C\sum_{r=0}^{N-1}h\delta
\le
CT\delta .
\label{eq:first-greek-stability-BB-term}
\end{align}

It remains to estimate the $WB$ correction term. With the convention
\[
J_{ja}^{WB}(s,h)
=
\int_s^{s+h}
(W_u^a-W_s^a)\,
\mathrm d \overleftarrow{B}_u^j,
\]
the $WB$ contribution is
\[
\Phi
\sum_{j=1}^{\ell}\sum_{a=1}^{d}
Q_{ja}(s,X,J)
J_{ja}^{WB}(s,h),
\]
where
\begin{align*}
Q_{ja}(s,X,J)
:=
&\sigma_{\cdot a}(X)^\top\nabla_x^2H_j(s,X)Je_i
+
\nabla_xH_j(s,X)^\top
\big(\nabla_x\sigma_{\cdot a}(X)Je_i\big)+
\widetilde c_a(s)
\nabla_xH_j(s,X)^\top Je_i .
\end{align*}
For fixed $a$ and $j$, define the coefficient difference
\[
D_{ja,r}^{WB}
:=
\Phi(t,t_r)Q_{ja}(t_r,X_{t_r}^{t,x},J_{t_r}^{t,x})
-
\Phi_r^{(h)}Q_{ja}(t_r,X_r^{(h)},J_r^{(h)}).
\]
Since $H\in C_b^3$, $\sigma\in C_b^2$, and $\widetilde c$ is bounded, we have
the Lipschitz-type estimate
\[
\left|
Q_{ja}(t_r,X_{t_r}^{t,x},J_{t_r}^{t,x})
-
Q_{ja}(t_r,X_r^{(h)},J_r^{(h)})
\right|
\le
C(1+|J_{t_r}^{t,x}|)
|\Delta X_r|
+
C|\Delta J_r|.
\]
Consequently, by adding and subtracting
$\Phi_r^{(h)}Q_{ja}(t_r,X_{t_r}^{t,x},J_{t_r}^{t,x})$, and using
H\"older's inequality with the uniform $L^{12}$ moment bound,
\[
\|D_{ja,r}^{WB}\|_{L^2}
\le
C\left(
\|\Delta\Phi_r\|_{L^4}
+
\|\Delta X_r\|_{L^4}
+
\|\Delta J_r\|_{L^4}
\right)
\le C\delta.
\]
Moreover,
\[
\|J_{ja}^{WB}(t_r,h)\|_{L^2}\le Ch.
\]
Again  we obtain
\begin{align}
&
\left\|
\sup_{0\le k\le N}
\left|
\sum_{r=0}^{k-1}
\sum_{j=1}^{\ell}\sum_{a=1}^{d}
D_{ja,r}^{WB}
J_{ja}^{WB}(t_r,h)
\right|
\right\|_{L^2}
\le
\sum_{r=0}^{N-1}
\sum_{j=1}^{\ell}\sum_{a=1}^{d}
\left\|
D_{ja,r}^{WB}
J_{ja}^{WB}(t_r,h)
\right\|_{L^2}
\le
C\sum_{r=0}^{N-1}h\delta
\le
CT\delta .
\label{eq:first-greek-stability-WB-term}
\end{align}

Combining
\eqref{eq:first-greek-stability-time-term},
\eqref{eq:first-greek-stability-B-term},
\eqref{eq:first-greek-stability-BB-term}, and
\eqref{eq:first-greek-stability-WB-term} in
\eqref{eq:first-greek-stability-decomposition}, we obtain
\[
\left\|
\sup_{0\le k\le N}
\left|
\widetilde Y_k^{(i,h)}-Y_k^{(i,h)}
\right|
\right\|_{L^2}
\le
L_{\mathrm{int}}^{(i)}
(\delta_\Phi+\delta_X+\delta_J).
\]
Substituting the definitions of
$\delta_\Phi,\delta_X,\delta_J$ gives
\begin{align*}
\left\|
\sup_{0\le k\le N}
\left|
\widetilde Y_k^{(i,h)}-Y_k^{(i,h)}
\right|
\right\|_{L^2}
\le
L_{\mathrm{int}}^{(i)}
\bigg(
&
\left\|
\sup_{0\le r\le N}
\left|
\Phi(t,t_r)-\Phi_r^{(h)}
\right|
\right\|_{L^4}
+
\left\|
\sup_{0\le r\le N}
\left|
X_{t_r}^{t,x}-X_r^{(h)}
\right|
\right\|_{L^4}
+
\left\|
\sup_{0\le r\le N}
\left|
J_{t_r}^{t,x}-J_r^{(h)}
\right|
\right\|_{L^4}
\bigg).
\end{align*}
This is the desired accumulated stability estimate.
\end{proof}

\subsection{Second-order Greek Estimators}

\paragraph{Proof of Proposition~\ref{prop:error-second-order-greek-payoff}}
\label{para:proof of prop:error-second-order-greek-payoff}
\begin{proposition*}[Strong-error order for the second-order Greek payoff]
Fix $(t,x)\in[0,T]\times\mathbb R^d$ and $1\le i,j\le d$.
Define
\begin{align*}
P_{t_k}^{(ij)}
=\Phi(t,t_k)
\Big((J_{t_k}^j)^\top\nabla_x^2G(X_{t_k}^{t,x})J_{t_k}^i
+\nabla_xG(X_{t_k}^{t,x})^\top K_{t_k}^{(ij),t,x}\Big)
+Y_{t_k}^{(ij)}.
\end{align*}
Let $\{X_k^{(h)},\Phi_k^{(h)},J_k^{(h)},K_k^{(ij,h)}\}_{k=0}^N$
be generated by $S_X,S_\Phi,S_J,S_K$ as in
Definition~\ref{def:Strong error of S X},
Definition~\ref{def:Strong error of S Phi},
Definition~\ref{def:Strong error of S J},
and Definition~\ref{def:Strong error of S K}.
Define
\begin{align*}
\begin{aligned}
Y_{k+1}^{(ij,h)}
=&
Y_k^{(ij,h)}
+
\mathcal S_{\mathrm{int}}^{(ij)}
\left(
\Phi_k^{(h)},
X_k^{(h)},
J_k^{(h)},
K_k^{(ij,h)},
t_k,h;
\Delta W_{t_k},\Delta\overleftarrow B_{t_k}
\right),
\quad
Y_0^{(ij,h)}=0,\\
P_k^{(ij,h)}
=&
\Phi_k^{(h)}
\Big(
(J_k^{j,h})^\top\nabla_x^2G(X_k^{(h)})J_k^{i,(h)}
+
\nabla_xG(X_k^{(h)})^\top K_k^{(ij,h)}
\Big)
+
Y_k^{(ij,h)}.
\end{aligned}
\end{align*}
Assume:

\begin{enumerate}
\item
The strong-error orders of $S_X,S_\Phi,S_J,S_K$ and
$\mathcal S_{\mathrm{int}}^{(ij)}$ are
$p_X,p_\Phi,p_J,p_K,p_{\mathrm{int}}^{(ij)}$, respectively.

\item
$\mathcal S_{\mathrm{int}}^{(ij)}$ satisfies the accumulated stability estimate:
there exists $L_{\mathrm{int}}^{(ij)}>0$, independent of $h$, such that
\begin{align*}
\left\|
\sup_{0\le k\le N}
|\widetilde Y_k^{(ij,h)}-Y_k^{(ij,h)}|
\right\|_{L^2}
\le
L_{\mathrm{int}}^{(ij)}
\bigg(&
\left\|
\sup_{0\le r\le N}
|\Phi(t,t_r)-\Phi_r^{(h)}|
\right\|_{L^4}
+
\left\|
\sup_{0\le r\le N}
|X_{t_r}^{t,x}-X_r^{(h)}|
\right\|_{L^4}\\
&+
\left\|
\sup_{0\le r\le N}
|J_{t_r}^{t,x}-J_r^{(h)}|
\right\|_{L^4}
+
\left\|
\sup_{0\le r\le N}
|K_{t_r}^{(ij),t,x}-K_r^{(ij,h)}|
\right\|_{L^4}
\bigg).
\end{align*}

\item
There exists $M>0$, independent of $h$, such that all factors appearing in
the payoff decomposition, namely
\[
\Phi(t,t_k),\ \Phi_k^{(h)},\
J_{t_k}^{t,x},\ J_k^{(h)},\
K_{t_k}^{(ij),t,x},\ K_k^{(ij,h)}
\quad
\text{ and }\quad
\nabla G(X_{t_k}^{t,x}),\ \nabla G(X_k^{(h)}),\
\nabla_x^2G(X_{t_k}^{t,x}),\ \nabla_x^2G(X_k^{(h)})
\]
are uniformly bounded in
$L^{12}_{W,B}$
by $M$.

\item
$\nabla G$ and $\nabla_x^2G$ are globally Lipschitz with constants
$L_{\nabla G}$ and $L_{\nabla^2 G}$.
\end{enumerate}

Then
\begin{align*}
\left\|
\sup_{0\le k\le N}
\left|
P_{t_k}^{(ij)}-P_k^{(ij,h)}
\right|
\right\|_{L^2}
=
\mathcal{O}(h^p),
\quad
p=\min\{p_X,p_\Phi,p_J,p_K,p_{\mathrm{int}}^{(ij)}\}.
\end{align*}
\end{proposition*}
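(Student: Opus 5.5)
The plan follows the proofs of Proposition~\ref{prop:error-u} and Proposition~\ref{prop:error-first-order-greek-payoff} verbatim in structure. The difference is that the terminal part is now a sum of two multilinear expressions, so I split it into more telescoping pieces. For each $0\le k\le N$ I decompose
\[
P_{t_k}^{(ij)}-P_k^{(ij,h)}
=\bigl(T_k-T_k^{(h)}\bigr)+\bigl(Y_{t_k}^{(ij)}-\widetilde Y_k^{(ij,h)}\bigr)+\bigl(\widetilde Y_k^{(ij,h)}-Y_k^{(ij,h)}\bigr),
\]
where $T_k$ and $T_k^{(h)}$ denote the exact and discrete terminal-type terms. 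The last two brackets are handled immediately. By assumption~1 (exact-input strong error of $\mathcal S_{\mathrm{int}}^{(ij)}$, Definition~\ref{def:Strong error of S int second Greek}), the first of them is bounded by $C_{\mathrm{int}}^{(ij)}h^{p_{\mathrm{int}}^{(ij)}}$. By assumption~2, the second is bounded by $L_{\mathrm{int}}^{(ij)}(C_\Phi h^{p_\Phi}+C_Xh^{p_X}+C_Jh^{p_J}+C_Kh^{p_K})$. Throughout I use that the strong-error definitions hold for every $q\ge2$, in particular $q=4$.

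For the terminal part, I telescope the quadrilinear Hessian product $\Phi\,(J^j)^\top\nabla^2G(X)J^i$ by changing one factor at a time, going from exact to discrete. This yields four differences:
\[
\Delta\Phi_k\,(J^j_{t_k})^\top\nabla^2G(X_{t_k})J^i_{t_k},\qquad
\Phi_k^{(h)}(\Delta J^j_k)^\top\nabla^2G(X_{t_k})J^i_{t_k},
\]
\[
\Phi_k^{(h)}(J_k^{j,h})^\top\bigl(\nabla^2G(X_{t_k})-\nabla^2G(X_k^{(h)})\bigr)J^i_{t_k},\qquad
\Phi_k^{(h)}(J_k^{j,h})^\top\nabla^2G(X_k^{(h)})\Delta J^i_k.
\]
In the same way I telescope the trilinear term $\Phi\,\nabla G(X)^\top K$ into three differences involving $\Delta\Phi_k$, $\nabla G(X_{t_k})-\nabla G(X_k^{(h)})$, and $\Delta K_k$.

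Each piece is bounded after taking $\sup_k$ inside and the $L^2$ norm, using the generalized H\"older inequality. The difference factor goes in $L^4$. For the Hessian terms the three remaining bounded factors go in $L^{12}$, since $\tfrac14+3\cdot\tfrac1{12}=\tfrac12$; for the trilinear terms the two remaining factors can go in $L^{8}\subset L^{12}$ norms. Assumption~3 then turns the products of bounded factors into powers of $M$. The Lipschitz assumption~4 converts the $\nabla G$ and $\nabla^2G$ differences into $L_{\nabla G}\|\sup|\Delta X|\|_{L^4}$ and $L_{\nabla^2G}\|\sup|\Delta X|\|_{L^4}$.

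The one point needing care is this. Assumption~3 bounds quantities like $\Phi_k^{(h)}$ only uniformly in $k$, whereas after moving $\sup_k$ inside the H\"older product I need $\|\sup_k|\cdot|\|_{L^{12}}$. Exactly as in the proofs of Proposition~\ref{prop:error-u} and Proposition~\ref{prop:error-first-order-greek-payoff}, I read assumption~3 as a bound on the supremum over $k$. Alternatively, one could lose a little integrability by bounding $\sup_k\le(\sum_k|\cdot|^{12})^{1/12}$, but that would cost powers of $N$, so I keep the former reading. This bookkeeping of exponents and sup-norms is the main (though mild) obstacle. Summing all pieces gives a bound $C(h^{p_\Phi}+h^{p_X}+h^{p_J}+h^{p_K}+h^{p_{\mathrm{int}}^{(ij)}})\le Ch^p$ for $h\le1$, with $p=\min\{p_X,p_\Phi,p_J,p_K,p_{\mathrm{int}}^{(ij)}\}$.
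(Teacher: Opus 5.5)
Your proposal is correct and follows the paper's proof essentially line by line. Both use the same one-factor-at-a-time telescoping of the Hessian and gradient terms (the paper changes $X$ before $J^j$ and merges your two $\Delta\Phi$ pieces into one), the same H\"older exponents $4,12,12,12$ and $4,8,8$, and the same exact-input strong-error and accumulated-stability bounds for the $Y$ terms. Your reading of assumption~3 as a bound on $\|\sup_k|\cdot|\|_{L^{12}}$ is exactly what the paper's proof uses implicitly. The only slip is notational: on a probability space one has $L^{12}\subset L^{8}$, i.e.\ $\|\cdot\|_{L^8}\le\|\cdot\|_{L^{12}}$, not the inclusion you wrote.
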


\begin{proof}

For each $0\le k\le N$, we have
\begin{align*}
P_{t_k}^{(ij)}-P_k^{(ij,h)}
=&\Bigg[\Phi(t,t_k)
\Big((J_{t_k}^j)^\top\nabla_x^2G(X_{t_k}^{t,x})J_{t_k}^i
+\nabla_xG(X_{t_k}^{t,x})^\top K_{t_k}^{(ij),t,x}\Big)
+Y_{t_k}^{(ij)}\Bigg] \\
&-\Bigg[\Phi_k^{(h)}
\Big((J_k^{j,h})^\top\nabla_x^2G(X_k^{(h)})J_k^{i,(h)}
+\nabla_xG(X_k^{(h)})^\top K_k^{(ij,h)}\Big)
+Y_k^{(ij,h)}\Bigg]\\
=&\left(\Phi(t,t_k)-\Phi_k^{(h)}\right)
\Big((J_{t_k}^j)^\top\nabla_x^2G(X_{t_k}^{t,x})J_{t_k}^i
+\nabla_xG(X_{t_k}^{t,x})^\top K_{t_k}^{(ij),t,x}\Big) \\
&+\Phi_k^{(h)}\Big((J_{t_k}^j)^\top
\big(\nabla_x^2G(X_{t_k}^{t,x})-\nabla_x^2G(X_k^{(h)})\big)
J_{t_k}^i\Big) 
+\Phi_k^{(h)}\Big((J_{t_k}^j-J_k^{j,h})^\top
\nabla_x^2G(X_k^{(h)})J_{t_k}^i\Big) \\
&+\Phi_k^{(h)}\Big((J_k^{j,h})^\top\nabla_x^2G(X_k^{(h)})
(J_{t_k}^i-J_k^{i,(h)})\Big) 
+\Phi_k^{(h)}\Big(\big(\nabla_xG(X_{t_k}^{t,x})-\nabla_xG(X_k^{(h)})
\big)^\top
K_{t_k}^{(ij),t,x}\Big) \\
&+\Phi_k^{(h)}\nabla_xG(X_k^{(h)})^\top
\left(K_{t_k}^{(ij),t,x}-K_k^{(ij,h)}\right) 
+\left(Y_{t_k}^{(ij)}-\widetilde Y_k^{(ij,h)}\right)
+\left(\widetilde Y_k^{(ij,h)}-Y_k^{(ij,h)}\right).
\end{align*}

Taking the supremum over $0\le k\le N$ and then the
$L^2$ norm, the triangle inequality gives
\begin{align*}
&
\left\|
\sup_{0\le k\le N}
\left|
P_{t_k}^{(ij)}-P_k^{(ij,h)}
\right|
\right\|_{L^2}
\le
T_\Phi+T_{\nabla^2G,X}+T_{J,j}+T_{J,i}
+T_{\nabla G,X}+T_K
+T_{\mathrm{int}}+T_{\mathrm{stab}},
\end{align*}
where the terms correspond to the eight summands in the decomposition above.

We estimate them one by one. By H\"older's inequality and the uniform
$L^{12}$ moment bounds,
\begin{align*}
T_\Phi
&:=
\left\|
\sup_{0\le k\le N}
\left|
\left(\Phi(t,t_k)-\Phi_k^{(h)}\right)
\Big(
(J_{t_k}^j)^\top\nabla_x^2G(X_{t_k}^{t,x})J_{t_k}^i
+
\nabla_xG(X_{t_k}^{t,x})^\top K_{t_k}^{(ij),t,x}
\Big)
\right|
\right\|_{L^2}
\le
C M^3
\left\|
\sup_{0\le k\le N}
\left|
\Phi(t,t_k)-\Phi_k^{(h)}
\right|
\right\|_{L^4}.
\end{align*}
Indeed, the quadratic $J^\top \nabla^2G J$ term is estimated with exponents
$4,12,12,12$, while the $\nabla G^\top K$ term is estimated with exponents
$4,8,8$.

For the Hessian-difference term, using the global Lipschitz continuity of
$\nabla_x^2G$,
\begin{align*}
T_{\nabla^2G,X}
&:=
\left\|
\sup_{0\le k\le N}
\left|
\Phi_k^{(h)}
(J_{t_k}^j)^\top
\big(
\nabla_x^2G(X_{t_k}^{t,x})
-
\nabla_x^2G(X_k^{(h)})
\big)
J_{t_k}^i
\right|
\right\|_{L^2}
\le
M^3L_{\nabla^2G}
\left\|
\sup_{0\le k\le N}
\left|
X_{t_k}^{t,x}-X_k^{(h)}
\right|
\right\|_{L^4}.
\end{align*}

For the two Jacobian-difference terms, H\"older's inequality gives
\begin{align*}
T_{J,j}
&:=
\left\|
\sup_{0\le k\le N}
\left|
\Phi_k^{(h)}
(J_{t_k}^j-J_k^{j,h})^\top
\nabla_x^2G(X_k^{(h)})J_{t_k}^i
\right|
\right\|_{L^2}
\le
M^3
\left\|
\sup_{0\le k\le N}
\left|
J_{t_k}^{t,x}-J_k^{(h)}
\right|
\right\|_{L^4},
\end{align*}
and similarly,
\begin{align*}
T_{J,i}
&:=
\left\|
\sup_{0\le k\le N}
\left|
\Phi_k^{(h)}
(J_k^{j,h})^\top
\nabla_x^2G(X_k^{(h)})
(J_{t_k}^i-J_k^{i,(h)})
\right|
\right\|_{L^2}
\le
M^3
\left\|
\sup_{0\le k\le N}
\left|
J_{t_k}^{t,x}-J_k^{(h)}
\right|
\right\|_{L^4}.
\end{align*}

For the gradient-difference term, using the global Lipschitz continuity of
$\nabla G$,
\begin{align*}
T_{\nabla G,X}
&:=
\left\|
\sup_{0\le k\le N}
\left|
\Phi_k^{(h)}
\big(
\nabla_xG(X_{t_k}^{t,x})
-
\nabla_xG(X_k^{(h)})
\big)^\top
K_{t_k}^{(ij),t,x}
\right|
\right\|_{L^2}
\le
M^2L_{\nabla G}
\left\|
\sup_{0\le k\le N}
\left|
X_{t_k}^{t,x}-X_k^{(h)}
\right|
\right\|_{L^4}.
\end{align*}

For the second-variational-process term,
\begin{align*}
T_K
&:=
\left\|
\sup_{0\le k\le N}
\left|
\Phi_k^{(h)}
\nabla_xG(X_k^{(h)})^\top
\left(
K_{t_k}^{(ij),t,x}-K_k^{(ij,h)}
\right)
\right|
\right\|_{L^2}
\le
M^2
\left\|
\sup_{0\le k\le N}
\left|
K_{t_k}^{(ij),t,x}-K_k^{(ij,h)}
\right|
\right\|_{L^4}.
\end{align*}

By the exact-input integral strong-error estimate for
$\mathcal S_{\mathrm{int}}^{(ij)}$,
\begin{align*}
T_{\mathrm{int}}
&:=
\left\|
\sup_{0\le k\le N}
\left|
Y_{t_k}^{(ij)}-\widetilde Y_k^{(ij,h)}
\right|
\right\|_{L^2}
\le
C_{\mathrm{int}}^{(ij)}h^{p_{\mathrm{int}}^{(ij)}}.
\end{align*}
By the accumulated stability assumption,
\begin{align*}
&T_{\mathrm{stab}}
:=
\left\|
\sup_{0\le k\le N}
\left|
\widetilde Y_k^{(ij,h)}-Y_k^{(ij,h)}
\right|
\right\|_{L^2}
\\
&\le
L_{\mathrm{int}}^{(ij)}
\bigg(
\left\|
\sup_{0\le r\le N}
\left|
\Phi(t,t_r)-\Phi_r^{(h)}
\right|
\right\|_{L^4}
+
\left\|
\sup_{0\le r\le N}
\left|
X_{t_r}^{t,x}-X_r^{(h)}
\right|
\right\|_{L^4}
+
\left\|
\sup_{0\le r\le N}
\left|
J_{t_r}^{t,x}-J_r^{(h)}
\right|
\right\|_{L^4}
+
\left\|
\sup_{0\le r\le N}
\left|
K_{t_r}^{(ij),t,x}-K_r^{(ij,h)}
\right|
\right\|_{L^4}
\bigg).
\end{align*}

Using the strong-error order of
$S_X,S_\Phi,S_J,S_K$, 
we obtain
\begin{align*}
&
\left\|
\sup_{0\le k\le N}
\left|
P_{t_k}^{(ij)}-P_k^{(ij,h)}
\right|
\right\|_{L^2}
\\
&\le
C M^3 C_\Phi h^{p_\Phi}
+
C M^2(ML_{\nabla^2G}+L_{\nabla G})C_Xh^{p_X}
+
C M^3 C_Jh^{p_J}
+
C M^2 C_Kh^{p_K}
+
C_{\mathrm{int}}^{(ij)}h^{p_{\mathrm{int}}^{(ij)}}
\\
&\quad+
L_{\mathrm{int}}^{(ij)}
\left(
C_\Phi h^{p_\Phi}
+
C_Xh^{p_X}
+
C_Jh^{p_J}
+
C_Kh^{p_K}
\right).
\end{align*}
Therefore, for all sufficiently small $h$,
\begin{align*}
\left\|
\sup_{0\le k\le N}
\left|
P_{t_k}^{(ij)}-P_k^{(ij,h)}
\right|
\right\|_{L^2}
\le
C h^p,
\quad
p=\min\{p_X,p_\Phi,p_J,p_K,p_{\mathrm{int}}^{(ij)}\}.
\end{align*}
\end{proof}

\paragraph{Construction of
$\mathcal S_{\mathrm{int}}^{(ij),\mathrm{FBT}}$}
\label{para:construction-second-greek}

For the discretization operator 
\begin{align}
&\mathcal S_{\mathrm{int}}^{(ij),\mathrm{FBT}}
\left(\Phi,X,J,K,s,h;
\Delta W_s,\Delta\overleftarrow B_s\right)\notag\\
=&\Phi h
\Big((J^j)^\top\nabla_x^2(F+dH)(s,X)J^i
+\nabla_x(F+dH)(s,X)^\top K\Big)
+\Phi\Big((J^j)^\top\nabla_x^2H(s,X)J^i
+\nabla_xH(s,X)^\top K\Big)
\cdot\Delta\overleftarrow B_s\notag\\
&+\Phi\sum_{\nu=1}^{\ell}\sum_{a=1}^{d}
\bigg[\nabla_x^3H_\nu(s,X)
[\sigma_{\cdot a}(X),J^i,J^j]
+\big((\nabla_x\sigma_{\cdot a}(X)J)^j\big)^\top
\nabla_x^2H_\nu(s,X)J^i
+(J^j)^\top\nabla_x^2H_\nu(s,X)
(\nabla_x\sigma_{\cdot a}(X)J)^i\notag\\
&\hspace{3cm}
+\sigma_{\cdot a}(X)^\top\nabla_x^2H_\nu(s,X)K
+\nabla_xH_\nu(s,X)^\top
\big(\nabla_x\sigma_{\cdot a}(X)K
+(J^i)^\top\nabla_x^2\sigma_{\cdot a}(X)J^j
\big)\notag\\
&\hspace{3cm}
+\widetilde c_a(s)
\Big((J^j)^\top\nabla_x^2H_\nu(s,X)J^i
+\nabla_xH_\nu(s,X)^\top K\Big)\bigg]
J_{\nu a}^{WB}(s,h)\notag\\
&+\Phi\sum_{\nu=1}^{\ell}\sum_{\alpha=1}^{\ell}
d_\alpha(s)\Big((J^j)^\top
\nabla_x^2H_\nu(s,X)J^i
+\nabla_xH_\nu(s,X)^\top K\Big)
J_{\nu\alpha}^{BB}(s,h),
\label{eq:Sint-second-greek-mil-full}
\end{align}
where
$J^i=Je_i$, $J^j:=Je_j$,
and
$(\nabla_x\sigma_{\cdot a}(X)J)^i
=\nabla_x\sigma_{\cdot a}(X)J^i,$
the construction is natural, as it is obtained from
first-order Taylor expansions of the integrands. 
For $r\in[s,s+h]$, we will repeatedly use the following first-order expansions:
\begin{align}
X_r-X_s
=&\sum_{a=1}^d\sigma_{\cdot a}(X_s)(W_r^a-W_s^a)
+R^X_{s,r},
\label{eq:X-expansion-second-greek}\\
J_r-J_s
=&\sum_{a=1}^d
\nabla_x\sigma_{\cdot a}(X_s)J_s
(W_r^a-W_s^a)
+R^J_{s,r},
\label{eq:J-expansion-second-greek}\\
K_r^{(ij)}-K_s^{(ij)}
=&\sum_{a=1}^d
\Big(\nabla_x\sigma_{\cdot a}(X_s)K_s^{(ij)}
+\left(J_s^i\right)^\top\nabla_x^2\sigma_{\cdot a}(X_s)J_s^j
\Big)(W_r^a-W_s^a)
+R^K_{s,r},
\label{eq:K-expansion-second-greek}\\
\Phi(t,r)
=&\Phi(t,s)
+\Phi(t,s)\widetilde c(s)\cdot(W_r-W_s)
+\Phi(t,s)d(s)\cdot(B_{s}-B_r)
+R^\Phi_{s,r}.
\label{eq:Phi-expansion-second-greek}
\end{align}
Moreover, under the standard smoothness and moment assumptions, for every $q\ge2$,
\begin{align*}
\|R^X_{s,r}\|_{L^q}
+\|R^J_{s,r}\|_{L^q}
+\|R^K_{s,r}\|_{L^q}
+\|R^\Phi_{s,r}\|_{L^q}
\le C(r-s).
\end{align*}

For each $1\le \nu\le \ell$, we expand the backward integrand
\begin{align*}
(J_r^j)^\top\nabla_x^2H_\nu(r,X_r)J_r^i
+\nabla_xH_\nu(r,X_r)^\top K_r^{(ij)}
\end{align*}
around $(s,X_s,J_s,K_s^{(ij)})$.

First, by Taylor expansion,
\begin{align*}
\nabla_x^2H_\nu(r,X_r)
=\nabla_x^2H_\nu(s,X_s)
+\nabla_x^3H_\nu(s,X_s)[X_r-X_s]
+R^{\nabla^2H_\nu}_{s,r},
\end{align*}
where $\nabla_x^3H_\nu(s,X_s)[X_r-X_s]$ denotes the derivative of the
Hessian matrix in the direction $X_r-X_s$. 
Since
\begin{align*}
  \|X_r-X_s\|_{L^{2q}}\le C(r-s)^{1/2}
\end{align*}
and the fourth spatial derivatives of $H_\nu$ are bounded,
the remainder satisfies
\begin{align*}
\left\|R^{\nabla^2H_\nu}_{s,r}\right\|_{L^q}
\le C(r-s).
\end{align*}
Using the first-order expansions of $X_r$ and $J_r$, we have
\begin{align*}
(J_r^j)^\top\nabla_x^2H_\nu(r,X_r)J_r^i
=&(J_s^j)^\top\nabla_x^2H_\nu(s,X_s)J_s^i
+(J_s^j)^\top
\big(\nabla_x^3H_\nu(s,X_s)[X_r-X_s]\big)
J_s^i\notag\\
&+(J_r^j-J_s^j)^\top
\nabla_x^2H_\nu(s,X_s)J_s^i
+(J_s^j)^\top
\nabla_x^2H_\nu(s,X_s)(J_r^i-J_s^i)
+R^{(1),H,(ij)}_{\nu,s,r}.
\end{align*}
Here the terms containing products such as
$(J_r^j-J_s^j)^\top\nabla_x^2H_\nu(s,X_s)(J_r^i-J_s^i),$
as well as the products involving
$R^{\nabla^2H_\nu}_{s,r}$, are absorbed into
$R^{(1),H,(ij)}_{\nu,s,r}$.
Substituting the expression for
$X_r-X_s$ given in~\eqref{eq:X-expansion-second-greek},
we get
\begin{align*}
&(J_s^j)^\top
\big(\nabla_x^3H_\nu(s,X_s)[X_r-X_s]\big)J_s^i
=\sum_{a=1}^{d}
\nabla_x^3H_\nu(s,X_s)
[\sigma_{\cdot a}(X_s),J_s^i,J_s^j]
(W_r^a-W_s^a)
+\widetilde R^{(1),H,(ij)}_{\nu,s,r}.
\end{align*}
Similarly, using~\eqref{eq:J-expansion-second-greek},
we obtain
\begin{align}
&(J_r^j)^\top\nabla_x^2H_\nu(r,X_r)J_r^i\notag\\
=&(J_s^j)^\top\nabla_x^2H_\nu(s,X_s)J_s^i
+\sum_{a=1}^{d}
\bigg[\nabla_x^3H_\nu(s,X_s)[\sigma_{\cdot a}(X_s),J_s^i,J_s^j]
+\big((\nabla_x\sigma_{\cdot a}(X_s)J_s)^j\big)^\top
\nabla_x^2H_\nu(s,X_s)J_s^i\notag\\
&\hspace{5cm}
+(J_s^j)^\top\nabla_x^2H_\nu(s,X_s)
(\nabla_x\sigma_{\cdot a}(X_s)J_s)^i
\bigg](W_r^a-W_s^a)
+R^{(2),H,(ij)}_{\nu,s,r}.
\label{eq:Hessian-part-expanded-second-greek}
\end{align}
The remainder satisfies
\begin{align*}
\|R^{(2),H,(ij)}_{\nu,s,r}\|_{L^q}\le C(r-s).
\end{align*}

Next, consider the term
$\nabla_xH_\nu(r,X_r)^\top K_r^{(ij)}.$
A first-order Taylor expansion gives
\begin{align}
\nabla_xH_\nu(r,X_r)^\top K_r^{(ij)}
=&\nabla_xH_\nu(s,X_s)^\top K_s^{(ij)}
+(X_r-X_s)^\top\nabla_x^2H_\nu(s,X_s)K_s^{(ij)}
+\nabla_xH_\nu(s,X_s)^\top(K_r^{(ij)}-K_s^{(ij)})
+R^{(3),H,(ij)}_{\nu,s,r}.
\label{eq:gradient-K-term-Taylor-second-greek}
\end{align}
Using~\eqref{eq:X-expansion-second-greek}, 
we get
\begin{align}
(X_r-X_s)^\top\nabla_x^2H_\nu(s,X_s)K_s^{(ij)}
=\sum_{a=1}^d
\sigma_{\cdot a}(X_s)^\top
\nabla_x^2H_\nu(s,X_s)K_s^{(ij)}
(W_r^a-W_s^a)
+\widetilde R^{(3),H,(ij)}_{\nu,s,r}.
\label{eq:X-K-term-second-greek}
\end{align}
By the expansion of $K_r^{(ij)}-K_s^{(ij)}$ given in~\eqref{eq:K-expansion-second-greek},
\begin{align}
\nabla_xH_\nu(s,X_s)^\top(K_r^{(ij)}-K_s^{(ij)})
=&\sum_{a=1}^d
\nabla_xH_\nu(s,X_s)^\top
\Big(\nabla_x\sigma_{\cdot a}(X_s)K_s^{(ij)}
+(J_s^i)^\top\nabla_x^2\sigma_{\cdot a}(X_s)J_s^j\Big)
(W_r^a-W_s^a)
+\widetilde R^{(4),H,(ij)}_{\nu,s,r}.
\label{eq:K-H-term-second-greek}
\end{align}
Combining
\eqref{eq:gradient-K-term-Taylor-second-greek},
\eqref{eq:X-K-term-second-greek}, and
\eqref{eq:K-H-term-second-greek}, we obtain
\begin{align}
&\nabla_xH_\nu(r,X_r)^\top K_r^{(ij)}
=\nabla_xH_\nu(s,X_s)^\top K_s^{(ij)}\notag\\
&+\sum_{a=1}^d
\bigg[\sigma_{\cdot a}(X_s)^\top
\nabla_x^2H_\nu(s,X_s)K_s^{(ij)}
+\nabla_xH_\nu(s,X_s)^\top
\Big(\nabla_x\sigma_{\cdot a}(X_s)K_s^{(ij)}
+(J_s^i)^\top\nabla_x^2\sigma_{\cdot a}(X_s)J_s^j\Big)\bigg]
(W_r^a-W_s^a)
+R^{(4),H,(ij)}_{\nu,s,r}.
\label{eq:gradient-K-expanded-second-greek}
\end{align}

Adding \eqref{eq:Hessian-part-expanded-second-greek} and
\eqref{eq:gradient-K-expanded-second-greek}, we finally get
\begin{align}
&(J_r^j)^\top\nabla_x^2H_\nu(r,X_r)J_r^i
+\nabla_xH_\nu(r,X_r)^\top K_r^{(ij)}\notag\\
=&(J_s^j)^\top\nabla_x^2H_\nu(s,X_s)J_s^i
+\nabla_xH_\nu(s,X_s)^\top K_s^{(ij)}\notag\\
&+\sum_{a=1}^{d}\bigg[\nabla_x^3H_\nu(s,X_s)
[\sigma_{\cdot a}(X_s),J_s^i,J_s^j]
+\big((\nabla_x\sigma_{\cdot a}(X_s)J_s)^j\big)^\top
\nabla_x^2H_\nu(s,X_s)J_s^i
+(J_s^j)^\top
\nabla_x^2H_\nu(s,X_s)
(\nabla_x\sigma_{\cdot a}(X_s)J_s)^i\notag\\
&\hspace{1.2cm}
+\sigma_{\cdot a}(X_s)^\top\nabla_x^2H_\nu(s,X_s)K_s^{(ij)}
+\nabla_xH_\nu(s,X_s)^\top
\Big(\nabla_x\sigma_{\cdot a}(X_s)K_s^{(ij)}
+(J_s^i)^\top\nabla_x^2\sigma_{\cdot a}(X_s)J_s^j
\Big)\bigg]
(W_r^a-W_s^a)+R^{H,(ij)}_{\nu,s,r}.
\label{eq:H-second-integrand-expansion}
\end{align}

It remains to estimate the remainder. The terms collected in
$R^{H,(ij)}_{\nu,s,r}$ are of the following types:
\begin{align*}
&|r-s|,
\quad|X_r-X_s|^2,
\quad|J_r-J_s|^2,
\quad|K_r^{(ij)}-K_s^{(ij)}|^2,\\
&|X_r-X_s|\,|J_r-J_s|,
\quad|X_r-X_s|\,|K_r^{(ij)}-K_s^{(ij)}|,
\quad|J_r-J_s|\,|K_r^{(ij)}-K_s^{(ij)}|,\\
&|R^X_{s,r}|,
\quad|R^J_{s,r}|,
\quad|R^K_{s,r}|.
\end{align*}
Using
\begin{align*}
\|X_r-X_s\|_{L^{2q}}
+\|J_r-J_s\|_{L^{2q}}
+\|K_r^{(ij)}-K_s^{(ij)}\|_{L^{2q}}
\le C(r-s)^{1/2},
\end{align*}
and
\begin{align*}
\|R^X_{s,r}\|_{L^q}
+\|R^J_{s,r}\|_{L^q}
+\|R^K_{s,r}\|_{L^q}
\le C(r-s),
\end{align*}
we obtain
\begin{align*}
\|R^{H,(ij)}_{\nu,s,r}\|_{L^q}
\le C(r-s).
\end{align*}

Combining \eqref{eq:Phi-expansion-second-greek} and
\eqref{eq:H-second-integrand-expansion}, we obtain
\begin{align}
&\Phi(t,r)
\Big((J_r^j)^\top\nabla_x^2H_\nu(r,X_r)J_r^i
+\nabla_xH_\nu(r,X_r)^\top K_r^{(ij)}\Big)\notag\\
=&\Phi(t,s)
\Big((J_s^j)^\top\nabla_x^2H_\nu(s,X_s)J_s^i
+\nabla_xH_\nu(s,X_s)^\top K_s^{(ij)}\Big)\notag\\
&+\Phi(t,s)
\sum_{a=1}^{d}
\bigg[\nabla_x^3H_\nu(s,X_s)
[\sigma_{\cdot a}(X_s),J_s^i,J_s^j]
+\big((\nabla_x\sigma_{\cdot a}(X_s)J_s)^j\big)^\top
\nabla_x^2H_\nu(s,X_s)J_s^i
+(J_s^j)^\top\nabla_x^2H_\nu(s,X_s)
(\nabla_x\sigma_{\cdot a}(X_s)J_s)^i\notag\\
&\hspace{3cm}
+\sigma_{\cdot a}(X_s)^\top
\nabla_x^2H_\nu(s,X_s)K_s^{(ij)}
+\nabla_xH_\nu(s,X_s)^\top
\Big(\nabla_x\sigma_{\cdot a}(X_s)K_s^{(ij)}
+\nabla_x^2\sigma_{\cdot a}(X_s)[J_s^i,J_s^j]\Big)\notag\\
&\hspace{3cm}
+\widetilde c_a(s)
\Big((J_s^j)^\top\nabla_x^2H_\nu(s,X_s)J_s^i
+\nabla_xH_\nu(s,X_s)^\top K_s^{(ij)}\Big)\bigg](W_r^a-W_s^a)\notag\\
&+\Phi(t,s)\sum_{\alpha=1}^{\ell}d_\alpha(s)
\Big((J_s^j)^\top\nabla_x^2H_\nu(s,X_s)J_s^i
+\nabla_xH_\nu(s,X_s)^\top K_s^{(ij)}\Big)
(B_{s}^{\alpha}-B_r^\alpha)
+R^{\Phi H,(ij)}_{\nu,s,r}.
\label{eq:PhiH-second-expansion}
\end{align}
The product remainder satisfies
\begin{align*}
\|R^{\Phi H,(ij)}_{\nu,s,r}\|_{L^q}
\le C(r-s).
\end{align*}
Consequently,
\begin{align*}
\sum_{\nu=1}^{\ell}
\int_s^{s+h}
\|R^{\Phi H,(ij)}_{\nu,s,r}\|_{L^q}^2
\,\mathrm dr
\le Ch^3.
\end{align*}

Substituting \eqref{eq:PhiH-second-expansion} into the backward
stochastic integral gives
\begin{align}
&\int_s^{s+h}\Phi(t,r)
\Big((J_r^j)^\top\nabla_x^2H(r,X_r)J_r^i
+\nabla_xH(r,X_r)^\top K_r^{(ij)}\Big)
\mathrm d \overleftarrow{B}_r\notag\\
=&\Phi(t,s)
\Big((J_s^j)^\top\nabla_x^2H(s,X_s)J_s^i
+\nabla_xH(s,X_s)^\top K_s^{(ij)}\Big)
\cdot\Delta\overleftarrow B_s\notag\\
&+\Phi(t,s)\sum_{\nu=1}^{\ell}\sum_{a=1}^{d}
\bigg[\nabla_x^3H_\nu(s,X_s)
[\sigma_{\cdot a}(X_s),J_s^i,J_s^j]
+\big((\nabla_x\sigma_{\cdot a}(X_s)J_s)^j\big)^\top
\nabla_x^2H_\nu(s,X_s)J_s^i\notag\\
&\hspace{3cm}
+(J_s^j)^\top
\nabla_x^2H_\nu(s,X_s)
(\nabla_x\sigma_{\cdot a}(X_s)J_s)^i
+\sigma_{\cdot a}(X_s)^\top
\nabla_x^2H_\nu(s,X_s)K_s^{(ij)}\notag\\
&\hspace{3cm}
+\nabla_xH_\nu(s,X_s)^\top
\Big(\nabla_x\sigma_{\cdot a}(X_s)K_s^{(ij)}
+(J_s^i)^\top\nabla_x^2\sigma_{\cdot a}(X_s)J_s^j\Big)\notag\\
&\hspace{3cm}
+\widetilde c_a(s)
\Big((J_s^j)^\top\nabla_x^2H_\nu(s,X_s)J_s^i
+\nabla_xH_\nu(s,X_s)^\top K_s^{(ij)}\Big)\bigg]
J_{\nu a}^{WB}(s,h)\notag\\
&+\Phi(t,s)
\sum_{\nu=1}^{\ell}\sum_{\alpha=1}^{\ell}
d_\alpha(s)
\Big((J_s^j)^\top\nabla_x^2H_\nu(s,X_s)J_s^i
+\nabla_xH_\nu(s,X_s)^\top K_s^{(ij)}\Big)
J_{\nu\alpha}^{BB}(s,h)
+R^{B,(ij)}_{s,h},
\label{eq:backward-second-expansion}
\end{align}
where
\begin{align*}
R^{B,(ij)}_{s,h}
=\sum_{\nu=1}^{\ell}\int_s^{s+h}
R^{\Phi H,(ij)}_{\nu,s,r}
\mathrm d \overleftarrow{B}_r^\nu.
\end{align*}

For the time integral part, set
\begin{align}
\Lambda^{(ij)}(r)
:=(J_r^j)^\top\nabla_x^2(F+dH)(r,X_r)J_r^i
+\nabla_x(F+dH)(r,X_r)^\top K_r^{(ij)} .
\end{align}
By the same Taylor expansion as for the backward integrand, we have
\begin{align}
\Lambda^{(ij)}(r)
=\Lambda^{(ij)}(s)
+\sum_{a=1}^d \lambda_a^{(ij)}(s)(W_r^a-W_s^a)
+R^{\Lambda,(ij)}_{s,r},
\label{eq:Lambda-expansion-second-greek}
\end{align}
where
\begin{align}
\lambda_a^{(ij)}(s)
:=&\nabla_x^3(F+dH)(s,X_s)
[\sigma_{\cdot a}(X_s),J_s^i,J_s^j]
+\big((\nabla_x\sigma_{\cdot a}(X_s)J_s)^j\big)^\top
\nabla_x^2(F+dH)(s,X_s)J_s^i\notag\\
&+(J_s^j)^\top\nabla_x^2(F+dH)(s,X_s)
(\nabla_x\sigma_{\cdot a}(X_s)J_s)^i
+\sigma_{\cdot a}(X_s)^\top
\nabla_x^2(F+dH)(s,X_s)K_s^{(ij)}\notag\\
&+\nabla_x(F+dH)(s,X_s)^\top
\Big(\nabla_x\sigma_{\cdot a}(X_s)K_s^{(ij)}
+(J_s^i)^\top\nabla_x^2\sigma_{\cdot a}(X_s)J_s^j\Big).
\end{align}
The remainder satisfies
\begin{align}
\|R^{\Lambda,(ij)}_{s,r}\|_{L^q}
\le C(r-s).
\end{align}
Combining \eqref{eq:Lambda-expansion-second-greek} with
\eqref{eq:Phi-expansion-second-greek},
we obtain
\begin{align}
\Phi(t,r)\Lambda^{(ij)}(r)
=&\Phi(t,s)\Lambda^{(ij)}(s)
+\Phi(t,s)\sum_{a=1}^d
\Big(\lambda_a^{(ij)}(s)
+\widetilde c_a(s)\Lambda^{(ij)}(s)\Big)
(W_r^a-W_s^a)\notag\\
&+\Phi(t,s)\sum_{\alpha=1}^{\ell}
d_\alpha(s)\Lambda^{(ij)}(s)(B_{s}^\alpha-B_r^\alpha)
+R^{D,(ij)}_{s,r}.
\label{eq:PhiLambda-expansion-second-greek}
\end{align}
The product remainder satisfies
\begin{align}
\|R^{D,(ij)}_{s,r}\|_{L^q}
\le C(r-s)+C(r-s)^{1/2}(s+h-r)^{1/2}.
\end{align}
In particular,
\begin{align}
\int_s^{s+h}
\|R^{D,(ij)}_{s,r}\|_{L^q}\,\mathrm dr
\le Ch^2.
\label{eq:PhiLambda-integrated-remainder-second-greek}
\end{align}
Integrating \eqref{eq:PhiLambda-expansion-second-greek} over $[s,s+h]$ gives
\begin{align}
&\int_s^{s+h}\Phi(t,r)
\Big((J_r^j)^\top\nabla_x^2(F+dH)(r,X_r)J_r^i
+\nabla_x(F+dH)(r,X_r)^\top K_r^{(ij)}\Big)
\,\mathrm dr\notag\\
=&\Phi(t,s)h
\Big((J_s^j)^\top\nabla_x^2(F+dH)(s,X_s)J_s^i
+\nabla_x(F+dH)(s,X_s)^\top K_s^{(ij)}\Big)
+M^{D,(ij)}_{s,h}+A^{D,(ij)}_{s,h},
\end{align}
where
\begin{align}
A^{D,(ij)}_{s,h}
:=\int_s^{s+h}
R^{D,(ij)}_{s,r}\,\mathrm dr,
\end{align}
and
\begin{align}
M^{D,(ij)}_{s,h}:=&\Phi(t,s)
\sum_{a=1}^d\Big(\lambda_a^{(ij)}(s)
+\widetilde c_a(s)\Lambda^{(ij)}(s)\Big)
\int_s^{s+h}(W_r^a-W_s^a)\,\mathrm dr
+\Phi(t,s)\sum_{\alpha=1}^{\ell}
d_\alpha(s)\Lambda^{(ij)}(s)
\int_s^{s+h}(B_{s}^\alpha-B_r^\alpha)\,\mathrm dr.
\end{align}
Here $A^{D,(ij)}_{s,h}$ is the finite-variation remainder, while
$M^{D,(ij)}_{s,h}$ collects the first-order Brownian fluctuations of
$X$, $J$, $K$, and $\Phi$ inside the time integral.
By \eqref{eq:PhiLambda-integrated-remainder-second-greek},
\begin{align*}
\|A^{D,(ij)}_{s,h}\|_{L^q}
\le
\int_s^{s+h}
\|R^{D,(ij)}_{s,r}\|_{L^q}\,\mathrm dr
\le
Ch^2.
\end{align*}
Moreover, using the uniform moment bounds of the coefficients and
\begin{align}
\left\|
\int_s^{s+h}(W_r^a-W_s^a)\,\mathrm dr
\right\|_{L^q}
\le Ch^{3/2},
\quad
\left\|
\int_s^{s+h}(B_{s}^\alpha-B_r^\alpha)\,\mathrm dr
\right\|_{L^q}
\le Ch^{3/2},
\end{align}
we obtain
\begin{align*}
\|M^{D,(ij)}_{s,h}\|_{L^q}
\le Ch^{3/2}.
\end{align*}

Combining the time integral expansion and the backward stochastic integral
expansion, we obtain the one-step consistency relation
\begin{align}
&\int_s^{s+h}\Phi(t,r)
\Big((J_r^j)^\top\nabla_x^2(F+dH)(r,X_r)J_r^i
+\nabla_x(F+dH)(r,X_r)^\top K_r^{(ij)}\Big)
\,\mathrm dr\notag\\
&+\int_s^{s+h}\Phi(t,r)
\Big((J_r^j)^\top\nabla_x^2H(r,X_r)J_r^i
+\nabla_xH(r,X_r)^\top K_r^{(ij)}\Big)
\mathrm d\overleftarrow{B}_r\notag\\
&=\mathcal S_{\mathrm{int}}^{(ij),\mathrm{FBT}}
\left(\Phi(t,s),X_s,J_s,K_s^{(ij)},s,h;
\Delta W_s,\Delta\overleftarrow B_s\right)
+A^{D,(ij)}_{s,h}
+M^{D,(ij)}_{s,h}
+R^{B,(ij)}_{s,h}.
\label{eq:one-step-consistency-second-greek}
\end{align}

\paragraph{Proof of Proposition~\ref{prop:second-order-greek-strong-error}}
\label{para:proof of prop:second-order-greek-strong-error}
\begin{proposition*}[Strong-error order of the second-order Greek integral discretization]
Assume that the coefficients are sufficiently smooth with bounded derivatives
up to the order used above, and assume that $X$, $J$, $K$, and $\Phi$ have
uniformly bounded moments of all required orders. Then the approximation
generated by
\begin{align*}
\widetilde Y_{k+1}^{(ij,h)}
=
\widetilde Y_k^{(ij,h)}
+
\mathcal S_{\mathrm{int}}^{(ij),\mathrm{FBT}}
\left(\Phi(t,t_k),X_{t_k}^{t,x},J_{t_k}^{t,x},
K_{t_k}^{(ij),t,x},t_k,h;\Delta W_{t_k},\Delta\overleftarrow B_{t_k}
\right),
\quad
\widetilde Y_0^{(ij,h)}=0,
\end{align*}
satisfies, for every fixed $q\ge2$,
\begin{align*}
\left\|
\sup_{0\le k\le N}
\left|
Y_{t_k}^{(ij)}-\widetilde Y_k^{(ij,h)}
\right|
\right\|_{L^q}
\le C_qh.
\end{align*}
Consequently, by Jensen's inequality, the same estimate also holds for
$0<q<2$. Hence
$\mathcal S_{\mathrm{int}}^{(ij),\mathrm{FBT}}$ has strong-error order $1$ in the
sense of Definition~\ref{def:Strong error of S int second Greek}.
\end{proposition*}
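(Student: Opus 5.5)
The proof will follow the template already used for Proposition~\ref{prop:int-strong-error} and Proposition~\ref{prop:first-order-greek-strong-error}. The starting point is the one-step consistency relation \eqref{eq:one-step-consistency-second-greek}, established in the construction paragraph above. It is applied at $s=t_k$ with exact inputs $(\Phi(t,t_k),X_{t_k},J_{t_k},K^{(ij)}_{t_k})$. Summing over steps telescopes the global error into three pieces:
\begin{align*}
Y_{t_k}^{(ij)}-\widetilde Y_k^{(ij,h)}
=\sum_{m=0}^{k-1}A^{D,(ij)}_{t_m,h}
+\sum_{m=0}^{k-1}M^{D,(ij)}_{t_m,h}
+\sum_{m=0}^{k-1}R^{B,(ij)}_{t_m,h},
\end{align*}
and at $k=0$ all sums vanish. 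Each piece is then bounded separately in $\|\sup_k|\cdot|\|_{L^q}$ for fixed $q\ge2$.

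\textbf{Finite-variation part.} I use the pathwise bound $\sup_k|\sum_{m<k}A_m|\le\sum_m|A_m|$, Minkowski's inequality, and the estimate $\|A^{D,(ij)}_{s,h}\|_{L^q}\le Ch^2$ proved above. Together these give $C Nh^2\le Ch$.

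\textbf{Martingale-type part.} Here I split $M^{D,(ij)}_{t_m,h}$ into its $W$-part, involving $\int(W_r-W_{t_m})\,dr$, and its $B$-part, involving $\int(B_{t_m}-B_r)\,dr$. The $W$-part is a forward martingale-difference sequence for the $W$-filtration, enlarged by all of $B$, since its coefficients depend only on $W$ up to $t_m$ and on $B$. The $B$-part is a martingale-difference sequence after time reversal $\widehat B_u=B_T-B_{T-u}$, conditional on all of $W$. Applying the discrete BDG inequality to each part, then Minkowski in $L^{q/2}$, and finally $\|M^{D,(ij)}_{s,h}\|_{L^q}\le Ch^{3/2}$, gives $C(Nh^3)^{1/2}\le Ch$.

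\textbf{Backward stochastic remainder.} The sum $\sum_m R^{B,(ij)}_{t_m,h}$ is a single backward It\^o integral of the piecewise-defined remainder $R^{\Phi H,(ij)}_{\nu,t_m,r}$. The backward BDG inequality bounds it by the square root of $\sum_m\sum_\nu\int_{t_m}^{t_{m+1}}\|R^{\Phi H,(ij)}_{\nu,t_m,r}\|_{L^q}^2\,dr\le C N h^3$, which is again $Ch$. Adding the three bounds proves the claim for $q\ge2$, and the case $0<q<2$ follows by Jensen's inequality, exactly as in the earlier proofs.

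\textbf{Main obstacle: adaptedness, not magnitude.} The size estimates are all inherited from the construction paragraph; the delicate point is the measurability needed for the two BDG steps. First, the integrand $R^{\Phi H,(ij)}_{\nu,s,r}$ must be backward-adapted, i.e.\ measurable with respect to $\mathcal F^W_{0,T}\vee\mathcal F^B_{r,T}$, so that the backward It\^o integral and backward BDG make sense. This holds because $\Phi(t,r)$, the exact integrand and the subtracted Taylor terms depend on $B$ only through increments on $[r,T]$ (with $s\le r$), and BDG is applied conditionally on the whole $W$-path. Second, for the $B$-part of $M^D$ I must check that the coefficient $\Phi(t,t_m)d_\alpha\Lambda^{(ij)}(t_m)$ is measurable with respect to $B$-information after $t_{m+1}$. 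It is not, since it also depends on $B$ over $[t_m,t_{m+1}]$ through $\Phi(t,t_m)$.

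If this measurability fails, I will rewrite the coefficient with $\Phi(t,t_m)$ replaced by $\Phi(t,t_{m+1})$, or more generally shift it to the endpoint $t_{m+1}$. The resulting discrepancy has size $O(h^{1/2})\cdot O(h^{3/2})=O(h^2)$ per step and can be moved into the finite-variation part. This is the step I would write out most carefully.
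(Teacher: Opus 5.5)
Your decomposition and the three size bounds are exactly the paper's proof. The paper likewise just asserts a forward $W$-part / reverse $B$-part martingale structure without checking it.

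The gap is in the paragraph you single out as the crux: the measurability claims there are false, and your fix would not work. Since $\Phi(t,r)$ contains $\int_t^r d(u)\,\mathrm d\overleftarrow B_u$, it depends on $B$ only through increments on $[t,r]$. It is therefore $\mathcal F^W_{t,r}\vee\mathcal F^B_{t,r}$-measurable, with no dependence on $B$ over $[r,T]$. This has two consequences.

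(i) $R^{\Phi H,(ij)}_{\nu,s,r}$ is built from $\Phi(t,r)$, $\Phi(t,s)$ and $B_s-B_r$, so it is \emph{not} $\mathcal F^W_{0,T}\vee\mathcal F^B_{r,T}$-measurable. Your justification of the reverse-time BDG step is therefore wrong.

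(ii) $\Phi(t,t_m)$ does not depend on $B$ over $[t_m,t_{m+1}]$ at all. Your shift $\Phi(t,t_m)\to\Phi(t,t_{m+1})$ makes things worse. The new coefficient is correlated with $\int_{t_m}^{t_{m+1}}(B_{t_m}-B_r)\,\mathrm dr$, so the shifted terms are no longer centered. They are also still not $\mathcal F^B_{t_{m+1},T}$-measurable, so no reverse-time martingale appears. Your treatment of the $W$-part is fine.

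What works is to drop the time reversal and use the joint forward filtration $\mathcal G_r:=\mathcal F^W_{t,r}\vee\mathcal F^B_{t,r}$.

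\textbf{The term $M^{D,(ij)}$.} All coefficients of $M^{D,(ij)}_{t_m,h}$ are $\mathcal G_{t_m}$-measurable, since $X,J,K$ are driven by $W$ alone. The increments $\int_{t_m}^{t_{m+1}}(W_r-W_{t_m})\,\mathrm dr$ and $\int_{t_m}^{t_{m+1}}(B_{t_m}-B_r)\,\mathrm dr$ are centered, independent of $\mathcal G_{t_m}$, and $\mathcal G_{t_{m+1}}$-measurable. So the whole of $M^{D,(ij)}$ is an ordinary martingale-difference sequence, and your BDG/Minkowski estimate goes through unchanged.

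\textbf{The term $R^{B,(ij)}$.} $R^{\Phi H,(ij)}_{\nu,t_m,r}$ is $\mathcal G_r$-adapted, so $\sum_m R^{B,(ij)}_{t_m,h}$ must be handled by forward BDG. The paper defines $\mathrm d\overleftarrow B$ by right-endpoint sums, so a $\mathcal G$-adapted integrand also produces a bracket (drift) term; if $\mathrm d\overleftarrow B$ is read as an It\^o integral against $-B$, this term is absent. The Taylor correction $\Phi(t,s)d(s)(\cdots)(B_s-B_r)$ cancels the leading part of this bracket. What remains, up to sign, is
$\sum_\nu\int_{t_m}^{t_{m+1}}\bigl(\Phi(t,r)d_\nu(r)\Psi_\nu(r)-\Phi(t,t_m)d_\nu(t_m)\Psi_\nu(t_m)\bigr)\mathrm dr$, where $\Psi_\nu(r):=(J_r^j)^\top\nabla_x^2H_\nu(r,X_r)J_r^i+\nabla_xH_\nu(r,X_r)^\top K_r^{(ij)}$.

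This has the same structure as $A^{D,(ij)}+M^{D,(ij)}$: a centered $O(h^{3/2})$ part with $\mathcal G_{t_m}$-measurable coefficients, plus an $O(h^2)$ part (using smoothness of $d$ in time). It therefore also sums to $O(h)$, and the claimed $C_qh$ bound follows.
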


\begin{proof}

By the one-step consistency relation
\eqref{eq:one-step-consistency-second-greek}, for each $0\le k\le N$,
\begin{align}
Y_{t_k}^{(ij)}-\widetilde Y_k^{(ij,h)}
=
\sum_{r=0}^{k-1}A^{D,(ij)}_{t_r,h}
+
\sum_{r=0}^{k-1}M^{D,(ij)}_{t_r,h}
+
\sum_{r=0}^{k-1}R^{B,(ij)}_{t_r,h}.
\label{eq:second-global-error-decomposition}
\end{align}
For $k=0$, the sums are 0 and the identity is consistent with
$Y_t^{(ij)}=\widetilde Y_0^{(ij,h)}=0$.

We estimate the three accumulated terms in
\eqref{eq:second-global-error-decomposition} separately.

For the finite-variation part, using the pathwise bound
\[
\sup_{0\le k\le N}
\left|
\sum_{r=0}^{k-1}
A^{D,(ij)}_{t_r,h}
\right|
\le
\sum_{r=0}^{N-1}
\left|
A^{D,(ij)}_{t_r,h}
\right|,
\]
we obtain
\begin{align}
\left\|
\sup_{0\le k\le N}
\left|
\sum_{r=0}^{k-1}
A^{D,(ij)}_{t_r,h}
\right|
\right\|_{L^q}
&\le
\sum_{r=0}^{N-1}
\left\|
A^{D,(ij)}_{t_r,h}
\right\|_{L^q}
\le
C_qNh^2
\le
C_qh.
\label{eq:second-global-A-bound}
\end{align}

For the martingale-type part, by the discrete forward--backward information
convention above, the \(W\)-part of
\(\{M^{D,(ij)}_{t_r,h}\}_{r=0}^{N-1}\) is a forward martingale-difference
sequence, while the \(B\)-part is a reverse martingale-difference sequence.
Hence, applying the discrete Burkholder--Davis--Gundy inequality, equivalently
after reversing the \(B\)-time for the backward part, 
\begin{align}
\left\|
\sup_{0\le k\le N}
\left|
\sum_{r=0}^{k-1}
M^{D,(ij)}_{t_r,h}
\right|
\right\|_{L^q}
&\le
C_q
\left\|
\left(
\sum_{r=0}^{N-1}
\left|
M^{D,(ij)}_{t_r,h}
\right|^2
\right)^{1/2}
\right\|_{L^q}
\notag\\
&=
C_q
\left\|
\sum_{r=0}^{N-1}
\left|
M^{D,(ij)}_{t_r,h}
\right|^2
\right\|_{L^{q/2}}^{1/2}
\notag\\
&\le
C_q
\left(
\sum_{r=0}^{N-1}
\left\|
M^{D,(ij)}_{t_r,h}
\right\|_{L^q}^2
\right)^{1/2}
\notag\\
&\le
C_q(Nh^3)^{1/2}
\le
C_qh.
\label{eq:second-global-MD-bound}
\end{align}
Here the third inequality uses Minkowski's inequality in $L^{q/2}$, which is
valid since $q\ge2$.

It remains to estimate the accumulated backward stochastic remainder. By
definition,
\begin{align}
\sum_{r=0}^{k-1}
R^{B,(ij)}_{t_r,h}
=
\sum_{r=0}^{k-1}
\sum_{\nu=1}^{\ell}
\int_{t_r}^{t_{r+1}}
R^{\Phi H,(ij)}_{\nu,t_r,u}
\,\mathrm d \overleftarrow{B}_u^\nu.
\end{align}
Applying the Burkholder--Davis--Gundy inequality for backward stochastic
integrals, equivalently after reversing time, gives
\begin{align}
&
\left\|
\sup_{0\le k\le N}
\left|
\sum_{r=0}^{k-1}
R^{B,(ij)}_{t_r,h}
\right|
\right\|_{L^q}
\notag\\
\le&
C_q
\left\|
\left(
\sum_{r=0}^{N-1}
\sum_{\nu=1}^{\ell}
\int_{t_r}^{t_{r+1}}
\left|
R^{\Phi H,(ij)}_{\nu,t_r,u}
\right|^2
\,\mathrm du
\right)^{1/2}
\right\|_{L^q}
\notag\\
=&
C_q
\left\|
\sum_{r=0}^{N-1}
\sum_{\nu=1}^{\ell}
\int_{t_r}^{t_{r+1}}
\left|
R^{\Phi H,(ij)}_{\nu,t_r,u}
\right|^2
\,\mathrm du
\right\|_{L^{q/2}}^{1/2}
\notag\\
\le&
C_q
\left(
\sum_{r=0}^{N-1}
\sum_{\nu=1}^{\ell}
\int_{t_r}^{t_{r+1}}
\left\|
R^{\Phi H,(ij)}_{\nu,t_r,u}
\right\|_{L^q}^2
\,\mathrm du
\right)^{1/2}
\notag\\
\le&
C_q(Nh^3)^{1/2}
\le
C_qh.
\label{eq:second-global-RB-bound}
\end{align}

Combining
\eqref{eq:second-global-A-bound},
\eqref{eq:second-global-MD-bound}, and
\eqref{eq:second-global-RB-bound} with
\eqref{eq:second-global-error-decomposition}, we conclude that
\begin{align}
\left\|
\sup_{0\le k\le N}
\left|
Y_{t_k}^{(ij)}
-
\widetilde Y_k^{(ij,h)}
\right|
\right\|_{L^q}
\le
C_qh,
\quad q\ge2.
\end{align}

Finally, for $0<q<2$, Jensen's inequality gives
\begin{align*}
\left\|
\sup_{0\le k\le N}
\left|
Y_{t_k}^{(ij)}
-
\widetilde Y_k^{(ij,h)}
\right|
\right\|_{L^q}
\le
\left\|
\sup_{0\le k\le N}
\left|
Y_{t_k}^{(ij)}
-
\widetilde Y_k^{(ij,h)}
\right|
\right\|_{L^2}
\le
C_2h.
\end{align*}
This proves the claimed strong-error order one estimate.
\end{proof}

\paragraph{Proof of Proposition~\ref{prop:second-order-greek accumulated stability}}
\label{para:proof of prop:second-order-greek accumulated stability}
\begin{proposition*}
Assume that
$F\in C_b^4$, $H\in C_b^4$, $\sigma\in C_b^3$
and $d$, $\widetilde c$ are bounded. 
Moreover, assume that
$\Phi(t,s)$, $\Phi^{(h)}$,
$X_{s}^{t,x}$, $X^{(h)}$,
$J_{s}^{t,x}$, $J^{(h)}$,
$K_{s}^{(ij),t,x}$, $K^{(ij,h)}$
are bounded in $L^{16}_{W,B}$.
Then $\mathcal S_{\mathrm{int}}^{(ij),\mathrm{FBT}}$ satisfies the accumulated
stability estimate. More precisely, there exists
$L_{\mathrm{int}}^{(ij)}>0$, independent of $h$, such that
\begin{align*}
\left\|
\sup_{0\le k\le N}
\left|
\widetilde Y_k^{(ij,h)}-Y_k^{(ij,h)}
\right|
\right\|_{L^2}
\le L_{\mathrm{int}}^{(ij)}
\bigg(
&
\left\|
\sup_{0\le r\le N}
\left|
\Phi(t,t_r)-\Phi_r^{(h)}
\right|
\right\|_{L^4}+
\left\|
\sup_{0\le r\le N}
\left|
X_{t_r}^{t,x}-X_r^{(h)}
\right|
\right\|_{L^4}\\
&+
\left\|
\sup_{0\le r\le N}
\left|
J_{t_r}^{t,x}-J_r^{(h)}
\right|
\right\|_{L^4}+
\left\|
\sup_{0\le r\le N}
\left|
K_{t_r}^{(ij),t,x}-K_r^{(ij,h)}
\right|
\right\|_{L^4}
\bigg).
\end{align*}
\end{proposition*}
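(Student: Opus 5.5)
We follow the same template as the proofs of Proposition~\ref{prop:int stable} and Proposition~\ref{prop:first-order-greek accumulated stability}. Set
\[
\Delta\Phi_r:=\Phi(t,t_r)-\Phi_r^{(h)},\quad
\Delta X_r:=X_{t_r}^{t,x}-X_r^{(h)},\quad
\Delta J_r:=J_{t_r}^{t,x}-J_r^{(h)},\quad
\Delta K_r:=K_{t_r}^{(ij),t,x}-K_r^{(ij,h)} .
\]
Let $\delta_\Phi,\delta_X,\delta_J,\delta_K$ be the $L^4$ norms of their suprema, and put $\delta:=\delta_\Phi+\delta_X+\delta_J+\delta_K$. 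We then write $\widetilde Y_k^{(ij,h)}-Y_k^{(ij,h)}$ as a telescoping sum over $r<k$ of the difference of $\mathcal S_{\mathrm{int}}^{(ij),\mathrm{FBT}}$ evaluated at the exact inputs and at the numerical inputs. This difference splits into four groups, matching the four blocks of \eqref{eq:Sint-second-greek-mil-full}: the $h$-term, the $\Delta\overleftarrow B$-term, the $J^{WB}$-term and the $J^{BB}$-term.

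The core of the argument is a single coefficient Lipschitz estimate. Each block has the form $\Phi\,\mathcal C(t_r,X,J,K)$, where $\mathcal C$ is a polynomial of degree at most two in $J$ and of degree at most one in $K$. Its coefficients are built from $\nabla_x^m F$, $\nabla_x^m H$ and $\nabla_x^m\sigma$ with $m\le 3$, together with $\sigma$, $d$ and $\widetilde c$.

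\begin{itemize}
\item $\sigma$ itself appears, in $\nabla_x^3H_\nu[\sigma_{\cdot a},J^i,J^j]$ and $\sigma_{\cdot a}^\top\nabla_x^2H_\nu K$. Since $\sigma\in C_b^3$, it is bounded and Lipschitz.
\item Because $F,H\in C_b^4$ and $\sigma\in C_b^3$, every coefficient function is bounded and globally Lipschitz in $x$. In particular $\nabla^3H$ is Lipschitz because $\nabla^4H$ is bounded, and $\nabla^2\sigma$ is Lipschitz because $\nabla^3\sigma$ is bounded.
\end{itemize}

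Adding and subtracting intermediate terms one argument at a time then gives the pointwise bound
\[
|\Phi\mathcal C(X,J,K)-\Phi'\mathcal C(X',J',K')|
\le C\Big(|\Delta\Phi|(1+|J|^2+|K|)+|\Phi'|(1+|J|^2+|K|)|\Delta X|+|\Phi'|(1+|J|+|J'|)|\Delta J|+|\Phi'||\Delta K|\Big).
\]
We apply H\"older with the difference factor in $L^4$ and the remaining factors (at most four, e.g.\ $\Phi',J,J$ or $\Phi',J,J'$) in $L^{16}$; the exponents satisfy $\tfrac14+4\cdot\tfrac1{16}=\tfrac12$. This yields $\|D_r\|_{L^2}\le C\delta$ for the coefficient difference of the $h$-block and of the $\Delta\overleftarrow B$-block.

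For the $WB$- and $BB$-blocks there is one more factor, $J^{WB}$ or $J^{BB}$, to place. The $L^{16}$ budget is already used up, so this requires some rebalancing. Our first choice is to exploit independence. $J^{WB}(t_r,h)$ and $J^{BB}(t_r,h)$ depend only on Brownian increments on $[t_r,t_{r+1}]$. The coefficient difference, however, depends on $\Phi(t,t_r)$, which involves backward $B$-increments after $t_r$, and on the numerical schemes. So we do not use conditional independence. Instead we use $\|J^{WB}\|_{L^p}+\|J^{BB}\|_{L^p}\le C_ph$ for every $p$, and split the product as
\[
\|D_rJ\|_{L^2}\le\|D_r\|_{L^{2+\eta}}\,\|J\|_{L^{p_\eta}} .
\]
Since the statement only provides $L^{16}$ bounds, we take care of this more carefully by a three-factor H\"older inequality applied term by term, grouping the difference factor ($L^4$) with the Gaussian iterated integral ($L^{p}$, $p$ large). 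Where the count is too tight, for instance in the term $|\Phi'||J||J'||\Delta J|\,|J^{WB}|$, we lower one moment requirement by using the linear structure in $K$ or the boundedness of the $\sigma$-derivatives, and accept a constant depending on $p$. This H\"older bookkeeping in the $WB$-term is where we expect the main obstacle. If it fails, we will invoke the ``moments of all required orders'' convention that the paper uses elsewhere.

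Once the coefficient bounds are in hand, the summation is routine. The $h$-, $WB$- and $BB$-blocks are bounded pathwise by $\sum_r\|\cdot\|_{L^2}\le CNh\delta=CT\delta$. The $\Delta\overleftarrow B$-block is handled by the discrete backward BDG inequality after time reversal, exactly as in \eqref{eq:first-greek-stability-B-term}, and gives $C\sqrt T\,\delta$. Adding the four bounds yields the claim with $L_{\mathrm{int}}^{(ij)}=C(T+\sqrt T)$.
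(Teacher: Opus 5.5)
Your decomposition and estimates are the paper's: the same four blocks of $\mathcal S_{\mathrm{int}}^{(ij),\mathrm{FBT}}$, the same Lipschitz-in-$(X,J,K)$ bound on the coefficients combined with H\"older and the $L^{16}$ hypothesis to get $\|D_r\|_{L^2}\le C\delta$, pathwise summation for the $h$-, $WB$- and $BB$-blocks, and BDG for the $\Delta\overleftarrow B$-block. The one step you leave open is bounding $\|D_rJ^{WB}(t_r,h)\|_{L^2}$ and $\|D_rJ^{BB}(t_r,h)\|_{L^2}$ by $Ch\delta$. It is not an obstacle, and both reasons you give for thinking it is one are mistaken. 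The fallback to ``moments of all required orders'' is therefore unnecessary, and it would strengthen the hypotheses of the statement.

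\emph{Independence is available.} $\Phi(t,t_r)$ contains $\int_t^{t_r}d(u)\,\mathrm d\overleftarrow B_u$, so it involves only $B$-increments on $[t,t_r]$, not after $t_r$. The exact $X,J,K$ at $t_r$ depend only on $W$ on $[t,t_r]$. Every numerical input at step $r$ is built from step data with index $<r$. Hence $D_r$ is measurable with respect to the $(W,B)$-increments on $[t,t_r]$, while $J^{WB}(t_r,h)$ and $J^{BB}(t_r,h)$ depend only on increments on $[t_r,t_{r+1}]$. Componentwise this gives $\|D_rJ^{WB}(t_r,h)\|_{L^2}=\|D_r\|_{L^2}\|J^{WB}(t_r,h)\|_{L^2}\le C\delta h$, and likewise for $BB$. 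This is exactly the step the paper uses implicitly when it passes from $\|J^{WB}(t_r,h)\|_{L^2}\le Ch$ directly to the $CT\delta$ bound.

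\emph{Your H\"older route also closes with exactly $L^{16}$.} Besides the $L^4$ difference factor, no term carries more than three $L^{16}$ factors; the worst is $|\Phi'|\,|J|^2|\Delta X|$. Since $J\otimes J-J'\otimes J'=\Delta J\otimes J+J'\otimes\Delta J$, your term $|\Phi'||J||J'||\Delta J|$ never occurs, and it would have only three such factors anyway. Thus $\tfrac14+\tfrac3{16}+\tfrac1{16}=\tfrac12$, and $\|J^{WB}(t_r,h)\|_{L^{16}}+\|J^{BB}(t_r,h)\|_{L^{16}}\le Ch$ finishes the estimate.

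\emph{The same measurability fact fixes the $\Delta\overleftarrow B$-block.} $\sum_{r<k}D_r\cdot\Delta\overleftarrow B_{t_r}$ is a martingale transform in ordinary forward time for the filtration generated by the increments, so forward Doob/BDG applies directly. After a genuine time reversal $D_r$ would not be adapted, so the ``backward BDG after time reversal'' you carry over from the paper is not the right justification for that block.
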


\begin{proof}

Set
\[
\Delta\Phi_r:=\Phi(t,t_r)-\Phi_r^{(h)},\quad
\Delta X_r:=X_{t_r}^{t,x}-X_r^{(h)},\quad
\Delta J_r:=J_{t_r}^{t,x}-J_r^{(h)},\quad
\Delta K_r:=K_{t_r}^{(ij),t,x}-K_r^{(ij,h)}
\]
and define
\[
\delta_\Phi
:=
\left\|
\sup_{0\le r\le N}
|\Delta\Phi_r|
\right\|_{L^4},
\quad
\delta_X
:=
\left\|
\sup_{0\le r\le N}
|\Delta X_r|
\right\|_{L^4},
\quad
\delta_J
:=
\left\|
\sup_{0\le r\le N}
|\Delta J_r|
\right\|_{L^4},
\quad
\delta_K
:=
\left\|
\sup_{0\le r\le N}
|\Delta K_r|
\right\|_{L^4},
\]
and write
$\delta:=\delta_\Phi+\delta_X+\delta_J+\delta_K.$

By the definitions of $\widetilde Y_k^{(ij,h)}$ and $Y_k^{(ij,h)}$,
\begin{align}
\widetilde Y_k^{(ij,h)}-Y_k^{(ij,h)}
=\sum_{r=0}^{k-1}\Bigg[
&\mathcal S_{\mathrm{int}}^{(ij),\mathrm{FBT}}
\left(\Phi(t,t_r),X_{t_r}^{t,x},J_{t_r}^{t,x},
K_{t_r}^{(ij),t,x},t_r,h;
\Delta W_{t_r},\Delta\overleftarrow B_{t_r}\right)
\notag\\
&-
\mathcal S_{\mathrm{int}}^{(ij),\mathrm{FBT}}
\left(\Phi_r^{(h)},X_r^{(h)},J_r^{(h)},
K_r^{(ij,h)},t_r,h;
\Delta W_{t_r},\Delta\overleftarrow B_{t_r}\right)
\Bigg].
\label{eq:second-greek-stability-decomposition}
\end{align}
We estimate the contributions of the four terms in
$\mathcal S_{\mathrm{int}}^{(ij),\mathrm{FBT}}$.

First, define
\[
\Lambda_F(s,X,J,K)
:=
(J^j)^\top\nabla_x^2(F+dH)(s,X)J^i
+
\nabla_x(F+dH)(s,X)^\top K .
\]
Since $F\in C_b^3$, $H\in C_b^4$, and $d$ is bounded, the map
$(X,J,K)\mapsto \Lambda_F(s,X,J,K)$ is locally Lipschitz with at most
quadratic growth in $J$ and linear growth in $K$, uniformly in $s$.
Using H\"older's inequality and the uniform $L^{16}$ moment bounds, we get
\begin{align}
&
\left\|
\Phi(t,t_r)\Lambda_F(t_r,X_{t_r}^{t,x},J_{t_r}^{t,x},K_{t_r}^{(ij),t,x})
-
\Phi_r^{(h)}
\Lambda_F(t_r,X_r^{(h)},J_r^{(h)},K_r^{(ij,h)})
\right\|_{L^2}
\notag\\
\le&
C\left(
\|\Delta\Phi_r\|_{L^4}
+
\|\Delta X_r\|_{L^4}
+
\|\Delta J_r\|_{L^4}
+
\|\Delta K_r\|_{L^4}
\right)
\le C\delta .
\label{eq:second-stability-time-coeff}
\end{align}
Therefore, using the pathwise bound
\[
\sup_{0\le k\le N}
\left|
\sum_{r=0}^{k-1}hA_r
\right|
\le
\sum_{r=0}^{N-1}h|A_r|,
\]
we obtain
\begin{align}
&
\left\|
\sup_{0\le k\le N}
\left|
\sum_{r=0}^{k-1}
h\Big[
\Phi(t,t_r)\Lambda_F(t_r,X_{t_r}^{t,x},J_{t_r}^{t,x},K_{t_r}^{(ij),t,x})
-
\Phi_r^{(h)}
\Lambda_F(t_r,X_r^{(h)},J_r^{(h)},K_r^{(ij,h)})
\Big]
\right|
\right\|_{L^2}
\notag\\
\le&
\sum_{r=0}^{N-1}
h
\left\|
\Phi(t,t_r)\Lambda_F(t_r,X_{t_r}^{t,x},J_{t_r}^{t,x},K_{t_r}^{(ij),t,x})
-
\Phi_r^{(h)}
\Lambda_F(t_r,X_r^{(h)},J_r^{(h)},K_r^{(ij,h)})
\right\|_{L^2}
\le
CT\delta .
\label{eq:second-stability-time-term}
\end{align}

Next, define for $1\le \nu\le \ell$,
\[
\mathcal B_\nu(s,X,J,K)
:=
(J^j)^\top\nabla_x^2H_\nu(s,X)J^i
+
\nabla_xH_\nu(s,X)^\top K .
\]
The same smoothness and moment assumptions imply
\begin{align}
&
\left\|
\Phi(t,t_r)\mathcal B_\nu(t_r,X_{t_r}^{t,x},J_{t_r}^{t,x},K_{t_r}^{(ij),t,x})
-
\Phi_r^{(h)}
\mathcal B_\nu(t_r,X_r^{(h)},J_r^{(h)},K_r^{(ij,h)})
\right\|_{L^2}
\le C\delta .
\label{eq:second-stability-B-coeff}
\end{align}
Let
\[
D_{\nu,r}^{B}
:=
\Phi(t,t_r)\mathcal B_\nu(t_r,X_{t_r}^{t,x},J_{t_r}^{t,x},K_{t_r}^{(ij),t,x})
-
\Phi_r^{(h)}
\mathcal B_\nu(t_r,X_r^{(h)},J_r^{(h)},K_r^{(ij,h)}).
\]
Using the discrete Burkholder--Davis--Gundy inequality for backward stochastic
increments, equivalently after reversing time, we obtain
\begin{align}
&
\left\|
\sup_{0\le k\le N}
\left|
\sum_{r=0}^{k-1}
\sum_{\nu=1}^{\ell}
D_{\nu,r}^{B}\,
\Delta\overleftarrow B_{t_r}^{\nu}
\right|
\right\|_{L^2}
\le
C
\left\|
\left(
\sum_{r=0}^{N-1}
h\sum_{\nu=1}^{\ell}
|D_{\nu,r}^{B}|^2
\right)^{1/2}
\right\|_{L^2}
\le
C
\left(
\sum_{r=0}^{N-1}
h\sum_{\nu=1}^{\ell}
\|D_{\nu,r}^{B}\|_{L^2}^2
\right)^{1/2}
\le
C\sqrt T\,\delta .
\label{eq:second-stability-B-term}
\end{align}

It remains to estimate the $WB$ and $BB$ correction terms. For fixed
$1\le\nu\le\ell$ and $1\le a\le d$, set
\begin{align*}
\mathcal A_{\nu a}(s,X,J,K)
:=&
\nabla_x^3H_\nu(s,X)
[\sigma_{\cdot a}(X),J^i,J^j]
+\big((\nabla_x\sigma_{\cdot a}(X)J)^j\big)^\top
\nabla_x^2H_\nu(s,X)J^i\\
&+(J^j)^\top\nabla_x^2H_\nu(s,X)
(\nabla_x\sigma_{\cdot a}(X)J)^i
+\sigma_{\cdot a}(X)^\top\nabla_x^2H_\nu(s,X)K\\
&+\nabla_xH_\nu(s,X)^\top
\big(
\nabla_x\sigma_{\cdot a}(X)K
+(J^i)^\top
\nabla_x^2\sigma_{\cdot a}(X)J^j
\big)\\
&+\widetilde c_a(s)
\Big(
(J^j)^\top\nabla_x^2H_\nu(s,X)J^i
+
\nabla_xH_\nu(s,X)^\top K
\Big).
\end{align*}
Since $H\in C_b^4$, $\sigma\in C_b^3$, and $\widetilde c$ is bounded, we have
the pointwise Lipschitz-type estimate
\begin{align}
&
\left|
\mathcal A_{\nu a}(t_r,X_{t_r}^{t,x},J_{t_r}^{t,x},K_{t_r}^{(ij),t,x})
-
\mathcal A_{\nu a}(t_r,X_r^{(h)},J_r^{(h)},K_r^{(ij,h)})
\right|
\notag\\
\le&
C\Big(
1+|J_{t_r}^{t,x}|^2+|J_r^{(h)}|^2
+|K_{t_r}^{(ij),t,x}|+|K_r^{(ij,h)}|
\Big)
|\Delta X_r|
+
C\Big(
1+|J_{t_r}^{t,x}|+|J_r^{(h)}|
\Big)|\Delta J_r|
+
C|\Delta K_r|.
\label{eq:second-A-Lipschitz}
\end{align}
Hence, by adding and subtracting the intermediate term with
$\Phi_r^{(h)}$ and exact $(X,J,K)$, and then using H\"older's inequality
together with the uniform $L^{16}$ moment bounds,
\begin{align}
&
\left\|
\Phi(t,t_r)\mathcal A_{\nu a}
(t_r,X_{t_r}^{t,x},J_{t_r}^{t,x},K_{t_r}^{(ij),t,x})
-
\Phi_r^{(h)}\mathcal A_{\nu a}
(t_r,X_r^{(h)},J_r^{(h)},K_r^{(ij,h)})
\right\|_{L^2}
\le C\delta .
\label{eq:second-stability-WB-coeff}
\end{align}
Define
\[
D_{\nu a,r}^{WB}
:=
\Phi(t,t_r)\mathcal A_{\nu a}
(t_r,X_{t_r}^{t,x},J_{t_r}^{t,x},K_{t_r}^{(ij),t,x})
-
\Phi_r^{(h)}\mathcal A_{\nu a}
(t_r,X_r^{(h)},J_r^{(h)},K_r^{(ij,h)}).
\]
Since
\[
\|J_{\nu a}^{WB}(t_r,h)\|_{L^2}\le Ch,
\]
therefore,
\begin{align}
&
\left\|
\sup_{0\le k\le N}
\left|
\sum_{r=0}^{k-1}
\sum_{\nu=1}^{\ell}\sum_{a=1}^{d}
D_{\nu a,r}^{WB}J_{\nu a}^{WB}(t_r,h)
\right|
\right\|_{L^2}
\le
\sum_{r=0}^{N-1}
\sum_{\nu=1}^{\ell}\sum_{a=1}^{d}
\left\|
D_{\nu a,r}^{WB}J_{\nu a}^{WB}(t_r,h)
\right\|_{L^2}
\le
C\sum_{r=0}^{N-1}h\delta
\le
CT\delta .
\label{eq:second-stability-WB-term}
\end{align}

Finally, consider the $BB$ correction term. Define
\[
D_{\nu\alpha,r}^{BB}
:=
d_\alpha(t_r)
\Big[
\Phi(t,t_r)\mathcal B_\nu
(t_r,X_{t_r}^{t,x},J_{t_r}^{t,x},K_{t_r}^{(ij),t,x})
-
\Phi_r^{(h)}
\mathcal B_\nu(t_r,X_r^{(h)},J_r^{(h)},K_r^{(ij,h)})
\Big].
\]
Since $d$ is bounded, \eqref{eq:second-stability-B-coeff} gives
\[
\|D_{\nu\alpha,r}^{BB}\|_{L^2}
\le C\delta .
\]
Moreover,
\[
\|J_{\nu\alpha}^{BB}(t_r,h)\|_{L^2}\le Ch.
\]
We have
\begin{align}
&
\left\|
\sup_{0\le k\le N}
\left|
\sum_{r=0}^{k-1}
\sum_{\nu=1}^{\ell}\sum_{\alpha=1}^{\ell}
D_{\nu\alpha,r}^{BB}
J_{\nu\alpha}^{BB}(t_r,h)
\right|
\right\|_{L^2}
\le
\sum_{r=0}^{N-1}
\sum_{\nu=1}^{\ell}\sum_{\alpha=1}^{\ell}
\left\|
D_{\nu\alpha,r}^{BB}
J_{\nu\alpha}^{BB}(t_r,h)
\right\|_{L^2}
\le
C\sum_{r=0}^{N-1}h\delta
\le
CT\delta .
\label{eq:second-stability-BB-term}
\end{align}

Combining
\eqref{eq:second-stability-time-term},
\eqref{eq:second-stability-B-term},
\eqref{eq:second-stability-WB-term}, and
\eqref{eq:second-stability-BB-term} in
\eqref{eq:second-greek-stability-decomposition}, we obtain
\[
\left\|
\sup_{0\le k\le N}
\left|
\widetilde Y_k^{(ij,h)}-Y_k^{(ij,h)}
\right|
\right\|_{L^2}
\le
L_{\mathrm{int}}^{(ij)}
\delta.
\]
Substituting the definition of $\delta$ gives
\begin{align*}
\left\|
\sup_{0\le k\le N}
\left|
\widetilde Y_k^{(ij,h)}-Y_k^{(ij,h)}
\right|
\right\|_{L^2}
\le L_{\mathrm{int}}^{(ij)}
\bigg(
&
\left\|
\sup_{0\le r\le N}
\left|
\Phi(t,t_r)-\Phi_r^{(h)}
\right|
\right\|_{L^4}
+
\left\|
\sup_{0\le r\le N}
\left|
X_{t_r}^{t,x}-X_r^{(h)}
\right|
\right\|_{L^4}
\\
&+
\left\|
\sup_{0\le r\le N}
\left|
J_{t_r}^{t,x}-J_r^{(h)}
\right|
\right\|_{L^4}
+
\left\|
\sup_{0\le r\le N}
\left|
K_{t_r}^{(ij),t,x}-K_r^{(ij,h)}
\right|
\right\|_{L^4}
\bigg).
\end{align*}
This proves the accumulated stability estimate.
\end{proof}

\end{document}